\documentclass[12pt]{article}
\usepackage{epsfig,lscape}
\usepackage{amssymb,amsfonts,amsmath,amsthm}
\usepackage{rotating}
\usepackage{mathtools}
\usepackage{bbm, bm}
\usepackage{natbib}



\usepackage{multibib}

\RequirePackage[colorlinks,citecolor=blue,urlcolor=blue]{hyperref}

\usepackage{threeparttable}
\usepackage{booktabs}
\usepackage{booktabs}
\usepackage{lmodern}
\usepackage{multirow}
\usepackage{verbatim}

\usepackage{hyperref}

\usepackage{subfig}
\usepackage{siunitx}
\usepackage{textpos, tikz}
\usepackage{subfig}
\usepackage{graphicx}
\usepackage{float}
\usepackage{amssymb}
\usepackage{pdflscape}
\usepackage{pdfpages}
\usepackage{booktabs}
\usepackage{lmodern}
\usepackage{multirow}
\usepackage{verbatim}
\usepackage{hyperref}
\usepackage[titletoc,title]{appendix}
\usepackage{minibox}
\usepackage{algorithm}
\usepackage{algpseudocode}

\setlength{\parindent}{3ex}
\setlength{\parskip}{0ex}

\newcites{append}{Appendix References}

\usepackage[left=1in,right=1in,top=.9in,bottom=1.1in]{geometry}

\def\me{\mathrm e}

\def\dif{\mathrm d}

\def\var{\mathrm{var}}
\def\cov{\mathrm{cov}}
\def\N{\mathrm{N}}

\def\DB{\mathrm{DB}}
\def\T{ {\mathrm{\scriptscriptstyle T}} }

\def\bbR{\mathbb R}

\def\argmin{\mathrm{argmin}}

\newenvironment{prf}
{\noindent \textbf{Proof.}}{\hfill $\Box$ \vspace{.05in}}

\newtheorem{thm}{Theorem}
\newtheorem{lem}{Lemma}
\newtheorem{pro}{Proposition}
\newtheorem{cor}{Corollary}
\newtheorem{ass}{Assumption}

\theoremstyle{definition}
\newtheorem{eg}{Example}

\theoremstyle{definition}
\newtheorem{rem}{Remark}


\newcommand{\expit}{\mathrm{expit}}

 \newcommand{\norm}[1]{\left\lVert#1\right\rVert}
 \newcommand{\abs}[1]{\left|#1\right|}

\begin{document}

\begin{titlepage}

\begin{center}
{\Large Doubly Robust Semiparametric Inference Using Regularized Calibrated Estimation with High-dimensional Data}

\vspace{.1in} {\large Satyajit Ghosh\footnotemark[1] \& Zhiqiang Tan\footnotemark[1]}

\vspace{.1in}
\today
\end{center}

\footnotetext[1]{Satyajit Ghosh is postdoc and
and Zhiqiang Tan is professor, Department of Statistics, Rutgers University,
Piscataway, NJ 08854, USA (Email: ztan@stat.rutgers.edu). }

\paragraph{Abstract.}
Consider semiparametric estimation where a doubly robust estimating function for a low-dimensional parameter is available, depending on two working models.
With high-dimensional data, we develop regularized calibrated estimation as a general method for estimating the parameters in the two working models,
such that valid Wald confidence intervals can be obtained for the parameter of interest under suitable sparsity conditions if either of the two working models is correctly specified.
We propose a computationally tractable two-step algorithm and provide rigorous theoretical analysis which justifies
sufficiently fast rates of convergence for the regularized calibrated estimators in spite of  sequential construction
and establishes a desired asymptotic expansion for the doubly robust estimator.
As concrete examples, we discuss applications to partially linear, log-linear, and logistic models and estimation of average treatment effects.
Numerical studies in the former three examples demonstrate superior performance of our method, compared with debiased Lasso.

\paragraph{Key words and phrases.} Average treatment effect; Calibration estimation; Debiased Lasso;
Double robustness; High-dimensional data; Lasso penalty; Partially linear model; Semiparametric estimation.

\end{titlepage}

\section{Introduction}

Semiparametric modeling and estimation aims to draw inference about low-dimensional parameters of interest, while
allowing flexible specification for nuisance parameters, which are often in the form of smooth functions of covariates (\citealt{bickel1993efficient, tsiatis2007semiparametric}).
With low-dimensional covariates, various methods and theory have been developed, using nonparametric smoothing techniques
to estimate those unknown functions. There are increasing difficulties,
as the complexity of functions increases with a fixed number of covariates, or the number of covariates increases with parametric specifications for the unknown functions.
These two problems are fundamentally related. For concreteness, we focus on the latter setting,
where the number of covariates is large, while the unknown functions are modeled using known basis functions, for example, main effects or interactions.
This setting also allows connections to high-dimensional statistics (\citealt{buhlmann2011statistics}).

In this article, we study a broad class of semiparametric problems, where a doubly robust estimating function $\tau(U;\theta,\alpha,\gamma)$
for the parameter of interest $\theta$ is available as follows. Here $U$ denotes a data vector including a possibly high-dimensional covariate vector $X$, and
$(\alpha,\gamma)$ are two nuisance parameters defined through
working models $g(x;\alpha)$ and $f(x;\gamma)$ for unknown functions $g^*(x)$ and $f^*(x)$.
The estimating function $\tau$ is assumed to be unbiased, $E\{ \tau(U;\theta,\alpha,\gamma)\}=0$, when $\theta$ is set to the true value $\theta^*$, and
either $\alpha$ or $\gamma$, but not necessarily both, is set to the true value $\alpha^*$ or $\gamma^*$ defined respectively such that
$g(x;\alpha^*) \equiv g^*(x)$ or $f(x;\gamma^*) \equiv f^*(x)$ if model $g(\cdot;\alpha)$ or $f(\cdot;\gamma)$ is correctly specified.
In general, doubly robust estimation using $\tau$ consists of two stages: some estimators $(\hat\alpha,\hat\gamma)$ are first defined,
and then $\hat\theta$ is defined by solving the estimating equation $\tilde E\{ \tau(U;\theta,\hat\alpha,\hat\gamma)\}=0$,
where $\tilde E()$ denotes a sample average.
Conventionally, the estimators $(\hat\alpha, \hat\gamma)$ are derived by maximum likelihood or variations associated with models $g(\cdot;\alpha)$ and $f(\cdot;\gamma)$ for $g^*$ and $f^*$.

While such doubly robust estimation is perhaps most extensively studied in missing-data problems and estimation of average treatment effects (\citealt{scharfstein1999adjusting, kang2007demystifying, tan2010bounded}),
doubly robust methods have been developed in various semiparametric problems, including
partially linear and log-linear models (\citealt{robins2001comment}), instrumental variable analysis (\citealt{tan2006regression, Okui2012}),
mediation analysis (\citealt{tchetgen2012semiparametric}), and dimension reduction (\citealt{ma2012semiparametric}) among others.
As a somewhat under-appreciated result, we point out  that the familiar least-squares estimator for each individual coefficient in linear regression
is doubly robust in the context of a partially linear model.
This result is also closely related to debiased Lasso estimation in high-dimensional linear regression (\citealt{zhang2014confidence, SVD, javanmard2014confidence}). See Examples~\ref{eg:PLM-LS} and \ref{eg:PLM-RCAL} for further discussion.


The main contribution of our work can be summarized as follows. Given a doubly robust estimating function $\tau$, we develop a general method as an alternative to maximum likelihood
for constructing estimators $(\hat\alpha,\hat\gamma)$
of nuisance parameters, which are used to define an estimator $\hat\theta$ as a solution to $\tilde E\{ \tau(U;\theta,\hat\alpha,\hat\gamma)\}=0$.
For this method, the limit values $(\bar\alpha, \bar\gamma)$ of $(\hat\alpha, \hat\gamma)$ are designed to satisfy
a pair of population estimating equations, called calibration equations. If either model $g(\cdot;\alpha)$ or
$f(\cdot;\gamma)$ is correctly specified, then
the resulting estimator $\hat\theta$ can be shown to be not only consistent for $\theta^*$, but also
achieve an asymptotic expansion in the following manner under suitable conditions with a sample size $n$.
\begin{itemize} \addtolength{\itemsep}{-.1in}
\item In low-dimensional settings, the expansion of $\hat\theta$ is in the usual order $O_p (n^{-1/2})$, but
not affected by the variation of $(\hat\alpha, \hat\gamma)$, which is also of order $O_p(n^{-1/2})$.

\item In high-dimensional settings, the expansion of $\hat\theta$ remains in the order  $O_p (n^{-1/2})$,
even though the convergence of  $(\hat\alpha, \hat\gamma)$ to $(\bar\alpha,\bar\gamma)$ is slower than $O_p(n^{-1/2})$.
\end{itemize}
In fact, with high-dimensional data, we propose a computationally tractable two-step algorithm using Lasso regularized estimation.
We provide rigorous theoretical analysis which justifies sufficiently fast convergence rates for $(\hat\alpha,\hat\gamma)$ in spite of sequential construction
and establishes the desired asymptotic expansion and variance estimation for $\hat\theta$.
Doubly robust Wald confidence intervals can be obtained, based on $\hat\theta$ and consistent variance estimation.
As concrete examples, we discuss applications to partially linear, log-linear, and logistic models and a missing-response problem related to estimation of average treatment effects.

\vspace{.1in}
{\bf Related work.}
There is an extensive literature related to our work. In low-dimensional settings, estimating equations similar to our calibration equations are proposed by
\cite{vermeulen2015bias}, where a similar asymptotic expansion similar as described above is obtained.
The two methods are equivalent in some problems such as estimation of average treatment effects, where a similar method is also proposed in \cite{kim2014doubly}.
However, there exists a general difference: estimating equations in \cite{vermeulen2015bias}
are defined from the influence function of a doubly robust estimator as originally motivated to achieve bias reduction,
whereas our calibration equations are defined from a doubly robust estimating function to achieve a desired asymptotic expansion.
For instance, see Examples~\ref{eg:PLM-log} and \ref{eg:PLM-log-EE} for differences of the two methods in partially log-linear models.

In high-dimensional settings, doubly robust estimating functions are used with regularized likelihood (or quasi-likelihood) estimators of $(\alpha,\gamma)$
in \cite{belloni2014high} and \cite{farrell2015robust}. Valid confidence intervals are established
under suitable sparsity conditions, when both models $g(\cdot;\alpha)$ and $f(\cdot;\gamma)$ are correctly specified.
For inference about average treatment effects, doubly robust confidence intervals are obtained in \cite{tan2020model} if
either a propensity score model or a linear outcome model is correctly specified.
In this case, regularized calibration estimators of $\alpha$ and then $\gamma$ are determined sequentially, independent of $\theta$.
For a nonlinear outcome model, only model-assisted confidence intervals are established, provided a propensity score model is correctly specified but
the outcome model may be misspecified. In this case and other problems (see Examples \ref{eg:PLM-log-EE}--\ref{eg:ATE-EE}),
there are computational and theoretical complications due to coupled calibration equations.
To tackle these issues, we develop the two-step algorithm and appropriate high-dimensional analysis, to obtain
doubly robust confidence intervals which are not only computationally tractable but also
theoretically justified in general settings where doubly robust estimating functions are available.

For estimating average treatment effects, \cite{avagyan2017honest} proposed a regularized version of
estimating equations in \cite{vermeulen2015bias}. But their theoretical analysis appears to
presume standard convergence rates for the estimators of $(\alpha,\gamma)$ without handling additional data-dependency in loss functions.
\cite{Ning2020} proposed doubly robust confidence intervals, but their method is operationally more complicated than \cite{tan2020model} and our work.
With a nonlinear outcome model, the method in \cite{Ning2020} involves
first three steps which yield the same estimators $(\hat\alpha_2, \hat\gamma_2)$ as in our Example~\ref{eg:ATE-RCAL},
but then performs a fourth step to adjust the fitted propensity score before applying the augmented inverse probability weighted estimator.
In addition, the fourth step relies on variable selection properties, which may require stronger technical conditions than convergence of estimation errors used in our method.
An artificial constraint on the parameter set is also added in the proofs of Theorems 1 and 2 in \cite{Ning2020}.

\cite{smucler2019unifying} made a distinction between two types of doubly robust estimation in high-dimensional settings.
Our method as well as those in \cite{avagyan2017honest}, \cite{tan2020model} and \cite{Ning2020} achieves model double robustness:
an estimator of $\theta$ is obtained of order $O_p(n^{-1/2})$ if either of models $g(\cdot;\alpha)$ and $f(\cdot; \gamma)$ is correctly specified,
under sparsity conditions $s_{\bar\alpha} = o(n^{1/2})$ and $s_{\bar\gamma} = o(n^{1/2})$ up to $\log(p)$ terms, where $s_{\bar\alpha}$ or $s_{\bar\gamma}$
is the number of nonzero elements of the target values $\bar\alpha$ or $\bar\gamma$.
By comparison, several methods have been proposed to achieve rate double robustness:
an estimator of $\theta$ is obtained of order $O_p(n^{-1/2})$ if both models $g(\cdot;\alpha)$ and $f(\cdot; \gamma)$ are correctly specified,
under sparsity conditions $s_{\bar\alpha} s_{\bar\gamma} = o(n)$ or weaker (\citealt{chernozhukov2018double, smucler2019unifying, bradic2019minimax}).
All of these methods appear to rely on sample splitting and cross fitting, which is not pursued in our work.
In particular, the method of \cite{smucler2019unifying} is shown to achieve rate and model double robustness simultaneously in general settings
where the parameter $\theta$ has an influence function in a certain bilinear form.
However, our method is applicable to any doubly robust estimating function including that in a partially logistic model in our Example~\ref{eg:PLM-logit},
which does not satisfy the bilinear condition.

Finally, our work is also connected to debiased Lasso mentioned earlier and extensions (\citealt{neykov2018unified})
to obtain confidence intervals and tests for low-dimensional coefficients in high-dimensional models.
These methods in general do not achieve double robustness. See Examples~\ref{eg:PLM-RCAL}--\ref{eg:PLM-logit-RCAL} on partially linear models for further discussion.


\section{Double robustness and calibrated estimation} \label{sec:dr-cal}

\subsection{Doubly robust estimation}

Let $\{U_i : i=1,\ldots,n\}$ be independent and identically distributed observations as $U$, which is assumed to
include a covariate vector $X$ taking values $x$ in a space $\mathcal X$.
Consider semiparametric estimation based on an estimating equation
\begin{align}
0 = \tilde E \{\tau(U; \theta, g, f)\} = \frac{1}{n} \sum_{i=1}^n \tau(U_i; \theta,g,f),  \label{eq:EE}
\end{align}
where $\tilde E()$ denotes a sample average, $\tau(U;\theta,g,f)$ is an estimating function, $\theta$ is a scalar parameter of interest in $\Theta$,
and $g$ and $f$ are two variation-independent nuisance parameters, defined in some function spaces $\mathcal G$ and $\mathcal F$ on $\mathcal X$.
Denote by $(\theta^*, g^*, f^*)$ the true values (i.e., data-generating values) of $(\theta,g,f)$.
Assume that the estimating function $\tau(U;\theta,g,f)$ is doubly robust in satisfying the following two properties:
\begin{align}
& 0 = E \{\tau(U; \theta^*, g^*, f)\} \mbox{ for any } f \in \mathcal F, \label{eq:DR1} \\
& 0 = E \{\tau(U; \theta^*, g, f^*)\} \mbox{ for any } g \in \mathcal G . \label{eq:DR2}
\end{align}
In other words, $\tau(U;\theta,g,f)$ is unbiased for estimation of $\theta^*$ if either $g=g^*$ or $f=f^*$.
Several examples of doubly robust estimating functions are as follows.
Construction of doubly robust estimating functions is problem-dependent and not discussed here. See \cite{robins2001comment}, \cite{tchetgen2010doubly} and \cite{tan2019doubly} among others.


\begin{eg} \label{eg:PLM}
Suppose that an outcome $Y$ is related to a covariate $Z$ and additional covariates $X$ in a partially linear model
\begin{align}
E(Y | Z, X) &=\theta^* Z + g^*(X), \label{eq:PLM}
\end{align}
where $\theta^*$ is the true value of a coefficient $\theta$ and $g^*(x)$ is the true value of a function $g(x)$.
In addition to $g(\cdot)$, define a nuisance parameter $f(\cdot)$ such that $f^*(X) = E( Z | X)$.
Then the following estimating function is doubly robust (\citealt{robins2001comment}),
\begin{align}
\tau( U; \theta, g, f) = \{Y - \theta Z - g(X)\} \{Z - f(X)\} , \label{eq:PLM-tau}
\end{align}
where $U = (Y,Z,X)$. The true value $\theta^*$ can be regarded as a homogeneous additive treatment effect, in the setting where
$Z$ is a treatment variable.
\end{eg}

\begin{eg} \label{eg:PLM-log}
Consider a partially log-linear model
\begin{align}
E(Y | Z, X) &=\exp\{ \theta^* Z + g^*(X)\}, \label{eq:PLM-log}
\end{align}
where $\theta^*$ is the true value of a coefficient $\theta$ and $g^*(x)$ is the true value of a function $g(x)$.
The nuisance parameter $f(x)$ is still defined such that $f^*(x) = E(Z|X)$.
Then the following estimating function is doubly robust (\citealt{robins2001comment}),
\begin{align}
\tau( U; \theta, g, f) = \{Y \me^{-\theta Z} - \me^{g(X)} \} \{Z - f(X)\} , \label{eq:PLM-log-tau}
\end{align}
where $U = (Y,Z,X)$. The true value $\theta^*$ can be regarded as a homogeneous multiplicative treatment effect, in the setting where
$Z$ is a treatment variable.
\end{eg}

\begin{eg} \label{eg:PLM-logit}
Consider a partially logistic model with binary $Y$,
\begin{align}
E(Y | Z, X) &=\expit \{ \theta^* Z + g^*(X)\}, \label{eq:PLM-logit}
\end{align}
where $\expit(c)=(1+\me^{-c})^{-1}$, $\theta^*$ is the true value of  $\theta$ and $g^*(x)$ is the true value of $g(x)$.
In contrast with Examples~\ref{eg:PLM}--\ref{eg:PLM-log}, define a nuisance parameter $f(\cdot)$ such that $f^*(X) = E( Z | Y=0,X)$.
Then a doubly robust estimating function is (\citealt{tan2019doubly})
\begin{align}
\tau( U; \theta, g, f) = \me^{-\theta Z Y} \{Y- \expit(g(X)) \} \{Z - f(X)\} , \label{eq:PLM-logit-tau}
\end{align}
where $U = (Y,Z,X)$. The true value $\theta^*$ can be regarded as a homogeneous treatment effect in the scale of log odds, in the setting where
$Z$ is a treatment variable.
\end{eg}

\begin{eg} \label{eg:ATE}
Let $Y$ be an outcome variable, $X$ a covariate vector, and $Z$ a binary variable such that $Z=1$ or 0 if $Y$ is observed or missing respectively.
Assume that the missing data mechanism is ignorable: $Y$ and $Z$ are conditionally independent given $X$ (\citealt{rubin1976inference}).
It is of interest to estimate the mean $\theta^* = E(Y)$. The nuisance parameters $g(\cdot)$ and $f(\cdot)$ are defined such that the true values are
\begin{align*}
g^*(X) = E(Y | Z=1, X) , \quad f^*(X) = P(Z=1 | X) ,
\end{align*}
which are called outcome regression function and propensity score.
Then the following estimating function is doubly robust (\citealt{scharfstein1999adjusting}),
\begin{align}
\tau( U; \theta, g, f) = \frac{Z Y}{f(X)} - \left\{ \frac{Z}{f(X)} -1 \right\} g(X) - \theta, \label{eq:ATE-tau}
\end{align}
where $U = (ZY,Z,X)$. The true value $\theta^*$ represents the mean of a potential outcome associated with a treatment
when $Z$ encodes the receipt of the treatment.
\end{eg}

Typically, estimating equation (\ref{eq:EE}) is used in the form of two-stage semiparametric estimation, depending on
some modeling restrictions, $g(x; \alpha)$ and $f(x;\gamma)$ with parameters $\alpha$ and $\gamma$, postulated on $(g^*,f^*)$.
For concreteness, consider the following two models,
\begin{align}
g^*(x)=  g(x; \alpha) = \psi_g \{ \alpha^\T \xi(x) \}, \label{eq:g-model} \\
f^*(x) = f(x; \gamma) = \psi_f \{ \gamma^\T \xi(x) \} , \label{eq:f-model}
\end{align}
where $\psi_g$ and $\psi_f$ are inverse link functions similarly as in generalized linear models (\citealt{mccullagh1989}), $\xi(x)$ is a $p \times 1$ vector of known functions on $\mathcal X$
such as $\xi(x) = (1,x^\T)^\T$, and $\alpha$ and $\gamma$ are $p\times 1$ vectors of unknown coefficients.
Models (\ref{eq:g-model}) and (\ref{eq:f-model}) may be misspecified.
We say that model (\ref{eq:g-model}) is correctly specified if there exists a true value $\alpha^*$ such that $g^*(x) \equiv g (x; \alpha^*)$, or misspecified otherwise.
Similarly, model (\ref{eq:f-model}) is correctly specified if there exists a true value $\gamma^*$ such that $f^*(x) \equiv f(x; \gamma^*) $, or misspecified otherwise.
By definition, a true value $\alpha^*$ or $\gamma^*$ exists only if model (\ref{eq:g-model}) or (\ref{eq:f-model}) is correctly specified.

Given working models (\ref{eq:g-model})--(\ref{eq:f-model}), the first-stage estimation involves constructing some estimators $\hat\alpha$ and $\hat\gamma$
and setting $\hat g= g(x; \hat\alpha)$ and $\hat f = f(x; \hat\gamma)$.
Then an estimator for $\theta^*$, denoted as $\hat\theta(\hat\alpha,\hat\gamma)$, is
defined as a solution to (\ref{eq:EE}) with $(g,f)$ replaced by $(\hat g, \hat f)$, i.e.,
\begin{align}
0 = \tilde E \{\tau(U; \theta, \hat g, \hat f)\}  \label{eq:EE2}.
\end{align}
Conventionally, $(\hat\alpha, \hat\gamma)$ are defined by maximum likelihood (or quasi-likelihood) including least squares
in generalized linear models associated with (\ref{eq:g-model})--(\ref{eq:f-model}).
Our main subject is, however, calibrated estimation as an alternative approach.
To facilitate discussion in Section~\ref{sec:cal}, we describe some general asymptotic results about $(\hat\alpha,\hat\gamma)$
and $\hat\theta(\hat\alpha,\hat\gamma)$, based on theory of estimation with possibly misspecified models (\citealt{white1982maximum, manski1988analog}),
in the classical setting where $\alpha$ and $\gamma$ are fixed-dimensional as the sample size $n$ grows.
To focus on main issues, assume that
$\hat\alpha$ is consistent for $\alpha^*$ if model (\ref{eq:g-model}) is correctly specified,
and $\hat\gamma$ is consistent for $\gamma^*$ if model (\ref{eq:f-model}) is correctly specified.

With possible model misspecification, $\hat\alpha$ can be shown to converge at rate $O_p(n^{-1/2})$ to a target value $\bar \alpha$, which
coincides with the true value $\alpha^*$ (i.e., $\hat\alpha$ is consistent) if model (\ref{eq:g-model}) is correctly specified, but remains well-defined even though $\alpha^*$ is undefined
if model (\ref{eq:g-model}) is misspecified. Similarly,
$\hat\gamma$ can be shown to converge  at rate $O_p(n^{-1/2})$ to a target value $\bar \gamma$, which coincides
with the true value $\gamma^*$ (i.e., $\hat\gamma$ is consistent) if model (\ref{eq:g-model}) is correctly specified, but remains well-defined even though $\gamma^*$ is undefined
if model (\ref{eq:f-model}) is misspecified.
As a result, unbiasedness properties (\ref{eq:DR1})--(\ref{eq:DR2}) can be used to show that $\hat\theta ( \hat\alpha,  \hat\gamma)$ is doubly robust, i.e., remains
consistent for $\theta^*$ if either model (\ref{eq:g-model}) or (\ref{eq:f-model}) is correctly specified.
Moreover, it can be shown that
if model (\ref{eq:g-model}) is correctly specified with $\bar\alpha= \alpha^*$ or model (\ref{eq:f-model}) is correctly specified with $\bar\gamma = \gamma^*$, then
$\hat\theta ( \hat\alpha,  \hat\gamma)$ admits the asymptotic expansion,
\begin{align}
\hat \theta (\hat\alpha, \hat\gamma) - \theta^* =  E^{-1} \left( \frac{\partial \tau}{\partial\theta} \right)
\left\{ \tilde E (\tau) + E^\T \left( \frac{\partial \tau}{\partial \alpha} \right) (\hat \alpha - \bar \alpha) +
E^\T \left( \frac{\partial \tau}{\partial \gamma} \right) (\hat \gamma - \bar \gamma) \right\} + o_p (n^{-1/2}), \label{eq:general-expansion}
\end{align}
where $\tau = \tau( U; \theta, g(x;\alpha), f(x;\gamma))$, and $\tau$ and its partial derivatives
$(\partial\tau/ \partial\theta, \partial\tau/ \partial \alpha, \partial\tau/ \partial \gamma)$
are evaluated above at $(\theta^*,\bar\alpha,\bar \gamma)$.
The preceding expansion (\ref{eq:general-expansion}) indicates how the asymptotic behavior of $\hat\theta(\hat\alpha, \hat\gamma)$ is affected by
the estimators $(\hat\alpha,\hat\gamma)$ through the second and third terms in the curly brackets.
In fact, removing these two terms in (\ref{eq:general-expansion}) yields the asymptotic expansion of the infeasible estimator $\hat\theta( \bar \alpha, \bar \gamma)$,
with $(\hat\alpha,\hat\gamma)$ replaced by $(\bar\alpha, \bar \gamma)$.

\begin{eg} \label{eg:PLM-LS}
We point out a somewhat under-appreciated result that the familiar least squares estimator for each individual coefficient in linear regression is doubly robust
in the context of a partially linear model in Example~\ref{eg:PLM}. Let $\psi_g(\cdot)$ be an identity function in model (\ref{eq:g-model}).
For $\tau$ in (\ref{eq:PLM-tau}), the estimator $\hat\theta(\hat\alpha,\hat\gamma)$ as a solution to (\ref{eq:EE2}) is of closed form with $\xi=\xi(X)$,
\begin{align*}
\hat\theta (\hat\alpha,\hat\gamma) = \frac{\tilde E\{ (Y - \hat\alpha^\T \xi) (Z - \psi_f(\hat\gamma^\T \xi)) \}}{ \tilde E\{Z(Z - \psi_f(\hat\gamma^\T \xi)) \}},
\end{align*}
depending on some estimators $(\hat\alpha,\hat\gamma)$. Suppose that $\psi_f(\cdot)$ is also an identity function, i.e., a linear model is specified for $E(Z|X)$.
Let $(\hat\theta_0, \hat\alpha)$ be the least-squares estimators of $(\theta,\alpha)$ in the linear regression of $Y$ on $Z$ and $\xi(X)$,
and $\hat\gamma$ be that of $\gamma$ in the linear regression of $Z$ on $\xi(X)$. Then
$\hat\theta(\hat\alpha,\hat\gamma)$ is identical to $\hat\theta_0$, the least squares estimator of $\theta$:
\begin{align*}
\hat\theta (\hat\alpha,\hat\gamma) -\hat\theta_0 = \frac{\tilde E\{ (Y -\hat\theta_0 Z - \hat\alpha^\T \xi) (Z -\hat\gamma^\T \xi) \}}{ \tilde E\{Z(Z - \hat\gamma^\T \xi) \}} =0,
\end{align*}
because $\tilde E \{(Y -\hat\theta_0 Z - \hat\alpha^\T \xi) Z \} =0$ and $\tilde E \{(Y -\hat\theta_0 Z - \hat\alpha^\T \xi) \xi \} =0$.
Hence the least-squares estimator $\hat\theta_0$ is doubly robust for $\theta^*$ in the partially linear model (\ref{eq:PLM}), if either a linear model
for $g^*(x)$ or a linear model for $f^*(x) =E(Z|X=x)$ is correctly specified.
Furthermore, the sandwich variance estimator for $\hat\theta_0$ (\citealt{White1980}) can be written as $n^{-1} \hat V$ with
\begin{align*}
\hat V =  \frac{\tilde E\{ (Y -\hat\theta_0 Z - \hat\alpha^\T \xi)^2 (Z -\hat\gamma^\T \xi)^2 \}}{\tilde E^2\{Z(Z - \hat\gamma^\T \xi) \}} .
\end{align*}
By Corollary~\ref{cor:DR-expansion} later, an asymptotic $(1-c)$-confidence interval for $\theta^*$ is $\hat\theta_0 \pm z_{c/2} \sqrt{\hat V/n}$ if
either a linear model for $E(Y|Z,X)$ or that for $E(Z|X)$ is correctly specified.
A high-dimensional version of this result is Corollary~\ref{cor:debiased} later on debiased Lasso for least-squares estimation.
\end{eg}

\subsection{Calibrated estimation} \label{sec:cal}

We derive and discuss implications of basic mean-zero identities for a doubly robust estimating function $\tau (U;\theta,g,f)$.
In particular, we study calibrated estimation converting these identities into estimating equations in $(\alpha,\gamma)$.
Here we assume the classical setting where asymptotic expansion (\ref{eq:general-expansion}) directly holds.
See Section~\ref{sec:rcal} for high-dimensional development.

For a function $h(x)$ and a constant $\delta >0$, denote by $h+ L_2 (\delta) $ the set $\{ h(x) + c(x) : E ( c^2(X) ) \le \delta^2 \}$.
Denote by $\partial\tau /\partial g$ and $\partial\tau /\partial f$ the partial derivatives of $\tau=\tau(U;\theta,g,f)$ with respect to  $G=g(x)$ and $F=f(x)$
as free arguments.
Whenever the dependency of $\tau$ on $(\alpha,\gamma)$ is mentioned, $\tau$ is parameterized as $\tau(U;\theta,\alpha,\gamma)=\tau(U; \theta, g(x;\alpha), f(x;\gamma))$.
For differentiation of $\tau$ with respect to $(\alpha,\gamma)$, it is convenient to introduce linear predictors $(\eta_g,\eta_g)$ such that
$ g(x) = \psi_g (\eta_g(x))$ and $f(x) = \psi_f (\eta_f(x))$.
Hence models (\ref{eq:g-model}) and (\ref{eq:f-model}) can be stated as $\eta_g(x;\alpha )= \alpha^\T \xi(x)$ and $\eta_f(x;\gamma) = \gamma^\T \xi(x)$.
Denote by $\partial\tau /\partial \eta_g$ and $\partial\tau /\partial \eta_f$ the partial derivatives of $\tau$ with respect to
$\eta_g(x)$ and $\eta_f(x)$ as free arguments.
By the chain rule, $\partial\tau /\partial \alpha = (\partial\tau/\partial \eta_g )\xi =  (\partial\tau /\partial g) \psi^\prime_g (\alpha^\T \xi) \xi $
and $\partial\tau /\partial \gamma =( \partial\tau/\partial\eta_f) \xi = (\partial\tau /\partial f) \psi^\prime_f (\gamma^\T \xi) \xi $,
where $\psi^\prime_g$ or $\psi^\prime_f$ denotes the derivative of $\psi_g$ or $\psi_f$.

\begin{pro}
	Under suitable regularity conditions, property (\ref{eq:DR1}) implies that
	\begin{align}
	0 = E \left\{ \frac{\partial \tau}{\partial f} (U; \theta^*, g^*, f) \Big| X \right\} , \label{eq:DR-conseq1}
	\end{align}
	for any $f$ such that $f+L_2(\delta_1) \subset \mathcal F$ for some $\delta_1 >0$.
	Similarly, property (\ref{eq:DR2}) implies that
	\begin{align}
	0 = E \left\{ \frac{\partial \tau}{\partial g} (U; \theta^*, g, f^*) \Big| X \right\} , \label{eq:DR-conseq2}
	\end{align}
	for any $g$ such that $g+L_2(\delta_2) \subset \mathcal G$  for some $\delta_2 >0$.
\end{pro}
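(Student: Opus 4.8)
The plan is to exploit that property (\ref{eq:DR1}) holds not merely at a single $f$ but uniformly over an $L_2$-neighborhood of $f$, and to convert this uniform unconditional identity into the pointwise conditional statement by a directional-derivative argument. First I would fix an arbitrary direction $c(x)$ with $E(c^2(X)) \le \delta_1^2$ and form the one-parameter family $f_t = f + t c$ for $t \in [0,1]$. Since $E((t c)^2(X)) = t^2 E(c^2(X)) \le \delta_1^2$, each $f_t$ lies in $f + L_2(\delta_1) \subset \mathcal F$, so (\ref{eq:DR1}) applies along the whole path to give $E\{\tau(U;\theta^*,g^*,f_t)\} = 0$ for every $t \in [0,1]$.

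Next I would differentiate this identity in $t$ at $t=0$. The map $t \mapsto E\{\tau(U;\theta^*,g^*,f_t)\}$ is identically zero, so its derivative vanishes. Interchanging differentiation and expectation and applying the chain rule (recalling $\partial\tau/\partial f$ is the derivative of $\tau$ in the free argument $F=f(x)$, evaluated at $F=f(X)$, and that $c$ enters only through $c(X)$), I obtain
\begin{align*}
0 = \frac{\dif}{\dif t}\Big|_{t=0} E\{\tau(U;\theta^*,g^*,f_t)\} = E\left\{ \frac{\partial \tau}{\partial f}(U;\theta^*,g^*,f)\, c(X) \right\}.
\end{align*}
Because $f$ and $c$ depend on $U$ only through $X$, the tower property turns this into
\begin{align*}
0 = E\left[ E\left\{ \frac{\partial \tau}{\partial f}(U;\theta^*,g^*,f) \,\Big|\, X \right\} c(X) \right].
\end{align*}

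Finally I would let the direction $c$ vary. Writing $h(X) = E\{\partial\tau/\partial f \mid X\}$, the last display says $h(X)$ is $L_2$-orthogonal to every $c(X)$ in the ball $\{E(c^2(X)) \le \delta_1^2\}$; by homogeneity (rescaling any $c \in L_2$ into the ball) this orthogonality extends to all of $L_2(X)$, and taking $c$ proportional to $h$ forces $E\{h^2(X)\} = 0$, hence $h(X) = 0$ almost surely, which is exactly (\ref{eq:DR-conseq1}). The companion statement (\ref{eq:DR-conseq2}) follows by the identical argument with the roles of $g$ and $f$ interchanged, using (\ref{eq:DR2}) and the perturbations $g + tc \in g + L_2(\delta_2) \subset \mathcal G$.

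The step I expect to be the main obstacle is justifying the interchange of the $t$-derivative and the expectation, which is precisely what the unstated ``suitable regularity conditions'' must supply: a dominated-convergence or uniform-integrability bound on the difference quotients $t^{-1}\{\tau(U;\theta^*,g^*,f_t) - \tau(U;\theta^*,g^*,f)\}$ along the admissible directions, together with existence and integrability of the functional derivative $\partial\tau/\partial f$. Everything else — the path argument, the chain rule, the tower property, and the orthogonality-implies-zero conclusion — is routine once differentiation under the integral sign is licensed.
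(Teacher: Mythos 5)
Your proposal is correct and follows essentially the same route as the paper's proof: perturb $f$ along an admissible $L_2$ direction, differentiate the unbiasedness identity under the integral sign at the base point, and conclude from arbitrariness of the direction (your parametrization $f+tc$ with $c\in L_2(\delta_1)$ is just a rescaling of the paper's $f+ah$ with $h\in L_2(1)$). You are in fact slightly more explicit than the paper in spelling out the tower-property step and the orthogonality-implies-zero conclusion, and you correctly identify the interchange of differentiation and expectation as the content of the unstated regularity conditions.
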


\begin{proof}
	For $f$ such that $f+L_2(\delta_1) \subset \mathcal F$, (\ref{eq:DR1}) implies that for any $h\in L_2(1)$ and $a \in [-\delta_1, \delta_1]$,
	\begin{align*}
	0 = E \{f (U; \theta^*, g^*, f + a h  )\} .
	\end{align*}
	Taking the derivative of the above with respect to $a$ with $f$ and $h$ fixed, and assuming the differentiation and expectation are interchangeable, we have
	\begin{align*}
	0 = E \left\{ \frac{\partial \tau}{\partial f} (U; \theta^*, g^*, f) h(X) \right\} .
	\end{align*}
	Hence (\ref{eq:DR-conseq1}) follows because $h\in L_2(1)$ is arbitrary. Similarly, (\ref{eq:DR-conseq2}) can be proved.
\end{proof}

Similar reasoning as above can be applied to the derivatives of $\tau$ with respect to $(\alpha,\gamma)$, given  models (\ref{eq:g-model})--(\ref{eq:f-model}).
Differentiation of (\ref{eq:DR1}) or (\ref{eq:DR2}) with respect to $\gamma$ or $\alpha$ respectively and interchanging differentiation and expectation shows that
for any $(\alpha,\gamma)$,
\begin{align}
0 = E \left\{ \frac{\partial \tau}{\partial \gamma} (U; \theta^*, g^*, f(x;\gamma))  \right\}
= E \left\{  \xi(X)  \frac{\partial \tau}{\partial \eta_f} (U; \theta^*, g^*, f(x;\gamma)) \right\} , \label{eq:DR-EE1} \\
0 = E \left\{ \frac{\partial \tau}{\partial \alpha} (U; \theta^*, g(x;\alpha), f^*)  \right\}
= E \left\{ \xi(X) \frac{\partial \tau}{\partial \eta_g} (U; \theta^*,  g(x;\alpha), f^*) \right\}  . \label{eq:DR-EE2}
\end{align}
Equivalently, (\ref{eq:DR-EE1})--(\ref{eq:DR-EE2}) can also be deduced from the more general identities (\ref{eq:DR-conseq1})--(\ref{eq:DR-conseq2}),
which involve conditional expectations given $X$.
Model (\ref{eq:g-model}) with $g(x;\alpha)$ may be misspecified in (\ref{eq:DR-EE2}), and model (\ref{eq:f-model}) with $f(x;\gamma)$ may be misspecified in (\ref{eq:DR-EE1}).

We stress that identities (\ref{eq:DR-conseq1})--(\ref{eq:DR-conseq2}) and (\ref{eq:DR-EE1})--(\ref{eq:DR-EE2})
are derived from double-robustness properties (\ref{eq:DR1})--(\ref{eq:DR2}) in a general manner.
To some extent, identities (\ref{eq:DR-EE1})--(\ref{eq:DR-EE2}) are intriguingly reminiscent of the score identity in likelihood inference with a parametric model:
the expectation of the gradient of the log-likelihood, evaluated at the true parameter value, is zero.
However, $\tau$ is an estimating function in $\theta$, not a log-likelihood function in $\alpha$ or $\gamma$.

There are various implications of basic identities (\ref{eq:DR-EE1})--(\ref{eq:DR-EE2}).
First, these identities show that $E(\partial\tau/\partial\gamma)$ or $E(\partial\tau /\partial\alpha)$ reduces to 0 in
asymptotic expansion (\ref{eq:general-expansion}) for $\hat\theta(\hat\alpha,\hat\gamma)$, depending on whether model (\ref{eq:g-model})
or (\ref{eq:f-model}) is correctly specified.
If model (\ref{eq:g-model}) with $g(x;\alpha)$ is correctly specified and $\hat\alpha$ is consistent,
then, by (\ref{eq:DR-EE1}), asymptotic expansion (\ref{eq:general-expansion}) reduces to
\begin{align}
\hat \theta (\hat\alpha, \hat\gamma) - \theta^* = - E^{-1} \left( \frac{\partial \tau}{\partial\theta} \right)
\left\{ \tilde E (\tau) +  E^\T \left( \frac{\partial \tau}{\partial \alpha} \right) (\hat \alpha - \alpha^*) \right\} + o_p (n^{-1/2}), \label{eq:g-cor-expansion}
\end{align}
where $\tau$ and its partial derivatives are evaluated at $(\theta,\alpha,\gamma) = (\theta^*,\alpha^*,\bar \gamma)$.
As the term associated with $\hat\gamma-\bar \gamma$ vanishes in (\ref{eq:g-cor-expansion}), the asymptotic behavior of $\hat\theta(\hat\alpha,\hat\gamma)$ does not depend on the definition of $\hat\gamma$,
as long as model (\ref{eq:g-model}) is correctly specified and $\hat\alpha$ is consistent. Similarly,
if model (\ref{eq:f-model}) with $f(x;\gamma)$ is correctly specified and $\hat\gamma$ is consistent,
then, by (\ref{eq:DR-EE2}), the asymptotic behavior of $\hat\theta(\hat\alpha,\hat\gamma)$ does not depend on the definition of $\hat\alpha$:
\begin{align}
\hat \theta (\hat\alpha, \hat\gamma) - \theta^* = - E^{-1} \left( \frac{\partial \tau}{\partial\theta} \right)
\left\{ \tilde E (\tau) + E^\T \left( \frac{\partial \tau}{\partial \gamma} \right) (\hat \gamma - \gamma^*) \right\} + o_p (n^{-1/2}), \label{eq:f-cor-expansion}
\end{align}
where $\tau$ and its partial derivatives are evaluated at $(\theta,\alpha,\gamma) = (\theta^*,\bar \alpha,\gamma^*)$.
Combining the preceding arguments leads to Corollary 1: if both models (\ref{eq:g-model}) and (\ref{eq:f-model}) are correctly specified,
then the asymptotic behavior of $\hat\theta(\hat\alpha,\hat\gamma)$ remains the same for all consistent estimators $(\hat\alpha,\hat\gamma)$.
This result, related to local efficiency in specific examples (e.g., \citealt{robins1994estimation, tan2006distributional}), is obtained here as a general consequence of double robustness of $\tau$.

\begin{cor} \label{cor:corr-expansion}
If both models (\ref{eq:g-model}) and (\ref{eq:f-model}) are correctly specified and $(\hat\alpha,\hat\gamma)$ are consistent,
then as $p$ is fixed and $n\to\infty$, asymptotic expansion (\ref{eq:general-expansion}) reduces to
\begin{align}
\hat \theta (\hat\alpha, \hat\gamma) - \theta^* = - E^{-1} \left( \frac{\partial \tau}{\partial\theta} \right)
\tilde E (\tau) + o_p (n^{-1/2}), \label{eq:desired-expansion}
\end{align}
where $\tau$ and $\partial\tau/\partial\theta$ are evaluated at $(\theta,\alpha,\gamma) = (\theta^*,\alpha^*,\gamma^*)$.
\end{cor}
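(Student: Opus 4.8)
The plan is to read off the reduced expansion (\ref{eq:desired-expansion}) directly from the general expansion (\ref{eq:general-expansion}), by showing that when both working models are correctly specified the two coefficient vectors $E^\T(\partial\tau/\partial\alpha)$ and $E^\T(\partial\tau/\partial\gamma)$ multiplying $(\hat\alpha-\bar\alpha)$ and $(\hat\gamma-\bar\gamma)$ are \emph{exactly} zero. The starting point is that double correct specification forces the target values to coincide with the true values, $\bar\alpha=\alpha^*$ and $\bar\gamma=\gamma^*$, so that in (\ref{eq:general-expansion}) the function $\tau$ and all its partial derivatives are to be evaluated at the common point $(\theta^*,\alpha^*,\gamma^*)$.

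First I would invoke the already-established identity (\ref{eq:DR-EE2}), namely $E\{\partial\tau/\partial\alpha\,(U;\theta^*,g(x;\alpha),f^*)\}=0$ for every $\alpha$. Since model (\ref{eq:f-model}) is correctly specified, $f^*=f(x;\gamma^*)$, so taking $\alpha=\alpha^*$ gives $E(\partial\tau/\partial\alpha)=0$ evaluated at $(\theta^*,\alpha^*,\gamma^*)$. Symmetrically, identity (\ref{eq:DR-EE1}) gives $E\{\partial\tau/\partial\gamma\,(U;\theta^*,g^*,f(x;\gamma))\}=0$ for every $\gamma$; since model (\ref{eq:g-model}) is correct, $g^*=g(x;\alpha^*)$, and taking $\gamma=\gamma^*$ yields $E(\partial\tau/\partial\gamma)=0$ at the same point. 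These are precisely the two coefficient vectors appearing inside the curly brackets of (\ref{eq:general-expansion}).

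Substituting these two zeros into (\ref{eq:general-expansion}) annihilates its second and third terms. By consistency of $(\hat\alpha,\hat\gamma)$ the factors $(\hat\alpha-\alpha^*)$ and $(\hat\gamma-\gamma^*)$ are $O_p(n^{-1/2})$, yet because their coefficients vanish identically these products contribute nothing, rather than merely being absorbed into the $o_p(n^{-1/2})$ remainder; what survives is the leading term carrying $\tilde E(\tau)$ alone, which is (\ref{eq:desired-expansion}). Equivalently, one may chain the two intermediate reductions already recorded: correctness of model (\ref{eq:g-model}) reduces (\ref{eq:general-expansion}) to (\ref{eq:g-cor-expansion}), and then correctness of model (\ref{eq:f-model}) removes the remaining $(\hat\alpha-\alpha^*)$ term there via (\ref{eq:DR-EE2}). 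I do not expect a genuine analytic obstacle, since the result is an algebraic consequence of two mean-zero identities applied at one evaluation point; the only step requiring care is checking that both identities can be instantiated simultaneously at $(\theta^*,\alpha^*,\gamma^*)$, which is exactly what double correct specification buys, making $g^*=g(\cdot;\alpha^*)$ and $f^*=f(\cdot;\gamma^*)$ hold together so that the fixed true-nuisance slot in each identity lines up with the parametric value at which the other nuisance is evaluated.
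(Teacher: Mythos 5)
Your proposal is correct and follows essentially the same route as the paper: the text preceding Corollary~\ref{cor:corr-expansion} derives it by instantiating identities (\ref{eq:DR-EE1}) and (\ref{eq:DR-EE2}) simultaneously at $(\theta^*,\alpha^*,\gamma^*)$ (equivalently, by chaining the reductions (\ref{eq:g-cor-expansion}) and (\ref{eq:f-cor-expansion})), exactly as you do. The only quibble is your aside that consistency gives $(\hat\alpha-\alpha^*)=O_p(n^{-1/2})$ (consistency alone gives only $o_p(1)$), but as you yourself note this is immaterial since the coefficients vanish identically.
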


Second, methodologically, identities (\ref{eq:DR-EE1})--(\ref{eq:DR-EE2}) can also be exploited to construct specific estimators $(\hat \alpha,\hat\gamma)$,
for which the simple expansion (\ref{eq:desired-expansion}) is valid with the true values $(\alpha^*,\gamma^*)$ replaced by target values $(\bar\alpha,\bar\gamma)$
if either model (\ref{eq:g-model}) or (\ref{eq:f-model}), but not necessarily both, is correctly specified.
Suppose that  estimators $(\hat\alpha_{\mbox{\tiny CAL}}, \hat\gamma_{\mbox{\tiny CAL}})$ are defined such that they converge in probability to target values
$(\bar \alpha_{\mbox{\tiny CAL}}, \bar \gamma_{\mbox{\tiny CAL}})$ satisfying the simultaneous equations
\begin{align}
0 &=  E \left\{ \frac{\partial \tau}{\partial \gamma} (U; \theta^*, \alpha,\gamma)  \right\} =  E \left\{ \xi \frac{\partial \tau}{\partial \eta_f} (U; \theta^*, \alpha,\gamma)  \right\}, \label{eq:CAL1} \\
0 &=  E \left\{ \frac{\partial \tau}{\partial \alpha} (U; \theta^*, \alpha,\gamma)  \right\} =  E \left\{ \xi \frac{\partial \tau}{\partial \eta_g} (U; \theta^*, \alpha,\gamma)  \right\}, \label{eq:CAL2}
\end{align}
that is, the coefficients of $\hat\gamma-\bar\gamma$ and $\hat\alpha-\bar\alpha$ are set to 0 in expansion (\ref{eq:general-expansion}) for $\hat\theta(\hat\alpha,\hat\gamma)$.
Assume that there exists at most one value $\alpha$ satisfying (\ref{eq:CAL1}) for each fixed $\gamma$, and
 at most one value $\gamma$ satisfying (\ref{eq:CAL2}) for each fixed $\alpha$.
From our discussion below, this implies that $(\bar \alpha_{\mbox{\tiny CAL}}, \bar \gamma_{\mbox{\tiny CAL}})$ is
a unique solution to (\ref{eq:CAL1})--(\ref{eq:CAL2}) if model (\ref{eq:g-model}) or (\ref{eq:f-model}) is correctly specified.

If model (\ref{eq:g-model}) with $g(x;\alpha)$ is correctly specified,
then by (\ref{eq:DR-EE1}), $\bar\alpha_{\mbox{\tiny CAL}}$ coincides with $\alpha^*$ as a solution to (\ref{eq:CAL1}) for fixed $\gamma=\bar\gamma_{\mbox{\tiny CAL}}$,
i.e., $\hat\alpha_{\mbox{\tiny CAL}}$ is consistent.
In this case, (\ref{eq:CAL1}) can be seen as an unbiased population estimating equation for $\alpha^*$ with fixed $\gamma$.
Similarly, if model (\ref{eq:f-model}) with $f(x;\gamma)$ is correctly specified, then
by comparison of (\ref{eq:DR-EE2}) and (\ref{eq:CAL2}), $\bar\gamma_{\mbox{\tiny CAL}}$ coincides with $\gamma^*$, i.e., $\hat\gamma_{\mbox{\tiny CAL}}$ is consistent.
In this case, (\ref{eq:CAL2}) can be seen as an unbiased population estimating equation for $\gamma^*$ with fixed $\alpha$.
  (An interesting asymmetry is that
	differentiation of $\tau$ with respect to $\gamma$ leads to an estimating equation in $\alpha$, whereas
	that of $\tau$ with respect to $\alpha$ leads to an estimating equation in $\gamma$.)
Combining the two cases and applying asymptotic expansion (\ref{eq:general-expansion}) leads to the Corollary~\ref{cor:DR-expansion}, where, due to (\ref{eq:CAL1})--(\ref{eq:CAL2}) again,
the two terms associated with $\hat\gamma-\bar\gamma$ and $\hat\alpha-\bar\alpha$ are dropped from the expansion (\ref{eq:general-expansion}).
Alteratively, to help understanding, asymptotic expansion (\ref{eq:desired-expansion2}) for $\hat \theta (\hat\alpha_{\mbox{\tiny CAL}}, \hat\gamma_{\mbox{\tiny CAL}}) $ can also be
obtained from expansion (\ref{eq:g-cor-expansion}) with $E(\partial\tau/\partial\alpha)=0$ due to (\ref{eq:CAL2}) if model (\ref{eq:g-model}) is correctly specified,
or from expansion (\ref{eq:f-cor-expansion}) with $E(\partial\tau/\partial\gamma)=0$ due to (\ref{eq:CAL1}) if model (\ref{eq:f-model}) is correctly specified.

\begin{cor} \label{cor:DR-expansion}
If model (\ref{eq:g-model}) or (\ref{eq:f-model}) is correctly specified, then
$\hat\alpha_{\mbox{\tiny CAL}}$ or $\hat\gamma_{\mbox{\tiny CAL}}$ is consistent for $\alpha^*$ or $\gamma^*$ respectively.
In either case, the estimator
$\hat \theta (\hat\alpha_{\mbox{\tiny CAL}}, \hat\gamma_{\mbox{\tiny CAL}}) $ satisfies
\begin{align}
\hat \theta (\hat\alpha_{\mbox{\tiny CAL}}, \hat\gamma_{\mbox{\tiny CAL}}) - \theta^* = - E^{-1} \left( \frac{\partial \tau}{\partial\theta} \right)
\tilde E (\tau) \Big|_{ (\theta,\alpha,\gamma) =(\theta^*,\bar \alpha_{\mbox{\tiny CAL}}, \bar \gamma_{\mbox{\tiny CAL}})}+ o_p (n^{-1/2}),  \label{eq:desired-expansion2}
\end{align}
provided that expansion (\ref{eq:general-expansion}) holds for $(\hat\alpha,\hat\gamma)=(\hat\alpha_{\mbox{\tiny CAL}}, \hat\gamma_{\mbox{\tiny CAL}})$ as $p$ is fixed and $n\to\infty$.
 \end{cor}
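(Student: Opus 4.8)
The plan is to combine the double-robustness identities (\ref{eq:DR-EE1})--(\ref{eq:DR-EE2}) with the defining calibration equations (\ref{eq:CAL1})--(\ref{eq:CAL2}), using the general expansion (\ref{eq:general-expansion}) as the backbone, and to organize the argument into a consistency step followed by a reduction step. For consistency, suppose model (\ref{eq:g-model}) is correctly specified, so that $\alpha^*$ exists with $g^*(\cdot)\equiv g(\cdot;\alpha^*)$. The target pair $(\bar\alpha_{\mbox{\tiny CAL}},\bar\gamma_{\mbox{\tiny CAL}})$ solves (\ref{eq:CAL1}) at $\gamma=\bar\gamma_{\mbox{\tiny CAL}}$; on the other hand, substituting $g^*=g(\cdot;\alpha^*)$ into identity (\ref{eq:DR-EE1}) shows that $\alpha=\alpha^*$ also solves (\ref{eq:CAL1}) for that same fixed $\gamma=\bar\gamma_{\mbox{\tiny CAL}}$. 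Invoking the stated assumption that (\ref{eq:CAL1}) has at most one solution $\alpha$ for each fixed $\gamma$, I would conclude $\bar\alpha_{\mbox{\tiny CAL}}=\alpha^*$, i.e. $\hat\alpha_{\mbox{\tiny CAL}}$ is consistent. The case where model (\ref{eq:f-model}) is correct is symmetric, using (\ref{eq:DR-EE2}), equation (\ref{eq:CAL2}), and the corresponding uniqueness assumption to deduce $\bar\gamma_{\mbox{\tiny CAL}}=\gamma^*$. Here one must keep straight the asymmetry noted in the text: differentiating $\tau$ in $\gamma$ produces (\ref{eq:CAL1}), which is solved for $\alpha$, whereas differentiating in $\alpha$ produces (\ref{eq:CAL2}), which is solved for $\gamma$.

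For the expansion, I would feed this consistency into the general expansion (\ref{eq:general-expansion}), whose hypothesis---$\bar\alpha=\alpha^*$ when model (\ref{eq:g-model}) is correct, or $\bar\gamma=\gamma^*$ when model (\ref{eq:f-model}) is correct---is exactly what the consistency step supplies. The key observation is that the coefficient vectors multiplying $\hat\alpha-\bar\alpha$ and $\hat\gamma-\bar\gamma$ in (\ref{eq:general-expansion}) are $E(\partial\tau/\partial\alpha)$ and $E(\partial\tau/\partial\gamma)$, evaluated at $(\theta^*,\bar\alpha_{\mbox{\tiny CAL}},\bar\gamma_{\mbox{\tiny CAL}})$; these are precisely the left-hand sides of the calibration equations (\ref{eq:CAL2}) and (\ref{eq:CAL1}), so both coefficient vectors vanish identically. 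Consequently the two nuisance terms in (\ref{eq:general-expansion}) are exactly zero, and the expansion collapses to (\ref{eq:desired-expansion2}). As an alternative route, I could instead pass through (\ref{eq:g-cor-expansion}) and eliminate its residual $\alpha$-term via (\ref{eq:CAL2}) when model (\ref{eq:g-model}) is correct, or through (\ref{eq:f-cor-expansion}) and eliminate its $\gamma$-term via (\ref{eq:CAL1}) when model (\ref{eq:f-model}) is correct; both paths yield the same conclusion.

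The main point, rather than a genuine obstacle in this fixed-$p$ classical setting, is recognizing that it is the \emph{exact} vanishing of the coefficients---by construction of the calibration equations---and not any rate argument that removes the nuisance terms. This is what makes the expansion insensitive to the definition and to the $O_p(n^{-1/2})$ variability of $(\hat\alpha_{\mbox{\tiny CAL}},\hat\gamma_{\mbox{\tiny CAL}})$, in contrast with the high-dimensional development where slower convergence of the estimators must be controlled more delicately. The only real care needed lies in the consistency step: matching $\alpha^*$ (or $\gamma^*$) to a solution of the relevant calibration equation via (\ref{eq:DR-EE1})--(\ref{eq:DR-EE2}) and then appealing to the one-solution assumption to identify it with the target value.
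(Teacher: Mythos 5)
Your proposal is correct and follows essentially the same route as the paper: consistency of $\bar\alpha_{\mbox{\tiny CAL}}$ (or $\bar\gamma_{\mbox{\tiny CAL}}$) via matching $\alpha^*$ (or $\gamma^*$) to a solution of (\ref{eq:CAL1}) (or (\ref{eq:CAL2})) through identity (\ref{eq:DR-EE1}) (or (\ref{eq:DR-EE2})) and the at-most-one-solution assumption, followed by exact cancellation of the two nuisance terms in (\ref{eq:general-expansion}) because their coefficients are by construction the quantities set to zero in the calibration equations. Your alternative route through (\ref{eq:g-cor-expansion}) and (\ref{eq:f-cor-expansion}) is also the one the paper mentions, so there is nothing to add.
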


We refer to equations (\ref{eq:CAL1})--(\ref{eq:CAL2}) as population calibration equations and
$(\hat\alpha_{\mbox{\tiny CAL}}, \hat\gamma_{\mbox{\tiny CAL}})$ as calibrated estimators for two reasons, following \cite{tan2020model}.
For the missing-data problem in Example~\ref{eg:ATE}, related to estimation of average treatment effects, this method leads to calibrated estimation
for fitting propensity score models $f(x;\gamma)$, which can be traced to the literature on survey calibration (\citealt{folsom1991}). See Example~\ref{eg:ATE-EE} below.
More generally, as indicated by Corollaries~\ref{cor:corr-expansion}--\ref{cor:DR-expansion},
using estimating equations (\ref{eq:CAL1})--(\ref{eq:CAL2}) can be seen as carefully choosing (or calibrating) estimators $(\hat\alpha,\hat\gamma)$
for the nuisance parameters $(\alpha,\gamma)$,
such that the resulting estimator $\hat\theta(\hat\alpha,\hat\gamma)$ behaves as if both models (\ref{eq:g-model}) and (\ref{eq:f-model}) were correctly specified,
while it is only assumed that either model (\ref{eq:g-model}) or (\ref{eq:f-model}) is correctly specified.

A benefit of achieving asymptotic expansion (\ref{eq:desired-expansion2}) is to allow simple variance estimation for
$\hat \theta (\hat\alpha_{\mbox{\tiny CAL}}, \hat\gamma_{\mbox{\tiny CAL}})$, without the need to account for the variations of
$(\hat\alpha_{\mbox{\tiny CAL}},\hat\gamma_{\mbox{\tiny CAL}})$. This benefit is mainly computationally in the setting of
low-dimensional $(\alpha,\gamma)$, where variance estimation can in general be performed for $\hat\theta(\hat\alpha,\hat\gamma)$ by using asymptotic expansion (\ref{eq:general-expansion})
and usual influence functions for $(\hat\alpha,\hat\gamma)$, allowing for model misspecification (\citealt{white1982maximum, manski1988analog}).
However, the influence-function based approach is not applicable in the high-dimensional setting where regularized estimation is involved.
In Section~\ref{sec:rcal}, we develop regularized calibration estimation to achieve a simple expansion similar to (\ref{eq:desired-expansion2}) for the resulting estimator of $\theta^*$,
so that valid variance estimation and confidence intervals can be obtained.

\begin{rem} \label{rem:expansion}
It is important to distinguish the two expansions (\ref{eq:desired-expansion}) and (\ref{eq:desired-expansion2}), although they appear similar to each other.
The expansion (\ref{eq:desired-expansion}) holds for any consistent estimators $(\hat\alpha,\hat\gamma)$ provided that both models (\ref{eq:g-model}) and (\ref{eq:f-model}) are correctly specified.
The two terms $E(\partial\tau /\partial \alpha)$ and $E(\partial\tau / \partial\gamma)$ in (\ref{eq:general-expansion}) reduce to 0 by the assumption of
both models (\ref{eq:g-model}) and (\ref{eq:f-model}) being correctly specified, while appealing to the two identities (\ref{eq:DR-EE1})--(\ref{eq:DR-EE2}) simultaneously.
In contrast, the expansion (\ref{eq:desired-expansion2}) is valid for estimators $(\hat\alpha_{\mbox{\tiny CAL}}, \hat\gamma_{\mbox{\tiny CAL}}) $ constructed such that
(\ref{eq:CAL1})--(\ref{eq:CAL2}) are satisfied, if either model (\ref{eq:g-model}) or (\ref{eq:f-model}), but not necessarily both, is correctly specified.
The two terms $E(\partial\tau /\partial \alpha)$ and $E(\partial\tau / \partial\gamma)$ in (\ref{eq:general-expansion}) reduce to 0 by the construction
of population estimating equations (\ref{eq:CAL1})--(\ref{eq:CAL2}).
Identity (\ref{eq:DR-EE1}) is involved to show consistency of $\hat\alpha_{\mbox{\tiny CAL}}$
if model (\ref{eq:g-model}) is correct or, separately, identity (\ref{eq:DR-EE2}) is involved to show consistency of $\hat\gamma_{\mbox{\tiny CAL}}$
if model (\ref{eq:f-model}) is correct, whereas consistency of $(\hat\alpha,\hat\gamma)$ is presumed in Corollary~\ref{cor:corr-expansion}.
\end{rem}

Our preceding discussion leaves open the question how calibrated estimators $(\hat\alpha_{\mbox{\tiny CAL}}, \hat\gamma_{\mbox{\tiny CAL}})$ can be defined such that
(\ref{eq:CAL1})--(\ref{eq:CAL2}) are satisfied. A direct approach would be to take $(\hat\alpha_{\mbox{\tiny CAL}}, \hat\gamma_{\mbox{\tiny CAL}})$ as a solution
to the sample version of calibration equations (\ref{eq:CAL1})--(\ref{eq:CAL2}), where the expectation $E(\cdot)$ is replaced by
the sample average $\tilde E(\cdot)$. However, there are various complications for this approach even in the classical setting with low-dimensional $(\alpha,\gamma)$.
First, equations (\ref{eq:CAL1})--(\ref{eq:CAL2}) and the sample version may depend on $\theta^*$ to be estimated.
A preliminary doubly robust estimator can be substituted for $\theta^*$.
But the resulting sample version of (\ref{eq:CAL1})--(\ref{eq:CAL2}) remains a system of nonlinear equations in $(\alpha,\gamma)$.
Numerical solution of such equations with finite data may suffer the issue of no solution or multiple solutions (\citealt{Small}).
Theoretical analysis of estimators from nonlinear estimating equations may require cumbersome regularity conditions which would be avoided
when using conventional estimators of $(\alpha,\gamma)$. These issues can be illustrated with the following examples.

\begin{eg} \label{eg:PLM-EE}
For Example~\ref{eg:PLM} with a partially linear model, let $\psi_g(\cdot)$ be an identity function.
The calibration equations (\ref{eq:CAL1})--(\ref{eq:CAL2}) based on $\tau$ in (\ref{eq:PLM-tau}) are
\begin{align}
& 0 =  E \left( \frac{\partial \tau}{\partial \gamma} \right) = - E \left\{(Y- \theta^*Z - \alpha^\T \xi ) \psi^\prime_f ( \gamma^\T \xi ) \xi \right\}, \label{eq:PLM-EE1} \\
& 0 =  E \left( \frac{\partial \tau}{\partial \alpha} \right) = - E \left\{(Z - \psi_f(\gamma^\T \xi) ) \xi \right\} , \label{eq:PLM-EE2}
\end{align}
where $\xi=\xi(X)$ and $\tau$ is evaluated at $\theta=\theta^*$.
Because (\ref{eq:PLM-EE2}) does not depend on $\alpha$, the sample version of  simultaneous equations (\ref{eq:PLM-EE1})--(\ref{eq:PLM-EE2}) can be solved sequentially:
the sample version of (\ref{eq:PLM-EE2}) can be first solved, and then that of (\ref{eq:PLM-EE1}) be solved,
provided that $\theta^*$ is replaced by a preliminary doubly robust estimator.
\end{eg}

\begin{eg} \label{eg:PLM-log-EE}
For Example~\ref{eg:PLM-log} with a partially log-linear model, let $\psi_g(\cdot)$ be an identity function.
The calibration equations (\ref{eq:CAL1})--(\ref{eq:CAL2}) based on $\tau$ in (\ref{eq:PLM-log-tau}) are
\begin{align}
& 0 =  E \left( \frac{\partial \tau}{\partial \gamma} \right) = - E \left\{(Y \me^{-\theta^*Z} - \me^{\alpha^\T \xi} ) \psi^\prime_f ( \gamma^\T \xi ) \xi \right\}, \label{eq:PLM-log-EE1} \\
& 0 =  E \left( \frac{\partial \tau}{\partial \alpha} \right) = - E \left\{(Z - \psi_f(\gamma^\T \xi) ) \me^{\alpha^\T \xi} \xi \right\} , \label{eq:PLM-log-EE2}
\end{align}
where $\xi=\xi(X)$ and $\tau$ is evaluated at $\theta=\theta^*$.
Unlike (\ref{eq:PLM-EE1})--(\ref{eq:PLM-EE2}) in Example~\ref{eg:PLM-EE}, the sample version of (\ref{eq:PLM-log-EE1})--(\ref{eq:PLM-log-EE2})
cannot be solved sequentially even after $\theta^*$ is appropriately estimated. Therefore, algorithms for solving nonlinear equations need to be used.
We point out that calibration equations (\ref{eq:PLM-log-EE1})--(\ref{eq:PLM-log-EE2}) are simpler than estimating equations proposed in (\citealt{vermeulen2015bias}, Section 5.2),
$0 = E( \partial \tau_{\mbox{\tiny IF}} /\partial \gamma )$ and $0 = E( \partial \tau_{\mbox{\tiny IF}} /\partial \alpha)$,
where $\tau_{\mbox{\tiny IF}}$ is the influence function, $\tau_{\mbox{\tiny IF}}(U; \theta,\alpha,\gamma) =
- E^{-1}( \partial \tau /\partial\theta ) \tau(U; \theta, \alpha,\gamma)$, evaluated at $\theta=\theta^*$.
\end{eg}

\begin{eg} \label{eg:PLM-logit-EE}
For Example~\ref{eg:PLM-logit} with a logistic partially linear model, let $\psi_g(\cdot)$ be an identity function.
The calibration equations (\ref{eq:CAL1})--(\ref{eq:CAL2}) based on $\tau$ in (\ref{eq:PLM-logit-tau}) are
\begin{align}
& 0 =  E \left( \frac{\partial \tau}{\partial \gamma} \right) = - E \left\{ \me^{-\theta^* Z Y} (Y- \expit(\alpha^\T \xi)) \psi^\prime_f(\gamma^\T \xi) \xi \right\}, \label{eq:PLM-logit-EE1} \\
& 0 =  E \left( \frac{\partial \tau}{\partial \alpha} \right) = - E \left\{ \me^{-\theta^* Z Y} \expit_2(\alpha^\T \xi)) (Z - \psi_f(\gamma^\T \xi) ) \xi \right\} , \label{eq:PLM-logit-EE2}
\end{align}
where $\expit_2(c) = \expit(c)(1-\expit(c))$ and $\tau$ is evaluated at $\theta=\theta^*$.
Similarly as (\ref{eq:PLM-log-EE1})--(\ref{eq:PLM-log-EE2}), the sample version of (\ref{eq:PLM-logit-EE1})--(\ref{eq:PLM-log-EE2})
cannot be solved sequentially, due to dependency on both $\alpha$ and $\gamma$,  even after $\theta^*$ is appropriately estimated.
\end{eg}

\begin{eg} \label{eg:ATE-EE}
For the missing-data problem in Example~\ref{eg:ATE}, the calibration equations (\ref{eq:CAL1})--(\ref{eq:CAL2}) based on $\tau$ in (\ref{eq:ATE-tau}) are
\begin{align}
& 0 =  E \left( \frac{\partial \tau}{\partial \gamma} \right) = - E \left\{  \frac{\psi^\prime_f(\gamma^\T \xi)}{\psi_f^2(\gamma^\T \xi)} Z (Y - \psi_g(\alpha^\T \xi)) \xi \right\}, \label{eq:ATE-EE1} \\
& 0 =  E \left( \frac{\partial \tau}{\partial \alpha} \right) = - E \left\{ \left(\frac{Z}{\psi_f(\gamma^\T \xi)}-1 \right) \psi_g^\prime(\alpha^\T \xi) \xi \right\} , \label{eq:ATE-EE2}
\end{align}
where $\xi=\xi(X)$ and $\tau$ is evaluated at $\theta=\theta^*$.
In the case where $\psi_g(\cdot)$ is an identity function, i.e., a linear model (\ref{eq:g-model}) is specified for $E(Y|Z=1,X)$,
the sample version of (\ref{eq:ATE-EE1})--(\ref{eq:ATE-EE2}) can be solved sequentially, because (\ref{eq:ATE-EE2}) does not depend on $\alpha$.
But such sequential solution is infeasible with a nonlinear function $\psi_g(\cdot)$, because equations (\ref{eq:ATE-EE1})--(\ref{eq:ATE-EE2}) are intrinsically coupled, each depending on both
$\alpha$ and $\gamma$ (\citealt{tan2020model}, Section 3.5).
\end{eg}

\section{Regularized calibrated estimation} \label{sec:rcal}

We develop regularized calibrated estimation for $(\alpha,\gamma)$, such that
the resulting estimator of $\theta^*$ achieves an asymptotic expansion similar to (\ref{eq:desired-expansion2}), hence allowing
valid confidence intervals, under suitable conditions in high-dimensional settings.
The estimators of $(\alpha,\gamma)$ are derived from a numerically tractable two-step algorithm.
Moreover, high-dimensional analysis is provided to demonstrate the desired asymptotic expansion and consistent variance estimation, which lead to valid Wald confidence intervals.

Conceptually, regularized calibrated estimation involves constructing regularized estimators of $(\alpha,\gamma)$,
which converge in probability to the target values $(\bar\alpha_{\mbox{\tiny CAL}}, \bar\gamma_{\mbox{\tiny CAL}})$ satisfying population calibration equations (\ref{eq:CAL1})--(\ref{eq:CAL2}).
As discussed in Section~\ref{sec:cal} in low-dimensional settings, there may be numerical and theoretical complications with
directly using the sample version of  (\ref{eq:CAL1})--(\ref{eq:CAL2}) as estimating equations.
With high-dimensional data, estimating equations can be regularized by generalizing the Dantzig selector (\citealt{candes2007dantzig}), which seeks to minimize
$\|\alpha\|_1 + \|\gamma\|_1$ subject to
\begin{align*}
& \left\|  \tilde E \left\{ \frac{\partial \tau}{\partial \gamma} (U; \hat\theta_1, g(x;\alpha), f(x;\gamma))  \right\} \right\|_\infty \le \lambda, \\
& \left\| \tilde E \left\{ \frac{\partial \tau}{\partial \alpha} (U; \hat\theta_1, g(x;\alpha), f(x;\gamma))  \right\}  \right\|_\infty \le \lambda,
\end{align*}
where $\hat\theta_1$ is a preliminary doubly robust estimator, $\lambda$ is a tuning parameter, and $\|\cdot\|_1$ or $\|\cdot\|_\infty$ denotes $L_1$ or $L_\infty$ norm.
While theoretical analysis of generalized Dantzig selectors can be performed, this approach is not pursued here mainly because
the required optimization problem seems numerically difficult to solve with complex nonlinear estimating functions.
The generalized Dantzig-selector algorithm in \cite{radchenko2011improved} can potentially be modified for the above problem, but its effectiveness seems uncertain.
Further investigation of the Dantzig-selector approach can be of interest in future work.

\subsection{Two-step algorithm}

We propose a two-step algorithm, shown as Algorithm~\ref{alg:two-step}, for regularized calibrated estimation.
The algorithm is facilitated by exploiting the following convexity assumption, which is satisfied
in various settings including Examples~\ref{eg:PLM}--\ref{eg:ATE} as shown in Section \ref{sec:applications}.
In principle, our approach can also be applied without the convexity assumption, provided that a solution to equation (\ref{eq:CAL1}) or (\ref{eq:CAL2})
is unique in $\alpha$ or $\gamma$, while $\gamma$ or $\alpha$ is fixed respectively.
Such an assumption is used earlier in the discussion leading to Corollary~\ref{cor:DR-expansion}.

\begin{ass} \label{ass:convex}
There exist two loss functions $\ell_1(U; \theta, \alpha,\gamma)$ and $\ell_2(U; \theta, \alpha,\gamma)$
such that $E\{\ell_1(U; \theta, \alpha,\gamma)\}$ is strictly convex in $\alpha$, $E\{ \ell_2(U; \theta, \alpha,\gamma)\}$ is strictly convex in $\gamma$, and
\begin{align}
& \frac{\partial \ell_1}{\partial \alpha} = \frac{\partial \tau}{\partial \gamma}, \quad
\frac{\partial \ell_2}{\partial \gamma} =\frac{\partial \tau}{\partial \alpha}, \label{eq:convex-loss}
\end{align}
where $\tau$ is parameterized as $\tau(U; \theta, \alpha, \gamma)=\tau(U; \theta, g(x;\alpha), f(x;\gamma))$.
\end{ass}

From Assumption~\ref{ass:convex}, various equations in Section~\ref{sec:cal} can be restated in terms of minimization of convex loss functions.
The basic identities (\ref{eq:DR-EE1})--(\ref{eq:DR-EE2}) can be translated to minimization properties.
If model (\ref{eq:g-model}) with $g(x;\alpha)$ is correctly specified, then
(\ref{eq:DR-EE1}) amounts to $ E\{ (\partial /\partial\alpha) \ell_1 (U; \theta^*, \alpha, \gamma) \}_{\alpha=\alpha^*} = 0$
and hence for fixed $\gamma$, the expected loss $E\{ \ell_1 (U; \theta^*, \alpha, \gamma) \}$, convex in $\alpha$, attains a minimum at $\alpha^*$ with zero gradient
 under interchangeability of the differentiation and expectation.
Similarly,  if model (\ref{eq:f-model}) with $f(x;\gamma)$ is correctly specified, then
(\ref{eq:DR-EE2}) amounts to $ E\{ (\partial /\partial\gamma) \ell_2 (U; \theta^*, \alpha, \gamma) \}_{\gamma=\gamma^*} = 0$
and hence for fixed $\alpha$, the expected loss $E\{ \ell_2 (U; \theta^*, \alpha, \gamma) \}$, convex in $\gamma$, is minimized at $\gamma^*$.

\begin{algorithm}[t] 
	\caption{Two-step algorithm}\label{alg:two-step}
	\begin{algorithmic}[1]
		\Procedure{Initial estimation}{}
		\State Compute $(\hat\alpha_1,\hat\gamma_1)$ as model-based estimators of $(\alpha,\gamma)$;
		\State Compute $\hat{\theta}_1 = \hat\theta(\hat\alpha_1,\hat\gamma_1)$ as a solution to $\tilde{E}\{\tau(U; \theta,\hat\alpha_1, \hat\gamma_1)\}=0$.
		\EndProcedure
		\Procedure{Calibrated estimation}{}
		\State Compute $\hat{\gamma}_2 = \argmin_{\gamma} \, [ \tilde E \{\ell_2(U;  \hat{\theta}_1, \hat{\alpha}_1, \gamma)\} + \lambda_1 \|\gamma\|_1  ]$, also denoted as $\hat\gamma_{\mbox{\tiny RCAL}}$;
		\State Compute $\hat{\alpha}_2 = \argmin_{\alpha} \, [ \tilde E \{\ell_1 (U; \hat{\theta}_1, \alpha, \hat{\gamma}_2)\} + \lambda_2 \|\alpha\|_1  ]$, also denoted as $\hat\alpha_{\mbox{\tiny RCAL}}$;
		\State Compute $\hat{\theta}_2 = \hat\theta(\hat\alpha_2,\hat\gamma_2)$ as a solution to $ \tilde{E}\{ \tau(U; \theta, \hat\alpha_2,\hat\gamma_2) \}=0$, also denoted as $\hat\theta_{\mbox{\tiny RCAL}}$.
		\EndProcedure
	\end{algorithmic}
\end{algorithm}

The population calibration equations (\ref{eq:CAL1})--(\ref{eq:CAL2}) can be expressed in the form of alternating minimization:
$E\{ \ell_1 (U; \theta^*, \alpha, \gamma) \}$ is minimized at $\alpha = \bar\alpha_{\mbox{\tiny CAL}}$  for fixed $\gamma= \bar\gamma_{\mbox{\tiny CAL}}$,
and  $E\{ \ell_2 (U; \theta^*, \alpha, \gamma) \}$ is minimized at $\gamma = \bar\gamma_{\mbox{\tiny CAL}}$  for fixed $\alpha= \bar\alpha_{\mbox{\tiny CAL}}$.
This reasoning would suggest the following iterative algorithm for computing
$(\bar\alpha_{\mbox{\tiny CAL}}, \bar\gamma_{\mbox{\tiny CAL}})$ at a population level.

\textit{Population calibration algorithm}.\vspace{-.1in}
\begin{itemize} \addtolength{\itemsep}{-.1in}
\item Determine initial target values $(\bar\alpha_1,\bar\gamma_1)$;
\item For $t=2,3,\cdots$, determine $\bar\gamma_{t}$ as a solution to $E \{ (\partial /\partial \alpha) \tau(U; \theta^*, \bar\alpha_{t-1}, \gamma) \}=0$ or
a minimizer of $E\{ \ell_2 (U; \theta^*, \bar\alpha_{t-1}, \gamma) \}$ in $\gamma$, and then determine
$\bar\alpha_{t}$ as a solution to $E \{ (\partial /\partial \gamma) \tau(U; \theta^*, \alpha, \hat\gamma_t) \}=0$ or
a minimizer of $E\{ \ell_1 (U; \theta^*, \alpha, \bar\gamma_{t}) \}$ in $\alpha$.
\end{itemize}
The limit $(\bar\alpha_\infty,\bar\gamma_\infty) =\lim_{t \to \infty} (\bar\alpha_t, \bar\gamma_t)$, if exists, can be shown to satisfy (\ref{eq:CAL1})--(\ref{eq:CAL2}).
However, remarkably, we show in Proposition~\ref{pro:two-step} that if the initial target values $(\bar\alpha_1,\bar\gamma_1)$ are determined from model-based estimators of $(\alpha,\gamma)$
which are consistent in the case of model (\ref{eq:g-model}) or (\ref{eq:f-model}) being correctly specified, then
the iterative process can be terminated by the second step (i.e., by $t=2$), as far as doubly robust estimation is concerned.
It should also be mentioned that if both models (\ref{eq:g-model}) and (\ref{eq:f-model}) are misspecified, then
the second-step target values $(\bar\alpha_2, \bar\gamma_2)$ may in general not satisfy calibration equations  (\ref{eq:CAL1})--(\ref{eq:CAL2}).

\begin{pro}  \label{pro:two-step}
If model (\ref{eq:g-model}) is correctly specified and $\bar\alpha_1 = \alpha^*$ but $\bar\gamma_1$ is arbitrary, or if model (\ref{eq:f-model}) is correctly specified and
$\bar\gamma_1 = \gamma^*$ but $\bar\alpha_1$ is arbitrary,
then $\bar\alpha_2=\alpha^*$ or $\bar\gamma_2=\gamma^*$ respectively, and $(\bar\alpha_2,\bar\gamma_2)$ jointly satisfy calibration equations  (\ref{eq:CAL1})--(\ref{eq:CAL2}).
\end{pro}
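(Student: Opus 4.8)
The plan is to establish the two claims by symmetry, carrying out in full the case in which model~(\ref{eq:g-model}) is correctly specified with $\bar\alpha_1=\alpha^*$ and $\bar\gamma_1$ arbitrary; the other case follows verbatim after interchanging the roles of $(\alpha,g)$ and $(\gamma,f)$ and replacing identity~(\ref{eq:DR-EE1}) by~(\ref{eq:DR-EE2}). Throughout I would lean on Assumption~\ref{ass:convex}: each single-coordinate update in the population calibration algorithm is the minimizer of a strictly convex expected loss, hence the \emph{unique} zero of the corresponding gradient. This uniqueness is what lets me identify an update by exhibiting any one stationary point.

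First I would record the two updates at $t=2$ explicitly. Since $\bar\alpha_1=\alpha^*$, the value $\bar\gamma_2$ minimizes $E\{\ell_2(U;\theta^*,\alpha^*,\gamma)\}$ over $\gamma$, equivalently it is the unique solution of $E\{(\partial\tau/\partial\alpha)(U;\theta^*,\alpha^*,\gamma)\}=0$. At this stage $\bar\gamma_2$ need not equal $\gamma^*$---indeed $\gamma^*$ may be undefined, since model~(\ref{eq:f-model}) is not assumed correct---but it is well-defined and unique by strict convexity. The second update then sets $\bar\alpha_2$ to be the unique minimizer in $\alpha$ of $E\{\ell_1(U;\theta^*,\alpha,\bar\gamma_2)\}$, equivalently the unique solution of $E\{(\partial\tau/\partial\gamma)(U;\theta^*,\alpha,\bar\gamma_2)\}=0$.

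The crux---and the reason the iteration may be stopped at $t=2$---is to show $\bar\alpha_2=\alpha^*$. Here I would invoke identity~(\ref{eq:DR-EE1}): because model~(\ref{eq:g-model}) is correct, $g^*=g(\cdot\,;\alpha^*)$, so~(\ref{eq:DR-EE1}) reads $E\{(\partial\tau/\partial\gamma)(U;\theta^*,\alpha^*,\gamma)\}=0$ for \emph{every} $\gamma$, in particular at $\gamma=\bar\gamma_2$. Thus $\alpha^*$ is a stationary point of the strictly convex map $\alpha\mapsto E\{\ell_1(U;\theta^*,\alpha,\bar\gamma_2)\}$, and by uniqueness it is the minimizer, so $\bar\alpha_2=\alpha^*$. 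The point worth stressing is that double robustness furnishes a zero of the $\alpha$-gradient at $\alpha^*$ regardless of which value is supplied in the $\gamma$-slot, which is exactly why the possibly misspecified intermediate value $\bar\gamma_2$ does no harm.

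It then remains to verify that $(\bar\alpha_2,\bar\gamma_2)=(\alpha^*,\bar\gamma_2)$ solves both calibration equations. Equation~(\ref{eq:CAL2}) holds at $(\alpha^*,\bar\gamma_2)$ by the very definition of $\bar\gamma_2$ in the first update, and equation~(\ref{eq:CAL1}) holds at $(\alpha^*,\bar\gamma_2)$ by the same instance of~(\ref{eq:DR-EE1}) used above. For the symmetric case ($f$ correct, $\bar\gamma_1=\gamma^*$), the order is reversed: the first update already pins $\bar\gamma_2=\gamma^*$ through~(\ref{eq:DR-EE2}) and uniqueness, the second update merely defines the auxiliary coordinate $\bar\alpha_2$, and both calibration equations are then checked at $(\bar\alpha_2,\gamma^*)$. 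I do not anticipate a genuine obstacle beyond careful bookkeeping of which identity and which convexity statement belong to which slot; the one point requiring attention is that the double-robustness identity must be applied in the update that produces the \emph{true-value} coordinate, not the one that produces the auxiliary coordinate.
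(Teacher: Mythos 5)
Your proposal is correct and follows essentially the same route as the paper: define the two $t=2$ updates, apply the double-robustness identity (\ref{eq:DR-EE1}) or (\ref{eq:DR-EE2}) to exhibit the true value as a stationary point of the relevant update, conclude equality by uniqueness, and then read off both calibration equations from the update equations plus that same identity. Your explicit appeal to strict convexity of the expected losses to justify uniqueness merely makes precise what the paper's ``by comparison of'' step assumes, and your handling of the reversed order in the $f$-correct case matches the paper's use of (\ref{eq:two-step3}).
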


\begin{prf}
By definition, $(\bar\alpha_2,\bar\gamma_2)$ satisfy the equations
\begin{align}
& E \left\{ \frac{\partial \tau}{\partial \alpha} (U;\theta^*, \bar\alpha_1, \bar\gamma_2) \right\} =0, \label{eq:two-step1} \\
& E \left\{ \frac{\partial \tau}{\partial \gamma} (U;\theta^*, \bar \alpha_2, \bar\gamma_2) \right\} =0. \label{eq:two-step2}
\end{align}
If  model (\ref{eq:g-model}) is correctly specified and $\bar\alpha_1 = \alpha^*$, then by comparison of (\ref{eq:DR-EE1}) and (\ref{eq:two-step2}), $\bar\alpha_2 =\alpha^*$, and
hence (\ref{eq:two-step2}) and (\ref{eq:two-step1}) yield (\ref{eq:CAL1}) and (\ref{eq:CAL2}) respectively for $(\bar\alpha_2,\bar\gamma_2)$.
If  model (\ref{eq:f-model}) is correctly specified and $\bar\gamma_1 = \gamma^*$, then by comparison of (\ref{eq:DR-EE2}) and (\ref{eq:two-step1}), $\bar\gamma_2 =\gamma^*$,
and by (\ref{eq:DR-EE2}),
\begin{align}
& E \left\{ \frac{\partial \tau}{\partial \alpha} (U;\theta^*, \bar\alpha_2, \gamma^*) \right\} =0. \label{eq:two-step3}
\end{align}
In this case, (\ref{eq:two-step2}) and (\ref{eq:two-step3}) lead to (\ref{eq:CAL1}) and (\ref{eq:CAL2}) respectively for $(\bar\alpha_2,\bar\gamma_2)$.
\end{prf}

Algorithm~\ref{alg:two-step} is a sample version of the population calibration algorithm with two steps, using regularized estimation with Lasso penalties to deal with
high-dimensional data. The initial estimators $(\hat\alpha_1, \hat\gamma_1)$ can be Lasso-regularized maximum likelihood (or quasi-likelihood) estimators in generalized linear models
associated with (\ref{eq:g-model})--(\ref{eq:f-model}). The two-step estimators, $(\hat\alpha_{\mbox{\tiny RCAL}}, \hat\gamma_{\mbox{\tiny RCAL}})=(\hat\alpha_2,\hat\gamma_2)$,
serves as an adjustment to the usual estimators $(\hat\alpha_1,\hat\gamma_1)$,
such that calibration equations (\ref{eq:CAL1})--(\ref{eq:CAL2}) are satisfied if either model (\ref{eq:g-model}) or (\ref{eq:f-model}) is correct.

\subsection{Theoretical analysis} \label{sec:theo}

We provide high-dimensional analysis of the two-step estimators $(\hat\alpha_{\mbox{\tiny RCAL}}, \hat\gamma_{\mbox{\tiny RCAL}})=(\hat\alpha_2,\hat\gamma_2)$
and the resulting estimator $\hat\theta_{\mbox{\tiny RCAL}}=\hat\theta ( \hat\alpha_2 ,\hat\gamma_2)$.
Throughout this section, we assume that either model (\ref{eq:g-model}) or (\ref{eq:f-model}), but not necessarily both, is correctly specified.

Our main result, summarized as Proposition~\ref{pro:main}, can be deduced from Theorems~\ref{thm:gamma2}--\ref{thm:theta2} later.
For initial estimators $(\hat\alpha_1,\hat\gamma_1)$ defined as Lasso-regularized maximum likelihood (or quasi-likelihood) estimators,
the rates of convergence in Assumption~\ref{ass:gamma2-basic}(iv) later are satisfied under suitable conditions with $M_0 = O(1) (|S_{\bar\alpha_1}| + |S_{\bar\gamma_1}|)$,
where $|S_{\bar\alpha_1}|$ or $|S_{\bar\gamma_1}|$ denotes the number of nonzero coefficients of the target value $\bar\alpha_1$ or $\bar\gamma_1$ respectively
(\citealt{buhlmann2011statistics, negahban2012unified}).
For the two-step estimators $(\hat\alpha_2,\hat\gamma_2)$, denote by
$|S_{\bar\alpha_2}|$ or $|S_{\bar\gamma_2}|$ denotes the number of nonzero coefficients of the target value $\bar\alpha_2$ or $\bar\gamma_2$ respectively.
Suppose that the Lasso tuning parameters are specified as $\lambda_1 =A_1^\dag r_0$ and  $\lambda_2 = A_2^\dag r_0$ for sufficiently large constants $A_1^\dag$ and $A_2^\dag$,
where $r_0 = \{\log(\me p)/n\}^{1/2}$.

\begin{pro}\label{pro:main}
Suppose that Assumptions \ref{ass:convex}--\ref{ass:theta2-rate} hold, and
$ (M_0 + |S_{\bar\alpha_2}| + |S_{\bar \gamma_2}|) r_0^2 = o(n^{-1/2})$, i.e., $ (M_0 + |S_{\bar\alpha_2}| + |S_{\bar \gamma_2}|) \log(\me p) = o(n^{1/2})$,
If  model (\ref{eq:g-model}) with $g(x;\alpha)$ is correctly specified or model (\ref{eq:f-model}) is correctly specified with
$f(x;\gamma)$, then $\hat\theta_{\mbox{\tiny RCAL}}=\hat\theta ( \hat\alpha_2, \hat\gamma_2)$ satisfies
\begin{align}
\hat \theta_{\mbox{\tiny RCAL}} - \theta^* = - E^{-1} \left( \frac{\partial \tau}{\partial\theta} \right)
\tilde E (\tau) \Big|_{ (\theta,\alpha,\gamma) =(\theta^*,\bar \alpha_2,\bar\gamma_2)}+ o_p (n^{-1/2}) .  \label{eq:expansion-rcal}
\end{align}
Furthermore, the following results hold in either case: \vspace{-.1in}
\begin{itemize} \addtolength{\itemsep}{-.1in}
\item[(i)] $\sqrt{n}(\hat{\theta}_{\mbox{\tiny RCAL}} - \theta^*) \overset{\mathcal D}{\rightarrow} \N(0, V)$, where
$V= \var (\tau)/ E^2 (\partial\tau/\partial \theta) \big|_{ (\theta,\alpha,\gamma) =(\theta^*,\bar \alpha_2,\bar\gamma_2)}$;
\item[(ii)] A consistent estimator $\hat{V}$ of $V$ is
\begin{align*}
\hat V = \tilde E (\tau^2) / \tilde E^2 (\partial\tau/\partial \theta) \Big|_{ (\theta,\alpha,\gamma) =(\hat\theta_{\mbox{\tiny RCAL}} ,\hat\alpha_2, \hat\gamma_2)};
\end{align*}
\item[(iii)] An asymptotic $(1-c)$ confidence interval for $\theta^*$ is $\hat{\theta}_{\mbox{\tiny RCAL}} \pm z_{c/2} \sqrt{\hat{V}/n}$, where $z_{c/2}$ is the $(1-c/2)$ quantile of $\N(0,1 )$.
\end{itemize} \vspace{-.1in}
Hence a doubly robust confidence interval for $\theta^*$ is obtained.
\end{pro}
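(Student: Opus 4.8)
The plan is to treat Proposition~\ref{pro:main} as an assembly of the three component results Theorems~\ref{thm:gamma2}--\ref{thm:theta2}, so that the proof reduces to (a) extracting the convergence rates of the two-step nuisance estimators, (b) converting the single stated sparsity condition into $o_p(n^{-1/2})$ remainder bounds, and (c) reading off normality, variance consistency, and the interval from the resulting expansion. First I would invoke Theorem~\ref{thm:gamma2} and its $\alpha$-counterpart to obtain, under $\lambda_1,\lambda_2 \asymp r_0$, consistency of $\hat\gamma_2\to\bar\gamma_2$ and $\hat\alpha_2\to\bar\alpha_2$ with the high-dimensional rates established there: schematically $\|\hat\gamma_2-\bar\gamma_2\|_1$ and $\|\hat\alpha_2-\bar\alpha_2\|_1$ of order $O_p((M_0+|S_{\bar\gamma_2}|+|S_{\bar\alpha_2}|)r_0)$ and the squared $\ell_2$ (excess-loss) errors of order $O_p((M_0+|S_{\bar\gamma_2}|+|S_{\bar\alpha_2}|)r_0^2)$, with the $M_0$ dependence entering through the plugged-in initial estimators $(\hat\alpha_1,\hat\gamma_1,\hat\theta_1)$ and the data-dependence of the loss for $\hat\alpha_2$ on $\hat\gamma_2$ absorbed by the sequential design of these theorems. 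In parallel, Proposition~\ref{pro:two-step} guarantees that, because either model is correctly specified with the matching initial target equal to the true value, the limits $(\bar\alpha_2,\bar\gamma_2)$ satisfy the calibration equations (\ref{eq:CAL1})--(\ref{eq:CAL2}); equivalently the population gradients $E(\partial\tau/\partial\alpha)$ and $E(\partial\tau/\partial\gamma)$ both vanish at $(\theta^*,\bar\alpha_2,\bar\gamma_2)$, and double robustness (\ref{eq:DR1})--(\ref{eq:DR2}) gives $E(\tau)=0$ there.

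Next I would use Theorem~\ref{thm:theta2} to Taylor-expand the defining equation $\tilde E\{\tau(U;\hat\theta_2,\hat\alpha_2,\hat\gamma_2)\}=0$ about $(\theta^*,\bar\alpha_2,\bar\gamma_2)$ and solve for $\hat\theta_2-\theta^*$, mirroring (\ref{eq:general-expansion}). The coupling terms take the form $[\tilde E(\partial\tau/\partial\alpha)]^\T(\hat\alpha_2-\bar\alpha_2)$ and $[\tilde E(\partial\tau/\partial\gamma)]^\T(\hat\gamma_2-\bar\gamma_2)$. Writing $\tilde E = E + (\tilde E-E)$, the $E$-parts vanish by the calibration just established, and each fluctuation part is bounded by Hölder, $\|(\tilde E-E)(\partial\tau/\partial\alpha)\|_\infty\,\|\hat\alpha_2-\bar\alpha_2\|_1 = O_p(r_0)\cdot O_p((M_0+|S_{\bar\alpha_2}|)r_0)$, where the sup-norm factor is controlled at scale $r_0$ by a maximal-inequality argument. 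The quadratic Taylor remainders are handled with the squared-$\ell_2$/excess-loss rates together with uniform control of the relevant empirical Hessians, again of order $O_p((M_0+|S_{\bar\alpha_2}|+|S_{\bar\gamma_2}|)r_0^2)$, so that both the coupling and the quadratic contributions are linear, not quadratic, in the sparsity levels. The stated condition $(M_0+|S_{\bar\alpha_2}|+|S_{\bar\gamma_2}|)r_0^2=o(n^{-1/2})$ then forces every remainder to be $o_p(n^{-1/2})$, leaving exactly the expansion (\ref{eq:expansion-rcal}).

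With the leading term equal to $-E^{-1}(\partial\tau/\partial\theta)\,\tilde E(\tau)$ evaluated at $(\theta^*,\bar\alpha_2,\bar\gamma_2)$, part (i) follows from the classical central limit theorem applied to the i.i.d.\ mean-zero summands $\tau(U_i;\theta^*,\bar\alpha_2,\bar\gamma_2)$, with limiting variance $V=\var(\tau)/E^2(\partial\tau/\partial\theta)$. For part (ii) I would show $\hat V \to V$ by arguing that $\tilde E(\tau^2)$ and $\tilde E(\partial\tau/\partial\theta)$, evaluated at the plug-ins $(\hat\theta_2,\hat\alpha_2,\hat\gamma_2)$, converge to their population counterparts; the key step is to replace the random evaluation point by $(\theta^*,\bar\alpha_2,\bar\gamma_2)$, using the consistency rates above, smoothness of $\tau$ and $\tau^2$ in their arguments, and once more the $\ell_1$--$\ell_\infty$ Hölder bound to absorb the high-dimensional deviations, after which an ordinary law of large numbers applies. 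Part (iii) is then immediate from (i)--(ii) by Slutsky's theorem.

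I expect the genuine difficulty to reside inside Theorems~\ref{thm:gamma2}--\ref{thm:theta2} rather than in this assembly: establishing the fast $\ell_1$ and excess-loss rates for the sequentially constructed $(\hat\alpha_2,\hat\gamma_2)$ when the empirical loss functions are themselves data-dependent (through $\hat\theta_1$ and the plugged-in $\hat\gamma_2$), and verifying compatibility/restricted-eigenvalue-type conditions uniformly, is where the real work lies. At the level of the Proposition itself, the sharpest obstacle is the consistency of the variance estimator in (ii): it is evaluated at regularized, slower-than-$\sqrt n$ nuisance estimators and, unlike $\hat\theta_2$, enjoys no population orthogonality to annihilate its first-order sensitivity, so its analysis must lean directly on the convergence rates and the sparsity condition rather than on the calibration equations.
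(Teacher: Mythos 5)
Your proposal is correct and follows essentially the same route as the paper: Proposition~\ref{pro:main} is obtained by assembling Theorems~\ref{thm:gamma2}--\ref{thm:theta2}, with Proposition~\ref{pro:two-step} supplying the calibration equations at $(\bar\alpha_2,\bar\gamma_2)$, the linear coupling terms killed by the population-gradient identities plus an $\ell_1$--$\ell_\infty$ H\"{o}lder bound of order $O_p(M_2 r_0^2)$, and the quadratic remainders controlled by the prediction-norm rates, so that the stated sparsity condition (which is exactly $M_2 r_0^2 = o(n^{-1/2})$) yields the expansion, and (i)--(iii) follow from the CLT, a uniform law of large numbers with the continuous mapping theorem, and Slutsky. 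The only cosmetic difference is that the paper's variance-consistency step (Lemma~\ref{lem:theta2-var-est}) uses a uniform LLN over a fixed $\ell_1$-neighborhood rather than a further H\"{o}lder absorption, but this does not change the argument in substance.
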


In the remainder of Section~\ref{sec:theo}, we present several formal results underlying Proposition~\ref{pro:main}. Our analysis of the estimators $(\hat\alpha_2,\hat\gamma_2)$,
while building on the existing literature on Lasso penalized $M$-estimation (\citealt{buhlmann2011statistics, negahban2012unified}),
needs to tackle the dependency of $\hat\gamma_2$ on $(\hat\theta_1,\hat\alpha_1)$ and subsequently that of $\hat\alpha_2$ on $(\hat\theta_1,\hat\gamma_2)$.
The situation is more general and more complicated than studied in \cite{tan2020model}.
We develop a technical strategy to control such dependency through use of the $L_1$ norm, so that the usual rates of convergence are obtained. See Lemma~\ref{lem:remove-hat} in the Supplement.

We first discuss theoretical analysis of $\hat\gamma_2$, with the Lasso tuning parameter $\lambda_1 = A_1 \lambda_0$ for a constant $A_1$, where  $\lambda_0 = \{\log( p/\epsilon)/n\}^{1/2}$.
The loss function for defining $\hat\gamma_2$ is
$L_2(\gamma; \hat\theta_1, \hat\alpha_1) = \tilde E \{ \ell_2( U; \hat\theta_1, \hat\alpha_1, \gamma) \} $, where $\ell_2$ is from Assumption~\ref{ass:convex}.
As $L_2(\gamma; \hat\theta_1,\hat\alpha_1)$ is convex in $\gamma$, the corresponding Bregman divergence is defined as
\begin{align*}
& D_2 (\gamma^\prime, \gamma; \hat\theta_1, \hat\alpha_1) = L_2(\gamma^\prime; \hat\theta_1, \hat\alpha_1) - L_2(\gamma; \hat\theta_1, \hat\alpha_1) - (\gamma^\prime-\gamma)^\T \frac{\partial L_2}{\partial \gamma}(\gamma; \hat\theta_1,\hat\alpha_1).
\end{align*}
The symmetrized Bregman divergence is easily shown to be
\begin{align}
& D_2^\dag (\gamma^\prime, \gamma; \hat\theta_1, \hat\alpha_1) =  (\gamma^\prime-\gamma)^\T \left\{\frac{\partial L_2}{\partial \gamma}(\gamma^\prime; \hat\theta_1,\hat\alpha_1) -
\frac{\partial L_2}{\partial \gamma}(\gamma; \hat\theta_1,\hat\alpha_1)\right\} \nonumber \\
& =  (\gamma^\prime-\gamma)^\T \tilde E \left[ \xi \left\{ \frac{\partial\tau}{\partial \eta_g} (U; \hat\theta_1, \hat\alpha_1,\gamma^\prime) -
\frac{\partial\tau}{\partial \eta_g} (U; \hat\theta_1, \hat\alpha_1,\gamma) \right\} \right]. \label{eq:sym-bregman}
\end{align}
The target value $\bar\gamma_2$ is defined as a solution to $E \{ (\partial\tau /\partial\alpha) (U; \theta^*, \bar \alpha_1,\gamma)\}=0$ or equivalently
a minimizer of the expected loss $E \{ \ell_2( U; \theta^*, \bar\alpha_1, \gamma) \} $,
where $(\theta^*,\bar\alpha_1)$ are the target values (i.e., probability limits) of the initial estimators $(\hat\theta_1,\hat\alpha_1)$.
After statement of the assumptions required, Theorem~\ref{thm:gamma2} establishes the convergence of $\hat\gamma_2$ to $\bar\gamma_2$ in the
both $L_1$ norm $\|\hat\gamma_2 - \bar\gamma_2\|_1$ and the symmetrized Bregman divergence $D_2^\dag(\hat\gamma_2, \bar\gamma_2; \hat\theta_1, \hat\alpha_1)$.

A variable $Y$ is said to be sub-exponential with parameter $(B_{01}, B_{02})$ if $E( |Y-E(Y)|^k ) \le \frac{k!}{2} B_{01}^2 B_{02}^{k-2} $ for each $k\ge 2$.
For a $p\times p$ matrix $\Sigma$,
a compatibility condition (\citealt{buhlmann2011statistics}) is said to hold with a subset $S \in \{1,\ldots,p\}$ and constants $\nu_1 >0$ and $\mu_1>1$ if
$\nu_1^2  (\sum_{j\in S} |b_j|)^2 \le |S| ( b^\T \Sigma b )$
for any vector $b=(b_1,\ldots,b_k)^\T \in \bbR^k$ satisfying $\sum_{j\not\in S} |b_j| \le \mu_1 \sum_{j\in S} |b_j|$.
Throughout, $|S|$ denotes the size of a set $S$.

\begin{ass} \label{ass:gamma2-basic}
Suppose that the following conditions are satisfied. \vspace{-.1in}
\begin{itemize} \addtolength{\itemsep}{-.1in}
\item[(i)] $\max_{j=1,\ldots,p} |\xi_j(X)| \le C_0$ almost surely for a constant $C_0 >0$.

\item[(ii)] The variable $\frac{\partial\tau}{\partial \eta_g} (U; \theta^*, \bar\alpha_1,\bar\gamma_2) $ is sub-exponential with parameter $(B_{01}, B_{02})$.

\item[(iii)] The compatibility condition holds for $\Sigma_\gamma= E\{ \xi\xi^\T \frac{\partial^2 \tau}{\partial \eta_g \partial\eta_f}(U;\theta^*, \bar\alpha_1, \bar\gamma_2)\}$
with the subset $S_{\bar\gamma_2}= \{j: (\bar\gamma_2)_j \not=0, j=1,\ldots,p\}$ and some constants $\nu_1>0$ and $\mu_1 >1$.

\item[(iv)] For some constants $c_0>0$ and $M_0 \ge 1$, possibly depending on $(\bar\alpha_1,\bar\gamma_1)$, and any small $\epsilon>0$, it holds with probability at least $1-c_0 \epsilon$ that
$ (\hat\alpha_1 - \bar\alpha_1)^\T \tilde \Sigma_0 (\hat\alpha_1 - \bar\alpha_1) \le M_0 \lambda_0^2 $, $\| \hat\alpha_1 - \bar\alpha_1 \|_1 \le M_0\lambda_0$,
and $|\hat\theta_1 - \theta^* |\le M_0^{1/2} \lambda_0$, where  $\lambda_0 = \{\log( p/\epsilon)/n\}^{1/2}$,
and $\bar\alpha_1=\alpha^*$ if model (\ref{eq:g-model}) is correctly specified or
$\bar\gamma_1 = \gamma^*$ if model (\ref{eq:f-model}) is correctly specified.
\end{itemize}
\end{ass}

\begin{ass} \label{ass:gamma2-hess}
There exist positive constants $c_1$, $c_2$, $B_{11}$, $B_{12}$, $C_1$, $C_2$, $\varrho_0$, and $\varrho_1$ such that the following conditions are satisfied,
where $\mathcal N_1 = \{(\theta,\alpha): |\theta-\theta^*|\le c_1, \|\alpha-\bar\alpha_1\|_1 \le c_1\}$. \vspace{-.1in}
\begin{itemize} \addtolength{\itemsep}{-.1in}
\item[(i)] The variables
$T_{\eta_g^2}^{(1)} (U;\theta^*, \bar\alpha_1, \bar\gamma_2)  = \sup_{(\theta,\alpha)\in \mathcal N_1} |\frac{\partial^2 \tau}{\partial \eta_g^2}(U;\theta,\alpha,\bar\gamma_2) |$ and
$T_{\eta_g\theta}^{(1)} (U;\theta^*, \bar\alpha_1, \bar\gamma_2) =   \sup_{(\theta,\alpha)\in \mathcal N_1} |\frac{\partial^2 \tau}{\partial \eta_g \partial\theta}(U;\theta,\alpha,\bar\gamma_2) |$
are sub-exponential with parameter $(B_{11}, B_{12})$, and
$E \{ T_{\eta_g^2}^{(1)}$ $ (U;\theta^*, \bar\alpha_1, \bar\gamma_2) |X \} \le C_1$ and $E \{ T_{\eta_g\theta}^{(1)} (U;\theta^*, \bar\alpha_1,\bar\gamma_2) |X  \} \le C_1$ almost surely.

\item[(ii)] The variable $\frac{\partial^2 \tau}{\partial \eta_g \partial\eta_f}(U;\theta^*, \bar\alpha_1, \bar\gamma_2)$ is sub-exponential with parameter $(B_{11},B_{12})$, and
$ E \{\frac{\partial^2 \tau}{\partial \eta_g \partial\eta_f}(U;\theta^*, \bar\alpha_1, \bar\gamma_2) | X \} \ge c_2 $ almost surely.

\item[(iii)] For any $(\theta,\alpha) \in \mathcal N_1$ and $\gamma \in \bbR^p$, it holds that almost surely
\begin{align*}
\frac{\partial^2 \tau}{\partial \eta_g \partial\eta_f}(U; \theta, \alpha, \gamma) \le
\frac{\partial^2 \tau}{\partial \eta_g \partial\eta_f}(U;\theta^*, \bar\alpha_1, \bar\gamma_2)  \me^{-C_2 (|\theta-\theta^*|+|(\alpha-\bar\alpha_1)^\T\xi|+ |(\gamma -\bar\gamma_2)^\T \xi |) }.
\end{align*}
\item[(iv)] $M_0 \lambda_0 \le \varrho_0 \,(\le c_1)$ and $|S_{\bar\gamma_2}| \lambda_0 \le \varrho_1$ such that
$ \varrho_2 = \nu_1^{-2} (1+\mu_1)^2 \varrho_1 B_{15} < 1 $,
$ \varrho_3 = C_0 C_2 A_{11}^{-1} \mu_{12}^2 \nu_{11}^{-2} \varrho_1 \me^{\varrho_5 } < 1$, and
$ \varrho_4 = C_0 C_2 A_{11}^{-1} \mu_{11}^{-2} C_{12} \varrho_0 \me^{\varrho_5 }  <1$,
where $\varrho_5 = C_2(1+C_0)\varrho_0$,  $A_{11} = A_1 - B_0 - C_{13}$, $\mu_{11} = 1- 2 A_1 / \{(\mu_1+1) A_{11}\} \in (0,1]$,
$\mu_{12} = (\mu_1 +1) A_{11}$,
$\nu_{11} =  \nu_1 (1-\varrho_2)^{1/2}$,
$B_0 = C_0 (B_{02} + \sqrt{2} B_{01}) $,
$B_{15}$ is defined in  Lemma~\ref{lem:prob-hess} depending on $(C_0,C_1,B_{11},B_{12})$,
and $(C_{12},C_{13})$ are defined in Lemma~\ref{lem:remove-hat} depending on $(\varrho_0, c_2,C_0,C_1,B_{11},B_{12})$.
\end{itemize}
\end{ass}

\begin{thm} \label{thm:gamma2}
Suppose that Assumptions~\ref{ass:convex}--\ref{ass:gamma2-hess} hold and $\lambda_0\le 1$. Then
for $\lambda_1 = A_1 \lambda_0$ and $A_1 > (B_0+C_{13}) (\mu_1+1)/(\mu_1-1)$, we have with probability at least $1-(c_0+10)\epsilon$,
\begin{align}
& D^\dag_2 ( \hat \gamma_2, \bar \gamma_2; \hat\theta_1 , \hat\alpha_1) + A_{11} \lambda_0 \| \hat\gamma_2 - \bar\gamma_2  \|_1 \nonumber \\
& \le \left\{ \me^{\varrho_5 }(1-\varrho_3)^{-1} \mu_{12}^2 \nu_{11}^{-2} ( |S_{\bar\gamma_2}| \lambda_0^2) \right\} \vee
 \left\{ \me^{\varrho_5 }(1-\varrho_4)^{-1} \mu_{11}^{-2} C_{12} (M_0 \lambda_0^2 ) \right\}, \label{eq:gamma2-expan}
\end{align}
where $\vee$ denotes the maximum between two numbers, and $(\mu_{11},\mu_{12},\nu_{11}, \varrho_3, \varrho_4, \varrho_5, A_{11}, B_0, C_{12},$ $C_{13})$ are defined in Assumption~\ref{ass:gamma2-hess}(iv).
\end{thm}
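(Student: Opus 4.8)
The plan is to adapt the standard primal analysis of Lasso-penalized $M$-estimation to the loss $L_2(\gamma;\hat\theta_1,\hat\alpha_1)=\tilde E\{\ell_2(U;\hat\theta_1,\hat\alpha_1,\gamma)\}$, whose nonstandard feature is its dependence on the random initial estimators $(\hat\theta_1,\hat\alpha_1)$. First I would record the optimization-based basic inequality: since $\hat\gamma_2$ minimizes $L_2(\gamma;\hat\theta_1,\hat\alpha_1)+\lambda_1\|\gamma\|_1$,
\[
D_2(\hat\gamma_2,\bar\gamma_2;\hat\theta_1,\hat\alpha_1)+(\hat\gamma_2-\bar\gamma_2)^\T\frac{\partial L_2}{\partial\gamma}(\bar\gamma_2;\hat\theta_1,\hat\alpha_1)+\lambda_1\|\hat\gamma_2\|_1\le\lambda_1\|\bar\gamma_2\|_1,
\]
where the loss increment is rewritten through the Bregman divergence $D_2$. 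By H\"older's inequality this reduces the whole proof to two tasks: bounding the sup-norm of the empirical gradient $\partial L_2/\partial\gamma(\bar\gamma_2;\hat\theta_1,\hat\alpha_1)=\tilde E\{\xi\,\partial\tau/\partial\eta_g(U;\hat\theta_1,\hat\alpha_1,\bar\gamma_2)\}$, and lower-bounding the restricted curvature carried by the symmetrized divergence $D_2^\dag$ of (\ref{eq:sym-bregman}).

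For the gradient I would decompose it as the target-evaluated gradient $\tilde E\{\xi\,\partial\tau/\partial\eta_g(U;\theta^*,\bar\alpha_1,\bar\gamma_2)\}$ plus the plug-in increment. The first piece is mean zero by the defining equation of $\bar\gamma_2$, and by Assumptions~\ref{ass:gamma2-basic}(i)--(ii) its coordinates are bounded, mean-zero, sub-exponential, so a Bernstein tail bound controls its sup-norm by $B_0\lambda_0$ on an event of probability at least $1-2\epsilon$. The plug-in increment is where Lemma~\ref{lem:remove-hat} is invoked: a mean-value expansion in $(\theta,\alpha)$ about $(\theta^*,\bar\alpha_1)$ produces the second derivatives $\partial^2\tau/\partial\eta_g^2$ and $\partial^2\tau/\partial\eta_g\partial\theta$, whose envelopes are sub-exponential and conditionally bounded by Assumption~\ref{ass:gamma2-hess}(i); combined with the initial-estimator rates $\|\hat\alpha_1-\bar\alpha_1\|_1\le M_0\lambda_0$, $(\hat\alpha_1-\bar\alpha_1)^\T\tilde\Sigma_0(\hat\alpha_1-\bar\alpha_1)\le M_0\lambda_0^2$, and $|\hat\theta_1-\theta^*|\le M_0^{1/2}\lambda_0$ of Assumption~\ref{ass:gamma2-basic}(iv), this is split by a Cauchy--Schwarz/AM--GM device into a first-order part that enters the effective gradient level as the additive constant $C_{13}$ and a quadratic part controlled by the weighted initial error $M_0\lambda_0^2$, the latter surfacing as the $C_{12}(M_0\lambda_0^2)$ contribution. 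This is exactly the $L_1$-norm strategy alluded to in the text, and keeping both pieces at the right order is the technical heart of the argument.

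With the gradient level at $(B_0+C_{13})\lambda_0$ and $\lambda_1=A_1\lambda_0$ under $A_1>(B_0+C_{13})(\mu_1+1)/(\mu_1-1)$, the usual Lasso bookkeeping---splitting the $L_1$ norms over $S_{\bar\gamma_2}$ and its complement and using $D_2\ge0$---forces $\hat\gamma_2-\bar\gamma_2$ into the cone $\|(\hat\gamma_2-\bar\gamma_2)_{S^c}\|_1\le\mu_1\|(\hat\gamma_2-\bar\gamma_2)_S\|_1$, making the compatibility condition available. I would then lower-bound $D_2^\dag$ by a quadratic form in the population matrix $\Sigma_\gamma$ of Assumption~\ref{ass:gamma2-basic}(iii) in three moves: the envelope bound of Assumption~\ref{ass:gamma2-hess}(iii) controls the ratio between the Hessian weight at intermediate/plug-in arguments and its reference value $(\theta^*,\bar\alpha_1,\bar\gamma_2)$, producing the factor $\me^{\varrho_5}$ with $\varrho_5=C_2(1+C_0)\varrho_0$; Lemma~\ref{lem:prob-hess} passes from the empirical Hessian to $\Sigma_\gamma$ and degrades the compatibility constant to $\nu_{11}=\nu_1(1-\varrho_2)^{1/2}$ with $\varrho_2=\nu_1^{-2}(1+\mu_1)^2\varrho_1 B_{15}<1$; and a remove-hat step for the curvature, together with the cross terms from the gradient split, deflates the curvature by the factors $(1-\varrho_3)^{-1}$ and $(1-\varrho_4)^{-1}$, where $\varrho_3,\varrho_4<1$ are exactly the conditions imposed in Assumption~\ref{ass:gamma2-hess}(iv).

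Finally I would feed the curvature lower bound into the basic inequality to obtain a self-bounding inequality for $D_2^\dag+A_{11}\lambda_0\|\hat\gamma_2-\bar\gamma_2\|_1$: substituting $\|(\hat\gamma_2-\bar\gamma_2)_S\|_1\lesssim\{|S_{\bar\gamma_2}|\,D_2^\dag\}^{1/2}/\nu_{11}$ (compatibility plus Cauchy--Schwarz) and solving the resulting quadratic yields the two competing rates $\mu_{12}^2\nu_{11}^{-2}(|S_{\bar\gamma_2}|\lambda_0^2)$ and $\mu_{11}^{-2}C_{12}(M_0\lambda_0^2)$, each dressed with $\me^{\varrho_5}(1-\varrho_3)^{-1}$ or $\me^{\varrho_5}(1-\varrho_4)^{-1}$; their maximum is precisely (\ref{eq:gamma2-expan}), and a union bound over the concentration event, the Hessian event of Lemma~\ref{lem:prob-hess}, and the initial-estimator event of Assumption~\ref{ass:gamma2-basic}(iv) assembles the claimed probability $1-(c_0+10)\epsilon$. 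The main obstacle is the plug-in ``remove-hat'' control: quantifying the effect of substituting the random $(\hat\theta_1,\hat\alpha_1)$ for $(\theta^*,\bar\alpha_1)$ in both the gradient and the Hessian without inflating the usual $\{|S_{\bar\gamma_2}|+M_0\}\lambda_0^2$ order, which a naive sup-norm treatment of $\hat\alpha_1-\bar\alpha_1$ would spoil in high dimensions, and which Lemma~\ref{lem:remove-hat} resolves through its $L_1$-based splitting.
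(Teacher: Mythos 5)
Your proposal is correct and follows essentially the same route as the paper's proof (Lemmas~\ref{lem:basic-ineq-hat}--\ref{lem:invert}): the optimization basic inequality, the Bernstein bound $B_0\lambda_0$ on the target-evaluated gradient, the mean-value/Cauchy--Schwarz splitting of the plug-in increment into a $C_{13}\lambda_0\|\hat\gamma_2-\bar\gamma_2\|_1$ piece and a $(C_{12}M_0\lambda_0^2)^{1/2}Q_2^{1/2}$ piece, the empirical compatibility and exponential curvature bounds, and the two-regime quadratic whose maximum is (\ref{eq:gamma2-expan}). Two details to tighten when writing it out: you need the directional-derivative form of the basic inequality so that the symmetrized divergence $D_2^\dag$ (rather than $D_2$, which it dominates) appears on the left-hand side; and the cone condition is available only in the regime where the sparsity term dominates --- in the complementary regime the $(C_{12}M_0\lambda_0^2)^{1/2}Q_2^{1/2}$ term is absorbed directly without compatibility, and the factors $(1-\varrho_3)^{-1},(1-\varrho_4)^{-1}$ arise from bootstrapping $A_{11}\lambda_0\|\hat\gamma_2-\bar\gamma_2\|_1\le D_2^\ddag$ back into the curvature factor $(1-\me^{-C_0C_2\|\hat\gamma_2-\bar\gamma_2\|_1})/(C_0C_2\|\hat\gamma_2-\bar\gamma_2\|_1)$, not from a remove-hat step on the Hessian.
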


\begin{rem} \label{rem:gamma2}
Assumptions~\ref{ass:gamma2-basic}(iii) and \ref{ass:gamma2-hess}(iii) are standard in high-dimensional analysis of $M$-estimation (e.g., \citealt{buhlmann2011statistics,tan2020regularized}).
Assumptions~\ref{ass:gamma2-hess}(i)--(ii) are used to control the deviation of $(\hat\theta_1,\hat\alpha_1)$ from $(\theta^*,\bar\alpha_1)$ in the basic inequality.
Given Assumption~\ref{ass:gamma2-hess}(ii), the compatibility condition on $\Sigma_\gamma$ in Assumption~\ref{ass:gamma2-basic}(iii) can be equivalently replaced by
a compatibility condition on the matrix $\Sigma_0 = E( \xi\xi^\T)$, independent of $(\theta^*,\bar\alpha_1,\bar\gamma_2)$.
\end{rem}

\begin{rem} \label{rem:gamma2-theta1}
Assumption~\ref{ass:gamma2-basic}(iv) is concerned with the convergence of the initial estimators $(\hat\theta_1,\hat\alpha_1,\hat\gamma_1)$.
In fact, $\hat\theta_1$ is required to converge to $\theta^*$ at rate $M_0^{1/2} \lambda_0$ if model (\ref{eq:g-model}) or (\ref{eq:f-model}) is correctly specified.
Hence $\hat\theta_1$ is pointwise doubly robust, although it does not in general admit doubly robust confidence intervals.
For $\hat\theta_1 = \hat\theta(\hat\alpha_1, \hat\gamma_1)$ in Algorithm~\ref{alg:two-step},
the required convergence for $\hat\theta_1$ can be deduced from the stated rates of convergence for $(\hat\alpha_1, \hat\gamma_1)$ under
suitable conditions, similar to Assumptions~\ref{ass:theta2-consistency}--\ref{ass:theta2-rate} for Theorem~\ref{thm:theta2} later.
For simplicity, the convergence of $\hat\theta_1$ is included as part of Assumption~\ref{ass:gamma2-basic}(iv).
This formulation also allows Theorem~\ref{thm:gamma2} to be applied with other possible choices of $\hat\theta_1$. See the proof of Corollary~\ref{cor:debiased}.
\end{rem}

The following corollary provides a bound on the prediction $L_2$ norm (in the scale of linear predictors $\eta_f$),
$\tilde E[ \{(\hat\gamma_2 - \bar\gamma_2)^T \xi\}^2 ] = ( \hat\gamma_2 - \bar\gamma_2 )^\T \tilde \Sigma_0 ( \hat\gamma_2 - \bar\gamma_2 ) $,
where $\tilde\Sigma_0 = \tilde E (\xi\xi^\T)$.

\begin{cor} \label{cor:gamma2}
In the setting of Theorem~\ref{thm:gamma2}, with probability at least $1-(c_0+10)\epsilon$, we have, in addition to (\ref{eq:gamma2-expan}),
\begin{align}
& ( \hat\gamma_2 - \bar\gamma_2 )^\T \tilde \Sigma_0 ( \hat\gamma_2 - \bar\gamma_2 ) \nonumber \\
& \le \left\{ c_2^{-1} \me^{\varrho_5 }(1-\varrho_3 \vee \varrho_4)^{-1} + (1+c_2^{-1}) B_1 A_{11}^{-2} C_3 (\varrho_0 \vee \varrho_1) \right\} C_3 ( |S_{\bar\gamma_2}| \vee M_0) \lambda_0^2 , \label{eq:gamma2-expan2}
\end{align}
where $B_1 = (4C_0^2) \vee B_{15}$, and $C_3$ is a constant such that the right hand side of (\ref{eq:gamma2-expan}) is upper bounded by $C_3 ( |S_{\bar\gamma_2}| \vee M_0) \lambda_0^2$.
\end{cor}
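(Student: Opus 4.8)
The plan is to convert the symmetrized Bregman bound (\ref{eq:gamma2-expan}) of Theorem~\ref{thm:gamma2} into a bound on the prediction norm $Q := (\hat\gamma_2-\bar\gamma_2)^\T\tilde\Sigma_0(\hat\gamma_2-\bar\gamma_2) = \tilde E[(v^\T\xi)^2]$, where I write $v=\hat\gamma_2-\bar\gamma_2$ and $\xi=\xi(X)$. I work throughout on the event of probability at least $1-(c_0+10)\epsilon$ on which (\ref{eq:gamma2-expan}) holds, so that I have simultaneously $D_2^\dag(\hat\gamma_2,\bar\gamma_2;\hat\theta_1,\hat\alpha_1)\le C_3(|S_{\bar\gamma_2}|\vee M_0)\lambda_0^2$ and, since $D_2^\dag\ge 0$, the $L_1$ bound $\|v\|_1\le A_{11}^{-1}C_3(|S_{\bar\gamma_2}|\vee M_0)\lambda_0$. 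Abbreviating the reference Hessian as $W:=\frac{\partial^2\tau}{\partial\eta_g\partial\eta_f}(U;\theta^*,\bar\alpha_1,\bar\gamma_2)$ and the reference-weighted norm as $\tilde Q_W:=\tilde E[W(v^\T\xi)^2]$, I carry out two reductions: $Q\rightsquigarrow \tilde Q_W$ via the conditional lower bound in Assumption~\ref{ass:gamma2-hess}(ii), and $\tilde Q_W\rightsquigarrow D_2^\dag$ via the self-concordance control in Assumption~\ref{ass:gamma2-hess}(iii).

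\textbf{First reduction.} Using $c_2\le E(W\mid X)$ from Assumption~\ref{ass:gamma2-hess}(ii) and noting $(v^\T\xi)^2$ is a function of $X$ alone, I decompose $c_2 Q = \tilde Q_W + \tilde E[(c_2-W)(v^\T\xi)^2]$. For fixed $v$ the population version of the last term, $E[(c_2-E(W\mid X))(v^\T\xi)^2]$, is nonpositive, so only the empirical-process fluctuation survives. I bound it by $\|(\tilde E-E)\{(c_2-W)\xi\xi^\T\}\|_{\max}\,\|v\|_1^2$, where the matrix factor is $O(B_1\lambda_0)$ by a sub-exponential maximal inequality over the $p^2$ entries, using that $W$ is sub-exponential (Assumption~\ref{ass:gamma2-hess}(ii)) and $|\xi_j\xi_k|\le C_0^2$; this is precisely why $B_1=(4C_0^2)\vee B_{15}$. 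Combining with $\|v\|_1^2\le A_{11}^{-2}C_3^2(|S_{\bar\gamma_2}|\vee M_0)^2\lambda_0^2$ and the radius constraints $(|S_{\bar\gamma_2}|\vee M_0)\lambda_0\le\varrho_0\vee\varrho_1$ from Assumption~\ref{ass:gamma2-hess}(iv) produces the additive correction $(1+c_2^{-1})B_1A_{11}^{-2}C_3(\varrho_0\vee\varrho_1)\cdot C_3(|S_{\bar\gamma_2}|\vee M_0)\lambda_0^2$ appearing in (\ref{eq:gamma2-expan2}).

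\textbf{Second reduction.} I aim to show $\tilde Q_W\le e^{\varrho_5}(1-\varrho_3\vee\varrho_4)^{-1}D_2^\dag$, equivalently $D_2^\dag\ge e^{-\varrho_5}(1-\varrho_3\vee\varrho_4)\,\tilde Q_W$. Starting from (\ref{eq:sym-bregman}) and the fundamental theorem of calculus, $D_2^\dag=\tilde E[(v^\T\xi)^2\int_0^1\frac{\partial^2\tau}{\partial\eta_g\partial\eta_f}(U;\hat\theta_1,\hat\alpha_1,\bar\gamma_2+tv)\,\dif t]$. I then compare the integrand to $W$ through the self-concordance control of Assumption~\ref{ass:gamma2-hess}(iii): the exponential factor $e^{-C_2(|\hat\theta_1-\theta^*|+|(\hat\alpha_1-\bar\alpha_1)^\T\xi|+t|v^\T\xi|)}$ is bounded below using $|\hat\theta_1-\theta^*|+|(\hat\alpha_1-\bar\alpha_1)^\T\xi|\le(1+C_0)\varrho_0$ (from Assumption~\ref{ass:gamma2-basic}(iv) and (i)), contributing the factor $e^{-\varrho_5}$ with $\varrho_5=C_2(1+C_0)\varrho_0$, while the prediction range $|v^\T\xi|\le C_0\|v\|_1$, combined with the compatibility condition (Assumption~\ref{ass:gamma2-basic}(iii)) and Lemma~\ref{lem:remove-hat}, yields the multiplicative deficiency $\varrho_3\vee\varrho_4$. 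Here $\varrho_3$ collects the $\gamma$-range through $\varrho_1$ and the compatibility constants $\mu_{12},\nu_{11}$, whereas $\varrho_4$ collects the initial-estimator radius through $\varrho_0$, $\mu_{11}$ and $C_{12}$; the hypotheses $\varrho_3,\varrho_4<1$ in Assumption~\ref{ass:gamma2-hess}(iv) keep $(1-\varrho_3\vee\varrho_4)^{-1}$ finite. Finally I insert $D_2^\dag\le C_3(|S_{\bar\gamma_2}|\vee M_0)\lambda_0^2$.

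\textbf{Combination and main obstacle.} Substituting both reductions into $Q=c_2^{-1}\tilde Q_W+c_2^{-1}\tilde E[(c_2-W)(v^\T\xi)^2]$ gives the main term $c_2^{-1}e^{\varrho_5}(1-\varrho_3\vee\varrho_4)^{-1}$ times $C_3(|S_{\bar\gamma_2}|\vee M_0)\lambda_0^2$ plus the correction from the first reduction, yielding (\ref{eq:gamma2-expan2}). I expect the second reduction to be the main obstacle: because the integrand Hessian is evaluated at the estimated $(\hat\theta_1,\hat\alpha_1)$ and along the segment $\bar\gamma_2+tv$ rather than at the fixed reference point, the self-concordance comparison must absorb at once (a) the prediction range $\sup_i|v^\T\xi_i|$, handled through $\|v\|_1$ and the compatibility condition, and (b) the data-dependence on $(\hat\theta_1,\hat\alpha_1)$, handled through Lemma~\ref{lem:remove-hat}, and folding both into the clean constants $\varrho_3,\varrho_4$ without a circular dependence on $Q$ itself is delicate. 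A secondary difficulty is the sub-exponential maximal inequality for the random-weight quadratic form in the first reduction, which is where $B_{15}$, hence $B_1$, enters.
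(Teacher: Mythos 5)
Your proposal is correct and follows essentially the same route as the paper: the first reduction is exactly the third inequality of Lemma~\ref{lem:prob-hess2} (relating $\tilde\Sigma_0$ to $\tilde\Sigma_\gamma$ up to a $(1+c_2^{-1})B_1\lambda_0\|b\|_1^2$ correction), and the second reduction reproduces the bound $b^\T\tilde\Sigma_\gamma b\le \me^{\varrho_5}(1-\varrho_3\vee\varrho_4)^{-1}D_2^\dag$ that the paper extracts from (\ref{eq:local-quad}) together with the two-case analysis (\ref{eq:case2-prf2})/(\ref{eq:case1-prf2}) in the proof of Lemma~\ref{lem:invert}. The combination with the $D_2^\dag$ and $\|b\|_1$ bounds from (\ref{eq:gamma2-expan}) and the radius constraints of Assumption~\ref{ass:gamma2-hess}(iv) matches the paper's proof, including the constants.
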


From Theorem~\ref{thm:gamma2} and Corollary~\ref{cor:gamma2}, let $M_1 \,(\ge M_0)$ be a constant such that
the right hand side of (\ref{eq:gamma2-expan}) is upper bounded by $A_{11} M_1 \lambda_0^2$
and that of (\ref{eq:gamma2-expan2}) is upper bounded by $M_1 \lambda_0^2$.
Then with probability at least $1-(c_0+10)\epsilon$,  we have
\begin{align}
(\hat\gamma_2 - \bar\gamma_2)^\T \tilde \Sigma_0 (\hat\gamma_2 - \bar\gamma_2) \le M_1 \lambda_0^2, \quad
\| \hat\gamma_2 - \bar\gamma_2 \|_1 \le M_1 \lambda_0 . \label{eq:gamma2-summary}
\end{align}
These bounds can be used to justify a rate condition on the convergence of $\hat\gamma_2$ corresponding to Assumption~\ref{ass:gamma2-basic}(iv),
and to obtain a similar result to Theorem~\ref{thm:gamma2} about the convergence of $\hat\alpha_2$ to a target value $\bar\alpha_2$, which is defined as a solution to
$E \{ (\partial\tau /\partial\gamma) (U; \theta^*, \alpha, \bar \gamma_2)\}=0$ or equivalently
a minimizer of the expected loss $E \{ \ell_1( U; \theta^*, \alpha, \bar\gamma_2) \} $.

\begin{ass} \label{ass:alpha2-basic}
Suppose that the conditions (ii)--(iii) in Assumption~\ref{ass:gamma2-basic} hold, with $(\bar\alpha_1, \bar\gamma_2)$ replaced by $(\bar\gamma_2, \bar\alpha_2)$,
$\partial\tau /\partial \eta_g$ by $\partial\tau / \partial\beta_f$, and $(B_{01},B_{02},\mu_1,\nu_1)$ replaced by some alternative constants throughout.
\end{ass}

\begin{ass} \label{ass:alpha2-hess}
Suppose that the conditions (i)--(iv) in Assumption~\ref{ass:gamma2-hess} hold,
with $(\bar\alpha_1, \bar\gamma_2)$ replaced by $(\bar\gamma_2, \bar\alpha_2)$,
$(\partial^2 \tau / \partial\eta_g^2, \partial^2\tau/(\partial\eta_g\partial\theta))$ by $(\partial^2 \tau / \partial\eta_f^2,  \partial^2\tau/(\partial\eta_f\partial\theta))$,
$M_0$ by $M_1$, and $(c_1,c_2,B_{11},B_{12},C_1,C_2,\varrho_0,\varrho_1)$ by some alternative constants throughout.
\end{ass}

\begin{thm} \label{thm:alpha2}
In the setting of Theorem~\ref{thm:gamma2}, suppose that Assumptions~\ref{ass:alpha2-basic}--\ref{ass:alpha2-hess} also hold.
Then for $\lambda_2 = A_2 \lambda_0$ and sufficiently large $A_2$,
we have with probability at least $1-(c_0+18)\epsilon$, in addition to (\ref{eq:gamma2-summary}),
\begin{align}
(\hat\alpha_2 - \bar\alpha_2)^\T \tilde \Sigma_0 (\hat\alpha_2 - \bar\alpha_2) \le M_2 \lambda_0^2, \quad
\| \hat\alpha_2 - \bar\alpha_2 \|_1 \le M_2 \lambda_0 , \label{eq:alpha2-summary}
\end{align}
where $M_2 \,(\ge M_1)$ is a constant determined similarly as $M_1$ in (\ref{eq:gamma2-summary}).
\end{thm}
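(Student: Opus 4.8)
The plan is to mirror the proof of Theorem~\ref{thm:gamma2} essentially verbatim, after applying the role substitutions prescribed in Assumptions~\ref{ass:alpha2-basic}--\ref{ass:alpha2-hess}. By construction $\hat\alpha_2$ minimizes the penalized empirical loss $\tilde E\{\ell_1(U;\hat\theta_1,\alpha,\hat\gamma_2)\}+\lambda_2\|\alpha\|_1$. Since $\partial\ell_1/\partial\alpha = \partial\tau/\partial\alpha = (\partial\tau/\partial\eta_f)\xi$ and the Hessian in $\alpha$ equals $\xi\xi^\T (\partial^2\tau/(\partial\eta_f\partial\eta_g))$, the whole analysis is structurally identical to that for $\hat\gamma_2$, with $\partial\tau/\partial\eta_g$ replaced by $\partial\tau/\partial\eta_f$, the target pair $(\bar\alpha_1,\bar\gamma_2)$ replaced by $(\bar\gamma_2,\bar\alpha_2)$, and $M_0$ by $M_1$. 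I would first write down the symmetrized Bregman divergence $D_1^\dag(\hat\alpha_2,\bar\alpha_2;\hat\theta_1,\hat\gamma_2)$ in the manner of (\ref{eq:sym-bregman}) and record the basic inequality arising from the Lasso optimality condition for $\hat\alpha_2$.

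The essential point is that the quantity playing the role of Assumption~\ref{ass:gamma2-basic}(iv) is now supplied by the already-established conclusion of Theorem~\ref{thm:gamma2} and Corollary~\ref{cor:gamma2}. Whereas in the analysis of $\hat\gamma_2$ the plug-in deviation of $(\hat\theta_1,\hat\alpha_1)$ from $(\theta^*,\bar\alpha_1)$ was controlled with constant $M_0$, here the plug-in deviation of $(\hat\theta_1,\hat\gamma_2)$ from $(\theta^*,\bar\gamma_2)$ must be controlled: $\hat\theta_1$ is still governed by Assumption~\ref{ass:gamma2-basic}(iv), and $\hat\gamma_2$ is governed by the bounds (\ref{eq:gamma2-summary}), namely $(\hat\gamma_2-\bar\gamma_2)^\T\tilde\Sigma_0(\hat\gamma_2-\bar\gamma_2)\le M_1\lambda_0^2$ and $\|\hat\gamma_2-\bar\gamma_2\|_1\le M_1\lambda_0$. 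These have exactly the form required by Assumption~\ref{ass:gamma2-basic}(iv) with $M_0$ replaced by $M_1$, which is precisely why the substitution $M_0\mapsto M_1$ is built into Assumption~\ref{ass:alpha2-hess}. Thus all hypotheses needed to rerun the Theorem~\ref{thm:gamma2} argument for $\hat\alpha_2$ are in place.

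With this identification I would then replay the three ingredients of the Theorem~\ref{thm:gamma2} proof. First, the remove-hat device (Lemma~\ref{lem:remove-hat}) bounds the empirical gradient $\tilde E\{\xi\,\partial\tau/\partial\eta_f(U;\hat\theta_1,\bar\alpha_2,\hat\gamma_2)\}$ in sup-norm near its population value $E\{\xi\,\partial\tau/\partial\eta_f(U;\theta^*,\bar\gamma_2,\bar\alpha_2)\}=0$ (zero by the defining equation of $\bar\alpha_2$), separating the empirical-process fluctuation, handled by the sub-exponential concentration of Assumption~\ref{ass:alpha2-basic}(ii), from the plug-in error, handled by the second-order bounds of Assumption~\ref{ass:alpha2-hess}(i) together with the rate bounds on $(\hat\theta_1,\hat\gamma_2)$. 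Second, the analog of Lemma~\ref{lem:prob-hess}, combined with the local lower bound in Assumption~\ref{ass:alpha2-hess}(ii)--(iii) and the compatibility condition in Assumption~\ref{ass:alpha2-basic}(iii), converts the Bregman-divergence inequality into the cone/restricted-set bound. Third, solving the resulting quadratic inequality yields (\ref{eq:alpha2-summary}) with a constant $M_2\ge M_1$ determined exactly as $M_1$ was determined from (\ref{eq:gamma2-expan})--(\ref{eq:gamma2-expan2}).

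The main obstacle, and the only genuinely new accounting relative to Theorem~\ref{thm:gamma2}, is the compounded data-dependency: $\hat\gamma_2$ is itself random and enters the loss defining $\hat\alpha_2$, so the remove-hat step must tolerate a \emph{data-dependent} plug-in nuisance rather than a fixed target. This is exactly what the $L_1$-norm control strategy underlying Lemma~\ref{lem:remove-hat} is designed to handle, and it is why the bound $\|\hat\gamma_2-\bar\gamma_2\|_1\le M_1\lambda_0$, and not merely the prediction-norm bound, is needed. The probability bookkeeping then compounds: one conditions on the event of probability at least $1-(c_0+10)\epsilon$ on which (\ref{eq:gamma2-summary}) holds, and intersects it with the additional concentration events, costing a further $8\epsilon$, required for the $\hat\alpha_2$ gradient and Hessian, yielding the stated probability $1-(c_0+18)\epsilon$.
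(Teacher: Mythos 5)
Your proposal is correct and follows essentially the same route as the paper, whose own proof of this theorem is simply the remark that the argument of Theorem~\ref{thm:gamma2} is repeated with the prescribed substitutions, with the probability dropping from $1-(c_0+10)\epsilon$ to $1-(c_0+18)\epsilon$ because new analogues of $\Omega_1$, $\Omega_{21}$, $\Omega_{22}$, $\Omega_{23}$ are needed while $\Omega_{24}$ is reused. Your identification of (\ref{eq:gamma2-summary}) (together with the retained bound on $\hat\theta_1$ and $M_1\ge M_0$) as the replacement for Assumption~\ref{ass:gamma2-basic}(iv), and of the $L_1$ bound on $\hat\gamma_2-\bar\gamma_2$ as the input to the remove-hat step, is exactly the intended reading.
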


With the preceding results about $(\hat\alpha_2,\hat\gamma_2)$, we are ready to study the convergence of $\hat\theta_2 = \hat\theta(\hat\alpha_2,\hat\gamma_2)$.
As convergence in probability is of main interest, the high-probability bounds (\ref{eq:gamma2-summary}) and (\ref{eq:alpha2-summary})
can be used to deduce the following in-probability statements:
$ (\hat\alpha_2 - \bar\alpha_2)^\T \tilde \Sigma_0 (\hat\alpha_2 - \bar\alpha_2)  = O_p( M_2 r_0^2)  $, $\| \hat\alpha_2 - \bar\alpha_2 \|_1 = O_p(M_2 r_0)$,
$ (\hat\gamma_2 - \bar\gamma_2)^\T \tilde \Sigma_0 (\hat\gamma_2 - \bar\gamma_2) = O_p( M_2 r_0^2) $, and
$\| \hat\gamma_2 - \bar\gamma_2 \|_1 = O_p (M_2 r_0 )$, where $r_0 = \{\log(\me p)/n\}^{1/2}$.
After statement of assumptions required, Theorem~\ref{thm:theta2} establishes the desired convergence result for $\hat\theta_2$.

\begin{ass} \label{ass:theta2-consistency}
Suppose that the following conditions are satisfied.\vspace{-.1in}
\begin{itemize}\addtolength{\itemsep}{-.1in}
\item[(i)]  $E \left\{ \tau(U; \theta^*, \bar\alpha_2, \bar\gamma_2 )\right\} =0$ and
$\inf_{\theta\in\Theta: |\theta-\theta^*|\ge \delta} \left| E \left\{ \tau(U; \theta, \bar\alpha_2, \bar\gamma_2 )\right\} \right|>0$ for each $\delta>0$.

\item[(ii)] $E \left\{ \sup_{\theta\in\Theta} \left| \tau (U; \theta, \bar\alpha_2, \bar\gamma_2) \right| \right\} < \infty$.

\item[(iii)] There exists a neighborhood $\mathcal N_2=\{(\alpha,\gamma): \|\alpha -\bar\alpha_2\|_1 \le c_3, \|\gamma -\bar\gamma_2\|_1 \le c_3\}$ for a constant $c_3>0$ such that
$E \{ T^{(2) 2}_{\eta_g} (U; \bar\alpha_2,\bar\gamma_2) \} < \infty$ and $E\{T^{(2) 2}_{\eta_f} (U; \bar\alpha_2,\bar\gamma_2)\}<\infty$, where
$ T^{(2)}_{\eta_g} (U; \bar\alpha_2,\bar\gamma_2) = \sup_{\theta\in\Theta, (\alpha,\gamma)\in \mathcal N_2} |\frac{\partial\tau}{\partial\eta_g} (U; \theta, \alpha, \gamma) |$ and
$ T^{(2)}_{\eta_f} (U; \bar\alpha_2,\bar\gamma_2) = \sup_{\theta\in\Theta, (\alpha,\gamma)\in \mathcal N_2} |\frac{\partial\tau}{\partial\eta_f}$ $ (U; \theta, \alpha, \gamma) | $.
\end{itemize}
\end{ass}

\begin{ass} \label{ass:theta2-rate}
There exist positive constants $c_4$ and $C_4$ such that
the following conditions are satisfied, where $\mathcal N_3 = \{(\theta,\alpha,\gamma): |\theta-\theta^*|\le c_4, \|\alpha-\bar\alpha_2\|_1 \le c_4, \|\gamma-\bar\gamma_2\| \le c_4 \}$.\vspace{-.1in}
\begin{itemize}\addtolength{\itemsep}{-.1in}
\item[(i)] $E \{ \sup_{(\theta,\alpha,\gamma)\in \mathcal N} \tau^2 (U; \theta,\alpha,\gamma) \} < \infty$.

\item[(ii)]  $H= E  \{ \frac{\partial \tau}{\partial\theta}(U; \theta^*, \bar\alpha_2, \bar\gamma_2)  \} \not=0 $ and
$E  \{ \sup_{(\theta,\alpha,\gamma)\in \mathcal N_3} | \frac{\partial \tau}{\partial\theta}(U; \theta,\alpha,\gamma)  |  \} < \infty$.

\item[(iii)] The variables $\frac{\partial\tau}{\partial\eta_g}(U;\theta^*, \bar\alpha_2,\bar\gamma_2)$ and $\frac{\partial\tau}{\partial\eta_f}(U;\theta^*, \bar\alpha_2,\bar\gamma_2)$
are sub-exponential.

\item[(iv)] The variables
$ T^{(2)}_{\eta_g^2}(U;\theta^*, \bar\alpha_2, \bar\gamma_2)  = \sup_{(\theta,\alpha,\gamma)\in \mathcal N_3} |\frac{\partial^2 \tau}{\partial \eta_g^2}(U;\theta,\alpha, \gamma) |$,
$ T^{(2)}_{\eta_f^2}(U;\theta^*, \bar\alpha_2, \bar\gamma_2)  = \sup$ $_{(\theta,\alpha,\gamma)\in \mathcal N_3} |\frac{\partial^2 \tau}{\partial \eta_f^2}(U;\theta,\alpha, \gamma) |$, and
$ T^{(2)}_{\eta_g\eta_f} (U;\theta^*, \bar\alpha_2, \bar\gamma_2) =   \sup_{(\theta,\alpha,\gamma)\in \mathcal N_3} |\frac{\partial^2 \tau}{\partial \eta_g \partial\eta_f}(U;\theta,\alpha, \gamma) |$
are sub-exponential, and
$E \{ T^{(2)}_{\eta_g^2}(U;\theta^*, \bar\alpha_2, \bar\gamma_2) |X \} \le C_4$, $E \{ T^{(2)}_{\eta_f^2}(U;\theta^*, \bar\alpha_2, \bar\gamma_2) |X \} \le C_4$,
and $E \{ T^{(2)}_{\eta_g\eta_f} (U;\theta^*, \bar\alpha_2, \bar\gamma_2) |X  \} \le C_4$ almost surely.
\end{itemize}
\end{ass}

\begin{thm} \label{thm:theta2}
In the setting of Theorem~\ref{thm:alpha2}, suppose that Assumption \ref{ass:theta2-consistency} and \ref{ass:theta2-rate} hold and  $M_2 r_0 =o(1)$.
Then $\hat\theta_2$ is consistent for $\theta^*$ and admits the asymptotic expansion
\begin{align}
\hat\theta_2 - \theta^*
= - H^{-1} \tilde E \left\{ \tau(U; \theta^*, \bar\alpha_2, \bar\gamma_2 ) \right\} + O_p( M_2 r_0^2 ), \label{eq:theta2-expan}
\end{align}
where $ H = E \{ \frac{\partial \tau}{\partial\theta}(U; \theta^*, \bar\alpha_2, \bar\gamma_2 ) \}$.
Moreover, a consistent estimator of
$V= \var \{\tau(U; \theta^*, \bar\alpha_2, \bar\gamma_2 ) \}$ $ / H^2$ is
$\hat V = \tilde E \{\tau^2(U;\hat\theta_2,\hat\alpha_2,\hat\gamma_2 ) \} / \hat H^2$,
where $\hat H = \tilde E \{ \frac{\partial \tau}{\partial\theta}(U; \hat\theta_2, \hat\alpha_2, \hat\gamma_2 ) \}$.
\end{thm}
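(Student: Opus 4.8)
The plan is to treat $\hat\theta_2$ as a $Z$-estimator solving $0=\tilde E\{\tau(U;\theta,\hat\alpha_2,\hat\gamma_2)\}$ and to exploit two structural facts about the target values, valid because either model (\ref{eq:g-model}) or (\ref{eq:f-model}) is correct. First, by Proposition~\ref{pro:two-step}, $(\bar\alpha_2,\bar\gamma_2)$ jointly satisfy the calibration equations (\ref{eq:CAL1})--(\ref{eq:CAL2}), i.e.\ $E\{\xi\,\partial\tau/\partial\eta_g(U;\theta^*,\bar\alpha_2,\bar\gamma_2)\}=0$ and $E\{\xi\,\partial\tau/\partial\eta_f(U;\theta^*,\bar\alpha_2,\bar\gamma_2)\}=0$; second, Assumption~\ref{ass:theta2-consistency}(i) gives $E\{\tau(U;\theta^*,\bar\alpha_2,\bar\gamma_2)\}=0$. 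The quantitative input is the in-probability form of Theorem~\ref{thm:alpha2}: $\|\hat\alpha_2-\bar\alpha_2\|_1,\|\hat\gamma_2-\bar\gamma_2\|_1=O_p(M_2 r_0)$ together with the prediction-norm bounds of order $O_p(M_2 r_0^2)$.

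For consistency I would establish $\sup_{\theta\in\Theta}|\tilde E\{\tau(U;\theta,\hat\alpha_2,\hat\gamma_2)\}-E\{\tau(U;\theta,\bar\alpha_2,\bar\gamma_2)\}|\to 0$ in probability. Splitting off $\tilde E\{\tau(U;\theta,\bar\alpha_2,\bar\gamma_2)\}$, the empirical-versus-population piece is handled by a uniform law of large numbers using the envelope of Assumption~\ref{ass:theta2-consistency}(ii), while the piece from plugging in $(\hat\alpha_2,\hat\gamma_2)$ is bounded, by a first-order expansion in the linear predictors and Cauchy--Schwarz, by $\{\tilde E(T^{(2)2}_{\eta_g})\}^{1/2}\{(\hat\alpha_2-\bar\alpha_2)^\T\tilde\Sigma_0(\hat\alpha_2-\bar\alpha_2)\}^{1/2}$ plus the $\eta_f$-analogue, which is $O_p(\sqrt{M_2}\,r_0)=o_p(1)$ uniformly in $\theta$ by Assumption~\ref{ass:theta2-consistency}(iii). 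The well-separated-zero condition in Assumption~\ref{ass:theta2-consistency}(i) then yields $\hat\theta_2\to\theta^*$ in probability by the standard argument for $Z$-estimators.

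For the expansion, a mean-value expansion in $\theta$ gives $\hat\theta_2-\theta^*=-N_n/D_n$, with $N_n=\tilde E\{\tau(U;\theta^*,\hat\alpha_2,\hat\gamma_2)\}$ and $D_n=\tilde E\{\partial\tau/\partial\theta(U;\tilde\theta,\hat\alpha_2,\hat\gamma_2)\}$ at an intermediate $\tilde\theta$; consistency and Assumption~\ref{ass:theta2-rate}(ii) give $D_n\to H\neq 0$ in probability, so a preliminary bound $\hat\theta_2-\theta^*=O_p(N_n)$ follows. The crux is $N_n$: expanding $\tau$ around $(\bar\alpha_2,\bar\gamma_2)$ in the linear predictors yields
\begin{align*}
N_n=\tilde E\{\tau(U;\theta^*,\bar\alpha_2,\bar\gamma_2)\}+(\hat\alpha_2-\bar\alpha_2)^\T\tilde E\{\xi\,\tfrac{\partial\tau}{\partial\eta_g}\}+(\hat\gamma_2-\bar\gamma_2)^\T\tilde E\{\xi\,\tfrac{\partial\tau}{\partial\eta_f}\}+\mathrm{Rem},
\end{align*}
the first derivatives evaluated at $(\theta^*,\bar\alpha_2,\bar\gamma_2)$. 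The main obstacle, and the heart of the result, is that naively each first-order term is only $O_p(\sqrt{M_2}\,r_0)$, which is too large; it is precisely the vanishing of the population gradients by the calibration equations that turns each $\tilde E\{\xi\,\partial\tau/\partial\eta_g\}$ into a centered empirical average, of order $O_p(r_0)$ in $\|\cdot\|_\infty$ by a sub-exponential maximal inequality (Assumptions~\ref{ass:gamma2-basic}(i) and \ref{ass:theta2-rate}(iii)), so that H\"older's inequality with the $L_1$ bounds yields $O_p(M_2 r_0^2)$. The second-order remainder $\mathrm{Rem}$ is controlled by the envelopes $T^{(2)}_{\eta_g^2},T^{(2)}_{\eta_f^2},T^{(2)}_{\eta_g\eta_f}$ and the conditional-mean bound $C_4$ of Assumption~\ref{ass:theta2-rate}(iv) together with the prediction-norm bounds, giving $\mathrm{Rem}=O_p(M_2 r_0^2)$. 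Hence $N_n=\tilde E\{\tau(U;\theta^*,\bar\alpha_2,\bar\gamma_2)\}+O_p(M_2 r_0^2)$ with leading term of order $O_p(n^{-1/2})$; writing $-N_n/D_n=-H^{-1}N_n+N_n(D_n-H)/(HD_n)$ and using $D_n-H=O_p(\sqrt{M_2}\,r_0)$ with $r_0\ge n^{-1/2}$ and $\sqrt{M_2}\,r_0=o(1)$ shows the denominator correction is also $O_p(M_2 r_0^2)$, establishing (\ref{eq:theta2-expan}).

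Finally, for variance estimation, $\hat H=\tilde E\{\partial\tau/\partial\theta(U;\hat\theta_2,\hat\alpha_2,\hat\gamma_2)\}\to H$ in probability by the same uniform-law-of-large-numbers argument, so $\hat H^2\to H^2$. For the numerator I would show $\tilde E\{\tau^2(U;\hat\theta_2,\hat\alpha_2,\hat\gamma_2)\}\to E\{\tau^2(U;\theta^*,\bar\alpha_2,\bar\gamma_2)\}$ in probability, which equals $\var\{\tau(U;\theta^*,\bar\alpha_2,\bar\gamma_2)\}$ since the mean is zero: the difference from plugging in estimators is handled through $\tau^2(\hat\cdot)-\tau^2(\bar\cdot)=2\tau\{\tau(\hat\cdot)-\tau(\bar\cdot)\}$ with the bracket bounded by the first-derivative envelopes and the parameter convergence, while $\tilde E\{\tau^2(U;\theta^*,\bar\alpha_2,\bar\gamma_2)\}\to E\{\tau^2\}$ by the law of large numbers under Assumption~\ref{ass:theta2-rate}(i). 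Combining gives $\hat V\to V$ in probability.
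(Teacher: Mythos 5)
Your proposal is correct and follows essentially the same route as the paper's proof: consistency via a uniform law of large numbers plus Cauchy--Schwarz control of the plug-in error under Assumption~\ref{ass:theta2-consistency}, the expansion via a mean-value step in $\theta$ and a Taylor expansion in the linear predictors where the calibration equations (via Proposition~\ref{pro:two-step}) reduce the first-order terms to $O_p(r_0)\cdot O_p(M_2 r_0)$ by H\"older, and variance consistency by the same uniform-convergence arguments. The only cosmetic difference is at the denominator: you assert $D_n-H=O_p(\sqrt{M_2}\,r_0)$, a rate the stated assumptions do not obviously deliver, whereas the paper simply writes $D_n=H+o_p(1)$ and divides; both give the claimed expansion up to terms that are negligible for the intended application.
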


\begin{rem} \label{rem:theta2}
Assumption~\ref{ass:theta2-consistency} is involved to show the consistency of $\hat\theta_2$ for $\theta^*$. Assumptions~\ref{ass:theta2-consistency}(i)--(ii) are standard
for showing consistency if $\tau(U; \theta, \bar\alpha_2, \bar\gamma_2 )$ were employed as an estimating functinon in $\theta$ (e.g., \citealt{van2000asymptotic}).
Assumption~\ref{ass:theta2-consistency}(iii) is used to control the deviation of $(\hat\alpha_2,\hat\gamma_2)$ from the target values, with unrestricted $\theta\in\Theta$.
Moreover, Assumption~\ref{ass:theta2-rate} is involved to show the asymptotic expansion (\ref{eq:theta2-expan}).
Assumption~\ref{ass:theta2-rate}(i)--(ii) is adapted from classical asymptotic theory for maximum likelihood estimation (e.g., \citealt{ferguson1996course}).
Assumption~\ref{ass:theta2-rate}(iv) is used to control the deviation of  $(\hat\theta_2,\hat\alpha_2,\hat\gamma_2)$.
\end{rem}

Combining Theorems~\ref{thm:gamma2}--\ref{thm:theta2} leads to Proposition~\ref{pro:main} provided
$M_2 r_0^2 = o(n^{-1/2})$, i.e., the remainder term in (\ref{eq:theta2-expan}) reduces to $o_p(n^{-1/2})$.
As motivated in Section~\ref{sec:cal} and made explicit in the proofs, the primary reason for $\hat\theta_2$ to achieve asymptotic expansion (\ref{eq:theta2-expan})
is that the two-step estimators $(\hat\alpha_2,\hat\gamma_2)$ are constructed such that according to Proposition~\ref{pro:two-step},
the target values $(\bar\alpha_2, \bar\gamma_2)$ satisfy the calibration equations (\ref{eq:CAL1})--(\ref{eq:CAL2})
if model (\ref{eq:g-model}) or (\ref{eq:f-model}) is correctly specified. In this case,
both the linear and quadratic terms in $(\hat\alpha_2-\bar\alpha_2, \hat\gamma_2-\bar\gamma_2)$ are $O_p(M_2 r_0^2)$ from a Taylor expansion argument. Otherwise, the linear
term would in general be $O_p(M_2^{1/2} r_0)$, as reflected in the convergence rate for the initial estimator $\hat\theta_1$ in Assumption~\ref{ass:gamma2-basic}(iv).

\section{Applications} \label{sec:applications}

\begin{eg} \label{eg:PLM-RCAL}
Return to Examples~\ref{eg:PLM} and \ref{eg:PLM-EE} with a partially linear model (\ref{eq:PLM}).
For $ g(x;\alpha) = \alpha^\T \xi$ and $f(x; \gamma) = \psi_f( \gamma^\T \xi)$, models (\ref{eq:g-model}) and (\ref{eq:f-model}) can be stated as
\begin{align}
& E ( Y | Z, X) = \theta Z + \alpha^\T \xi, \label{eq:g-PLM} \\
& E ( Z| X ) = \psi_f ( \gamma^\T \xi). \label{eq:f-PLM}
\end{align}
For estimating function $\tau$ in (\ref{eq:PLM-tau}) and any estimators $(\hat\alpha,\hat\gamma)$,
$\hat\theta=\hat\theta(\hat\alpha,\hat\gamma)$ as a solution to
$\tilde E\{ \tau(U; \theta, \hat\alpha, \hat\gamma)\}=0$ is of closed form:
\begin{align*}
\hat\theta (\hat\alpha, \hat\gamma) = \frac{\tilde E \{ (Y- \hat\alpha^\T \xi) ( Z - \psi_f(\hat\gamma^\T \xi))\} } {\tilde E\{ Z( Z - \psi_f(\hat\gamma^\T \xi))\} }.
\end{align*}
For initial estimation, let $(\hat\theta_0, \hat\alpha_1)$ be Lasso regularized least-squares estimators in model (\ref{eq:g-PLM}),
$\hat \gamma_1$ be a Lasso regularized quasi-likelihood estimator in model (\ref{eq:f-PLM}), and $\hat\theta_1=\hat\theta(\hat\alpha_1,\hat\gamma_1)$.
For second-step estimation,
the regularized calibrated estimator $\hat\gamma_2$ is defined with a Lasso penalty and the loss function 
\begin{align}
L_2 ( \gamma ) = \tilde E \{\ell_2(U;\gamma)\}  = \tilde E \left\{ - Z \gamma^\T \xi + \Psi_f (\gamma^\T \xi)  \right\} , \label{eq:PLM-loss-gamma}
\end{align}
and $\hat\alpha_2$ is defined with a Lasso penalty and the loss function
\begin{align}
L_1 (\alpha; \hat\theta_1,\hat\gamma_2 ) =  \tilde E \{\ell_1(U; \hat\theta_1, \alpha,\hat\gamma_2)\}  = \tilde E \left\{ \psi_f^\prime(\hat\gamma_2^\T \xi) (Y - \hat\theta_1 Z - \alpha^\T \xi)^2  \right\} , \label{eq:PLM-loss-alpha}
\end{align}
where $\Psi_f(t) = \int_0^t \psi_f(u)\,\dif u$ and $\psi_f^\prime$ is the derivative of $\psi_f$, and $\ell_1$ and $\ell_2$ are determined from (\ref{eq:convex-loss}),
with $(\partial\tau/\partial\alpha, \partial\tau/\partial\gamma)$ in (\ref{eq:PLM-EE1})--(\ref{eq:PLM-EE2}).
The estimator $\hat\gamma_2$ coincides with the usual estimator $\hat\gamma_1$ with a canonical link in (\ref{eq:f-PLM}),
whereas $\hat\alpha_2$ can be interpreted as a regularized weighted least squares estimator.
The resulting estimator of $\theta$ is then $\hat\theta_2 = \hat\theta(\hat\alpha_2,\hat\gamma_2)$.
\end{eg}

We stress that the loss (\ref{eq:PLM-loss-alpha}) is for estimation of $\alpha$ with $(\hat\theta_1,\hat\gamma_2)$ fixed,
and $\hat\theta_1$ is determined as $\hat\theta(\hat\alpha_1,\hat\gamma_1)$ and hence pointwise doubly robust (see Remark~\ref{rem:gamma2-theta1}).
In other words, for $\hat\theta_2$ to admit doubly robust confidence intervals as in Proposition~\ref{pro:main},
it is in general incorrect to (i) replace $\hat\theta_1$ in (\ref{eq:PLM-loss-alpha}) by $\hat\theta_0$ computed from the first step,
or (ii) to redefine $(\hat\theta_1, \hat\alpha_2)$ jointly as a regularized weighted least squares estimator for $Y | (Z,X)$, with weight $\psi_f^\prime(\hat\gamma_2^\T \xi)$.
Nevertheless, these simple options become valid in the special situation where $\psi_f()$ is an identity function, i.e., (\ref{eq:f-PLM}) is a linear model.
In this case, $\hat\gamma_2$ can be taken the same as $\hat\gamma_1$ because (\ref{eq:PLM-loss-gamma}) becomes the usual least-squares loss,
and then either option (i) or (ii) can be shown to yield
$\hat\theta_2$ identical to the first-step estimator $\hat\theta(\hat\alpha_1,\hat\gamma_1)$,
provided that the same Lasso tuning parameter is used in computing $\hat\alpha_2$
as in computing $(\hat\theta_0,\hat\alpha_1)$. See the proof of Corollary~\ref{cor:debiased}.
Moreover, $\hat\theta(\hat\alpha_1,\hat\gamma_1)$ can be expressed as
a debiased Lasso estimator of $\theta$ in linear regression (\ref{eq:g-PLM}) (\citealt{zhang2014confidence, SVD, javanmard2014confidence}):
\begin{align*}
\hat\theta_{\mbox{\tiny DB}} & = \hat\theta (\hat\alpha_1,\hat\gamma_1) = \frac{\tilde E \{ (Y- \hat\alpha_1^\T \xi) ( Z - \hat\gamma_1^\T \xi )\} } {\tilde E\{ Z( Z - \hat\gamma_1^\T \xi) \} }
= \hat\theta_0 +  \frac{\tilde E \{ (Y- \hat\theta_0 Z - \hat\alpha_1^\T \xi) ( Z - \hat\gamma_1^\T \xi )\} } {\tilde E\{ Z( Z - \hat\gamma_1^\T \xi) \} },
\end{align*}
where $(\hat\theta_0,\hat\alpha_1)$ are jointly Lasso estimators  in linear regression of $Y|(Z,X)$, and
$\hat\gamma_1$ is that in linear regression of $Z|X$.
Suppose that the Lasso tuning parameters are sufficiently large, of order $O( \{\log(\me p)/n\}^{1/2} )$.
The following result can be deduced from Proposition~\ref{pro:main}.

\begin{cor} \label{cor:debiased}
Suppose that Assumption~\ref{ass:gamma2-basic}(i) and a compatibility condition holds for $\Sigma_0 = E(\xi\xi^\T)$,
$Y - \theta^*Z - \bar\alpha_1^\T \xi$ and $Z - \bar \gamma_1^\T \xi$ are sub-exponential,
$V = E \{ (Y - \theta^*Z - \bar\alpha_1^\T \xi)^2 (Z - \bar \gamma_1^\T \xi)^2 \} <\infty$,
and $H = - E \{ Z (Z - \bar \gamma_1^\T \xi)\} \not= 0$.
If model (\ref{eq:g-PLM}) or model (\ref{eq:f-PLM}) with $\psi_f \equiv 1$ is correctly specified, then
the conclusions in Proposition~\ref{pro:main} are valid for $\hat\theta_{\mbox{\tiny DB}}=\hat\theta(\hat\alpha_1,\hat\gamma_1)$,
where
\begin{align*}
\hat V = \tilde E\{ (Y - \hat\theta_{\mbox{\tiny DB}} Z - \hat\alpha_1^\T \xi)^2 (Z - \hat \gamma_1^\T \xi)^2 \} / E^2 \{ Z (Z - \hat \gamma_1^\T \xi)\}.
\end{align*}
Hence a doubly robust confidence interval for $\theta^*$ is obtained in partially linear model (\ref{eq:PLM}).
\end{cor}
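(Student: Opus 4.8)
The plan is to realize $\hat\theta_{\mbox{\tiny DB}}$ as an instance of the two-step estimator $\hat\theta_2$ of Example~\ref{eg:PLM-RCAL} specialized to identity links, and then to invoke Proposition~\ref{pro:main}. The first task is an exact algebraic identification. With $\psi_f$ the identity we have $\Psi_f(t)=t^2/2$, so the $\gamma$-calibration loss (\ref{eq:PLM-loss-gamma}) is the ordinary least-squares loss for $Z$ on $\xi$ and hence $\hat\gamma_2=\hat\gamma_1$; likewise $\psi_f'\equiv 1$ reduces (\ref{eq:PLM-loss-alpha}) to the unweighted least-squares loss in $\alpha$. Using option (i) described after Example~\ref{eg:PLM-RCAL}, I would replace the plug-in $\hat\theta_1$ by $\hat\theta_0$ and use the same Lasso tuning parameter for $\hat\alpha_2$ as for $(\hat\theta_0,\hat\alpha_1)$. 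Since $(\hat\theta_0,\hat\alpha_1)$ jointly minimize the penalized least-squares objective, fixing $\theta=\hat\theta_0$ and minimizing over $\alpha$ returns $\hat\alpha_1$; with the matching tuning parameter this is exactly the $\hat\alpha_2$ of option (i), so $\hat\alpha_2=\hat\alpha_1$ and therefore $\hat\theta_2=\hat\theta(\hat\alpha_1,\hat\gamma_1)=\hat\theta_{\mbox{\tiny DB}}$ holds exactly. It then suffices to verify the hypotheses of Proposition~\ref{pro:main} in this setting, with $\hat\theta_0$ playing the role of the initial estimator, as anticipated in Remark~\ref{rem:gamma2-theta1}.

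Next I would check that the corollary's assumptions imply Assumptions~\ref{ass:convex}--\ref{ass:theta2-rate}, exploiting that $\tau$ in (\ref{eq:PLM-tau}) is bilinear in $(\eta_g,\eta_f)$ with identity links. The losses $\ell_1,\ell_2$ determined from (\ref{eq:PLM-EE1})--(\ref{eq:PLM-EE2}) are convex, giving Assumption~\ref{ass:convex}. Since $\partial\tau/\partial\eta_g=-(Z-\eta_f)$ and $\partial\tau/\partial\eta_f=-(Y-\theta Z-\eta_g)$, every second derivative of $\tau$ is either constant ($\partial^2\tau/(\partial\eta_g\partial\eta_f)\equiv1$), identically zero ($\partial^2\tau/\partial\eta_g^2$, $\partial^2\tau/\partial\eta_f^2$, $\partial^2\tau/(\partial\eta_g\partial\theta)$), or linear in the data ($\partial^2\tau/(\partial\eta_f\partial\theta)=Z$); consequently the Hessian-type conditions in Assumptions~\ref{ass:gamma2-hess}, \ref{ass:alpha2-hess} and \ref{ass:theta2-rate} hold trivially, using sub-exponentiality of the residuals and of $Z$. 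Moreover $\Sigma_\gamma=E(\xi\xi^\T)=\Sigma_0$, so by Remark~\ref{rem:gamma2} the compatibility condition reduces to the stated one on $\Sigma_0$, and the sub-exponential conditions in Assumptions~\ref{ass:gamma2-basic}(ii) and \ref{ass:alpha2-basic} reduce to sub-exponentiality of the target residuals $Z-\bar\gamma_1^\T\xi$ and $Y-\theta^*Z-\bar\alpha_1^\T\xi$, which are assumed. Substituting (\ref{eq:PLM-tau}) then yields $V=E\{(Y-\theta^*Z-\bar\alpha_1^\T\xi)^2(Z-\bar\gamma_1^\T\xi)^2\}$ and $H=-E\{Z(Z-\bar\gamma_1^\T\xi)\}$, matching the quantities in the statement.

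The crux is verifying Assumption~\ref{ass:gamma2-basic}(iv) for the non-standard initial estimator $\hat\theta_0$, i.e.\ that $|\hat\theta_0-\theta^*|=O_p(M_0^{1/2}\lambda_0)$ whenever either model is correct. If model (\ref{eq:g-PLM}) is correct this is the usual $\ell_2$ rate of the Lasso for $(\theta,\alpha)$, and since $|\hat\theta_0-\theta^*|$ is bounded by the $\ell_2$ estimation error $O_p(|S_{\bar\alpha_1}|^{1/2}\lambda_0)$, the required bound holds with $M_0\asymp |S_{\bar\alpha_1}|+1$. The delicate case is when only the treatment model (\ref{eq:f-PLM}) is correct while the outcome model is misspecified: here I would first establish at the population level that the best linear approximation coefficient $\bar\theta_0$ equals $\theta^*$. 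This uses the orthogonality $E\{g^*(X)(Z-E(Z\mid X))\}=0$, valid because $E(Z\mid X)=\bar\gamma_1^\T\xi$ is linear, together with the partially linear structure (\ref{eq:PLM}), which gives $\cov\{Y,Z-E(Z\mid X)\}=\theta^*\,\var\{Z-E(Z\mid X)\}$ and hence $\bar\theta_0=\theta^*$. One then appeals to Lasso theory for a misspecified but approximately sparse linear model, with the sub-exponential residual $Y-\theta^*Z-\bar\alpha_1^\T\xi$ and the compatibility condition on $\Sigma_0$, to conclude $\hat\theta_0\to\theta^*$ at the $\ell_2$ rate. I expect this step—transferring the population orthogonality into a finite-sample pointwise doubly robust rate for a single coordinate of a misspecified Lasso—to be the main obstacle; the remaining inheritance of the asymptotic expansion, the variance formula, and the confidence interval is then immediate from Proposition~\ref{pro:main}, noting that the sparsity condition $(M_0+|S_{\bar\alpha_2}|+|S_{\bar\gamma_2}|)r_0^2=o(n^{-1/2})$ carries over with $\bar\alpha_2=\bar\alpha_1$ and $\bar\gamma_2=\bar\gamma_1$.
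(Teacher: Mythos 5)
Your proposal is correct and follows essentially the same route as the paper's proof: you identify $\hat\theta_{\mbox{\tiny DB}}$ with the two-step estimator by showing $\hat\gamma_2=\hat\gamma_1$ and (via the joint-minimizer argument for option (i) with matching tuning parameter) $\hat\alpha_2=\hat\alpha_1$, and then apply Proposition~\ref{pro:main} with $\hat\theta_0$ as the initial estimator after establishing its pointwise double robustness. The paper derives $\bar\theta_0=\theta^*$ by observing that the population normal equations for $(\bar\theta_0,\bar\alpha_1)$ imply the doubly robust estimating equation $E\{(Y-\bar\theta_0 Z-\bar\alpha_1^\T\xi)(Z-\bar\gamma_1^\T\xi)\}=0$, which is the same orthogonality content as your Frisch--Waugh-style covariance computation, and it leaves the finite-sample rate for $\hat\theta_0$ at the same level of detail ("can be shown under the stated regularity conditions") that you flag as the remaining technical step.
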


From Corollary~\ref{cor:debiased}, the debiased Lasso estimator $\hat\theta_{\mbox{\tiny DB}}$ in linear regression (\ref{eq:g-PLM}) can be used to obtain doubly robust
confidence intervals for $\theta^*$ in a partially linear model. This finding appears new and gives a high-dimensional extension of
the double robustness (including pointwise and confidence intervals) of least-squares estimation in low-dimensional settings (Example~\ref{eg:PLM-LS}).
It is helpful to make several comments.
First, although $\hat\theta_{\mbox{\tiny DB}}$ is the same point estimator, the variance estimator $\hat V$ differs from
those originally in debiased Lasso, in the context of linear regression with a constant error variance, which then needs to be estimated
(\citealt{zhang2014confidence, SVD, javanmard2014confidence}).

Second, \cite{buhlmann2015high} studied debiased Lasso in possibly misspecified linear regression.
They employed the same point estimator $\hat\theta_{\mbox{\tiny DB}}$ and proposed a variance estimator similar to $\hat V$,
\begin{align*}
\hat V_{\mbox{\tiny BG}} =  \tilde E\{ ( \hat\varepsilon \hat Z - \tilde E(\hat\varepsilon\hat Z))^2 \} / E^2(Z \hat Z ),
\end{align*}
where $\hat \varepsilon = Y - \hat\theta_0 Z - \hat\alpha_1^\T \xi$ and $\hat Z = Z - \hat \gamma_1^\T \xi$.
Specifically, $\hat V_{\mbox{\tiny BG}}$ can be obtained from $\hat V$ by replacing $ \hat\theta_{\mbox{\tiny DB}}$ with $\hat\theta_0$  and
the sample second-moment of the product $(Y - \hat\theta_0 Z - \hat\alpha_1^\T \xi) (Z - \hat \gamma_1^\T \xi)$ with  the sample variance.
\cite{buhlmann2015high} showed that under suitable conditions, $\hat\theta_{\mbox{\tiny DB}} \pm z_{c/2} \sqrt{ \hat V_{\mbox{\tiny BG}}}$ is
a $(1-c)$ confidence interval for $\bar\theta_0$, defined such that
\begin{align*}
(\bar\theta_0, \bar\alpha_1) = \argmin_{ (\theta, \alpha)} E \left\{ (Y - \theta Z - \alpha^\T \xi)^2 \right\},
\end{align*}
with possible misspecfication of linear model (\ref{eq:g-PLM}).
This result is compatible with ours, because, from the proof of Corollary~\ref{cor:debiased},
$\bar\theta_0$ identifies $\theta^*$ in partially linear model (\ref{eq:PLM}) if
linear model (\ref{eq:g-PLM}) is misspecified but a linear model for $E(Z|X)$ is correctly specified.

Finally, for a nonlinear model (\ref{eq:f-model}) with $\psi_f$ a non-identity function (for example when $Z$ is binary or nonnegative),
our estimator $\hat\theta_2$ and associated confidence intervals are distinct from debiased Lasso including \cite{buhlmann2015high}.
Although $\hat\theta_{\mbox{\tiny DB}} \pm z_{c/2} \sqrt{ \hat V_{\mbox{\tiny BG}}}$ remains a $(1-c)$ confidence interval for $\bar\theta_0$ under suitable conditions,
the target value $\bar\theta_0$ may in general differ from $\theta^*$ in partially linear model  (\ref{eq:PLM}) even if
model (\ref{eq:g-PLM}) is misspecified but a nonlinear model for $E(Z|X)$ is correctly specified.

\begin{eg} \label{eg:PLM-log-RCAL}
Return to Examples~\ref{eg:PLM-log} and \ref{eg:PLM-log-EE} with a partially log-linear model (\ref{eq:PLM-log}).
For $ g(x;\alpha) = \alpha^\T \xi$ and $f(x; \gamma) = \psi_f( \gamma^\T \xi)$, models (\ref{eq:g-model}) and (\ref{eq:f-model}) can be stated as
\begin{align}
& E ( Y | Z, X) = \exp( \theta Z + \alpha^\T \xi ), \label{eq:g-PLM-log} \\
& E ( Z| X ) = \psi_f ( \gamma^\T \xi). \label{eq:f-PLM-log}
\end{align}
For estimating function $\tau$ in (\ref{eq:PLM-log-tau}) and any estimators $(\hat\alpha,\hat\gamma)$,
$\hat\theta=\hat\theta(\hat\alpha,\hat\gamma)$ is a solution to
\begin{align}
0 = \tilde E\{ \tau(U; \theta, \hat\alpha, \hat\gamma)\}= \tilde E \left\{ (Y\me^{-\theta Z} - \me^{\hat\alpha^\T \xi} ) ( Z - \psi_f(\hat\gamma^\T \xi)) \right\}  . \label{eq:theta-PLM-log}
\end{align}
For initial estimation, let $(\hat\theta_0, \hat\alpha_1)$ be Lasso regularized quasi-likelihood estimators in model (\ref{eq:g-PLM-log}),
$\hat \gamma_1$ be that in model (\ref{eq:f-PLM-log}), and $\hat\theta_1=\hat\theta(\hat\alpha_1,\hat\gamma_1)$.
For second-step estimation,
the regularized calibrated estimator $\hat\gamma_2$ is defined with a Lasso penalty and the loss function
\begin{align}
L_2 ( \gamma; \hat\alpha_1 ) = \tilde E\{\ell_2(U;\hat\alpha_1, \gamma)\}  = \tilde E \left[ \me^{\hat\alpha_1^\T\xi} \left\{ - Z \gamma^\T \xi + \Psi_f (\gamma^\T \xi)  \right\} \right], \label{eq:PLM-log-loss-gamma}
\end{align}
and $\hat\alpha_2$ is defined with a Lasso penalty and the loss function
\begin{align*}
L_1 (\alpha; \hat\theta_1,\hat\gamma_2 ) =  \tilde E \{\ell_1(U; \hat\theta_1, \alpha,\hat\gamma_2)\}  =
\tilde E \left\{ \psi_f^\prime(\hat\gamma_2^\T \xi) \left( -Y \me^{-\hat\theta_1 Z} \alpha^\T \xi + \me^{\alpha^\T \xi} \right) \right\} , 
\end{align*}
where $\ell_1$ and $\ell_2$ are determined from (\ref{eq:convex-loss}),
with $(\partial\tau/\partial\alpha, \partial\tau/\partial\gamma)$ in (\ref{eq:PLM-log-EE1})--(\ref{eq:PLM-log-EE2}).
Unlike (\ref{eq:PLM-loss-gamma}), the loss (\ref{eq:PLM-log-loss-gamma}) in $\gamma$ depends on $\hat\alpha_1$.
The resulting estimator of $\theta$ is then $\hat\theta_2 = \hat\theta(\hat\alpha_2,\hat\gamma_2)$.
\end{eg}

In contrast with Example~\ref{eg:PLM-RCAL}, our method is distinct from debiased Lasso, even when $\psi_f \equiv 1$, i.e., (\ref{eq:f-PLM-log}) is a linear model.
Similarly as in our method, let $(\hat\theta_0,\hat\alpha_1)$ be the Lasso estimators associated with the loss
$ \tilde E \{ -Y (\theta Z + \alpha^\T \xi) + \me^{\theta Z +\alpha^\T \xi} \}$, and $\hat\gamma_1$ be that associated with the loss
$ \tilde E \{ \me^{\hat \theta_0 Z +\hat \alpha_1^\T \xi} (Z - \gamma^\T \xi)^2 \}$.
The debiased Lasso estimator in \cite{SVD}, also called the one-step estimator in \cite{ning2017general}, is
\begin{align*}
\hat\theta_{\mbox{\tiny DB}} = \hat\theta_0 + \frac{ \tilde E \left\{ (Y - \me^{\hat \theta_0 Z +\hat \alpha_1^\T \xi} )(Z-\hat\gamma_1^\T \xi) \right\} }
{ \tilde E \left\{ \me^{\hat \theta_0 Z +\hat \alpha_1^\T \xi}  Z( Z -\hat\gamma_1^\T \xi) \right\} }.
\end{align*}
A variation of debiased Lasso in \cite{neykov2018unified} is to define $\hat\theta_{\mbox{\tiny DB}2}$ as a solution to
\begin{align}
 \tilde E \left\{ (Y - \me^{\theta Z +\hat \alpha_1^\T \xi} )(Z-\hat\gamma_1^\T \xi) \right\} = 0 . \label{eq:DB-PLM-log}
\end{align}
Equation (\ref{eq:DB-PLM-log}) is somewhat similar to (\ref{eq:theta-PLM-log}) with $\psi_f\equiv 1$ , but there is an important difference.
Equation (\ref{eq:theta-PLM-log}) is doubly robust: its limit version, with $(\hat\alpha_1,\hat\gamma_1)$ replaced by their limit values and $\tilde E()$ replaced by $E()$,
holds at $\theta= \theta^*$ in partially log-linear model (\ref{eq:PLM-log}) if
either model (\ref{eq:g-PLM-log}) or model (\ref{eq:f-PLM-log}) is correctly specified.
In contrast, (\ref{eq:DB-PLM-log}) is not doubly robust: its limit version in general holds at $\theta=\bar\theta_0$, defined such that
\begin{align*}
(\bar\theta_0, \bar\alpha_1 )= \argmin_{\theta,\alpha}  E \left\{ -Y (\theta Z + \alpha^\T \xi) + \me^{\theta Z +\alpha^\T \xi} \right\}.
\end{align*}
The target value $\bar\theta_0$ coincides with $\theta^*$ if model (\ref{eq:g-PLM-log}) is correctly specified, but in general
may differ from $\theta^*$ otherwise including when model (\ref{eq:f-PLM-log}) with $\psi_f\equiv 1$ is correctly specified.

\begin{eg} \label{eg:PLM-logit-RCAL}
Return to Examples~\ref{eg:PLM-logit} and \ref{eg:PLM-logit-EE} with a partially log-linear model (\ref{eq:PLM-logit}).
For $ g(x;\alpha) = \alpha^\T \xi$ and $f(x; \gamma) = \psi_f( \gamma^\T \xi)$, models (\ref{eq:g-model}) and (\ref{eq:f-model}) can be stated as
\begin{align}
& E ( Y | Z, X) = \expit( \theta Z + \alpha^\T \xi ), \label{eq:g-PLM-logit} \\
& E ( Z| Y=0, X ) = \psi_f ( \gamma^\T \xi). \label{eq:f-PLM-logit}
\end{align}
For estimating function $\tau$ in (\ref{eq:PLM-logit-tau}) and any estimators $(\hat\alpha,\hat\gamma)$,
$\hat\theta=\hat\theta(\hat\alpha,\hat\gamma)$ is a solution to
\begin{align}
0 = \tilde E\{ \tau(U; \theta, \hat\alpha, \hat\gamma)\}= \tilde E \left\{ \me^{-\theta ZY} (Y - \expit(\hat\alpha^\T \xi)) ( Z - \psi_f(\hat\gamma^\T \xi)) \right\}  . \label{eq:theta-PLM-logit}
\end{align}
For initial estimation, let $(\hat\theta_0, \hat\alpha_1)$ be Lasso likelihood estimators in model (\ref{eq:g-PLM-logit}),
$\hat \gamma_1$ be a Lasso quasi-likelihood estimator in model (\ref{eq:f-PLM-logit}), and $\hat\theta_1=\hat\theta(\hat\alpha_1,\hat\gamma_1)$.
For second-step estimation,
the regularized calibrated estimator $\hat\gamma_2$ is defined with a Lasso penalty and the loss function
\begin{align*}
L_2 ( \gamma; \hat\theta_1,\hat\alpha_1 ) = \tilde E\{\ell_2(U;\hat\theta_1,\hat\alpha_1,\gamma)\}  = \tilde E \left[ \me^{-\hat\theta_1 ZY}\expit_2(\hat\alpha_1^\T \xi)
\left\{ - Z \gamma^\T \xi + \Psi_f (\gamma^\T \xi)  \right\} \right], 
\end{align*}
and $\hat\alpha_2$ is defined with a Lasso penalty and the loss function
\begin{align*}
L_1 (\alpha; \hat\theta_1,\hat\gamma_2 ) =  \tilde E \{\ell_1(U; \hat\theta_1, \alpha,\hat\gamma_2)\}  =
\tilde E \left[ \me^{-\hat\theta_1 ZY}\psi_f^\prime(\hat\gamma_2^\T \xi) \left\{ -Y\alpha^\T \xi + \log ( 1+\me^{\alpha^\T \xi} ) \right\} \right] ,
\end{align*}
where $\ell_1$ and $\ell_2$ are determined from (\ref{eq:convex-loss}),
with $(\partial\tau/\partial\alpha, \partial\tau/\partial\gamma)$ in (\ref{eq:PLM-logit-EE1})--(\ref{eq:PLM-logit-EE2}).
The resulting estimator of $\theta$ is then $\hat\theta_2 = \hat\theta(\hat\alpha_2,\hat\gamma_2)$.
\end{eg}

Our method in Example~\ref{eg:PLM-logit-RCAL} differs from debiased Lasso even more substantially than in Examples~\ref{eg:PLM-RCAL}--\ref{eg:PLM-log-RCAL}.
The debiased Lasso estimator in van de Geet et al.~(2014) is
\begin{align*}
\hat\theta_{\mbox{\tiny DB}} = \hat\theta_0 + \frac{ \tilde E \left\{ (Y - \expit(\hat \theta_0 Z +\hat \alpha_1^\T \xi) )(Z-\tilde \gamma_1^\T \xi) \right\} }
{ \tilde E \left\{ \expit_2(\hat \theta_0 Z +\hat \alpha_1^\T \xi)  Z( Z -\tilde \gamma_1^\T \xi) \right\} },
\end{align*}
and a variation $\hat\theta_{\mbox{\tiny DB}2}$ in \cite{neykov2018unified} is a solution to
\begin{align*}
 \tilde E \left\{ (Y - \expit(\theta Z +\hat \alpha_1^\T \xi) )(Z-\tilde \gamma_1^\T \xi) \right\} = 0 ,
\end{align*}
where $(\hat\theta_0,\hat\alpha_1)$ are Lasso estimators in model (\ref{eq:g-PLM-logit}) as in our method, but
$\tilde\gamma_1$, different from $\hat\gamma_1$, is a Lasso estimator associated with the loss
$\tilde E \{ \expit_2( \hat\theta_0 Z + \hat\alpha_1^\T \xi) ( Z - \gamma^\T \xi)^2 \}$,  corresponding to a model $E(Z|X) = \gamma^\T \xi$
instead of model (\ref{eq:f-PLM-logit}) in our method.
Confidence intervals based on $\hat\theta_{\mbox{\tiny DB}}$ or $\hat\theta_{\mbox{\tiny DB2}}$ would not be valid for $\theta^*$ in partially linear model
(\ref{eq:PLM-logit}) if model (\ref{eq:g-PLM-logit}) is misspecified, irrespective of whether model (\ref{eq:f-PLM-logit}) is correctly specified.

\begin{eg} \label{eg:ATE-RCAL}
Return to Examples~\ref{eg:ATE} and \ref{eg:ATE-EE}.
For $ g(x;\alpha) = \psi_g(\alpha^\T \xi)$ and $f(x; \gamma) = \psi_f( \gamma^\T \xi)$, models (\ref{eq:g-model}) and (\ref{eq:f-model}) can be stated as
\begin{align}
& E ( Y | Z=1, X) = \psi_g(\alpha^\T \xi ), \label{eq:g-ATE} \\
& P(Z=1 | X ) = \psi_f ( \gamma^\T \xi). \label{eq:f-ATE}
\end{align}
For estimating function $\tau$ in (\ref{eq:ATE-tau}) and any estimators $(\hat\alpha,\hat\gamma)$,
$\hat\theta=\hat\theta(\hat\alpha,\hat\gamma)$ is of closed form
\begin{align}
\hat\theta(\hat\alpha, \hat\gamma) = \tilde E \left[\frac{Z Y}{\psi_f(\hat\gamma^\T \xi)} - \left\{ \frac{Z}{\psi_f(\hat\gamma^\T \xi)} -1 \right\} \psi_g(\hat\alpha^\T \xi) \right]. \label{eq:theta-ATE}
\end{align}
For initial estimation, let $\hat\alpha_1$ be a Lasso quasi-likelihood estimator in model (\ref{eq:g-ATE}),
$\hat \gamma_1$ be that in model (\ref{eq:f-ATE}), and $\hat\theta_1=\hat\theta(\hat\alpha_1,\hat\gamma_1)$.
For second-step estimation,
the regularized calibrated estimator $\hat\gamma_2$ is defined with a Lasso penalty and the loss function
\begin{align*}
L_2 ( \gamma; ,\hat\alpha_1 ) = \tilde E\{\ell_2(U; \hat\alpha_1,\gamma)\}  = \tilde E \left[ \psi_g^\prime(\hat\alpha_1^\T \xi)
\left\{ - Z \Psi_f(\gamma^\T \xi) + \gamma^\T \xi) \right\} \right],
\end{align*}
and $\hat\alpha_2$ is defined with a Lasso penalty and the loss function
\begin{align*}
L_1 (\alpha; \hat\gamma_2 ) =  \tilde E \{\ell_1(U; \alpha,\hat\gamma_2)\}  =
\tilde E \left[ \frac{\psi^\prime_f(\hat\gamma_2^\T \xi)}{\psi_f^2(\hat\gamma_2^\T \xi)} Z \left\{ -Y\alpha^\T \xi + \Psi_g(\alpha^\T \xi) \right\} \right] ,
\end{align*}
where
$\Psi_f (u) = \int_0^u \psi_f^{-1}(t)\,\dif t$,
$\Psi_g (u) =\int_0^u \psi_g (t)\,\dif t$, and
$\ell_1$ and $\ell_2$ are determined from (\ref{eq:convex-loss}),
with $(\partial\tau/\partial\alpha, \partial\tau/\partial\gamma)$ in (\ref{eq:ATE-EE1})--(\ref{eq:ATE-EE2}).
The resulting estimator of $\theta$ is then $\hat\theta_2 = \hat\theta(\hat\alpha_2,\hat\gamma_2)$.
\end{eg}

By Proposition~\ref{pro:main}, valid confidence intervals based on $\hat\theta_2$ can be obtained for $\theta^*=E(Y)$ if either model (\ref{eq:g-ATE}) or (\ref{eq:f-ATE}) is correctly specified.
Hence our work extends \cite{tan2020model}, where doubly robust confidence intervals are obtained for $\theta^*$ only with linear outcome model (\ref{eq:g-ATE}).
With a nonlinear outcome model, valid confidence intervals are obtained in \cite{tan2020model}, depending on propensity score model (\ref{eq:f-ATE}) being correctly specified.

\section{Simulation studies}

Consider the settings of partially linear, log-linear, and logistic models (Examples~\ref{eg:PLM-RCAL}--\ref{eg:PLM-logit-RCAL}).
Assume that the covariate of interest $Z$ is binary (for example a treatment variable), and hence
the coefficient $\theta^*$ represents some homogeneous treatment effect.
The link function for $Z$ given $X$ is taken to be logistic: $\psi_f = \expit(\cdot).$

We investigate the performance of our two-step estimator $\hat{\theta}_2$,
compared with the debiased Lasso estimator $\hat{\theta}_{\mbox{\tiny DB}}$ and the initial estimator $\hat{\theta}_1$
using regularized likelihood (or quasi-likelihood) estimation, as described in Section~\ref{sec:applications}.
For all point estimators, the Lasso tuning parameters are selected via 5-fold cross validation.
Wald confidence intervals based on $\hat\theta_2$ are obtained by Proposition~\ref{pro:main}.
For comparison, confidence intervals based on $\hat\theta_1$ are computed in a similar manner, with $(\hat\alpha_1,\hat\gamma_1,\hat\theta_1)$
in place of $(\hat\alpha_2, \hat\gamma_2, \hat\theta_2)$.
Wald confidence intervals based on $\hat{\theta}_{\mbox{\tiny DB}}$ are computed using a robust variance estimator,
which, for linear modeling, is defined as $\hat V_{\mbox{\tiny BG}}$ in Section~\ref{sec:applications}.
See the Supplement for further implementation details.

\subsection{Partially linear modeling} \label{sec:linear-sim}

Consider the following data-generating configurations for $(Z,X,Y)$, where $\theta^*=3$.
\begin{itemize}\addtolength{\itemsep}{-.1in}
	\item[(C1)] Generate $Z$ as Bernoulli with $P(Z=1)=0.5$ and $X$ given $Z=0$ or 1 as multivariate normal
with means $\mu_0\not=\mu_1$ and variance matrices $\Sigma_0 = \Sigma_1$, such that $Z$ given $X$ is Bernoulli with
$P(Z=1|X) = \expit ( -0.4297-0.25X_1 + 0.5X_2 +  0.75X_3 + X_4 + 1.25X_5)$.
Then $Y$ given $(Z,X)$ is generated as normal with variance 0.5 and mean $E(Y|Z,X) =\theta^*Z +0.25X_1 + 1.5X_2 + 1.75X_3 + 5X_4$.

    \item[(C2)] Generate $Z$ as Bernoulli with $P(Z=1)=0.5$ and $X$ given $Z=0$ or 1 as multivariate normal
with means $\mu_0\not=\mu_1$ and variance matrices $\Sigma_0 \not= \Sigma_1$, such that
$Z$ given $X$ as Bernoulli with $P(Z=1|X) = \expit (- 0.4687 + \frac{p}{2}\ln 2 - 0.25X_1 + 0.5X_2 +0.75X_3 + X_4 + 0.5 X^\T  X)$.
Then $Y$ given $(Z,X)$ is generated as in (C1).

    \item[(C3)] Generate $Z$ given $X$ as in (C1) and then $Y$ given $(Z,X)$ as normal with variance 0.5 and mean $E(Y|Z,X) =
     \theta^*Z + \expit (0.5X_1 + X_2) + 4(X_3 - 0.75) + 2(X_4 - 1)^2$.
\end{itemize}
See the Supplement for details of $(\mu_0,\mu_1)$ and $(\Sigma_0,\Sigma_1)$ and the derivation of $P(Z=1|X)$ stated above, related to Fisher's discrimination analysis.

Consider models (\ref{eq:g-PLM}) for $E(Y|Z,X)$ and (\ref{eq:f-PLM}) for $P(Z=1|X)$, with the regressor vector $\xi = (1,X^\T)^\T$.
Then the two models are both correctly specified in (C1),
model (\ref{eq:g-PLM}) is correctly specified but model (\ref{eq:f-PLM}) is misspecified in (C2),
and model (\ref{eq:g-PLM}) is misspecified and model (\ref{eq:f-PLM}) is correctly specified in (C3).
Hence the simulation settings (C1), (C2), and (C3) are labeled as ``Cor Cor", ``Cor Mis", and ``Mis Cor" respectively.

For $n=400$ and $p=800$, Table~\ref{tab:linear-p800} summarizes the results for estimation of $\theta^*$
and Figure~\ref{fig:linear-p800} shows the QQ plots of $t$-statistics. Additional results with $p=100$ or 200
are included in the Supplement. In settings (C1) and (C2) with model (\ref{eq:g-PLM}) correctly specified for $E(Y|Z,X)$,
the three methods using $\hat\theta_{\mbox{\tiny DB}}$, $\hat\theta_1$, and $\hat\theta_2$
perform similarly to each other. In theory, all the methods in such settings deliver valid confidence intervals.
In setting (C3) with model (\ref{eq:g-PLM}) misspecified for $E(Y|Z,X)$ but model (\ref{eq:f-PLM}) correctly specified for $P(Z=1|X)$,
there are important differences between the three methods.
The debiased Lasso estimator $\hat\theta_{\mbox{\tiny DB}}$ becomes inconsistent for $\theta^*$,
as seen from a large bias and poor coverage proportion. The initial estimator $\hat\theta_1$ is, in theory, consistent but
does not yield valid confidence intervals.
Our calibrated estimator $\hat\theta_2$ shows the best performance, with a small bias and close to 95\% coverage.
The improvement of $\hat\theta_2$ over $\hat\theta_{\mbox{\tiny DB}}$ and $\hat\theta_1$ is also confirmed in Figure~\ref{fig:linear-p800},
where the QQ plot of $t$-statistics from $\hat\theta_2$ is much better aligned with standard normal.

\begin{table}[H]
\caption{Summary of results for partially linear modeling ($n=400, p=800$)} \label{tab:linear-p800}  \vspace{-.2in}
\small
\begin{center}
		\begin{tabular}{cccc|ccc|cccc}
            \toprule
			\multirow{4}{*}{} &
			\multicolumn{3}{c|}{(C1) Cor Cor}&
			\multicolumn{3}{c|}{(C2) Cor Mis}&
			\multicolumn{3}{c}{(C3) Mis Cor}\\
			& $\hat\theta_{\mbox{\tiny DB}}$ & $\hat\theta_1$ & $\hat\theta_2$ & $\hat\theta_{\mbox{\tiny DB}}$ & $\hat\theta_1$ & $\hat\theta_2$ & $\hat\theta_{\mbox{\tiny DB}}$& $\hat\theta_1$& $\hat\theta_2$&\\
			\midrule
			\multirow{1} {*}{Bias}
			& 0.006 & 0.006 & 0.007 & 0.013 & 0.012 & 0.012 & 0.283 & 0.082 & 0.004 \\ 			
			
			\multirow{1}{*}{$\sqrt{\text{Var}}$}
			& 0.058 & 0.057 & 0.057 & 0.057 & 0.057 & 0.058 & 0.322 & 0.301 & 0.299 \\

			\multirow{1}{*}{$\sqrt{\text{Evar}}$}
			& 0.055 & 0.056 & 0.058 & 0.059 & 0.058 & 0.056 & 0.334 & 0.330 & 0.328 \\

			\multirow{1}{*}{Cov95}
			& 0.922 & 0.928 & 0.928 & 0.921 & 0.931 & 0.941 & 0.872 & 0.929 & 0.941 \\
			\bottomrule
		\end{tabular} \\[.1in]
\parbox{1\textwidth}{\scriptsize Note: Bias and Var are the Monte Carlo bias and variance of the point estimator,  EVar
is the mean of the variance estimator, and Cov95 is the coverage proportion of nominal 95\% confidence intervals, based on 1000 repeated simulations.}
\end{center}
\end{table}

\captionsetup[subfigure]{labelformat=empty}

\vspace{-.5in}
\begin{figure}[H]
	\centering
	\subfloat[]{{\includegraphics[width=70mm]{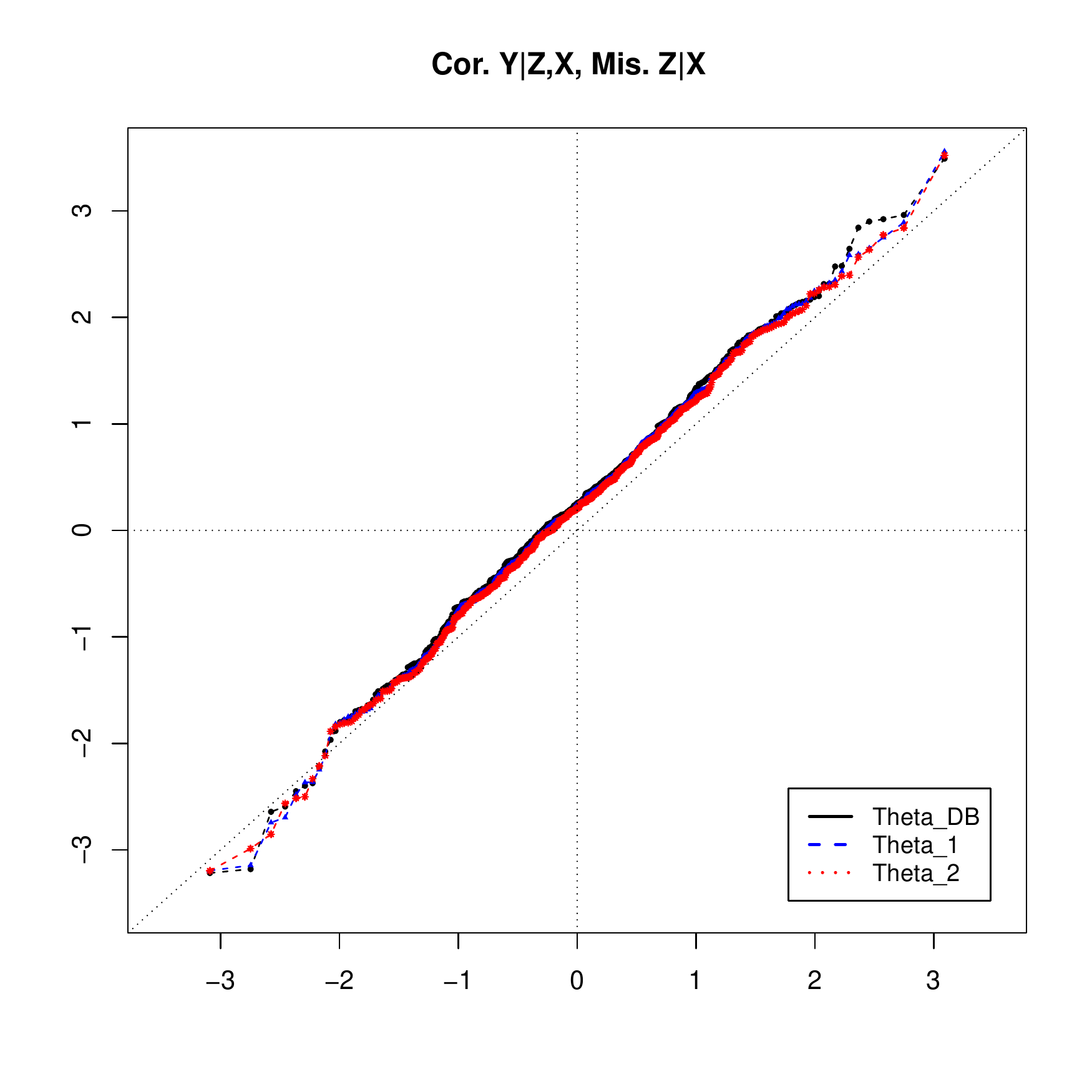} }}%
	\qquad
	\subfloat[]{{\includegraphics[width=70mm]{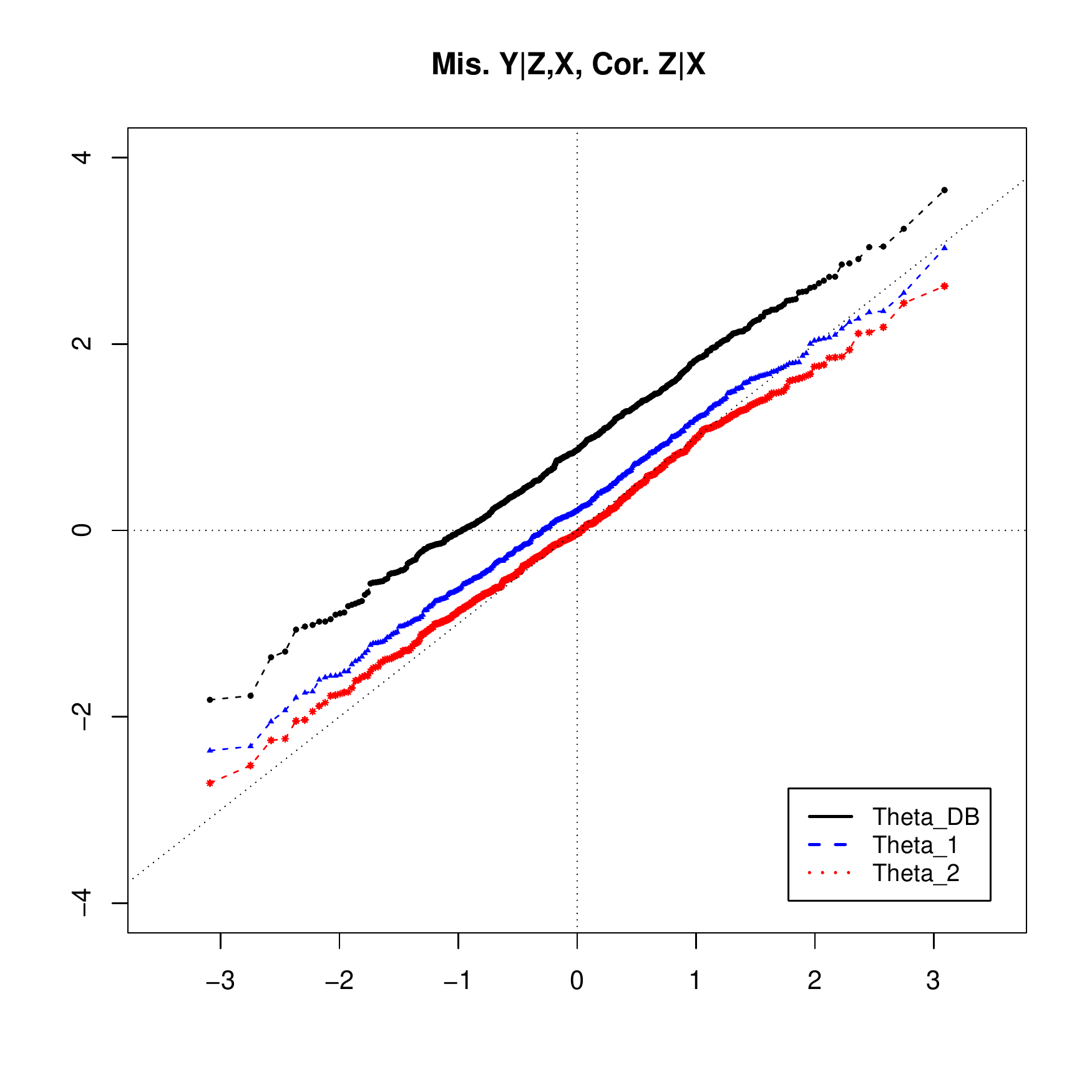} }}%
	\vspace*{-.5in} \caption{QQ plots of $t$-statistics for partially linear modeling ($n=400, p=800$)}%
\label{fig:linear-p800}
\end{figure}

\subsection{Partially log-linear modeling}

Consider the following data-generating configurations for $(Z,X,Y)$, where $\theta^*=2$.
\begin{itemize}\addtolength{\itemsep}{-.1in}
	\item[(C4)] Generate $(Z,X)$ as in (C1) in Section~\ref{sec:linear-sim} and then
$Y$ given $(Z,X)$ as Poisson with mean $\exp(\theta^*Z + 0.1X_1 + 0.25X_2 + 0.5X_3 + 0.75X_4)$.

    \item[(C5)] Generate $(Z,X)$ as in (C2) in Section~\ref{sec:linear-sim} and then
$Y$ as in (C4).

    \item[(C6)] Generate $(Z,X)$ as in (C1) in Section~\ref{sec:linear-sim} and then
    $Y$ given $(Z,X)$ as Poisson with mean $\exp(\theta^*Z + X_1 + 0.1 X_2^2 + 0.2X_3^2)$.
\end{itemize}   \vspace{-.1in}
Consider models (\ref{eq:g-PLM-log}) for $E(Y|Z,X)$ and (\ref{eq:f-PLM-log}) for $P(Z=1|X)$, with the regressor vector $\xi = (1,X^\T)^\T$.
Then the two models are both correctly specified in (C4),
model (\ref{eq:g-PLM-log}) is correctly specified but model (\ref{eq:f-PLM-log}) is misspecified in (C5),
and model (\ref{eq:g-PLM-log}) is misspecified and model (\ref{eq:f-PLM-log}) is correctly specified in (C6).

For $n=600$ and $p=800$, Table~\ref{tab:log-p800} summarizes the results for estimation of $\theta^*$
and Figure~\ref{fig:log-p800} shows the QQ plots of $t$-statistics. Additional results with $p=100$ or 200
are included in the Supplement.
The three methods perform similarly to each other in setting (C4).
However, unlike in Section~\ref{sec:linear-sim}, our calibrated method achieves the best performance in both settings (C5) and (C6),
with a smaller bias, closer to 95\% coverage, and better aligned $t$-statistics with standard normal than the other methods.

\begin{table}[H]
\caption{Summary of results for partially log-linear modeling ($n=600, p=800$)} \label{tab:log-p800}  \vspace{-.2in}
\small
\begin{center}
	\begin{tabular}{cccc|ccc|cccc}
		\toprule
		\multirow{4}{*}{} &
		\multicolumn{3}{c|}{(C4) Cor Cor}&
		\multicolumn{3}{c|}{(C5) Cor Mis}&
		\multicolumn{3}{c}{(C6) Mis Cor}\\
		& $\hat{\theta}_\DB$ & $\hat{\theta}_1$ & $\hat{\theta}_2$ & $\hat{\theta}_\DB$ & $\hat{\theta}_1$ & $\hat{\theta}_2$ & $\hat{\theta}_\DB$ & $\hat{\theta}_1$ & $\hat{\theta}_2$&\\
		\midrule
		\multirow{1} {*}{Bias}
		& 0.008 & 0.009 & 0.005 & -0.023 & -0.015 & 0.005 & -0.093 & -0.021 & 0.010 &\\ 			
		
		\multirow{1}{*}{$\sqrt{\text{Var}}$}
		& 0.043 & 0.046 & 0.048 & 0.045 & 0.045 & 0.048 & 0.075 & 0.078 & 0.081 &\\

		\multirow{1}{*}{$\sqrt{\text{Evar}}$}
		& 0.043 & 0.048 & 0.045 & 0.046 & 0.045 & 0.047 & 0.078 & 0.077 & 0.080 &\\

		\multirow{1}{*}{Cov95}
		& 0.938 & 0.934 & 0.941 & 0.857 & 0.913 & 0.923 & 0.701 & 0.933 & 0.941 &\\
		\bottomrule
		\end{tabular} \\[.1in]
\parbox{1\textwidth}{\scriptsize Note: See the footnote of Table~\ref{tab:linear-p800}.}
\end{center}
\end{table}

\vspace{-.6in}
\begin{figure}[H]
	\centering
	\subfloat[]{{\includegraphics[width=70mm]{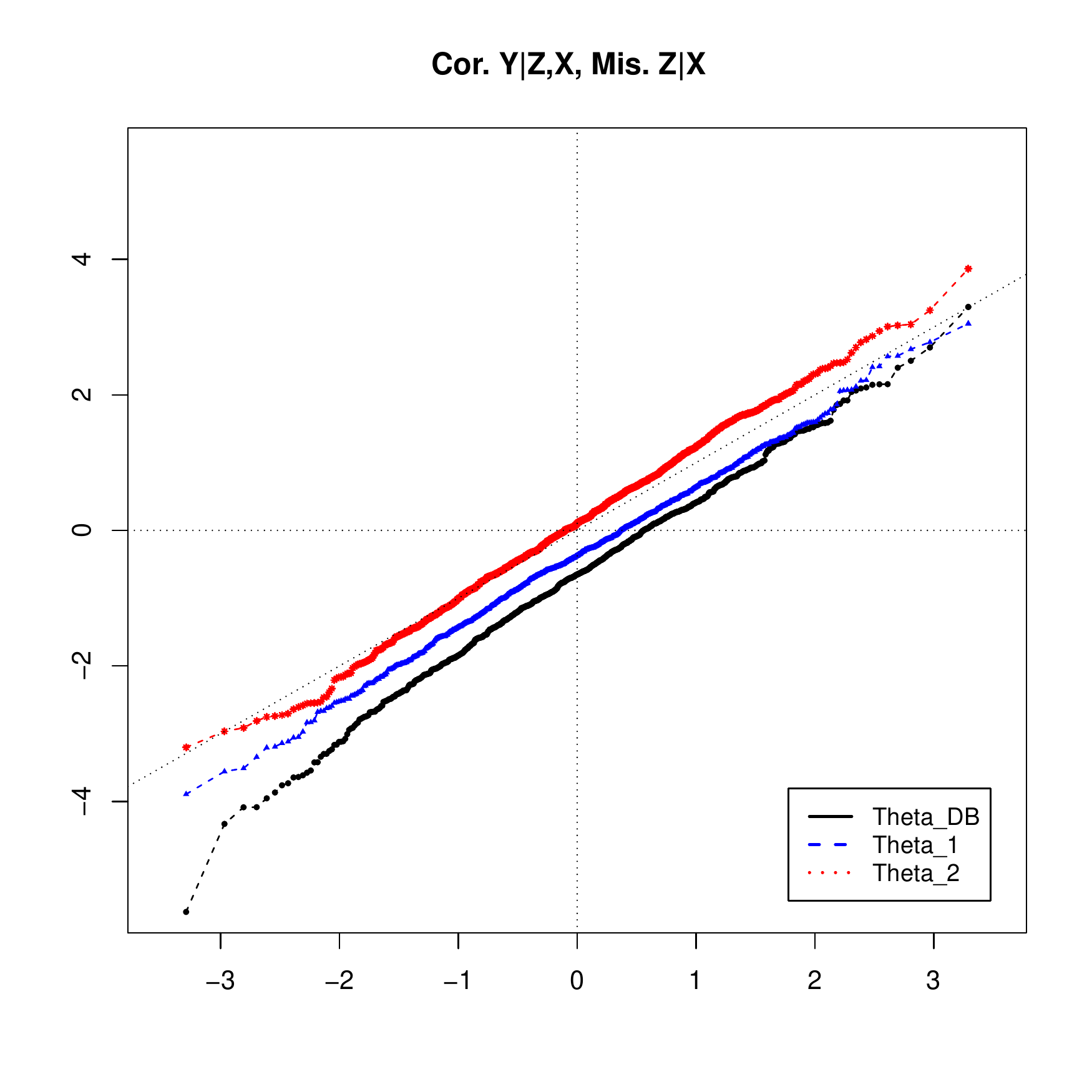} }}%
	\qquad
	\subfloat[]{{\includegraphics[width=70mm]{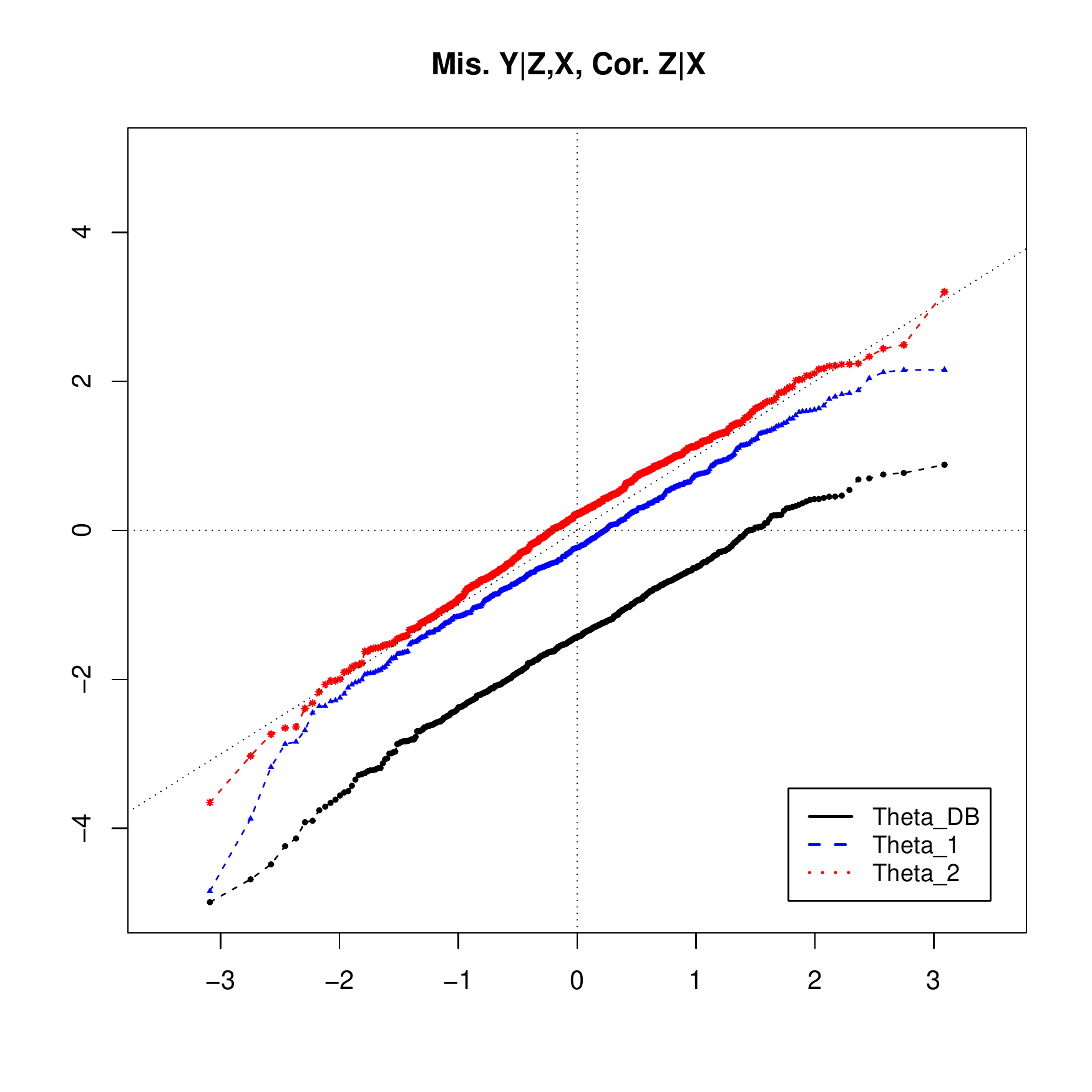} }}%
\vspace*{-.6in}  \caption{QQ plots of $t$-statistics for partially log-linear modeling ($n=600, p=800$)}%
	\label{fig:log-p800}
\end{figure}

\subsection{Partially logistic modeling}

The covariates $X= (X_1,\ldots,X_p)^\T$ are generated as multivariate normal with means 0 and $\cov(X_j, X_k) = 2^{-|j-k|}$ for $1\le j,k\le p$.
Then $(Z,Y)$ given $X$ are generated jointly (rather than sequentially) such that the following configurations are obtained, where $\theta^*=2$.
\begin{itemize}\addtolength{\itemsep}{-.1in}
	\item[(C7)] $Z$ given $Y=0$ and $X$ is Bernoulli with $P(Z = 1|Y = 0, X) = \expit(0.25 - 0.125X_1 + 0.125X_2 + 0.25X_3 + 0.375X_4)$ and
$Y$ given $(Z,X)$ is Bernoulli with $P(Y = 1|Z, X) = \expit(\theta^*Z - 0.125X_1 + 0.125X_2 + 0.25X_3 + 0.375X_4 )$.

    \item[(C8)] $Z$ given $Y=0$ and $X$ is Bernoulli with $P(Z =1|Y = 0, X) =\expit(\theta^*Z - 0.25 + 0.25X_1 + 0.8X_2 + \expit(X_3) )$ and
$Y$ given $(Z,X)$ is the same as in (C7).

    \item[(C9)] $Z$ given $Y=0$ and $X$ is the same as in (C7) and
$Y$ given $(Z,X)$ is Bernoulli with $P(Y = 1|Z, X) =\expit(\theta^*Z - 0.25 + 0.25X_1 + 0.8X_2 + \expit(X_3) )$.

\end{itemize}
See the Supplement for details of data generation, related to the odds ratio model in \cite{chen2007semiparametric}.
Consider models (\ref{eq:g-PLM-logit}) for $E(Y|Z,X)$ and (\ref{eq:f-PLM-logit}) for $P(Z=1|Y=0,X)$, with the regressor vector $\xi = (1,X^\T)^\T$.
Then the two models are both correctly specified in (C7),
model (\ref{eq:g-PLM-logit}) is correctly specified but model (\ref{eq:f-PLM-logit}) is misspecified in (C8),
and model (\ref{eq:g-PLM-logit}) is misspecified and model (\ref{eq:f-PLM-logit}) is correctly specified in (C9).

\vspace{.2in}
\begin{table}[H]
\caption{Summary of results for partially logistic modeling ($n=600, p=800$)} \label{tab:logit-p800}  \vspace{-.2in}
\small
\begin{center}
        \begin{tabular}{cccc|ccc|cccc}
		\toprule
		\multirow{4}{*}{} &
		\multicolumn{3}{c|}{(C7) Cor Cor}&
		\multicolumn{3}{c|}{(C8) Cor Mis}&
		\multicolumn{3}{c}{(C9) Mis Cor}\\
		& $\hat{\theta}_\DB$ & $\hat{\theta}_1$ & $\hat{\theta}_2$ & $\hat{\theta}_\DB$ & $\hat{\theta}_1$ & $\hat{\theta}_2$ & $\hat{\theta}_\DB$ & $\hat{\theta}_1$ & $\hat{\theta}_2$&\\
		\midrule
		\multirow{1} {*}{Bias}
		& 0.059 & 0.065 & 0.049 & 0.045 & 0.068 & 0.046 & 0.244 & 0.0524 & 0.045 &\\ 			
		
		\multirow{1}{*}{$\sqrt{\text{Var}}$}
		& 0.232 & 0.245 & 0.239 & 0.273 & 0.315 & 0.298 & 0.278 & 0.339 & 0.332 &\\

		\multirow{1}{*}{$\sqrt{\text{Evar}}$}
		& 0.226 & 0.241 & 0.238 & 0.296 & 0.287 & 0.289 & 0.371 & 0.322 & 0.326 &\\

		\multirow{1}{*}{Cov95}
		& 0.936 & 0.930 & 0.936 & 0.945 & 0.937 & 0.949 & 0.900 & 0.929 & 0.938 &\\
		\bottomrule
		\end{tabular} \\[.1in]
\parbox{1\textwidth}{\scriptsize Note: See the footnote of Table~\ref{tab:linear-p800}.}
\end{center}
\end{table}

\vspace{-.6in}
\begin{figure}[H]
	\centering
	\subfloat[]{{\includegraphics[width=70mm]{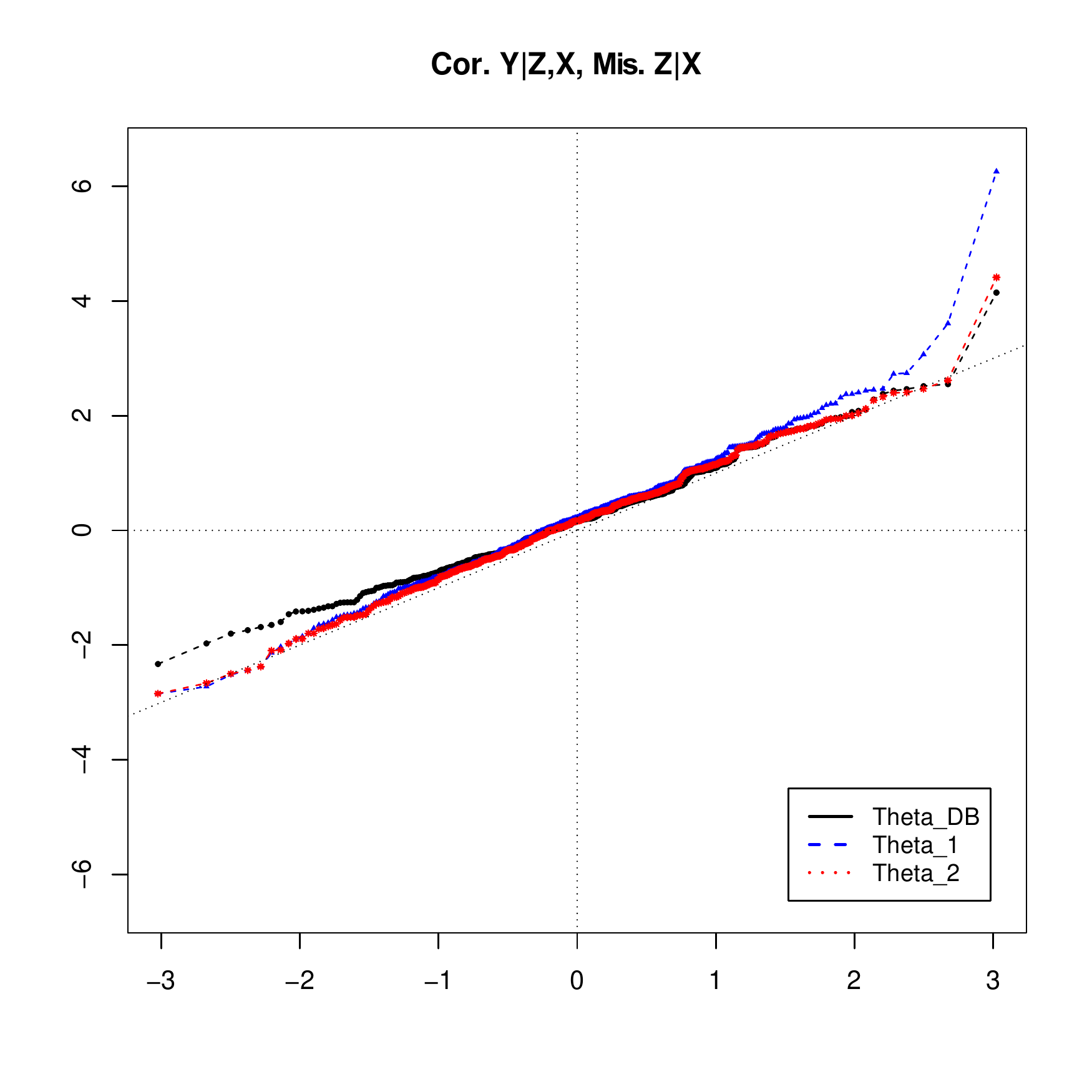} }}%
	\qquad
	\subfloat[]{{\includegraphics[width=70mm]{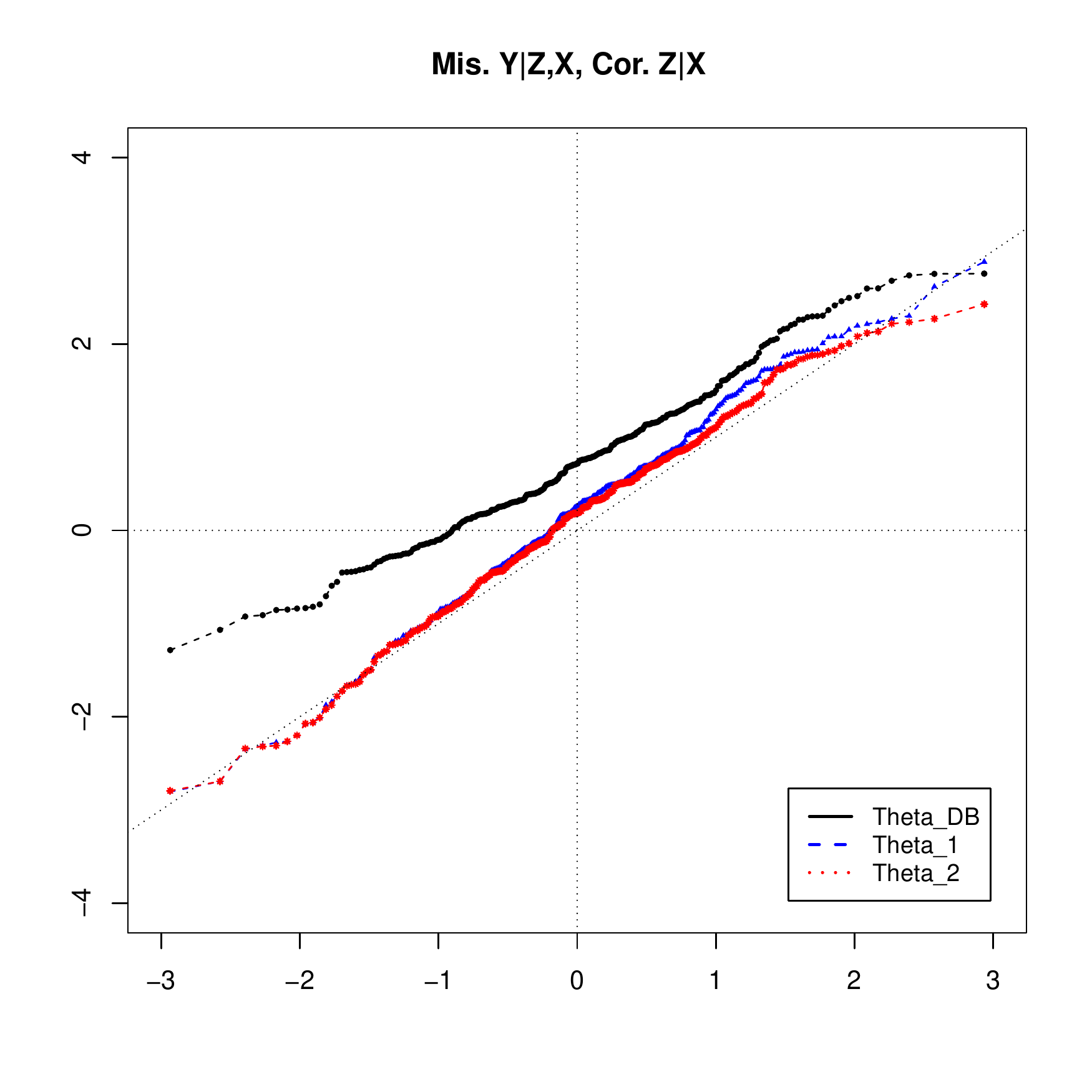} }}%
	\vspace*{-.6in}  \caption{QQ plots of $t$-statistics for partially logistic modeling ($n=600, p=800$)}%
	\label{fig:logit-p800}
\end{figure} \vspace{-.1in}

For $n=600$ and $p=800$, Table~\ref{tab:logit-p800} summarizes the results for estimation of $\theta^*$
and Figure~\ref{fig:logit-p800} shows the QQ plots of $t$-statistics. Additional results with $p=100$ or 200
are included in the Supplement.
While the three methods perform similarly to each other in settings (C7) and (C8),
our calibrated method achieves substantially better performance in setting (C9) than the other methods, similarly as in Section~\ref{sec:linear-sim}.

\vspace{-.1in}
\section{Conclusion}
We develop regularized calibrated estimation as a general method for obtaining doubly robust confidence intervals in high-dimensional settings, provided a doubly robust estimating function is available. While various applications of the method can be pursued,
there are interesting topics which warrant further investigation. As an alternative to the two-step algorithm,
it is of interest to study the generalized Dantzig selector mentioned in Section~\ref{sec:rcal},
including development of practical algorithms and theoretical analysis without a convex loss. This approach has a potential benefit in
producing valid confidence intervals centered about the target value in $\theta$ even if both working models are misspecified, similarly as discussed in Remark 9 of \cite{tan2020model}. Moreover, it is helpful to incorporate sample splitting and cross fitting for
our method and study whether both rate and model double robustness can generally be achieved.
A related question is raised in \cite{smucler2019unifying} about construction of such desired estimators beyond bilinear influence functions.

\bibliographystyle{myapa}
\bibliography{References-0921}

\begin{thebibliography}{}

\bibitem[\protect\astroncite{Chen}{2007}]{chen2007semiparametric}
Chen, H.~Y. (2007).
\newblock A semiparametric odds ratio model for measuring association.
\newblock {\em Biometrics}, 63:413--421.

\bibitem[\protect\astroncite{Friedman
  et~al.}{2010}]{friedman2010regularization}
Friedman, J., Hastie, T., and Tibshirani, R. (2010).
\newblock Regularization paths for generalized linear models via coordinate
  descent.
\newblock {\em Journal of Statistical Software}, 33:1--22.

\end{thebibliography}


\begin{thebibliography}{}

\bibitem[\protect\astroncite{Avagyan \& Vansteelandt}{2017}]{avagyan2017honest}
Avagyan, V. and Vansteelandt, S. (2017).
\newblock Honest data-adaptive inference for the average treatment effect under
  model misspecification using penalised bias-reduced double-robust estimation.
\newblock {\em arXiv preprint:1708.03787}.

\bibitem[\protect\astroncite{Belloni et~al.}{2014}]{belloni2014high}
Belloni, A., Chernozhukov, V., and Hansen, C. (2014).
\newblock Inference on treatment effects after selection among high-dimensional
  controls.
\newblock {\em Review of Economic Studies}, 81:608--650.

\bibitem[\protect\astroncite{Bickel et~al.}{1993}]{bickel1993efficient}
Bickel, P.~J., Klaassen, C.~A., Ritov, Y., and Wellner, J.~A. (1993).
\newblock {\em Efficient and Adaptive Estimation for Semiparametric Models}.
\newblock Johns Hopkins University Press.

\bibitem[\protect\astroncite{Bradic et~al.}{2019}]{bradic2019minimax}
Bradic, J., Chernozhukov, V., Newey, W.~K., and Zhu, Y. (2019).
\newblock Minimax semiparametric learning with approximate sparsity.
\newblock {\em arXiv preprint:1912.12213}.

\bibitem[\protect\astroncite{B{\"u}hlmann \& Van~de
  Geer}{2011}]{buhlmann2011statistics}
B{\"u}hlmann, P. and Van~de Geer, S. (2011).
\newblock {\em Statistics for High-dimensional Data: Methods, Theory and
  Applications}.
\newblock Springer.

\bibitem[\protect\astroncite{B{\"u}hlmann \& Van~de
  Geer}{2015}]{buhlmann2015high}
B{\"u}hlmann, P. and Van~de Geer, S. (2015).
\newblock High-dimensional inference in misspecified linear models.
\newblock {\em Electronic Journal of Statistics}, 9:1449--1473.

\bibitem[\protect\astroncite{Candes \& Tao}{2007}]{candes2007dantzig}
Candes, E. and Tao, T. (2007).
\newblock The {D}antzig selector: Statistical estimation when $p$ is much
  larger than $n$ (with discussion).
\newblock {\em Annals of Statistics}, 35:2313--2351.

\bibitem[\protect\astroncite{Chen}{2007}]{chen2007semiparametric}
Chen, H.~Y. (2007).
\newblock A semiparametric odds ratio model for measuring association.
\newblock {\em Biometrics}, 63:413--421.

\bibitem[\protect\astroncite{Chernozhukov
  et~al.}{2018}]{chernozhukov2018double}
Chernozhukov, V., Chetverikov, D., Demirer, M., Duflo, E., Hansen, C., Newey,
  W., and Robins, J.~M. (2018).
\newblock {Double/debiased machine learning for treatment and structural
  parameters}.
\newblock {\em The Econometrics Journal}, 21:C1--C68.

\bibitem[\protect\astroncite{Farrell}{2015}]{farrell2015robust}
Farrell, M.~H. (2015).
\newblock Robust inference on average treatment effects with possibly more
  covariates than observations.
\newblock {\em Journal of Econometrics}, 189:1--23.

\bibitem[\protect\astroncite{Ferguson}{1996}]{ferguson1996course}
Ferguson, T.~S. (1996).
\newblock {\em A Course in Large Sample Theory}.
\newblock Chapman \& Hall.

\bibitem[\protect\astroncite{Folsom}{1991}]{folsom1991}
Folsom, R.~E. (1991).
\newblock Exponential and logistic weight adjustments for sampling and
  nonresponse error reduction.
\newblock {\em Proceedings of the American Statistical Association, Social
  Statistics Section}, 101:197--202.

\bibitem[\protect\astroncite{Javanmard \&
  Montanari}{2014}]{javanmard2014confidence}
Javanmard, A. and Montanari, A. (2014).
\newblock Confidence intervals and hypothesis testing for high-dimensional
  regression.
\newblock {\em Journal of Machine Learning Research}, 15:2869--2909.

\bibitem[\protect\astroncite{Kang \& Schafer}{2007}]{kang2007demystifying}
Kang, J.~D. and Schafer, J.~L. (2007).
\newblock Demystifying double robustness: A comparison of alternative
  strategies for estimating a population mean from incomplete data (with
  discussion).
\newblock {\em Statistical Science}, 22:523--539.

\bibitem[\protect\astroncite{Kim \& Haziza}{2014}]{kim2014doubly}
Kim, J.~K. and Haziza, D. (2014).
\newblock Doubly robust inference with missing data in survey sampling.
\newblock {\em Statistica Sinica}, 24:375--394.

\bibitem[\protect\astroncite{Ma \& Zhu}{2012}]{ma2012semiparametric}
Ma, Y. and Zhu, L. (2012).
\newblock A semiparametric approach to dimension reduction.
\newblock {\em Journal of the American Statistical Association}, 107:168--179.

\bibitem[\protect\astroncite{Manski}{1988}]{manski1988analog}
Manski, C.~F. (1988).
\newblock {\em Analog Estimation Methods in Econometrics}.
\newblock Chapman and Hall.

\bibitem[\protect\astroncite{McCullagh \& Nelder}{1989}]{mccullagh1989}
McCullagh, P. and Nelder, J.~A. (1989).
\newblock {\em Generalized Linear Models (2nd edition)}.
\newblock Chapman and Hall.

\bibitem[\protect\astroncite{Negahban et~al.}{2012}]{negahban2012unified}
Negahban, S.~N., Ravikumar, P., Wainwright, M.~J., and Yu, B. (2012).
\newblock A unified framework for high-dimensional analysis of {M}-estimators
  with decomposable regularizers.
\newblock {\em Statistical Science}, 27:538--557.

\bibitem[\protect\astroncite{Neykov et~al.}{2018}]{neykov2018unified}
Neykov, M., Ning, Y., Liu, J.~S., and Liu, H. (2018).
\newblock A unified theory of confidence regions and testing for
  high-dimensional estimating equations.
\newblock {\em Statistical Science}, 33:427--443.

\bibitem[\protect\astroncite{Ning \& Liu}{2017}]{ning2017general}
Ning, Y. and Liu, H. (2017).
\newblock A general theory of hypothesis tests and confidence regions for
  sparse high dimensional models.
\newblock {\em Annals of Statistics}, 45:158--195.

\bibitem[\protect\astroncite{Ning et~al.}{2020}]{Ning2020}
Ning, Y., Peng, S., and Imai, K. (2020).
\newblock Robust estimation of causal effects via a high-dimensional covariate
  balancing propensity score.
\newblock {\em Biometrika}, 107:533--554.

\bibitem[\protect\astroncite{Okui et~al.}{2012}]{Okui2012}
Okui, R., Small, D., Tan, Z., and Robins, J.~M. (2012).
\newblock Doubly robust instrumental variables regression.
\newblock {\em Statistica Sinica}, 22:173--205.

\bibitem[\protect\astroncite{Radchenko \& James}{2011}]{radchenko2011improved}
Radchenko, P. and James, G.~M. (2011).
\newblock Improved variable selection with forward-{L}asso adaptive shrinkage.
\newblock {\em Annals of Applied Statistics}, 5:427--448.

\bibitem[\protect\astroncite{Robins \& Rotnitzky}{2001}]{robins2001comment}
Robins, J.~M. and Rotnitzky, A. (2001).
\newblock Comment on ``{I}nference for semiparametric models: {S}ome questions
  and an answer'' by {Bickel and Kwon}.
\newblock {\em Statistica Sinica}, 11:920--936.

\bibitem[\protect\astroncite{Robins et~al.}{1994}]{robins1994estimation}
Robins, J.~M., Rotnitzky, A., and Zhao, L.~P. (1994).
\newblock Estimation of regression coefficients when some regressors are not
  always observed.
\newblock {\em Journal of the American statistical Association}, 89:846--866.

\bibitem[\protect\astroncite{Rubin}{1976}]{rubin1976inference}
Rubin, D.~B. (1976).
\newblock Inference and missing data.
\newblock {\em Biometrika}, 63:581--592.

\bibitem[\protect\astroncite{Scharfstein
  et~al.}{1999}]{scharfstein1999adjusting}
Scharfstein, D.~O., Rotnitzky, A., and Robins, J.~M. (1999).
\newblock Adjusting for nonignorable drop-out using semiparametric nonresponse
  models (with discussion).
\newblock {\em Journal of the American Statistical Association}, 94:1096--1120.

\bibitem[\protect\astroncite{Small et~al.}{2000}]{Small}
Small, C.~G., Wang, J., and Yang, Z. (2000).
\newblock Eliminating multiple root problems in estimation (with discussion).
\newblock {\em Statistical Science}, 15:313--341.

\bibitem[\protect\astroncite{Smucler et~al.}{2019}]{smucler2019unifying}
Smucler, E., Rotnitzky, A., and Robins, J.~M. (2019).
\newblock A unifying approach for doubly-robust $\ell_1$ regularized estimation
  of causal contrasts.
\newblock {\em arXiv preprint:1904.03737}.

\bibitem[\protect\astroncite{Tan}{2006a}]{tan2006distributional}
Tan, Z. (2006a).
\newblock A distributional approach for causal inference using propensity
  scores.
\newblock {\em Journal of the American Statistical Association},
  101:1619--1637.

\bibitem[\protect\astroncite{Tan}{2006b}]{tan2006regression}
Tan, Z. (2006b).
\newblock Regression and weighting methods for causal inference using
  instrumental variables.
\newblock {\em Journal of the American Statistical Association},
  101:1607--1618.

\bibitem[\protect\astroncite{Tan}{2010}]{tan2010bounded}
Tan, Z. (2010).
\newblock Bounded, efficient and doubly robust estimation with inverse
  weighting.
\newblock {\em Biometrika}, 97:661--682.

\bibitem[\protect\astroncite{Tan}{2019}]{tan2019doubly}
Tan, Z. (2019).
\newblock On doubly robust estimation for logistic partially linear models.
\newblock {\em Statistics \& Probability Letters}, 155:108577.

\bibitem[\protect\astroncite{Tan}{2020a}]{tan2020model}
Tan, Z. (2020a).
\newblock Model-assisted inference for treatment effects using regularized
  calibrated estimation with high-dimensional data.
\newblock {\em Annals of Statistics}, 48:811--837.

\bibitem[\protect\astroncite{Tan}{2020b}]{tan2020regularized}
Tan, Z. (2020b).
\newblock Regularized calibrated estimation of propensity scores with model
  misspecification and high-dimensional data.
\newblock {\em Biometrika}, 107:137--158.

\bibitem[\protect\astroncite{Tchetgen~Tchetgen
  et~al.}{2010}]{tchetgen2010doubly}
Tchetgen~Tchetgen, E.~J., Robins, J.~M., and Rotnitzky, A. (2010).
\newblock On doubly robust estimation in a semiparametric odds ratio model.
\newblock {\em Biometrika}, 97:171--180.

\bibitem[\protect\astroncite{Tchetgen~Tchetgen \&
  Shpitser}{2012}]{tchetgen2012semiparametric}
Tchetgen~Tchetgen, E.~J. and Shpitser, I. (2012).
\newblock Semiparametric theory for causal mediation analysis: {E}fficiency
  bounds, multipleobustness, and sensitivity analysis.
\newblock {\em Annals of statistics}, 40:1816.

\bibitem[\protect\astroncite{Tsiatis}{2007}]{tsiatis2007semiparametric}
Tsiatis, A. (2007).
\newblock {\em Semiparametric Theory and Missing Data}.
\newblock Springer.

\bibitem[\protect\astroncite{Van~de Geer et~al.}{2014}]{SVD}
Van~de Geer, S., B{\"u}hlmann, P., Ritov, Y., and Dezeure, R. (2014).
\newblock On asymptotically optimal confidence regions and tests for
  high-dimensional models.
\newblock {\em Annals of Statistics}, 42:1166--1202.

\bibitem[\protect\astroncite{Van~der Vaart}{2000}]{van2000asymptotic}
Van~der Vaart, A.~W. (2000).
\newblock {\em Asymptotic Statistics}.
\newblock Cambridge University Press.

\bibitem[\protect\astroncite{Vermeulen \&
  Vansteelandt}{2015}]{vermeulen2015bias}
Vermeulen, K. and Vansteelandt, S. (2015).
\newblock Bias-reduced doubly robust estimation.
\newblock {\em Journal of the American Statistical Association},
  110:1024--1036.

\bibitem[\protect\astroncite{White}{1980}]{White1980}
White, H. (1980).
\newblock A heteroskedasticity-consistent covariance matrix estimator and a
  direct test for heteroskedasticity.
\newblock {\em Econometrica}, 48:817--838.

\bibitem[\protect\astroncite{White}{1982}]{white1982maximum}
White, H. (1982).
\newblock Maximum likelihood estimation of misspecified models.
\newblock {\em Econometrica}, 50:1--25.

\bibitem[\protect\astroncite{Zhang \& Zhang}{2014}]{zhang2014confidence}
Zhang, C.-H. and Zhang, S.~S. (2014).
\newblock Confidence intervals for low dimensional parameters in high
  dimensional linear models.
\newblock {\em Journal of the Royal Statistical Society: Series B},
  76:217--242.

\end{thebibliography}

\appendix



\clearpage

\setcounter{page}{1}

\setcounter{section}{0}
\setcounter{equation}{0}

\setcounter{figure}{0}
\setcounter{table}{0}

\renewcommand{\theequation}{S\arabic{equation}}
\renewcommand{\thesection}{\Roman{section}}

\renewcommand\thefigure{S\arabic{figure}}
\renewcommand\thetable{S\arabic{table}}

\setcounter{lem}{0}
\renewcommand{\thelem}{S\arabic{lem}}

\setcounter{algorithm}{0}
\renewcommand\thealgorithm{S\arabic{algorithm}}

\begin{center}
{\Large Supplementary Material for}

{\Large ``Doubly Robust Semiparametric Inference Using Regularized Calibrated Estimation with High-dimensional Data"}

\vspace{.1in} {\large Satyajit Ghosh \& Zhiqiang Tan}

\end{center}

\section{Technical details}

\subsection{Probability lemmas}

Denote by $\Omega_0$ the event that
$ (\hat\alpha_1 - \bar\alpha_1)^\T \tilde \Sigma_0 (\hat\alpha_1 - \bar\alpha_1) \le M_0 \lambda_0^2 $, $\| \hat\alpha_1 - \bar\alpha_1 \|_1 \le M_0\lambda_0$,
$ (\hat\gamma_1 - \bar\gamma_1)^\T \tilde \Sigma_0 (\hat\gamma_1 - \bar\gamma_1) \le M_0 \lambda_0^2 $, $\| \hat\gamma_1 - \bar\gamma_1 \|_1 \le M_0 \lambda_0 $,
and $|\hat\theta_1 - \theta^* |\le M_0^{1/2} \lambda_0 $.
Then Assumption~\ref{ass:gamma2-basic}(iv) says that $P(\Omega_0) \ge 1-\epsilon$.

\begin{lem}  \label{lem:prob-grad}
Denote by $\Omega_1$ the event that
\begin{align*}
\sup_{j=1,\ldots,p} \left| \tilde E \left\{ \xi_j(X) \frac{\partial\tau}{\partial \eta_g} (U; \theta^*, \bar\alpha_1,\bar\gamma_2) \right\} \right| \le B_0 \lambda_0,
\end{align*}
where $B_0 = C_0 (B_{02} + \sqrt{2} B_{01}) $.
Under Assumption~\ref{ass:gamma2-basic}(i)--(ii), if $\lambda_0\le 1$, then $P(\Omega_1 ) \ge 1- 2\epsilon$.
\end{lem}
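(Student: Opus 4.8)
The plan is to treat, for each fixed coordinate $j$, the quantity $\tilde E\{\xi_j(X)\,\partial\tau/\partial\eta_g(U;\theta^*,\bar\alpha_1,\bar\gamma_2)\}$ as a centered sample average, and then to control the supremum over $j$ by a Bernstein-type tail bound together with a union bound over $j=1,\ldots,p$. Write $W_j = \xi_j(X)\,\partial\tau/\partial\eta_g(U;\theta^*,\bar\alpha_1,\bar\gamma_2)$, so that the event $\Omega_1$ is $\{\sup_j |\tilde E(W_j)| \le B_0\lambda_0\}$. The first and conceptually essential step is to observe that $E(W_j)=0$ for every $j$: by definition $\bar\gamma_2$ solves $E\{(\partial\tau/\partial\alpha)(U;\theta^*,\bar\alpha_1,\gamma)\}=0$, and by the chain rule $\partial\tau/\partial\alpha = \xi\,\partial\tau/\partial\eta_g$, so the vector identity $E\{\xi\,\partial\tau/\partial\eta_g(U;\theta^*,\bar\alpha_1,\bar\gamma_2)\}=0$ holds, which is precisely $E(W_j)=0$ componentwise. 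Hence $\tilde E(W_j)=\tilde E(W_j)-E(W_j)$ is a centered average of the i.i.d. variables $W_{j,1},\ldots,W_{j,n}$, and a per-coordinate exponential bound is available.

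Next I would verify the Bernstein moment condition for $W_j$. By Assumption~\ref{ass:gamma2-basic}(i), $|\xi_j(X)|\le C_0$ almost surely, so $E(|W_j|^k)\le C_0^k\,E(|\partial\tau/\partial\eta_g|^k)$ for each $k\ge 2$. Because $\xi$ contains the constant coordinate (as in $\xi(x)=(1,x^\T)^\T$), the same calibration identity gives $E\{\partial\tau/\partial\eta_g(U;\theta^*,\bar\alpha_1,\bar\gamma_2)\}=0$, so the uncentered $k$-th moment equals the centered one controlled by Assumption~\ref{ass:gamma2-basic}(ii), namely $E(|\partial\tau/\partial\eta_g|^k)\le \frac{k!}{2}B_{01}^2 B_{02}^{k-2}$. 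Combining yields $E(|W_j|^k)\le \frac{k!}{2}(C_0 B_{01})^2(C_0 B_{02})^{k-2}$, so each $W_j$ is mean zero and satisfies the Bernstein condition with scale pair $(D_1,D_2)=(C_0 B_{01},\,C_0 B_{02})$.

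Finally, I would invoke the standard Bernstein inequality in the form $P(|\tilde E(W_j)|\ge D_1\sqrt{2u/n}+D_2\,u/n)\le 2\me^{-u}$, valid for mean-zero variables satisfying the above moment condition. Choosing $u=\log(p/\epsilon)$, so that $\sqrt{u/n}=\lambda_0$ and $u/n=\lambda_0^2$ with $\lambda_0=\{\log(p/\epsilon)/n\}^{1/2}$, the threshold becomes $\sqrt2\,D_1\lambda_0+D_2\lambda_0^2\le (\sqrt2\,D_1+D_2)\lambda_0 = C_0(\sqrt2\,B_{01}+B_{02})\lambda_0 = B_0\lambda_0$, where the inequality uses $\lambda_0\le 1$. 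Thus $P(|\tilde E(W_j)|\ge B_0\lambda_0)\le 2\epsilon/p$ for each $j$, and a union bound over $j=1,\ldots,p$ gives $P(\sup_j|\tilde E(W_j)|> B_0\lambda_0)\le 2\epsilon$, i.e. $P(\Omega_1)\ge 1-2\epsilon$.

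The individual steps are routine, so the points requiring care are conceptual rather than computational. The first is correctly extracting the mean-zero property of $W_j$ from the equation defining $\bar\gamma_2$, and recognizing that the constant coordinate of $\xi$ makes the relevant moments uncentered and hence directly bounded by Assumption~\ref{ass:gamma2-basic}(ii). The second, and the main obstacle to reproducing the stated constant exactly, is selecting the precise Bernstein form and the value of $u$ so that the $\sqrt2\,B_{01}$ contribution arises from the $\sqrt{2u/n}$ term and the reduction $D_2\lambda_0^2\le D_2\lambda_0$ under $\lambda_0\le 1$ delivers $B_0=C_0(B_{02}+\sqrt2\,B_{01})$ rather than a larger bound such as $C_0(B_{02}+\sqrt{B_{02}^2+2B_{01}^2})$ that a cruder quadratic-root analysis would produce.
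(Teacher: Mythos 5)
Your proof is correct and follows essentially the same route as the paper's: establish that each $\xi_j(X)\,\partial\tau/\partial\eta_g(U;\theta^*,\bar\alpha_1,\bar\gamma_2)$ has mean zero and is sub-exponential with parameter $(C_0B_{01},C_0B_{02})$, apply Bernstein's inequality with $t=\log(p/\epsilon)/n=\lambda_0^2$, absorb the $\lambda_0^2$ term using $\lambda_0\le 1$, and finish with a union bound. The only difference is that you spell out the mean-zero step via the estimating equation defining $\bar\gamma_2$, which the paper states more tersely; the substance is identical.
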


\begin{prf}
The variable $ \frac{\partial\tau}{\partial \eta_g} (U; \theta^*, \bar\alpha_1,\bar\gamma_2)$ has mean 0 (because $\xi$ includes a constant)
and is sub-exponential with parameters $(B_{01}, B_{02})$.
For $j=1,\ldots,p$, the variable $\xi_j(X) \frac{\partial\tau}{\partial \eta_g} (U; \theta^*, \bar\alpha_1,\bar\gamma_2)$ has mean 0 and is sub-exponential with parameter
$(C_0 B_{01}, C_0 B_{02})$.
By Bernstein's inequality (\citealt{buhlmann2011statistics}, Lemma 14.9; \citealt{tan2020model}, Lemma 16),
\begin{align*}
P \left\{ |\tilde E (Y_j)| \ge C_0 B_{02} t + C_0 B_{01} \sqrt{2t} \right\} \le 2 \frac{\epsilon}{p}
\end{align*}
where $t = \log( p/\epsilon )/n = \lambda_0^2$. The result then follows from the union bound.
\end{prf}

\begin{lem} \label{lem:prob-hess}
Denote by $\Omega_{21}, \Omega_{22}, \Omega_{23}$ respectively the events that
\begin{align*}
& \sup_{j,k=1,\ldots,p} \left| (\tilde E -E) \left\{ \xi_j \xi_k T_{\eta_g^2}^{(1)} (U;\theta^*, \bar\alpha_1, \bar\gamma_2) \right\} \right| \le B_{15} \lambda_0, \\
& \sup_{j,k=1,\ldots,p} \left| (\tilde E -E) \left\{ \xi_j \xi_k T_{\eta_g\theta}^{(1)} (U;\theta^*, \bar\alpha_1, \bar\gamma_2)\right\} \right| \le B_{15} \lambda_0, \\
& \sup_{j,k=1,\ldots,p} \left| (\tilde E -E) \left\{ \xi_j \xi_k \frac{\partial^2 \tau}{\partial \eta_g \partial\eta_f}(U;\theta^*, \bar\alpha_1, \bar\gamma_2) \right\} \right| \le B_{15} \lambda_0,
\end{align*}
where $B_{15} = 2(B_{14}+B_{13})$, $B_{13} = (2 C_0^2 B^2_{11} + 8 C_0^4 C_1^2)^{1/2}$, and $B_{14} = 2B_{12} + 2C_0^2 C_1$.
Under Assumptions~\ref{ass:gamma2-basic}(i) and \ref{ass:gamma2-hess}(i)--(ii), if $\lambda_0\le 1$, then $P(\Omega_{21} ) \ge 1- 2\epsilon^2$, $P(\Omega_{22}) \ge 1-2 \epsilon^2$, and  $P(\Omega_{23}) \ge 1-2 \epsilon^2$.
\end{lem}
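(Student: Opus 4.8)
The plan is to prove all three bounds by the same three-move argument already used for Lemma~\ref{lem:prob-grad}: (i) show that, for each fixed pair $(j,k)$, the centered product variable is sub-exponential with parameters $(B_{13},B_{14})$; (ii) apply Bernstein's inequality to the sample mean $(\tilde E - E)\{\xi_j\xi_k T\}$; and (iii) take a union bound over the $p^2$ index pairs. Here I write $T$ generically for $T_{\eta_g^2}^{(1)}$, $T_{\eta_g\theta}^{(1)}$, or $\frac{\partial^2 \tau}{\partial \eta_g \partial\eta_f}$, all evaluated at $(\theta^*,\bar\alpha_1,\bar\gamma_2)$, so that the three events $\Omega_{21},\Omega_{22},\Omega_{23}$ are handled in parallel, drawing on Assumption~\ref{ass:gamma2-hess}(i) for the first two and Assumption~\ref{ass:gamma2-hess}(ii) for the third.

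The crux is step (i). Since $\xi_j\xi_k$ is $X$-measurable with $|\xi_j\xi_k|\le C_0^2$ by Assumption~\ref{ass:gamma2-basic}(i), I would split $T$ into a conditionally centered part and its conditional mean, giving
\[
\xi_j\xi_k T - E(\xi_j\xi_k T) = \xi_j\xi_k\{T - E(T\mid X)\} + \{\xi_j\xi_k E(T\mid X) - E(\xi_j\xi_k E(T\mid X))\}.
\]
The first summand is centered conditionally on $X$ and, being $T$ multiplied by a factor bounded by $C_0^2$, has tails governed by the sub-exponential parameter $(B_{11},B_{12})$ of $T$; the second summand is centered and bounded in magnitude by $2C_0^2 C_1$, using the conditional-mean bound $E(T\mid X)\le C_1$ (Assumption~\ref{ass:gamma2-hess}(i), and the analogous control under Assumption~\ref{ass:gamma2-hess}(ii) for the mixed derivative). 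Tracking the contributions of the two summands through the standard combination bound for sub-exponential variables then yields exactly $B_{13}^2 = 2C_0^2 B_{11}^2 + 8C_0^4 C_1^2$ and $B_{14} = 2B_{12} + 2C_0^2 C_1$, the bounded summand supplying the terms $8C_0^4C_1^2$ and $2C_0^2C_1$ and the conditionally centered summand the terms $2C_0^2B_{11}^2$ and $2B_{12}$.

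Given (i), step (ii) is routine. Bernstein's inequality, in the form cited in Lemma~\ref{lem:prob-grad}, gives for each $(j,k)$
\[
P\Big\{ \big| (\tilde E - E)(\xi_j\xi_k T) \big| \ge B_{14}\, t + B_{13}\sqrt{2t} \Big\} \le 2\,\me^{-nt}.
\]
I would choose $t = 2\lambda_0^2 = 2\log(p/\epsilon)/n$, so that $\me^{-nt} = (\epsilon/p)^2$ and, using $\lambda_0\le 1$ (hence $\lambda_0^2\le\lambda_0$), the threshold obeys $B_{14}(2\lambda_0^2) + B_{13}\sqrt{2\cdot 2\lambda_0^2} = 2B_{14}\lambda_0^2 + 2B_{13}\lambda_0 \le 2(B_{13}+B_{14})\lambda_0 = B_{15}\lambda_0$. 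Step (iii) is then the union bound over the $p^2$ pairs $(j,k)$, which gives $P(\Omega_{2i}^c) \le p^2 \cdot 2(\epsilon/p)^2 = 2\epsilon^2$, i.e.\ $P(\Omega_{2i})\ge 1-2\epsilon^2$ for each $i$.

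The main obstacle is the sub-exponential bookkeeping in step (i): establishing the precise constants $(B_{13},B_{14})$ rather than merely sub-exponentiality up to unspecified factors. The delicate point is that $T$ is assumed sub-exponential only unconditionally, so the conditionally centered summand must be controlled by passing through $|T - E(T\mid X)| \le |T - E(T)| + |E(T\mid X) - E(T)|$ together with conditional Jensen to bound the deviation of $E(T\mid X)$ from $E(T)$. For $T_{\eta_g^2}^{(1)}$ and $T_{\eta_g\theta}^{(1)}$ this is aided by their nonnegativity (they are suprema of absolute values), which gives $0\le E(T\mid X)\le C_1$; for $\frac{\partial^2 \tau}{\partial \eta_g \partial\eta_f}$ one instead invokes Assumption~\ref{ass:gamma2-hess}(ii). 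Once the combination bound is assembled with these inputs, the remaining steps are mechanical.
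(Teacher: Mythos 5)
Your overall architecture -- establish sub-exponentiality of $\xi_j\xi_k T$ for each pair $(j,k)$, apply Bernstein's inequality with $t=\log(p^2/\epsilon^2)/n=2\lambda_0^2$, absorb the threshold into $B_{15}\lambda_0$ using $\lambda_0\le 1$, and finish with a union bound over $p^2$ pairs -- is exactly the paper's, and your steps (ii) and (iii) are correct as written. The one place you deviate is the decomposition in step (i), and there your constant accounting does not close. You split off $\xi_j\xi_k\{T-E(T\mid X)\}$ and claim it contributes only the terms $2C_0^2B_{11}^2$ and $2B_{12}$; but, as you yourself note, $T$ is sub-exponential only unconditionally, so controlling the conditionally centered piece forces the detour $|T-E(T\mid X)|\le |T-E(T)|+|E(T\mid X)-E(T)|$ with $|E(T\mid X)-E(T)|\le C_1$. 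That detour injects an additional $(C_0^2C_1)^k$-type term and an extra factor $2^{k-1}$ into the $k$-th moment of that summand, so the resulting sub-exponential parameters are strictly larger than the stated $(B_{13},B_{14})$. Since $B_{15}=2(B_{14}+B_{13})$ is pinned down in the lemma statement and propagates into Assumption~3(iv), the lemma as stated does not follow from your decomposition without enlarging the constants.

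The paper avoids this entirely by centering $T$ at its \emph{unconditional} mean:
\begin{align*}
\xi_j\xi_k T - E(\xi_j\xi_k T) \;=\; \xi_j\xi_k\{T-E(T)\} \;+\; \{\xi_j\xi_k E(T) - E(\xi_j\xi_k T)\}.
\end{align*}
The first summand is handled directly by the unconditional sub-exponential moment bound for $T$ (multiplied through by $|\xi_j\xi_k|\le C_0^2$), with no conditional centering needed; the second summand is a random variable bounded in absolute value by $2C_0^2C_1$, since $E(T)\le C_1$ and $|E(\xi_j\xi_k T)|=|E\{\xi_j\xi_k E(T\mid X)\}|\le C_0^2C_1$ by iterated expectations and the conditional-mean bound. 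Combining via $|a+b|^k\le 2^{k-1}(|a|^k+|b|^k)$ then yields the advertised $(B_{13},B_{14})$. If you replace your conditional-mean split with this one, the rest of your argument goes through verbatim.
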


\begin{prf}
First, we show that for $j,k=1,\ldots,p$, $\xi_j \xi_k T_{\eta_g^2}^{(1)} (U;\theta^*, \bar\alpha_1, \bar\gamma_2) $ is sub-exponential with parameter
$(B_{13}, B_{14})$.
Denote $T_{\eta_g^2}^{(1)} = T_{\eta_g^2}^{(1)} (U;\theta^*, \bar\alpha_1, \bar\gamma_2)$. Then for $k \ge 2$,
\begin{align*}
& E \left| \xi_j \xi_k T_{\eta_g^2}^{(1)} - E\left\{ \xi_j \xi_k T_{\eta_g^2}^{(1)} \right\} \right|^k \\
& \le 2^{k-1} \left[ E \left| \xi_j \xi_k \left\{ T_{\eta_g^2}^{(1)} - E( T_{\eta_g^2}^{(1)}) \right\} \right|^k  +
E \left| \xi_j \xi_k E( T_{\eta_g^2}^{(1)}) - E(\xi_j \xi_k T_{\eta_g^2}^{(1)})  \right|^k \right] \\
& \le 2^{k-1} \left\{ C_0^2 \frac{k!}{2} B_{11}^2 B_{12}^{k-2} + (2 C_0^2 C_1 )^k \right\}\
\le \frac{k!}{2} (2 C_0^2 B^2_{11} + 8 C_0^4 C_1^2) (2B_{12} + 2C_0^2 C_1)^{k-2} .
\end{align*}
Applying Bernstein's inequality to
$\xi_j \xi_k T_{\eta_g^2}^{(1)} (U;\theta^*, \bar\alpha_1, \bar\gamma_2) $ yields
\begin{align*}
P \left\{ \left| (\tilde E -E) \left\{ \xi_j \xi_k T_{\eta_g^2}^{(1)} (U;\theta^*, \bar\alpha_1, \bar\gamma_2) \right\} \right|
\ge B_{14} t + B_{13} \sqrt{2t} \right\} \le 2 \frac{\epsilon^2}{p^2}
\end{align*}
where  $t = \log( p^2/\epsilon^2 )/n = 2 \lambda_0^2$. Then $P(\Omega_{21}) \ge 1-2 \epsilon^2$ by the union bound.
Similarly,  $P(\Omega_{22}) \ge 1-2 \epsilon^2$ and  $P(\Omega_{23}) \ge 1-2 \epsilon^2$.
\end{prf}

\begin{lem} \label{lem:prob-hess0}
Denote by $\Omega_{24}$ the event that
\begin{align*}
& \sup_{j,k=1,\ldots,p} \left| (\tilde E -E) (\xi_j \xi_k ) \right| \le 4 C_0^2 \lambda_0.
\end{align*}
Under Assumption~\ref{ass:gamma2-basic}(i), $P(\Omega_{24}) \ge 1- 2\epsilon^2$.
\end{lem}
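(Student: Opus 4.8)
The plan is to follow the same concentration-plus-union-bound template used in Lemmas~\ref{lem:prob-grad} and \ref{lem:prob-hess}, applied now to the simpler variables $\xi_j \xi_k$. First I would note that Assumption~\ref{ass:gamma2-basic}(i) gives $|\xi_j \xi_k| \le C_0^2$ almost surely for every pair $(j,k)$, so each centered product $W_{jk} = \xi_j \xi_k - E(\xi_j \xi_k)$ is a bounded, mean-zero random variable. The first substantive step is to record that such a bounded variable is sub-exponential with a parameter I would compute explicitly: using $|W_{jk}| \le 2C_0^2$ one has $E|W_{jk}|^k \le (2C_0^2)^{k-2}\,\var(\xi_j \xi_k) \le 2^{k-2}(C_0^2)^{k-2} C_0^4$ for $k\ge 2$, and since $2^{k-2} \le k!/2$ this is at most $\frac{k!}{2} (C_0^2)^2 (C_0^2)^{k-2}$. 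Hence $W_{jk}$ is sub-exponential with parameter $(C_0^2, C_0^2)$.

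With the sub-exponential parameter in hand, I would apply Bernstein's inequality in exactly the form used earlier (\citealt{buhlmann2011statistics}, Lemma~14.9), namely $P\{ |\tilde E(W_{jk})| \ge C_0^2 t + C_0^2 \sqrt{2t}\} \le 2\me^{-nt}$, and choose $t = 2\lambda_0^2 = \log(p^2/\epsilon^2)/n$ so that the right-hand side equals $2\epsilon^2/p^2$. Substituting this $t$ gives the deviation bound $C_0^2 (2\lambda_0^2) + C_0^2 \sqrt{4\lambda_0^2} = 2C_0^2 \lambda_0^2 + 2C_0^2 \lambda_0$, which, using the standing condition $\lambda_0 \le 1$ (so $\lambda_0^2 \le \lambda_0$), is bounded by $4C_0^2 \lambda_0$. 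A union bound over the $p^2$ pairs $(j,k)$ then yields $P(\Omega_{24}^c) \le p^2 \cdot 2\epsilon^2/p^2 = 2\epsilon^2$, i.e.\ $P(\Omega_{24}) \ge 1 - 2\epsilon^2$.

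I do not expect a serious obstacle here, as the argument is strictly simpler than Lemma~\ref{lem:prob-hess}: there is no supremum over a neighborhood to control and no conditional-expectation bound is needed, only the almost-sure boundedness from Assumption~\ref{ass:gamma2-basic}(i). The only point requiring mild care is matching the constant $4C_0^2$ in the stated event exactly. This forces the slightly sharper sub-exponential parameter $(C_0^2, C_0^2)$ rather than the cruder $(C_0^2, 2C_0^2)$ that would follow from a naive variance bound; the refinement comes from bounding $\var(\xi_j \xi_k) \le C_0^4$ and exploiting $2^{k-2} \le k!/2$. The step also relies on $\lambda_0 \le 1$ to absorb the $\lambda_0^2$ term, consistent with the hypothesis imposed in the surrounding lemmas.
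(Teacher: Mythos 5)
Your proposal is correct, but it reaches the bound by a different concentration inequality than the paper. The paper's proof of this lemma applies Hoeffding's inequality for bounded variables directly: since $|\xi_j\xi_k - E(\xi_j\xi_k)| \le 2C_0^2$ almost surely, Hoeffding (\citealt{buhlmann2011statistics}, Lemma 14.11) gives $P\{ |(\tilde E - E)(\xi_j\xi_k)| \ge 2C_0^2\sqrt{2}\,t\} \le 2\epsilon^2/p^2$ with $t = \{\log(p^2/\epsilon^2)/n\}^{1/2} = \sqrt{2}\lambda_0$, so the deviation is exactly $4C_0^2\lambda_0$ and the union bound finishes the argument with no condition on $\lambda_0$. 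You instead convert boundedness into a sub-exponential parameter and invoke Bernstein; your moment computation is sound (the bound $E|W_{jk}|^k \le (2C_0^2)^{k-2}C_0^4$ together with $2^{k-2}\le k!/2$ does give parameter $(C_0^2,C_0^2)$), and the choice $t=2\lambda_0^2$ matches the template of Lemmas~\ref{lem:prob-grad}--\ref{lem:prob-hess}. The cost of your route is the residual term $2C_0^2\lambda_0^2$, which you must absorb using $\lambda_0\le 1$; that hypothesis is not part of the statement of Lemma~\ref{lem:prob-hess0} (unlike Lemmas~\ref{lem:prob-grad} and~\ref{lem:prob-hess}), although it is assumed wherever the event $\Omega_{24}$ is actually used, so nothing downstream breaks. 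The paper's Hoeffding argument is marginally cleaner and unconditional; yours is a valid, slightly more roundabout alternative.
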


\begin{prf}
For $j,k=1,\ldots,p$, by Hoeffding's inequality (\citealt{buhlmann2011statistics}, Lemma 14.11; \citealt{tan2020model}, Lemma 14),
\begin{align*}
P \left\{ \left| (\tilde E -E) (\xi_j \xi_k  )\right| \ge 2 C_0^2 (\sqrt{2}t) \right\} \le 2 \frac{\epsilon^2}{p^2}
\end{align*}
where $|\xi_j\xi_k - E(\xi_j \xi_k) | \le 2 C_0^2$ and $t= \{\log( p^2/\epsilon^2 )/n\}^{1/2} = \sqrt{2}\lambda_0$.
The result then follows from the union bound.
\end{prf}

\begin{lem} \label{lem:prob-hess2}
Denote $\Omega_2 = \Omega_{21} \cap \Omega_{22} \cap \Omega_{23} \cap \Omega_{24}$.
Under Assumptions~\ref{ass:gamma2-basic}(i) and \ref{ass:gamma2-hess}(i)--(ii), if $\lambda_0\le 1$, then $P(\Omega_2) \ge 1- 8\epsilon^2$.
Moreover, in the event $\Omega_2$, we have
for any vector $b\in\bbR^p$,
\begin{align*}
& \tilde E \left\{ T_{\eta_g^2}^{(1)}(U;\theta^*, \bar\alpha_1, \bar\gamma_2) (b^\T \xi)^2 \right\} \le
 C_1 \tilde E \left\{ (b^\T \xi)^2   \right\} +  (1+C_1) B_1 \lambda_0 \|b\|_1^2 , \\
& \tilde E \left\{  T_{\eta_g\theta}^{(1)} (U;\theta^*, \bar\alpha_1, \bar\gamma_2) (b^\T \xi)^2 \right\} \le
 C_1  \tilde E \left\{ (b^\T \xi)^2  \right\} +  (1+C_1) B_1 \lambda_0 \|b\|_1^2 , \\
& \tilde E \left\{\frac{\partial^2 \tau}{\partial \eta_g \partial\eta_f} (U; \theta^*, \bar \alpha_1,\bar\gamma_2) (b^\T\xi)^2 \right\}\ge
c_2  \tilde E \left\{ (b^\T \xi)^2  \right\} - (1+c_2) B_1 \lambda_0 \|b\|_1^2 ,
\end{align*}
where $B_1 = \max(4 C_0^2, B_{15})$.
\end{lem}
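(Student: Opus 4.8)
The plan is to treat the probability statement and the three quadratic-form inequalities separately, since the former is an immediate union bound while the latter three share a common bilinear decomposition.

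For $P(\Omega_2) \ge 1 - 8\epsilon^2$, I would simply invoke Lemma~\ref{lem:prob-hess} (which gives $P(\Omega_{2k}) \ge 1 - 2\epsilon^2$ for $k=1,2,3$) together with Lemma~\ref{lem:prob-hess0} (which gives $P(\Omega_{24}) \ge 1 - 2\epsilon^2$), so that the complement $\Omega_2^c = \Omega_{21}^c \cup \Omega_{22}^c \cup \Omega_{23}^c \cup \Omega_{24}^c$ has probability at most $4 \times 2\epsilon^2 = 8\epsilon^2$ by the union bound, under Assumptions~\ref{ass:gamma2-basic}(i) and \ref{ass:gamma2-hess}(i)--(ii) with $\lambda_0 \le 1$.

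For the three inequalities, which I would establish in the event $\Omega_2$, the key device is the identity $(b^\T\xi)^2 = \sum_{j,k} b_j b_k \xi_j \xi_k$ combined with the elementary bound $|\sum_{j,k} b_j b_k a_{jk}| \le (\max_{j,k}|a_{jk}|)\,\|b\|_1^2$. Writing $T = T_{\eta_g^2}^{(1)}(U;\theta^*,\bar\alpha_1,\bar\gamma_2)$ and decomposing $\tilde E = E + (\tilde E - E)$, I would first bound the population piece $E\{T(b^\T\xi)^2\} = E\{(b^\T\xi)^2\, E(T\mid X)\} \le C_1 E\{(b^\T\xi)^2\}$, using $T \ge 0$ (a supremum of absolute values) and Assumption~\ref{ass:gamma2-hess}(i). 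I would then control the empirical fluctuation $|(\tilde E - E)\{T(b^\T\xi)^2\}| \le B_{15}\lambda_0\|b\|_1^2$ on $\Omega_{21}$, and convert the population second moment back to the empirical one via $E\{(b^\T\xi)^2\} \le \tilde E\{(b^\T\xi)^2\} + 4C_0^2\lambda_0\|b\|_1^2$ on $\Omega_{24}$. Collecting the two error terms and using $B_1 = \max(4C_0^2, B_{15})$ to merge constants (namely $4C_0^2 C_1 + B_{15} \le (1+C_1)B_1$) yields the first inequality; the second is identical with $\Omega_{22}$ and $T_{\eta_g\theta}^{(1)}$ in place of $\Omega_{21}$ and $T$.

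The third inequality is the mirror image, now a lower bound: with $W = \partial^2\tau/(\partial\eta_g\partial\eta_f)$ evaluated at $(\theta^*,\bar\alpha_1,\bar\gamma_2)$, I would use $E\{W(b^\T\xi)^2\} \ge c_2 E\{(b^\T\xi)^2\}$ from Assumption~\ref{ass:gamma2-hess}(ii), the fluctuation bound $|(\tilde E - E)\{W(b^\T\xi)^2\}| \le B_{15}\lambda_0\|b\|_1^2$ on $\Omega_{23}$, and the reverse conversion $E\{(b^\T\xi)^2\} \ge \tilde E\{(b^\T\xi)^2\} - 4C_0^2\lambda_0\|b\|_1^2$ on $\Omega_{24}$, again consolidating constants through $B_1$ to obtain the factor $(1+c_2)B_1$. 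There is no genuine obstacle here; the only point requiring care is that each bound needs both the relevant fluctuation event ($\Omega_{21}$, $\Omega_{22}$, or $\Omega_{23}$) for the $\xi_j\xi_k$-weighted derivative and the separate event $\Omega_{24}$ governing the empirical Gram fluctuation $\xi_j\xi_k$ on its own. This double use is precisely why all four events are intersected in the definition of $\Omega_2$, and the remaining bookkeeping is just the merging of the two error constants into the single factor $(1+C_1)B_1$ or $(1+c_2)B_1$.
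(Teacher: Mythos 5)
Your proposal is correct and follows essentially the same route as the paper's proof: a union bound over $\Omega_{21},\ldots,\Omega_{24}$ for the probability statement, then the decomposition $\tilde E = E + (\tilde E - E)$ with the $\|b\|_1^2$-weighted entrywise bound on $\Omega_{2k}$, the law of iterated expectations with Assumption~\ref{ass:gamma2-hess}(i)--(ii) for the population piece, and the $\Omega_{24}$ conversion between $E\{(b^\T\xi)^2\}$ and $\tilde E\{(b^\T\xi)^2\}$. Your explicit bookkeeping of the constant merge $4C_0^2 C_1 + B_{15} \le (1+C_1)B_1$ is exactly what the paper compresses into ``combining the preceding three inequalities.''
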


\begin{prf}
Combining Lemmas~\ref{lem:prob-hess}--\ref{lem:prob-hess0} shows that $P(\Omega_2) \ge 1-\epsilon^2$.
In the event $\Omega_2$, simple manipulation yields
\begin{align*}
& \left| (\tilde E-E) \left\{ T_{\eta_g^2}^{(1)} (U;\theta^*, \bar\alpha_1, \bar\gamma_2) (b^\T \xi)^2 \right\} \right| \le B_{15} \lambda_0 \|b\|_1^2 ,\\
& \left| (\tilde E-E) \left\{ (b^\T \xi)^2 \right\} \right| \le 4 C_0^2 \lambda_0 \|b\|_1^2.
\end{align*}
By the law of iterated expectations and Assumption~\ref{ass:gamma2-hess}(i),
\begin{align*}
& E \left\{ T_{\eta_g^2}^{(1)} (U;\theta^*, \bar\alpha_1, \bar\gamma_2) (b^\T \xi)^2 \right\} \\
& =  E \left[ \left\{ T_{\eta_g^2}^{(1)} (U;\theta^*, \bar\alpha_1, \bar\gamma_2) \Big| X\right\} (b^\T \xi)^2 \right] \le C_1 E \left\{ (b^\T \xi)^2 \right\}
\end{align*}
Combining the preceding three inequalities yields the result on  $T_{\eta_g^2}^{(1)}$. Similarly, the results on $T_{\eta_g\theta}^{(1)} $ and
$\frac{\partial^2 \tau}{\partial \eta_g \partial\eta_f}$ can be shown.
\end{prf}

\subsection{Proofs of Theorem \ref{thm:gamma2}, Corollary \ref{cor:gamma2}, and Theorem \ref{thm:alpha2}}

We split the proof of Theorem \ref{thm:gamma2} into a series of lemmas.
The first one is usually called a basic inequality for $\hat\gamma_2$, but depending on the first-step estimators $(\hat\theta_1,\hat\alpha_1)$.

\begin{lem} \label{lem:basic-ineq-hat}
For any vector $\gamma \in \bbR^p$, we have
\begin{align}
D^\dag_2 ( \hat \gamma_2, \gamma; \hat\theta_1 , \hat\alpha_1) + A_1 \lambda_0 \| \hat\gamma_2 \|_1 \le
(\hat\gamma_2 - \gamma)^\T \tilde E \left\{  \xi   \frac{\partial\tau}{\partial \eta_g} (U; \hat\theta_1, \hat\alpha_1,\gamma)\right\} + A_1 \lambda_0 \| \gamma \|_1. \label{eq:basic-ineq-hat}
\end{align}
\end{lem}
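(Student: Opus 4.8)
The plan is to derive (\ref{eq:basic-ineq-hat}) from the first-order optimality condition of the penalized optimization defining $\hat\gamma_2$, rather than from the cruder objective-value inequality $F(\hat\gamma_2)\le F(\gamma)$; this is precisely what produces the \emph{symmetrized} Bregman divergence $D_2^\dag$ on the left-hand side instead of the one-sided divergence $D_2$. Recall that $\hat\gamma_2$ minimizes the convex objective $F(\gamma)=L_2(\gamma;\hat\theta_1,\hat\alpha_1)+\lambda_1\|\gamma\|_1$ with $\lambda_1=A_1\lambda_0$, where $L_2(\gamma;\hat\theta_1,\hat\alpha_1)=\tilde E\{\ell_2(U;\hat\theta_1,\hat\alpha_1,\gamma)\}$ is smooth and convex in $\gamma$, so that the only nonsmooth term is the Lasso penalty.

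First I would record the optimality condition: there exists a subgradient $\hat s\in\partial\|\hat\gamma_2\|_1$ with $\partial L_2/\partial\gamma(\hat\gamma_2;\hat\theta_1,\hat\alpha_1)+\lambda_1\hat s=0$. Taking the inner product with $\hat\gamma_2-\gamma$ and using the two defining properties of the $\ell_1$ subgradient, namely $\hat s^\T\hat\gamma_2=\|\hat\gamma_2\|_1$ and $\hat s^\T\gamma\le\|\gamma\|_1$, gives
\begin{align*}
(\hat\gamma_2-\gamma)^\T \frac{\partial L_2}{\partial\gamma}(\hat\gamma_2;\hat\theta_1,\hat\alpha_1)
= -\lambda_1(\hat\gamma_2-\gamma)^\T\hat s \le -\lambda_1\|\hat\gamma_2\|_1+\lambda_1\|\gamma\|_1 .
\end{align*}
Next I would split the left-hand side using the definition of the symmetrized Bregman divergence, writing $(\hat\gamma_2-\gamma)^\T \partial L_2/\partial\gamma(\hat\gamma_2)=D_2^\dag(\hat\gamma_2,\gamma;\hat\theta_1,\hat\alpha_1)+(\hat\gamma_2-\gamma)^\T \partial L_2/\partial\gamma(\gamma)$, then move the second term to the right and rearrange the penalty terms to obtain (\ref{eq:basic-ineq-hat}).

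To finish, I would identify the gradient with the estimating-function expression appearing in (\ref{eq:basic-ineq-hat}). By Assumption~\ref{ass:convex}, $\partial\ell_2/\partial\gamma=\partial\tau/\partial\alpha$, and by the chain rule $\partial\tau/\partial\alpha=\xi\,(\partial\tau/\partial\eta_g)$, so that $\partial L_2/\partial\gamma(\gamma;\hat\theta_1,\hat\alpha_1)=\tilde E\{\xi\,(\partial\tau/\partial\eta_g)(U;\hat\theta_1,\hat\alpha_1,\gamma)\}$, exactly as already used in (\ref{eq:sym-bregman}). Substituting this identity produces the stated right-hand side.

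The argument is essentially routine, so there is no serious obstacle; the only points demanding care are (i) invoking the first-order condition, not merely the objective inequality, which is what upgrades the bound from $D_2$ to the symmetrized $D_2^\dag$ that is needed for the later compatibility-condition argument, and (ii) tracking the sign of the linear term $(\hat\gamma_2-\gamma)^\T\tilde E\{\xi\,(\partial\tau/\partial\eta_g)(\cdot,\gamma)\}$. In any case that sign is immaterial downstream, since the term is controlled through H\"older's inequality by $\|\hat\gamma_2-\gamma\|_1\,\|\tilde E\{\xi\,(\partial\tau/\partial\eta_g)(U;\hat\theta_1,\hat\alpha_1,\gamma)\}\|_\infty$; evaluated at the target $\gamma=\bar\gamma_2$ together with $(\theta^*,\bar\alpha_1)$ the mean vanishes because $E\{\xi\,(\partial\tau/\partial\eta_g)(U;\theta^*,\bar\alpha_1,\bar\gamma_2)\}=0$, and the empirical version concentrates at rate $\lambda_0$ by Lemma~\ref{lem:prob-grad}. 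The data-dependence of $L_2$ on $(\hat\theta_1,\hat\alpha_1)$ is carried along inertly at this stage and only needs to be disentangled in the later lemmas.
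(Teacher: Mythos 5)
Your proposal is correct and essentially identical to the paper's argument: the paper expresses the first-order optimality condition as a one-sided directional-derivative limit along the segment from $\hat\gamma_2$ to $(1-u)\hat\gamma_2+u\gamma$ (using convexity of $\|\cdot\|_1$ and letting $u\to 0^+$) rather than via an explicit $\ell_1$ subgradient, but this is the same condition, and the subsequent rearrangement through the symmetrized Bregman identity (\ref{eq:sym-bregman}) and the identification $\partial L_2/\partial\gamma=\tilde E\{\xi\,\partial\tau/\partial\eta_g\}$ from Assumption~\ref{ass:convex} are exactly as in the paper. Your remark on the sign of the linear term is well taken—careful bookkeeping in either derivation yields $-(\hat\gamma_2-\gamma)^\T\tilde E\{\xi\,\partial\tau/\partial\eta_g(U;\hat\theta_1,\hat\alpha_1,\gamma)\}$ on the right-hand side, whereas (\ref{eq:basic-ineq-hat}) displays the opposite sign—and, as you note, this is immaterial because the term only enters later (Lemma~\ref{lem:basic-ineq-bar}) through the absolute bound $B_0\lambda_0\|\hat\gamma_2-\bar\gamma_2\|_1$ on the event $\Omega_1$.
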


\begin{prf}
For any $u \in (0,1]$, the definition of $\hat\gamma_2$ implies
\begin{align*}
& L_2 (\hat \gamma_2 ; \hat\theta_1, \hat\alpha_1) + A_1 \lambda_0  \|\hat\gamma_2 \|_1  \\
& \le L_2 \{(1-u)\hat\gamma_2 + u \gamma ; \hat\theta_1, \hat\alpha_1 \} +  A_1 \lambda_0  \|(1-u)\hat\gamma_2 + u \gamma \|_1 ,
\end{align*}
which, by the convexity of $\|\cdot\|_1$, gives
\begin{align*}
L_2 (\hat\gamma_2 ; \hat\theta_1, \hat\alpha_1) - L_2 \{ (1-u) \hat\gamma_2 + u \gamma ; \hat\theta_1, \hat\alpha_1 \} + A_1 \lambda_0 u \| \hat\gamma_2 \|_1 \le A_1 \lambda_0 u \| \gamma \|_1 .
\end{align*}
Dividing both sides of the preceding inequality by $u$ and letting $u \to 0+$ yields
\begin{align*}
- (\hat\gamma_2 - \gamma)^\T \tilde E\left\{ \xi \frac{\partial\tau}{\partial\eta_f} (U; \hat\theta_1,\hat\alpha_1, \hat\gamma_2) \right\} + A_1 \lambda_0 \|\hat \gamma_2 \|_1
\le A_1 \lambda_0 \| \gamma \|_1 ,
\end{align*}
which leads to (\ref{eq:basic-ineq-hat}) after a simple rearrangement using (\ref{eq:sym-bregman}).
\end{prf}

The second lemma deals with the dependency on $(\hat\theta_1,\hat\alpha_1)$ in the upper bound from the basic inequality (\ref{eq:basic-ineq-hat}).
Denote
\begin{align*}
& Q_2 (\hat\gamma_2,\bar\gamma_2; \theta^*, \bar\alpha_1)
= \tilde E \left\{\frac{\partial^2 \tau}{\partial \eta_g \partial\eta_f} (U; \theta^*, \bar \alpha_1,\bar\gamma_2)  (\hat\gamma_2^\T \xi - \bar\gamma_2^\T \xi )^2 \right\} \\
& = (\hat\gamma_2- \bar\gamma_2)^\T  \tilde \Sigma_\gamma  (\hat\gamma_2- \bar\gamma_2).
\end{align*}
where $\tilde \Sigma_\gamma = \tilde E \{ \frac{\partial^2 \tau}{\partial \eta_g \partial\eta_f}(U; \theta^*, \bar\alpha_1, \bar\gamma_2) \xi \xi^\T  \}$.

\begin{lem} \label{lem:remove-hat}
Suppose that Assumptions~\ref{ass:gamma2-hess}(i) and \ref{ass:gamma2-hess}(iv) hold.
In the event $\Omega_0 \cap \Omega_2$, we have
\begin{align*}
& (\hat\gamma_2 - \bar\gamma_2)^\T \tilde E \left\{ \xi  \frac{\partial\tau}{\partial \eta_g} (U; \hat\theta_1, \hat\alpha_1,\gamma)\right\}  \\
& \le (\hat\gamma_2 - \bar\gamma_2)^\T \tilde E \left\{ \xi \frac{\partial\tau}{\partial \eta_g} (U; \theta^*, \bar \alpha_1,\bar\gamma_2)\right\} \\
& \quad +(C_{12} M_0 \lambda_0^2)^{1/2} \{ Q_2 (\hat\gamma_2,\bar\gamma_2; \theta^*, \bar\alpha_1) \}^{1/2}
+ C_{13} \lambda_0 \|\hat\gamma_2-\bar\gamma_2\|_1  ,
\end{align*}
where $C_{12} = 4 c_2^{-1} C_1 (C_1+C_{11} \varrho_0) $, $C_{13} = \{C_{11}^{1/2} +(1+c_2^{-1})^{1/2} B_1^{1/2} C_1^{1/2} \} \{4(C_1+C_{11} \varrho_0 ) \varrho_0\}^{1/2}$,
and $C_{11} = (1+C_1) B_1$ with $B_1$ from Lemma~\ref{lem:prob-hess2}.
\end{lem}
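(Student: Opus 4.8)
The plan is to control the deviation of the integrand evaluated at the first-step values $(\hat\theta_1,\hat\alpha_1)$ from its value at the targets $(\theta^*,\bar\alpha_1)$ by a first-order expansion, and then bound the resulting bilinear remainder using the concentration bounds of Lemma~\ref{lem:prob-hess2} (on $\Omega_2$) together with the first-step rates encoded in $\Omega_0$. Throughout I hold the third argument fixed at $\bar\gamma_2$ and expand only in $(\theta,\alpha)$.

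First I would write the left-hand side as its value at $(\theta^*,\bar\alpha_1)$ plus the remainder
\[
\Delta = (\hat\gamma_2-\bar\gamma_2)^\T \tilde E\Big[\xi\Big\{\tfrac{\partial\tau}{\partial\eta_g}(U;\hat\theta_1,\hat\alpha_1,\bar\gamma_2) - \tfrac{\partial\tau}{\partial\eta_g}(U;\theta^*,\bar\alpha_1,\bar\gamma_2)\Big\}\Big],
\]
and apply the fundamental theorem of calculus along the segment joining $(\theta^*,\bar\alpha_1)$ to $(\hat\theta_1,\hat\alpha_1)$. Since $\eta_g(x;\alpha)=\alpha^\T\xi$, the chain rule gives $\partial(\partial\tau/\partial\eta_g)/\partial\alpha = (\partial^2\tau/\partial\eta_g^2)\,\xi$ and $\partial(\partial\tau/\partial\eta_g)/\partial\theta = \partial^2\tau/(\partial\eta_g\partial\theta)$, so the increment of $\partial\tau/\partial\eta_g$ is an integral of $(\partial^2\tau/\partial\eta_g^2)\{(\hat\alpha_1-\bar\alpha_1)^\T\xi\} + (\partial^2\tau/\partial\eta_g\partial\theta)(\hat\theta_1-\theta^*)$ over the segment. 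On $\Omega_0$ the endpoints satisfy $|\hat\theta_1-\theta^*|\le M_0^{1/2}\lambda_0\le M_0\lambda_0\le\varrho_0\le c_1$ and $\|\hat\alpha_1-\bar\alpha_1\|_1\le M_0\lambda_0\le c_1$ (using $M_0\ge1$, $\lambda_0\le1$), so by convexity every intermediate point lies in $\mathcal N_1$, and the two second derivatives are dominated pointwise by the envelopes $T^{(1)}_{\eta_g^2}$ and $T^{(1)}_{\eta_g\theta}$ of Assumption~\ref{ass:gamma2-hess}(i).

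Then I would split $|\Delta|$ into an $\alpha$-part and a $\theta$-part and apply Cauchy--Schwarz under $\tilde E$, isolating in each a ``$\gamma$-factor'' $[\tilde E\{((\hat\gamma_2-\bar\gamma_2)^\T\xi)^2 T\}]^{1/2}$ and a ``perturbation factor''. For the $\gamma$-factor I would use the first two inequalities of Lemma~\ref{lem:prob-hess2} to bound $\tilde E\{((\hat\gamma_2-\bar\gamma_2)^\T\xi)^2 T\}\le C_1\tilde E\{((\hat\gamma_2-\bar\gamma_2)^\T\xi)^2\} + C_{11}\lambda_0\|\hat\gamma_2-\bar\gamma_2\|_1^2$, and then the third inequality to pass from $\tilde E\{((\hat\gamma_2-\bar\gamma_2)^\T\xi)^2\}$ to $c_2^{-1}Q_2 + (1+c_2^{-1})B_1\lambda_0\|\hat\gamma_2-\bar\gamma_2\|_1^2$; taking square roots via $\sqrt{a+b}\le\sqrt a+\sqrt b$ yields a $\sqrt{C_1c_2^{-1}}\,Q_2^{1/2}$ piece plus a $\sqrt{\lambda_0}\,\|\hat\gamma_2-\bar\gamma_2\|_1$ piece. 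For the perturbation factors I would use $\Omega_0$: in the $\alpha$-part, $\tilde E\{((\hat\alpha_1-\bar\alpha_1)^\T\xi)^2 T\}\le C_1(\hat\alpha_1-\bar\alpha_1)^\T\tilde\Sigma_0(\hat\alpha_1-\bar\alpha_1)+C_{11}\lambda_0\|\hat\alpha_1-\bar\alpha_1\|_1^2\le (C_1+C_{11}\varrho_0)M_0\lambda_0^2$, where one factor of $\|\hat\alpha_1-\bar\alpha_1\|_1$ is bounded by $\varrho_0$; in the $\theta$-part, since $\xi$ contains a constant, Lemma~\ref{lem:prob-hess2} applied in the constant direction gives $\tilde E\{T^{(1)}_{\eta_g\theta}\}\le C_1+C_{11}\lambda_0\le C_1+C_{11}\varrho_0$, so $(\hat\theta_1-\theta^*)^2\tilde E\{T^{(1)}_{\eta_g\theta}\}\le(C_1+C_{11}\varrho_0)M_0\lambda_0^2$, matching the $\alpha$-part exactly.

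Finally I would multiply the two factors in each part and sum. The product of the $Q_2^{1/2}$ piece with $[(C_1+C_{11}\varrho_0)M_0]^{1/2}\lambda_0$ produces $\{C_1c_2^{-1}(C_1+C_{11}\varrho_0)M_0\}^{1/2}\lambda_0\,Q_2^{1/2}$ in each part, while the cross piece produces a $\|\hat\gamma_2-\bar\gamma_2\|_1$ term carrying the factor $M_0^{1/2}\lambda_0^{3/2}=(M_0\lambda_0)^{1/2}\lambda_0\le\varrho_0^{1/2}\lambda_0$, again using $M_0\lambda_0\le\varrho_0$. Adding the $\theta$- and $\alpha$-parts contributes the factor of two that becomes the $4$ inside $C_{12}=4c_2^{-1}C_1(C_1+C_{11}\varrho_0)$ and inside the bracket $\{4(C_1+C_{11}\varrho_0)\varrho_0\}^{1/2}$ of $C_{13}$, so the remainder collapses into exactly $(C_{12}M_0\lambda_0^2)^{1/2}Q_2^{1/2}+C_{13}\lambda_0\|\hat\gamma_2-\bar\gamma_2\|_1$. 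I expect the main obstacle to be the constant bookkeeping in this last collection step --- in particular keeping the $Q_2^{1/2}$ term cleanly separated from the $\|\cdot\|_1$ remainder when taking square roots, and converting the leftover $M_0^{1/2}\lambda_0^{3/2}$ and $\|\hat\alpha_1-\bar\alpha_1\|_1^2$ into $\varrho_0$-dependent constants via $M_0\lambda_0\le\varrho_0$; the analytic content (the Taylor step and two Cauchy--Schwarz applications) is routine once the envelopes guarantee that every intermediate evaluation stays inside $\mathcal N_1$.
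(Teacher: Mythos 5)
Your proposal is correct and follows essentially the same route as the paper's proof: the paper decomposes the increment into a $\theta$-part and an $\alpha$-part via the mean value theorem (rather than a single line integral, which is immaterial), applies Cauchy--Schwarz with the envelopes $T^{(1)}_{\eta_g^2}$ and $T^{(1)}_{\eta_g\theta}$, invokes the three inequalities of Lemma~\ref{lem:prob-hess2} and the $\Omega_0$ rates exactly as you describe, and collects constants via $M_0\lambda_0\le\varrho_0$ to arrive at $C_{12}$ and $C_{13}$. Your constant bookkeeping, including the factor of two becoming the $4$ in $C_{12}$ and in the bracket of $C_{13}$, matches the paper's.
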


\begin{prf}
Consider the following decomposition
\begin{align*}
& (\hat\gamma_2 - \bar\gamma_2)^\T \tilde E \left\{ \frac{\partial\tau}{\partial \eta_g} (U; \hat\theta_1, \hat\alpha_1,\bar\gamma_2) \xi  \right\} \\
& =(\hat\gamma_2 - \bar\gamma_2)^\T \tilde E \left\{ \xi \frac{\partial\tau}{\partial \eta_g} (U; \theta^*, \bar\alpha_1,\bar\gamma_2)  \right\} + \Delta_1 + \Delta_2,
\end{align*}
where
\begin{align*}
& \Delta_1 = (\hat\gamma_2 - \bar\gamma_2)^\T \tilde E \left[ \xi \left\{ \frac{\partial\tau}{\partial \eta_g} (U; \hat\theta_1, \hat\alpha_1,\bar\gamma_2)
-\frac{\partial\tau}{\partial \eta_g} (U; \theta^*, \hat\alpha_1,\bar\gamma_2)\right\} \right], \\
& \Delta_2 = (\hat\gamma_2 - \bar\gamma_2)^\T \tilde E \left[ \xi \left\{ \frac{\partial\tau}{\partial \eta_g} (U; \theta^*, \hat\alpha_1,\bar\gamma_2)
-\frac{\partial\tau}{\partial \eta_g} (U; \theta^*, \bar\alpha_1,\bar\gamma_2)\right\} \right].
\end{align*}
In the event $\Omega_0$, $(\hat\theta_1,\hat\alpha_1) \in \mathcal N_1$ by Assumption \ref{ass:gamma2-hess}(iv).
By the mean value theorem and the Cauchy--Schwartz inequality, and Assumption \ref{ass:gamma2-hess}(i),
\begin{align*}
& |\Delta_2| =  \left| \tilde E \left\{ (\hat\gamma_2 - \bar\gamma_2)^\T  \xi \frac{\partial^2 \tau}{\partial \eta_g^2} (U; \theta^*, \tilde\alpha ,\bar\gamma_2) (\hat\alpha_1 - \bar\alpha_1)^\T \xi \right\} \right|\\
& \le \tilde E^{1/2} \left\{ T^{(1)}_{\eta_g^2}(U;\theta^*, \bar\alpha_1, \bar\gamma_2) (\hat\gamma_2^\T \xi - \bar\gamma_2^\T \xi )^2 \right\}
\tilde E^{1/2} \left\{ T^{(1)}_{\eta_g^2}(U;\theta^*, \bar\alpha_1, \bar\gamma_2)(\hat\alpha_1^\T \xi - \bar\alpha_1^\T \xi)^2 \right\} ,
\end{align*}
where $\tilde\alpha$ lies between $\hat\alpha_1$ and $\bar\alpha_1$. Hence in the event $\Omega_0\cap\Omega_2$ by Lemma~\ref{lem:prob-hess2},
\begin{align*}
& |\Delta_2|  \le \left[ C_1 \tilde E \left\{ (\hat\gamma_2^\T \xi - \bar\gamma_2^\T \xi )^2 \right\} + C_{11} \lambda_0 \|\hat\gamma_2-\bar\gamma_2\|_1^2  \right]^{1/2}
\left[ C_1 \tilde E \left\{ (\hat\alpha_1^\T \xi - \bar\alpha_1^\T \xi)^2 \right\}   + C_{11} \lambda_0 \|\hat\alpha_1-\bar\alpha_1\|_1^2 \right]^{1/2},
\end{align*}
where $C_{11} = (1+C_1)B_1$.
Similarly, in the event $\Omega_0\cap\Omega_2$ by Lemma~\ref{lem:prob-hess2},
\begin{align*}
& |\Delta_1| = \left|\tilde E \left\{ (\hat\gamma_2 - \bar\gamma_2)^\T  \xi \frac{\partial^2 \tau}{\partial \eta_g \partial\theta}(U; \tilde\theta, \hat\alpha_1,\bar\gamma_2) (\hat\theta_1 - \theta^*) \right\}  \right|\\
& \le \tilde E^{1/2} \left\{ T^{(1)}_{\eta_g\theta} (U;\theta^*, \bar\alpha_1, \bar\gamma_2)( \hat\gamma_2^\T \xi - \bar\gamma_2^\T \xi )^2 \right\}
 \tilde E^{1/2} \left\{T^{(1)}_{\eta_g\theta} (U;\theta^*, \bar\alpha_1, \bar\gamma_2)(\hat\theta_1 - \theta^*)^2 \right\}  \\
& \le \left[ C_1 \tilde E \left\{ (\hat\gamma_2^\T \xi - \bar\gamma_2^\T \xi )^2 \right\} + C_{11} \lambda_0 \|\hat\gamma_2-\bar\gamma_2\|_1^2  \right]^{1/2}
\left[ (C_1+ B_{15} \lambda_0) (\hat\theta_1-\theta^*)^2 \right]^{1/2},
\end{align*}
where $\tilde\theta$ lies between $\hat\theta_1$ and $\theta^*$. Hence in the event $\Omega_0 \cap\Omega_2$,
\begin{align*}
& |\Delta_1| + |\Delta_2|
\le 2 \left\{ (C_1+C_{11}M_0\lambda_0) M_0\lambda_0^2 \right\}^{1/2}
\left[ C_1^{1/2} \tilde E^{1/2} \left\{ (\hat\gamma_2^\T \xi - \bar\gamma_2^\T \xi )^2 \right\} + C_{11}^{1/2} \lambda_0^{1/2} \|\hat\gamma_2-\bar\gamma_2\|_1 \right] \\
& = (M_{01} \lambda_0^2)^{1/2} \tilde E^{1/2} \left\{ (\hat\gamma_2^\T \xi - \bar\gamma_2^\T \xi )^2 \right\} + M_{02} \lambda_0 \|\hat\gamma_2-\bar\gamma_2\|_1  ,
\end{align*}
where $M_{01} = 4C_1 (C_1+C_{11}M_0\lambda_0) M_0$ and $M_{02} = \{4C_{11}(C_1+C_{11}M_0\lambda_0) M_0 \lambda_0\}^{1/2}$.
Furthermore, in the event $\Omega_2$ by Lemma~\ref{lem:prob-hess2},
\begin{align*}
& \tilde E \left\{ (\hat\gamma_2^\T \xi - \bar\gamma_2^\T \xi )^2 \right\} \\
& \le c_2^{-1} \tilde E \left\{\frac{\partial^2 \tau}{\partial \eta_g \partial\eta_f} (U; \theta^*, \bar \alpha_1,\bar\gamma_2)  (\hat\gamma_2^\T \xi - \bar\gamma_2^\T \xi )^2 \right\}
+ (1+c_2^{-1}) B_1 \lambda_0 \|\hat\gamma_2-\bar\gamma_2\|_1^2 .
\end{align*}
Combining the preceding inequalities shows that in event $\Omega_0 \cap \Omega_2$,
\begin{align*}
& |\Delta_1| + |\Delta_2| \le
(M_{03} \lambda_0^2)^{1/2} \tilde E^{1/2} \left\{\frac{\partial^2 \tau}{\partial \eta_g \partial\eta_f} (U; \theta^*, \bar \alpha_1,\bar\gamma_2)  (\hat\gamma_2^\T \xi - \bar\gamma_2^\T \xi )^2 \right\}
+ M_{04} \lambda_0 \|\hat\gamma_2-\bar\gamma_2\|_1  ,
\end{align*}
where $M_{03} = c_2^{-1} M_{01}$ and $M_{04} = M_{02} + \{(1+c_2^{-1}) B_1 \tilde C_{12} M_0\lambda_0\}^{1/2} = \{C_{11}^{1/2} +(1+c_2^{-1})^{1/2} B_1^{1/2} C_1^{1/2} \} $ $\times\{4(C_1+C_{11}M_0\lambda_0)M_0 \lambda_0\}^{1/2}$. Using $M_0 \lambda_0 \le \varrho_0$ by Assumption~\ref{ass:gamma2-hess}(iv) yields the desired result.
\end{prf}

The third lemma derives an implication of the basic inequality (\ref{eq:basic-ineq-hat}) using the triangle inequality for the $L_1$ norm,
while incorporating the bound from Lemma~\ref{lem:remove-hat}.

\begin{lem} \label{lem:basic-ineq-bar}
Denote $b = \hat \gamma_2 - \bar \gamma_2$. In the event $\Omega_0 \cap \Omega_1 \cap \Omega_2$, (\ref{eq:basic-ineq-hat}) implies that
\begin{align}
& D^\dag_2 ( \hat \gamma_2, \bar \gamma_2; \hat\theta_1 , \hat\alpha_1) + A_{11} \lambda_0 \| b \|_1 \nonumber \\
& \le 2 A_1 \lambda_0 \sum_{j\in S_{\bar\gamma_2}} |b_j|  +(C_{12} M_0 \lambda_0^2)^{1/2} \{ Q_2 (\hat\gamma_2,\bar\gamma_2; \theta^*, \bar\alpha_1) \}^{1/2} , \label{eq:basic-ineq-bar}
\end{align}
where $A_{11} = A_1 -B_0 -C_{13}$, with $B_0$ from Lemma~\ref{lem:prob-grad}.
\end{lem}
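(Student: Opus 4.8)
The plan is to combine the basic inequality (\ref{eq:basic-ineq-hat}) with the bound from Lemma~\ref{lem:remove-hat} and the gradient control from Lemma~\ref{lem:prob-grad}, and then to carry out the standard Lasso support decomposition over $S_{\bar\gamma_2}$. First I would specialize (\ref{eq:basic-ineq-hat}) to $\gamma = \bar\gamma_2$, which puts $D^\dag_2(\hat\gamma_2,\bar\gamma_2;\hat\theta_1,\hat\alpha_1) + A_1\lambda_0\|\hat\gamma_2\|_1$ on the left and the inner-product term $(\hat\gamma_2-\bar\gamma_2)^\T \tilde E\{\xi\, \partial\tau/\partial\eta_g(U;\hat\theta_1,\hat\alpha_1,\bar\gamma_2)\}$ together with $A_1\lambda_0\|\bar\gamma_2\|_1$ on the right.

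Next, working in the event $\Omega_0 \cap \Omega_2$, I would invoke Lemma~\ref{lem:remove-hat} to replace the first-step estimators $(\hat\theta_1,\hat\alpha_1)$ in that inner product by the target values $(\theta^*,\bar\alpha_1)$, incurring the two remainder terms $(C_{12}M_0\lambda_0^2)^{1/2}\{Q_2(\hat\gamma_2,\bar\gamma_2;\theta^*,\bar\alpha_1)\}^{1/2}$ and $C_{13}\lambda_0\|b\|_1$, with $b=\hat\gamma_2-\bar\gamma_2$. The remaining gradient-at-target term $(\hat\gamma_2-\bar\gamma_2)^\T \tilde E\{\xi\, \partial\tau/\partial\eta_g(U;\theta^*,\bar\alpha_1,\bar\gamma_2)\}$ I would then bound, in the event $\Omega_1$, by the $\ell_1$--$\ell_\infty$ (H\"older) inequality followed by Lemma~\ref{lem:prob-grad}, yielding at most $B_0\lambda_0\|b\|_1$.

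Collecting these pieces leaves
\begin{align*}
D^\dag_2(\hat\gamma_2,\bar\gamma_2;\hat\theta_1,\hat\alpha_1) + A_1\lambda_0\big(\|\hat\gamma_2\|_1 - \|\bar\gamma_2\|_1\big) \le (B_0+C_{13})\lambda_0\|b\|_1 + (C_{12}M_0\lambda_0^2)^{1/2}\{Q_2\}^{1/2}.
\end{align*}
The final step is the penalty bookkeeping. Since $\bar\gamma_2$ is supported on $S=S_{\bar\gamma_2}$, one has $(\hat\gamma_2)_{S^c}=b_{S^c}$ and hence $\|\hat\gamma_2\|_1 \ge \|\bar\gamma_2\|_1 - \|b_S\|_1 + \|b_{S^c}\|_1$; combined with $\|b\|_1=\|b_S\|_1+\|b_{S^c}\|_1$, I would move the $(B_0+C_{13})\lambda_0\|b_{S^c}\|_1$ contribution to the left so that $\|b_{S^c}\|_1$ acquires coefficient $A_1-B_0-C_{13}=A_{11}$, then add $A_{11}\lambda_0\|b_S\|_1$ to both sides to form $A_{11}\lambda_0\|b\|_1$. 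One checks the $\|b_S\|_1$ coefficient on the right telescopes to $A_1+B_0+C_{13}+A_{11}=2A_1$, producing exactly (\ref{eq:basic-ineq-bar}).

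I do not expect a genuine obstacle in this step: once Lemma~\ref{lem:remove-hat} is available, the whole argument is routine convex-analysis bookkeeping. The only point demanding care is tracking the several $\ell_1$ terms so that the $B_0$ and $C_{13}$ contributions are absorbed consistently and the constants collapse precisely to $A_{11}$ on the left and $2A_1$ on the right; the real difficulty, namely controlling the data-dependence of the upper bound on $(\hat\theta_1,\hat\alpha_1)$ in both the quadratic ($Q_2$) and $L_1$ scales, has already been discharged in Lemma~\ref{lem:remove-hat}.
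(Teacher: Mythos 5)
Your proposal is correct and follows essentially the same route as the paper's proof: specialize the basic inequality (\ref{eq:basic-ineq-hat}) at $\gamma=\bar\gamma_2$, invoke Lemma~\ref{lem:remove-hat} on $\Omega_0\cap\Omega_2$ and the gradient bound from Lemma~\ref{lem:prob-grad} on $\Omega_1$, then perform the standard support decomposition of $\|\hat\gamma_2\|_1$ over $S_{\bar\gamma_2}$ and its complement. Your constant bookkeeping, with $A_{11}=A_1-B_0-C_{13}$ on the left and the $\|b_{S_{\bar\gamma_2}}\|_1$ coefficient telescoping to $2A_1$ on the right, matches the paper exactly.
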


\begin{prf}
In the event $\Omega_1$ from Lemma~\ref{lem:prob-grad}, we have
\begin{align*}
b^\T \tilde E \left\{ \xi \frac{\partial\tau}{\partial \eta_g} (U; \theta^*, \bar\alpha_1,\bar\gamma_2) \right\} \le B_0 \lambda_0 \|b\|_1 .
\end{align*}
From (\ref{eq:basic-ineq-hat}), the preceding bound, and Lemma~\ref{lem:remove-hat}, we have in the event $\Omega_0\cap\Omega_1\cap\Omega_2$,
\begin{align*}
& D^\dag_2 ( \hat \gamma_2, \bar \gamma_2; \hat\theta_1 , \hat\alpha_1) + A_1 \lambda_0 \| \hat\gamma_2 \|_1 \\
& \le B_0 \lambda_0 \|b\|_1 +  A_1 \lambda_0 \| \bar\gamma_2 \|_1 +(C_{12} M_0 \lambda_0^2)^{1/2} \{ Q_2 (\hat\gamma_2,\bar\gamma_2; \theta^*, \bar\alpha_1) \}^{1/2}
+ C_{13} \lambda_0 \| b \|_1  .
\end{align*}
Using  the identity $ | (\hat\gamma_2) _j| = | (\hat\gamma_2-\bar\gamma_2)_j|$ for $j \not\in S_{\bar\gamma_2}$
and the triangle inequality $ | (\hat\gamma_2) _j| \ge |(\bar\gamma_2) _j| - | (\hat\gamma_2-\bar\gamma_2)_j|$ for $j \in S_{\bar\gamma_2}$
and rearranging the result yields (\ref{eq:basic-ineq-bar}).
\end{prf}

The following lemma provides a desired bound relating the Bregman divergence $D_2^\dag (\gamma, \bar\gamma_2; $ $\hat\theta_1, \hat\alpha_1)$
with the quadratic function $(\gamma - \bar\gamma_2)^\T \tilde \Sigma_\gamma (\gamma - \bar\gamma_2)$.

\begin{lem}  \label{lem:local-quad}
Suppose that Assumptions~\ref{ass:gamma2-basic}(i) and \ref{ass:gamma2-hess}(iii)--(iv) hold. In the event $\Omega_0$, we have for any $\gamma\in\bbR^p$,
\begin{align*}
D_2^\dag (\gamma, \bar\gamma_2; \hat\theta_1, \hat\alpha_1)  \ge \me^{-\Delta} \frac{1- \me^{-C_0C_2 \|\gamma-\bar\gamma_2\|_1}}{C_0C_2\|\gamma-\bar\gamma_2\|_1}
(b^\T \tilde \Sigma_\gamma b),
\end{align*}
where $ b =\gamma - \bar\gamma_2$ and $\Delta = C_2(|\hat\theta_1 -\theta^*|+ C_0 \|\hat\alpha_1 -\bar\alpha_1\|_1)$. Throughout, set $(1-\me^{-c})/c=1$ for $c=0$.
\end{lem}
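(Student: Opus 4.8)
The plan is to write the symmetrized Bregman divergence as an integral of the mixed second derivative $\partial^2\tau/(\partial\eta_g\partial\eta_f)$ along the segment joining $\bar\gamma_2$ to $\gamma$, and then to control that integrand pointwise via the self-concordance-type bound of Assumption~\ref{ass:gamma2-hess}(iii). Writing $b=\gamma-\bar\gamma_2$ and $\gamma(t)=\bar\gamma_2+tb$ for $t\in[0,1]$, only $\eta_f=\gamma(t)^\T\xi$ varies with $t$, and the chain rule gives $\frac{\dif}{\dif t}\frac{\partial\tau}{\partial\eta_g}(U;\hat\theta_1,\hat\alpha_1,\gamma(t)) = \frac{\partial^2\tau}{\partial\eta_g\partial\eta_f}(U;\hat\theta_1,\hat\alpha_1,\gamma(t))\,(b^\T\xi)$. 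Substituting this into the definition (\ref{eq:sym-bregman}) via the fundamental theorem of calculus yields
\begin{align*}
D_2^\dag(\gamma,\bar\gamma_2;\hat\theta_1,\hat\alpha_1) = \tilde E\left\{ (b^\T\xi)^2 \int_0^1 \frac{\partial^2\tau}{\partial\eta_g\partial\eta_f}(U;\hat\theta_1,\hat\alpha_1,\gamma(t))\,\dif t\right\}.
\end{align*}
Since the prefactor $(b^\T\xi)^2$ is nonnegative, it suffices to lower bound the integrand for each sampled $U$.

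Next I would invoke Assumption~\ref{ass:gamma2-hess}(iii), which controls the multiplicative variation of the mixed second derivative. In the event $\Omega_0$, Assumption~\ref{ass:gamma2-basic}(iv) together with $M_0\lambda_0\le\varrho_0\le c_1$ and $M_0\ge1$ forces $(\hat\theta_1,\hat\alpha_1)\in\mathcal N_1$, so the bound applies along the whole segment (it permits arbitrary $\gamma\in\bbR^p$). Evaluating its exponent at $(\hat\theta_1,\hat\alpha_1,\gamma(t))$ and using $(\gamma(t)-\bar\gamma_2)^\T\xi=t\,b^\T\xi$ gives
\begin{align*}
\frac{\partial^2\tau}{\partial\eta_g\partial\eta_f}(U;\hat\theta_1,\hat\alpha_1,\gamma(t)) \ge \frac{\partial^2\tau}{\partial\eta_g\partial\eta_f}(U;\theta^*,\bar\alpha_1,\bar\gamma_2)\, \me^{-C_2(|\hat\theta_1-\theta^*|+|(\hat\alpha_1-\bar\alpha_1)^\T\xi|)}\,\me^{-C_2 t|b^\T\xi|}.
\end{align*}
Bounding $|(\hat\alpha_1-\bar\alpha_1)^\T\xi|\le C_0\|\hat\alpha_1-\bar\alpha_1\|_1$ through Assumption~\ref{ass:gamma2-basic}(i) replaces the first exponential by $\me^{-\Delta}$ with $\Delta=C_2(|\hat\theta_1-\theta^*|+C_0\|\hat\alpha_1-\bar\alpha_1\|_1)$, a quantity not depending on $\xi$; integrating the residual factor yields $\int_0^1\me^{-C_2 t|b^\T\xi|}\,\dif t=(1-\me^{-C_2|b^\T\xi|})/(C_2|b^\T\xi|)$.

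The final step removes the leftover $\xi$-dependence. The map $c\mapsto(1-\me^{-c})/c=\int_0^1\me^{-ct}\,\dif t$ is decreasing on $(0,\infty)$, and $|b^\T\xi|\le C_0\|b\|_1$ by Assumption~\ref{ass:gamma2-basic}(i), so $(1-\me^{-C_2|b^\T\xi|})/(C_2|b^\T\xi|)\ge(1-\me^{-C_0 C_2\|b\|_1})/(C_0 C_2\|b\|_1)$, with the stated convention at $0$, the right side being free of $\xi$. Using the nonnegativity of $\frac{\partial^2\tau}{\partial\eta_g\partial\eta_f}(U;\theta^*,\bar\alpha_1,\bar\gamma_2)$ (as holds in the settings considered, consistent with the convexity in Assumption~\ref{ass:convex}), so that the replacement preserves the inequality after multiplication, I obtain pointwise
\begin{align*}
(b^\T\xi)^2\int_0^1\frac{\partial^2\tau}{\partial\eta_g\partial\eta_f}(U;\hat\theta_1,\hat\alpha_1,\gamma(t))\,\dif t \ge \me^{-\Delta}\,\frac{1-\me^{-C_0 C_2\|b\|_1}}{C_0 C_2\|b\|_1}\,\frac{\partial^2\tau}{\partial\eta_g\partial\eta_f}(U;\theta^*,\bar\alpha_1,\bar\gamma_2)\,(b^\T\xi)^2.
\end{align*}
Taking $\tilde E$ and recognizing $\tilde E\{\frac{\partial^2\tau}{\partial\eta_g\partial\eta_f}(U;\theta^*,\bar\alpha_1,\bar\gamma_2)(b^\T\xi)^2\}=b^\T\tilde\Sigma_\gamma b$ delivers the claim.

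The main obstacle is exactly this last decoupling: after integrating in $t$, the lower bound carries the factor $(1-\me^{-C_2|b^\T\xi|})/(C_2|b^\T\xi|)$, which varies with $\xi$ and so cannot be pulled through the empirical expectation. It is the monotonicity of $(1-\me^{-c})/c$ together with the uniform bound $|b^\T\xi|\le C_0\|b\|_1$ that converts this into the $\xi$-free constant in the statement; one must track signs carefully so that both this replacement and the earlier bound on the $\alpha$-contribution in the exponent weaken the right-hand side in the correct direction, which is where the nonnegativity of the base Hessian integrand and the containment $(\hat\theta_1,\hat\alpha_1)\in\mathcal N_1$ guaranteed by $\Omega_0$ are used.
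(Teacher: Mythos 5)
Your proof is correct and follows essentially the same route as the paper's: the integral representation of the symmetrized Bregman divergence along the segment $\gamma_u=\bar\gamma_2+u b$, the exponential-variation bound of Assumption~\ref{ass:gamma2-hess}(iii) (used, as in the paper's own proof, in the lower-bound direction), the containment $(\hat\theta_1,\hat\alpha_1)\in\mathcal N_1$ on $\Omega_0$, and the bound $|b^\T\xi|\le C_0\|b\|_1$ to decouple the exponential factor from $\xi$. The only difference --- integrating in $t$ first and then invoking monotonicity of $c\mapsto(1-\me^{-c})/c$, rather than replacing $|b^\T\xi|$ by $C_0\|b\|_1$ inside the exponent before integrating --- is immaterial.
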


\begin{prf}
By direct calculation, we have
\begin{align*}
& D_2^\dag (\gamma, \bar\gamma_2; \hat\theta_1, \hat\alpha_1) =  (\gamma-\bar\gamma_2)^\T \tilde E \left[ \xi \left\{\frac{\partial \tau}{\partial \eta_g}(U;\hat\theta_1, \hat\alpha_1,\gamma) -
\frac{\partial \tau}{\partial \eta_g}(U;  \hat\theta_1, \hat\alpha_1, \gamma)\right\} \right] \\
& =  \tilde E \left[ \left\{ \int_0^1 \frac{\partial^2 \tau}{\partial \eta_g \partial\eta_f}(U;  \hat\theta_1, \hat\alpha_1,\gamma_u) \,\dif u\right\} (\gamma^\T \xi - \bar\gamma_2^\T \xi)^2 \right],
\end{align*}
where $\gamma_u = \bar\gamma_2 + u (\gamma-\bar\gamma_2)$.
In the event $\Omega_0$, $(\hat\theta_1,\hat\alpha_1) \in \mathcal N_1$ by Assumption \ref{ass:gamma2-hess}(iv).
Then by Assumption~\ref{ass:gamma2-basic}(i) and \ref{ass:gamma2-hess}(iii), we have
\begin{align*}
& D_2^\dag (\gamma, \bar\gamma_2; \hat\theta_1, \hat\alpha_1) \\
& \ge \tilde E \left[\left\{ \int_0^1  \me^{-C_2 (|\hat\theta_1 -\theta^*|+|(\hat\alpha_1 -\bar\alpha_1)^\T\xi| + u|(\gamma-\bar\gamma_2)^\T\xi|) } \,\dif u\right\}
\frac{\partial^2 \tau}{\partial \eta_g \partial\eta_f}(U; \theta^*, \bar\alpha_1,\gamma)
(\gamma^\T \xi - \bar\gamma_2^\T \xi)^2 \right] \\
& \ge \left\{ \int_0^1 \me^{-C_2 (|\hat\theta_1 -\theta^*|+ C_0 \|\hat\alpha_1 -\bar\alpha_1\|_1 + u C_0 \|\gamma-\bar\gamma_2\|_1)} \,\dif u\right\}
\tilde E \left\{ \frac{\partial^2 \tau}{\partial \eta_g \partial\eta_f}(U; \theta^*, \bar\alpha_1,\gamma) (\gamma^\T \xi - \bar\gamma_2^\T \xi)^2 \right\}.
\end{align*}
The desired result follows because  $\int_0^1 \me^{-cu}\,\dif u = (1-\me^{-c})/c$ for $c \ge 0$.
\end{prf}

The following lemma shows that Assumption~\ref{ass:gamma2-basic}(iii), a theoretical compatibility condition for $\Sigma_\gamma$, implies
an empirical compatibility condition for $\tilde\Sigma_\gamma$.

\begin{lem} \label{lem:emp-compat}
Suppose that Assumption~\ref{ass:gamma2-hess}(iv) holds. In the event $\Omega_2$, Assumption \ref{ass:gamma2-basic} (iii) implies that
for any vector $b \in \bbR^p$ such that $\sum_{j\not\in S_{\bar\gamma_2}} |b_j|  \le \mu_1 \sum_{j \in S_{\bar\gamma_2}} |b_j| $, we have
\begin{align*}
 \nu_{11}^2 \left( \sum_{j \in S_{\bar\gamma_2}} |b_j| \right)^2 \le |S_{\bar\gamma_2}| \left( b^\T \tilde \Sigma_\gamma b \right),
\end{align*}
where $\nu_{11} = \nu_1 \{1- \nu_1^{-2} (1+\mu_1)^2 \varrho_1 B_{15} \}^{1/2} = \nu_1 (1-\varrho_2)^{1/2}$.
\end{lem}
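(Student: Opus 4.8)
The plan is to transfer the theoretical compatibility condition on $\Sigma_\gamma$ to its empirical analogue $\tilde\Sigma_\gamma$ by treating the difference as a small perturbation whose entries are uniformly controlled in the event $\Omega_2$. First I would decompose, for any vector $b \in \bbR^p$,
\begin{align*}
b^\T \tilde\Sigma_\gamma b = b^\T \Sigma_\gamma b + b^\T (\tilde\Sigma_\gamma - \Sigma_\gamma) b .
\end{align*}
Since $\tilde\Sigma_\gamma = \tilde E\{ \frac{\partial^2\tau}{\partial\eta_g\partial\eta_f} \xi\xi^\T \}$ and $\Sigma_\gamma = E\{ \frac{\partial^2\tau}{\partial\eta_g\partial\eta_f} \xi\xi^\T \}$, both evaluated at $(\theta^*,\bar\alpha_1,\bar\gamma_2)$, the event $\Omega_{23}$ from Lemma~\ref{lem:prob-hess}, which is contained in $\Omega_2$ by Lemma~\ref{lem:prob-hess2}, supplies the entrywise bound $\sup_{j,k}|(\tilde\Sigma_\gamma-\Sigma_\gamma)_{jk}|\le B_{15}\lambda_0$. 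Expanding the quadratic form and summing entrywise then gives $|b^\T(\tilde\Sigma_\gamma-\Sigma_\gamma)b|\le B_{15}\lambda_0\|b\|_1^2$.

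Next I would exploit the cone restriction. Under the stated condition $\sum_{j\not\in S_{\bar\gamma_2}}|b_j|\le\mu_1\sum_{j\in S_{\bar\gamma_2}}|b_j|$, the full $L_1$ norm is dominated by its on-support part, $\|b\|_1\le(1+\mu_1)\sum_{j\in S_{\bar\gamma_2}}|b_j|$, so that $\|b\|_1^2\le(1+\mu_1)^2(\sum_{j\in S_{\bar\gamma_2}}|b_j|)^2$. Combining this with the theoretical compatibility condition of Assumption~\ref{ass:gamma2-basic}(iii), namely $b^\T\Sigma_\gamma b\ge\nu_1^2|S_{\bar\gamma_2}|^{-1}(\sum_{j\in S_{\bar\gamma_2}}|b_j|)^2$, I would obtain, after multiplying through by $|S_{\bar\gamma_2}|$,
\begin{align*}
|S_{\bar\gamma_2}|\,(b^\T\tilde\Sigma_\gamma b)\ge\left\{\nu_1^2-|S_{\bar\gamma_2}|\,B_{15}\lambda_0(1+\mu_1)^2\right\}\left(\sum_{j\in S_{\bar\gamma_2}}|b_j|\right)^2 .
\end{align*}

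Finally I would invoke the sparsity bound $|S_{\bar\gamma_2}|\lambda_0\le\varrho_1$ from Assumption~\ref{ass:gamma2-hess}(iv) to replace $|S_{\bar\gamma_2}|\lambda_0$ by $\varrho_1$ inside the bracket, giving $\nu_1^2-\varrho_1 B_{15}(1+\mu_1)^2=\nu_1^2(1-\varrho_2)=\nu_{11}^2$ by the definition $\varrho_2=\nu_1^{-2}(1+\mu_1)^2\varrho_1 B_{15}$ in Assumption~\ref{ass:gamma2-hess}(iv). This is precisely the claimed empirical compatibility inequality. I do not anticipate any genuine obstacle: this is the standard perturbation transfer for compatibility constants, and the only point requiring care is the bookkeeping of constants, together with the observation that the threshold condition $\varrho_2<1$ is exactly what keeps $\nu_{11}>0$ so that the empirical compatibility constant is nondegenerate.
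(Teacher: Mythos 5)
Your argument is correct and is essentially the paper's own proof: the same entrywise perturbation bound $|b^\T(\tilde\Sigma_\gamma-\Sigma_\gamma)b|\le B_{15}\lambda_0\|b\|_1^2$ on $\Omega_2$, the same cone inequality $\|b\|_1\le(1+\mu_1)\|b_{S_{\bar\gamma_2}}\|_1$, and the same final substitution $|S_{\bar\gamma_2}|\lambda_0\le\varrho_1$ to identify $\nu_{11}^2=\nu_1^2(1-\varrho_2)$. The only nit is the containment direction: $\Omega_2=\Omega_{21}\cap\Omega_{22}\cap\Omega_{23}\cap\Omega_{24}$ is contained in $\Omega_{23}$, not the other way around, but this does not affect the argument since the needed bound indeed holds on $\Omega_2$.
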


\begin{prf}
In the event $\Omega_2$, we have $ |b^\T (\tilde \Sigma_\gamma - \Sigma_\gamma) b | \le B_{15} \lambda_0 \|b\|_1^2$ from Lemma~\ref{lem:prob-hess}.
Then Assumption~\ref{ass:gamma2-basic}(iii) implies that for any $b=(b_1,\ldots,b_p)^\T $ satisfying $\sum_{j\not \in S_{\bar\gamma_2}} |b_j| \le \mu_1 \sum_{j\in S_{\bar\gamma_2}} |b_j|$,
\begin{align*}
&\nu_1^2 \|b_{S_{\bar\gamma_2}} \|_1^2 \le |S_{\bar\gamma_2}| (b^\T \Sigma_\gamma b) \le |S_{\bar\gamma_2}| \left(b^\T \tilde \Sigma_\gamma b + B_{15}\lambda_0 \|b\|_1^2 \right) \\
& \le |S_{\bar\gamma_2}| (b^\T \tilde \Sigma_\gamma b ) + B_{15} |S_{\bar\gamma_2}| \lambda_0 (1+\mu_1)^2 \|b_{S_{\bar\gamma_2}} \|_1^2 ,
\end{align*}
where $\|b_{S_{\bar\gamma_2}} \|_1=\sum_{j\in S_{\bar\gamma_2}} |b_j|$. The last inequality uses $\|b\|_1 \le (1+\mu_1) \| b_{S_{\bar\gamma_2}}\|_1$.
The desired result follows because $ |S_{\bar\gamma_2}| \lambda_0 \le \varrho_1$ and $\varrho_2=\nu_1^{-2} B_{15} (1+\mu_1)^2 \varrho_1 < 1$ by Assumption~\ref{ass:gamma2-hess}(iv).
\end{prf}

The final lemma completes the proof of Theorem~\ref{thm:gamma2}, because $P(\Omega_0\cap\Omega_1\cap\Omega_2) \ge 1-(c_0+10)\epsilon$ by
Assumption~\ref{ass:gamma2-basic}(iv) and Lemmas~\ref{lem:prob-grad} and \ref{lem:prob-hess2},

\begin{lem} \label{lem:invert}
Suppose that Assumptions~\ref{ass:gamma2-basic}--\ref{ass:gamma2-hess} hold and $\lambda_0 \le 1$. Then for $A_1 > (B_0+C_{13}) (\mu_1+1)/(\mu_1 -1)$,
inequality (\ref{eq:gamma2-expan}) holds in the event $\Omega_0 \cap \Omega_1 \cap \Omega_2$:
\begin{align*}
& D^\dag_2 ( \hat \gamma_2, \bar \gamma_2; \hat\theta_1 , \hat\alpha_1) + A_{11} \lambda_0 \| \hat\gamma_2 - \bar\gamma_2  \|_1 \\
& \le \left\{ \me^{\varrho_5 }(1-\varrho_3)^{-1} \mu_{12}^2 \nu_{11}^{-2} ( |S_{\bar\gamma_2}| \lambda_0^2) \right\} \vee
 \left\{ \me^{\varrho_5 }(1-\varrho_4)^{-1} \mu_{11}^{-2} C_{12} (M_0 \lambda_0^2 ) \right\} .
\end{align*}
\end{lem}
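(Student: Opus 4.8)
The plan is to work throughout on the event $\Omega_0 \cap \Omega_1 \cap \Omega_2$, which has probability at least $1-(c_0+10)\epsilon$ by Assumption~\ref{ass:gamma2-basic}(iv) together with Lemmas~\ref{lem:prob-grad} and \ref{lem:prob-hess2}, and to combine three ingredients already in hand: the basic inequality (\ref{eq:basic-ineq-bar}) of Lemma~\ref{lem:basic-ineq-bar}, the empirical compatibility condition of Lemma~\ref{lem:emp-compat}, and the local quadratic lower bound of Lemma~\ref{lem:local-quad}. Write $b = \hat\gamma_2 - \bar\gamma_2$ and $Q_2 = b^\T \tilde\Sigma_\gamma b$. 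First I would recast (\ref{eq:basic-ineq-bar}) into the standard ``cone'' form by writing $\|b\|_1 = \|b_{S_{\bar\gamma_2}}\|_1 + \|b_{S_{\bar\gamma_2}^c}\|_1$ and moving the off-support mass to the left, obtaining (using $A_1+B_0+C_{13} = 2A_1 - A_{11}$)
\[ D_2^\dag(\hat\gamma_2,\bar\gamma_2;\hat\theta_1,\hat\alpha_1) + A_{11}\lambda_0 \|b_{S_{\bar\gamma_2}^c}\|_1 \le (2A_1 - A_{11})\lambda_0 \|b_{S_{\bar\gamma_2}}\|_1 + (C_{12} M_0 \lambda_0^2)^{1/2} Q_2^{1/2}. \]
For the curvature, I would note that on $\Omega_0$ the deviation $\Delta = C_2(|\hat\theta_1-\theta^*| + C_0\|\hat\alpha_1-\bar\alpha_1\|_1)$ is at most $\varrho_5 = C_2(1+C_0)\varrho_0$ by Assumption~\ref{ass:gamma2-hess}(iv) (since $M_0 \ge 1$ gives $M_0^{1/2}\lambda_0 \le M_0\lambda_0 \le \varrho_0$), so Lemma~\ref{lem:local-quad} yields $Q_2 \le \me^{\varrho_5}\{C_0C_2\|b\|_1/(1-\me^{-C_0C_2\|b\|_1})\}\,D_2^\dag$, an inversion of the quadratic bound whose only nuisance is the $\|b\|_1$-dependent factor.

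The argument then splits on whether the cone condition $\|b_{S_{\bar\gamma_2}^c}\|_1 \le \mu_1 \|b_{S_{\bar\gamma_2}}\|_1$ holds. If it \emph{holds}, Lemma~\ref{lem:emp-compat} gives $\nu_{11}\|b_{S_{\bar\gamma_2}}\|_1 \le |S_{\bar\gamma_2}|^{1/2} Q_2^{1/2}$, and with $\|b\|_1 \le (1+\mu_1)\|b_{S_{\bar\gamma_2}}\|_1$ and $\mu_{12} = (\mu_1+1)A_{11}$ one gets $A_{11}\lambda_0\|b\|_1 \le \mu_{12}\nu_{11}^{-1}|S_{\bar\gamma_2}|^{1/2}\lambda_0 Q_2^{1/2}$; feeding the curvature bound $Q_2 \le \me^{\varrho_5}(\cdots)D_2^\dag$ into the quantity $D_2^\dag + A_{11}\lambda_0\|b\|_1$ and solving the resulting self-referential inequality produces the first branch of the maximum in (\ref{eq:gamma2-expan}), with the self-consistency closed by $\varrho_3<1$. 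If the cone condition \emph{fails}, I would substitute $\|b_{S_{\bar\gamma_2}}\|_1 < \mu_1^{-1}\|b_{S_{\bar\gamma_2}^c}\|_1$ into the recast inequality; the coefficient of the sparsity term is thereby reduced to leave a strictly positive multiple $A_{11}\mu_{11}$ of $\lambda_0\|b\|_1$ on the left, so that $D_2^\dag + A_{11}\mu_{11}\lambda_0\|b\|_1 \le (C_{12}M_0\lambda_0^2)^{1/2}Q_2^{1/2}$. This reduction is exactly where the hypothesis $A_1 > (B_0+C_{13})(\mu_1+1)/(\mu_1-1)$, i.e. $\mu_{11}>0$, is indispensable. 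Bounding $Q_2^{1/2}$ by curvature and back-substituting then gives the second branch, the $\mu_{11}^{-2}$ and $(1-\varrho_4)^{-1}$ factors tracking the reduced coefficient and the curvature correction. Taking the maximum over the two cases yields (\ref{eq:gamma2-expan}).

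I expect the main obstacle to be the self-referential curvature factor from Lemma~\ref{lem:local-quad}: the lower bound on $D_2^\dag$ carries $\{1-\me^{-C_0C_2\|b\|_1}\}/(C_0C_2\|b\|_1)$, which degenerates unless $\|b\|_1$ is already controlled, yet $\|b\|_1$ is the very quantity being bounded. The resolution is a bootstrapping step: the provisional bound obtained in each case, combined with the budgets $|S_{\bar\gamma_2}|\lambda_0 \le \varrho_1$ and $M_0\lambda_0 \le \varrho_0$ of Assumption~\ref{ass:gamma2-hess}(iv), shows that $C_0C_2\|b\|_1$ does not exceed $\varrho_3$ in the cone-holds case or $\varrho_4$ in the cone-fails case; since both lie in $(0,1)$, the factor $\{C_0C_2\|b\|_1\}/\{1-\me^{-C_0C_2\|b\|_1}\}$ is controlled by $(1-\varrho_3)^{-1}$ or $(1-\varrho_4)^{-1}$, matching (\ref{eq:gamma2-expan}). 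Verifying that the derived $\|b\|_1$ bound feeds back consistently into the curvature factor — rather than merely assuming it — is the delicate quantitative core of the proof.
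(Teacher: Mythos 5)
Your ingredients are the right ones (Lemmas~\ref{lem:basic-ineq-bar}, \ref{lem:emp-compat}, \ref{lem:local-quad}), and your resolution of the self-referential curvature factor is exactly the paper's: from the provisional bound and $A_{11}\lambda_0\|b\|_1 \le D_2^\dag + A_{11}\lambda_0\|b\|_1$, deduce $1-\me^{-C_0C_2\|b\|_1}\le \varrho_3$ (resp.\ $\varrho_4$) using $|S_{\bar\gamma_2}|\lambda_0\le\varrho_1$ and $M_0\lambda_0\le\varrho_0$, then control $\{C_0C_2\|b\|_1\}/\{1-\me^{-C_0C_2\|b\|_1}\}$ by $(1-\varrho_3)^{-1}$ or $(1-\varrho_4)^{-1}$ via $(1-\me^{-c})/c\ge\me^{-c}$. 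Your identification of $\mu_{11}>0$ with the hypothesis on $A_1$ is also correct. The gap is in the case decomposition. Writing $D_2^\ddag = D_2^\dag(\hat\gamma_2,\bar\gamma_2;\hat\theta_1,\hat\alpha_1)+A_{11}\lambda_0\|b\|_1$, the paper splits on which term on the right of (\ref{eq:basic-ineq-bar}) dominates: either $(C_{12}M_0\lambda_0^2)^{1/2}Q_2^{1/2}\ge\mu_{11}D_2^\ddag$, which yields the $M_0$ branch directly (with $\mu_{11}^{-2}$ arising from squaring), or else $(1-\mu_{11})D_2^\ddag\le 2A_1\lambda_0\sum_{j\in S_{\bar\gamma_2}}|b_j|$, i.e.\ $D_2^\ddag\le\mu_{12}\lambda_0\sum_{j\in S_{\bar\gamma_2}}|b_j|$, and the cone condition is then a \emph{consequence} of this together with $D_2^\ddag\ge A_{11}\lambda_0\|b\|_1$. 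That dichotomy is what makes each branch of the maximum appear in isolation.

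Under your split the argument does not close to the stated inequality. In your cone-holds case the term $(C_{12}M_0\lambda_0^2)^{1/2}Q_2^{1/2}$ is still present on the right-hand side (the cone condition is not mutually exclusive with that term dominating), so after compatibility and curvature you obtain at best a \emph{sum} of an $|S_{\bar\gamma_2}|\lambda_0^2$ term and an $M_0\lambda_0^2$ term, with a coefficient of order $A_1^2\nu_{11}^{-2}$ rather than $\mu_{12}^2\nu_{11}^{-2}$ on the sparsity part; this is strictly weaker than the first branch of the maximum, and the altered constants also break the bootstrap, since $\varrho_3$ and $\varrho_4$ in Assumption~\ref{ass:gamma2-hess}(iv) are calibrated to the paper's coefficients. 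Your cone-fails case is closer, but bounding $D_2^\dag+A_{11}\mu_{11}\lambda_0\|b\|_1$ and then inflating back to $D_2^\ddag$ produces a factor $1+\mu_{11}^{-1}$ rather than $\mu_{11}^{-2}$, which again does not match (\ref{eq:gamma2-expan}). To prove the lemma as written, replace the cone-based split by the dominance dichotomy above; the rest of your outline then goes through verbatim.
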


\begin{prf}
Denote  $b = \hat \gamma_2 - \bar \gamma_2$, $D^\dag_2 = D^\dag_2 ( \hat \gamma_2, \bar \gamma_2; \hat\theta_1 , \hat\alpha_1)$,
$D^\ddag_2 =  D^\dag_2( \hat \gamma_2, \bar \gamma_2; \hat\theta_1 , \hat\alpha_1)+ A_{11} \lambda_0 \| b \|_1 $,
$Q_2 = Q_2( \hat \gamma_2, \bar \gamma_2; \hat\theta_1 , \hat\alpha_1) =b^\T \tilde \Sigma_\gamma b$.
In the event $\Omega_0\cap\Omega_1\cap\Omega_2$, inequality (\ref{eq:basic-ineq-bar}) from Lemma~\ref{lem:basic-ineq-bar} leads to two possible cases:
\begin{align}
\mu_{11} D^\ddag_2 \le (C_{12} M_0 \lambda_0^2)^{1/2} Q_2^{1/2}, \label{eq:invert-case1}
\end{align}
or $ (1-\mu_{11}) D^\ddag_2 \le  2 A_1 \lambda_0 \sum_{j \in S_{\bar\gamma_2}} |b_j|$, that is,
\begin{align}
D^\ddag_2 \le (\mu_1 +1) A_{11} \lambda_0 \sum_{j \in S_{\bar\gamma_2}} |b_j| = \mu_{12} \lambda_0 \sum_{j \in S_{\bar\gamma_2}} |b_j|, \label{eq:invert-case2}
\end{align}
where $\mu_{11} = 1- 2 A_1 / \{(\mu_1+1) A_{11}\} \in (0,1]$ because $A_1 > (B_0 +C_{13}) (\mu_1+1) / (\mu_1-1)$,
and $\mu_{12} = (\mu_1 +1) A_{11}$. We deal with the two cases separately as follows.

In the case where (\ref{eq:invert-case2}) holds,  $\sum_{j \not\in S_{\bar\gamma_2}} |b_j| \le \mu_1 \sum_{j \in S_{\bar\gamma_2}} |b_j|$. Then
by Lemma \ref{lem:emp-compat},
\begin{align}
\sum_{j \in S_{\bar\gamma_2}} |b_j| \le \nu_{11}^{-1} |S_{\bar\gamma_2}|^{1/2}  \left(b^\T \tilde \Sigma_\gamma b \right)^{1/2}. \label{eq:emp-compat}
\end{align}
By  Lemma \ref{lem:local-quad}, we have
\begin{align}
& D^\dag_2 \ge  \me^{-\Delta} \frac{1- \me^{-C_0C_2 \|b\|_1}}{C_0C_2\|b\|_1} \left(b^\T \tilde \Sigma_\gamma b \right)  , \label{eq:local-quad}
\end{align}
where $\Delta = C_2(|\hat\theta_1 -\theta^*|+ C_0 \|\hat\alpha_1 -\bar\alpha_1\|_1)$.
Combining (\ref{eq:invert-case1}), (\ref{eq:emp-compat}), and (\ref{eq:local-quad}) yields
\begin{align}
D^\ddag_2 \le  \mu_{12}^2 \nu_{11}^{-2} |S_{\bar\gamma_2}| \lambda_0^2 \me^{\Delta} \frac{C_0C_2\|b\|_1}{1- \me^{-C_0C_2 \|b\|_1}} . \label{eq:case2-prf}
\end{align}
But $A_{11}\lambda_0 \|b\|_1 \le D^\ddag_2$. Then (\ref{eq:case2-prf}) along with $|\Delta| \le C_2(1+C_0)\varrho_0 $ implies that
$1 - \me^{-C_0C_2 \|b\|_1} \le C_0 C_2 A_{11}^{-1} \mu_{12}^2 \nu_{11}^{-2} |S_{\bar\gamma_2}| \lambda_0 \me^{C_2(1+C_0)\varrho_0 }  \le \varrho_3 \, (<1)$ by Assumption~\ref{ass:gamma2-hess}(iv).
As a result, $C_0C_2 \|b \|_1 \le -\log ( 1-\varrho_3)$ and hence
\begin{align}
 \frac{1- \me^{-C_0C_2 \|b\|_1}}{C_0C_2\|b\|_1} = \int_0^1 \me^{-C_0C_2 \|b\|_1 u}\,\dif u \ge \me^{-C_0 C_2 \|b\|_1} \ge 1- \varrho_3. \label{eq:case2-prf2}
\end{align}
From this bound, (\ref{eq:case2-prf}) leads to $D_2^\ddag \le  \me^{C_2(1+C_0)\varrho_0 }(1-\varrho_3)^{-1} \mu_{12}^2 \nu_{11}^{-2} |S_{\bar\gamma_2}| \lambda_0^2 $.

In the first case where (\ref{eq:invert-case1}) holds, simple manipulation using (\ref{eq:local-quad}) yields
\begin{align}
D^\ddag_2 \le  \mu_{11}^{-2} (C_{12} M_0\lambda_0^2)  \me^{\Delta} \frac{ C_0C_2\|b\|_1} {1- \me^{ - C_0C_2 \|b\|_1}} . \label{eq:case1-prf}
\end{align}
Similarly as above, using $A_{11}\lambda_0 \|b\|_1 \le D^\ddag_2$ and (\ref{eq:case1-prf}) along with $|\Delta| \le C_2(1+C_0)\varrho_0 $, we find
$1 - \me^{-C_0C_2 \|b\|_1} \le C_0 C_2 A_{11}^{-1} \mu_{11}^{-2} C_{12} M_0 \lambda_0 \me^{C_2(1+C_0)\varrho_0 }  \le \varrho_4\,(<1)$ by Assumption~\ref{ass:gamma2-hess}(iv).
As a result, $C_0C_2 \|b \|_1 \le -\log ( 1-\varrho_4)$ and hence
\begin{align}
 \frac{1- \me^{-C_0C_2 \|b\|_1}}{C_0C_2\|b\|_1} = \int_0^1 \me^{-C_0C_2 \|b\|_1 u}\,\dif u \ge \me^{-C_0 C_2 \|b\|_1} \ge 1- \varrho_4,  \label{eq:case1-prf2}
\end{align}
From this bound, (\ref{eq:case1-prf}) leads to $D_2^\ddag \le  \me^{C_2(1+C_0)\varrho_0 }(1-\varrho_4)^{-1} \mu_{11}^{-2} C_{12} M_0 \lambda_0^2 $.
Therefore, (\ref{eq:gamma2-expan}) holds through (\ref{eq:invert-case1}) and (\ref{eq:invert-case2}) in the event $\Omega_0\cap\Omega_1\cap\Omega_2$.
\end{prf}

\vspace{.1in}
{\noindent \textbf{Proof of Corollary~\ref{cor:gamma2}.}
Return to the proof of Lemma~\ref{lem:invert}, where (\ref{eq:invert-case1}) or (\ref{eq:invert-case2}) holds in the event $\Omega_0\cap\Omega_1\cap\Omega_2$.
If (\ref{eq:invert-case2}) holds, then we have, by (\ref{eq:local-quad}) and (\ref{eq:case2-prf2}),
$ b^\T \tilde \Sigma_\gamma b \le \me^{ \varrho_5 }(1-\varrho_3)^{-1} D_2^\dag$.
If (\ref{eq:invert-case1}) holds, then we have, by (\ref{eq:local-quad}) and (\ref{eq:case1-prf2}),
$ b^\T \tilde \Sigma_\gamma b \le \me^{ \varrho_5 }(1-\varrho_4)^{-1} D_2^\dag$.
Hence
\begin{align*}
b^\T \tilde \Sigma_\gamma b \le \me^{ \varrho_5 }(1-\varrho_3 \vee \varrho_4)^{-1} D_2^\dag \le  \me^{ \varrho_5 }(1-\varrho_3 \vee \varrho_4)^{-1} C_3 ( |S_{\bar\gamma_2}| \vee M_0) \lambda_0^2 .
\end{align*}
Moreover, by (\ref{eq:gamma2-expan}), we have $\|b\|_1 \le A_{11}^{-1} C_3 ( |S_{\bar\gamma_2}| \vee M_0) \lambda_0$
and hence $ \lambda_0 \|b\|_1^2 \le A_{11}^{-2} C_3^2 (\varrho_0 \vee \varrho_1) ( |S_{\bar\gamma_2}| \vee M_0)\lambda_0^2 $, because
$ ( |S_{\bar\gamma_2}| \vee M_0) \lambda_0 \le  \varrho_0 \vee \varrho_1$ by Assumption \ref{ass:gamma2-hess}(iv).
Then (\ref{eq:gamma2-expan2}) follows from the third inequality in Lemma~\ref{lem:prob-hess2}.
\hfill $\Box$

\vspace{.1in}
{\noindent \textbf{Proof of Theorem~\ref{thm:alpha2}.}
The proof follows from similar steps as in that of Theorem~\ref{thm:gamma2}. The probability decreases from $1-(c_0+10)\epsilon$ to $1-(c_0+18) \epsilon$,
due to additional restriction to the events similar to $\Omega_1$, $\Omega_{21}$, $\Omega_{22}$, and $\Omega_{23}$, while $\Omega_{24}$ is unchanged.
\hfill $\Box$

\subsection{Proof of Theorem~\ref{thm:theta2}}

We split the proof into three lemmas. The first one shows the consistency of $\hat\theta_2$ for $\theta^*$.

\begin{lem} \label{lem:theta2-consistency}
In the setting of Theorem~\ref{thm:alpha2}, suppose that Assumption \ref{ass:theta2-consistency} holds and  $M_2 r_0  =o(1)$.
Then $\hat\theta_2$ is consistent for $\theta^*$, i.e., $|\hat \theta_2 - \theta^*| = o_p(1)$.
\end{lem}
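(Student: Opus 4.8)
The plan is to treat $\hat\theta_2$ as a $Z$-estimator (the solution of $\tilde E\{\tau(U;\theta,\hat\alpha_2,\hat\gamma_2)\}=0$) and run the standard argmin-type consistency argument, the only nonstandard feature being the data-dependence of the plugged-in nuisance estimators $(\hat\alpha_2,\hat\gamma_2)$. Concretely, I would establish that uniformly over $\theta\in\Theta$,
\begin{align*}
\tilde E\{\tau(U;\theta,\hat\alpha_2,\hat\gamma_2)\} = E\{\tau(U;\theta,\bar\alpha_2,\bar\gamma_2)\} + o_p(1),
\end{align*}
and then invoke the well-separated zero of the population equation (Assumption~\ref{ass:theta2-consistency}(i)). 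Since $\hat\theta_2$ makes the left side vanish, this yields $|E\{\tau(U;\hat\theta_2,\bar\alpha_2,\bar\gamma_2)\}|=o_p(1)$, which together with $\inf_{|\theta-\theta^*|\ge\delta}|E\{\tau(U;\theta,\bar\alpha_2,\bar\gamma_2)\}|>0$ forces $|\hat\theta_2-\theta^*|=o_p(1)$.

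For the uniform expansion I would split the difference into a nuisance-deviation term and an empirical-process term:
\begin{align*}
\tilde E\{\tau(U;\theta,\hat\alpha_2,\hat\gamma_2)\} - E\{\tau(U;\theta,\bar\alpha_2,\bar\gamma_2)\} &= \underbrace{\tilde E\{\tau(U;\theta,\hat\alpha_2,\hat\gamma_2)-\tau(U;\theta,\bar\alpha_2,\bar\gamma_2)\}}_{=:\,R_1(\theta)} \\
&\quad + \underbrace{(\tilde E - E)\{\tau(U;\theta,\bar\alpha_2,\bar\gamma_2)\}}_{=:\,R_2(\theta)}.
\end{align*}
The bound $\sup_\theta|R_2(\theta)|=o_p(1)$ is a uniform law of large numbers over the scalar index $\theta$, justified by the integrable envelope in Assumption~\ref{ass:theta2-consistency}(ii) together with continuity of $\theta\mapsto\tau$.

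For $R_1$ I would first note that by Theorem~\ref{thm:alpha2} and the in-probability bounds stated before Theorem~\ref{thm:theta2}, both $\|\hat\alpha_2-\bar\alpha_2\|_1$ and $\|\hat\gamma_2-\bar\gamma_2\|_1$ are $O_p(M_2 r_0)=o_p(1)$, so with probability tending to one $(\hat\alpha_2,\hat\gamma_2)\in\mathcal N_2$ and, by convexity of $\mathcal N_2$, the entire segment joining $(\bar\alpha_2,\bar\gamma_2)$ to $(\hat\alpha_2,\hat\gamma_2)$ lies in $\mathcal N_2$, so the envelopes $T^{(2)}_{\eta_g},T^{(2)}_{\eta_f}$ of Assumption~\ref{ass:theta2-consistency}(iii) apply. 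A mean-value expansion in the linear predictors (using $\partial\tau/\partial\alpha=(\partial\tau/\partial\eta_g)\xi$ and $\partial\tau/\partial\gamma=(\partial\tau/\partial\eta_f)\xi$) then gives, uniformly in $\theta$,
\begin{align*}
|R_1(\theta)| \le \tilde E\{T^{(2)}_{\eta_g}\,|(\hat\alpha_2-\bar\alpha_2)^\T\xi|\} + \tilde E\{T^{(2)}_{\eta_f}\,|(\hat\gamma_2-\bar\gamma_2)^\T\xi|\}.
\end{align*}
Applying Cauchy--Schwarz, the first summand is bounded by $\tilde E^{1/2}\{(T^{(2)}_{\eta_g})^2\}\,\{(\hat\alpha_2-\bar\alpha_2)^\T\tilde\Sigma_0(\hat\alpha_2-\bar\alpha_2)\}^{1/2}$; the first factor is $O_p(1)$ by the law of large numbers and Assumption~\ref{ass:theta2-consistency}(iii), while the prediction-norm factor is $O_p(M_2^{1/2}r_0)=o_p(1)$, since $M_2\ge1$ gives $M_2^{1/2}r_0\le M_2 r_0=o(1)$. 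The $\hat\gamma_2$ summand is treated identically, so $\sup_\theta|R_1(\theta)|=o_p(1)$.

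I expect the main obstacle to be making the control of $R_1$ uniform in $\theta$ rather than merely pointwise: the nuisance estimators are random and coupled to the data, so one cannot simply fix $\theta$ and pass to a limit. The resolution is that the envelopes $T^{(2)}_{\eta_g},T^{(2)}_{\eta_f}$ in Assumption~\ref{ass:theta2-consistency}(iii) already take the supremum over $\theta\in\Theta$, which decouples the $\theta$-dependence from the nuisance-deviation bound and lets the prediction-norm rate alone drive $R_1$ to zero. With $R_1$ and $R_2$ both negligible uniformly, the separated-zero argument of the first paragraph completes the proof.
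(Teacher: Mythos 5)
Your proposal is correct and follows essentially the same route as the paper's proof: reduce to showing $\tilde E\{\tau(U;\hat\theta_2,\bar\alpha_2,\bar\gamma_2)\}=o_p(1)$ via the well-separated zero in Assumption~\ref{ass:theta2-consistency}(i)--(ii), then control the nuisance-deviation term by a mean-value expansion in the linear predictors, Cauchy--Schwarz with the envelopes $T^{(2)}_{\eta_g},T^{(2)}_{\eta_f}$ of Assumption~\ref{ass:theta2-consistency}(iii), and the prediction-norm rate $O_p(M_2^{1/2}r_0)=o_p(1)$ from Theorem~\ref{thm:alpha2}. The only cosmetic difference is that you make the uniformity in $\theta$ explicit where the paper evaluates the decomposition at $\hat\theta_2$ and defers the uniformity to the cited standard consistency argument.
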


\begin{prf}
By Theorem~\ref{thm:alpha2} and $M_2 r_0 =o(1)$,  we have
$\|\hat\alpha_2-\bar\alpha_2\|_1 = o_p(1)$ and $\|\hat\gamma_2 - \bar\gamma_2\|_1 = o_p(1)$.
Hence for any small $\epsilon>0$, $(\hat\alpha_2,\hat\gamma_2)\in\mathcal N_2$ with probability at least $1-\epsilon$ for all sufficiently large $n$.
In the following, we restrict analysis within this event.

To show $|\hat \theta_2 - \theta^*| = o_p(1)$, by standard consistency arguments (e.g., \citealt{van2000asymptotic}) using Assumption~\ref{ass:theta2-consistency}(i)--(ii),
it suffices to show that
$\tilde E  \{ \tau(U; \hat\theta_2, \bar\alpha_2, \bar\gamma_2 )\}= o_p(1)$. Because
$\tilde E  \{ \tau(U; \hat\theta_2, \hat\alpha_2, \hat\gamma_2 ) \} =0$ by definition of $\hat\theta_2$, consider the decomposition
\begin{align*}
&  \tilde E \left\{ \tau(U; \hat\theta_2, \hat\alpha_2, \hat\gamma_2 ) \right\}  - \tilde E \left\{ \tau(U; \hat\theta_2, \bar\alpha_2, \bar\gamma_2 ) \right\} \\
& = - \tilde E \left\{ \tau(U; \hat\theta_2, \bar\alpha_2, \bar\gamma_2 ) \right\}  = \Delta_1 + \Delta_2 ,
\end{align*}
where
\begin{align*}
& \Delta_1 = \tilde E \left\{ \tau(U; \hat\theta_2, \hat\alpha_2, \hat\gamma_2 ) \right\} - \tilde E \left\{ \tau(U; \hat\theta_2, \bar\alpha_2, \hat\gamma_2 ) \right\} , \\
& \Delta_2 = \tilde E \left\{ \tau(U; \hat\theta_2, \bar\alpha_2, \hat\gamma_2 ) \right\} - \tilde E \left\{ \tau(U; \hat\theta_2, \bar\alpha_2, \bar\gamma_2 ) \right\} .
\end{align*}
By the mean value theorem, the Cauchy--Schwartz inequality and Assumption \ref{ass:theta2-consistency} (iii),
\begin{align*}
& |\Delta_1 |= \left| (\hat\alpha_2-\bar\alpha_2)^\T  \tilde E \left\{ \xi \frac{\partial\tau}{\partial \alpha} (U; \hat\theta_2, \tilde\alpha, \hat\gamma_2 ) \right\} \right| \\
& \le \tilde E^{1/2} \left\{ (\hat\alpha_2^\T \xi -\bar\alpha_2^\T \xi)^2  \right\} \tilde E^{1/2} \{  T^{(2)2}_{\eta_g} (U; \bar\alpha_2,\bar\gamma_2) \} = O_p( M_2^{1/2} r_0), \\
& |\Delta_2 |= \left| (\hat\gamma_2-\bar\gamma_2)^\T  \tilde E \left\{ \xi \frac{\partial\tau}{\partial \gamma} (U; \hat\theta_2, \bar\alpha_2, \tilde\gamma ) \right\} \right| \\
& \le \tilde E^{1/2} \left\{ (\hat\gamma_2^\T \xi -\bar\gamma_2^\T \xi)^2  \right\} \tilde E^{1/2} \{  T^{2(2)}_{\eta_f} (U; \bar\alpha_2,\bar\gamma_2) \}= O_p( M_2^{1/2} r_0),
\end{align*}
where $\tilde\alpha$ lies between $\hat\alpha_2$ and $\bar\alpha_2$, and $\tilde\gamma$ lies between $\hat\gamma_2$ and $\bar\gamma_2$.
Hence $|\Delta_1| + |\Delta_2| = o_p(1)$ because  $M_2^{1/2} r_0 \le M_2 r_0 =o(1)$ with $M_2 \ge M_0 \ge 1$.
\end{prf}

The following lemma establishes the asymptotic expansion (\ref{eq:theta2-expan}) for $\hat\theta_2$.

\begin{lem} \label{lem:theta2-expan}
In the setting of Theorem~\ref{thm:alpha2}, suppose that Assumption \ref{ass:theta2-consistency} and \ref{ass:theta2-rate}(ii)--(iv) hold and  $M_2 r_0 =o(1)$.
Then $\hat\theta_2$ admits the asymptotic expansion (\ref{eq:theta2-expan}),
\begin{align*}
\hat\theta_2 - \theta^*
= - H^{-1} \tilde E \left\{ \tau(U; \theta^*, \bar\alpha_2, \bar\gamma_2 ) \right\} + O_p( M_2 r_0^2 ),
\end{align*}
where $ H = E \{ \frac{\partial \tau}{\partial\theta}(U; \theta^*, \bar\alpha_2, \bar\gamma_2 ) \}$.
\end{lem}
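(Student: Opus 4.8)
The plan is to start from the defining equation $0 = \tilde E\{\tau(U;\hat\theta_2,\hat\alpha_2,\hat\gamma_2)\}$ and expand it, first in $\theta$ and then in the nuisance arguments. By Lemma~\ref{lem:theta2-consistency} and Theorem~\ref{thm:alpha2}, on an event of probability at least $1-\epsilon$ we have $(\hat\theta_2,\hat\alpha_2,\hat\gamma_2)\in\mathcal N_3$ with $|\hat\theta_2-\theta^*|=o_p(1)$, $(\hat\alpha_2-\bar\alpha_2)^\T\tilde\Sigma_0(\hat\alpha_2-\bar\alpha_2)=O_p(M_2 r_0^2)$, $\|\hat\alpha_2-\bar\alpha_2\|_1=O_p(M_2 r_0)$, and the analogous bounds for $\hat\gamma_2$. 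A mean-value expansion in $\theta$ alone, holding $(\hat\alpha_2,\hat\gamma_2)$ fixed, gives $\hat\theta_2-\theta^* = -\hat H^{-1}\,\tilde E\{\tau(U;\theta^*,\hat\alpha_2,\hat\gamma_2)\}$, where $\hat H = \tilde E\{ (\partial\tau/\partial\theta)(U;\tilde\theta,\hat\alpha_2,\hat\gamma_2)\}$ for some $\tilde\theta$ between $\hat\theta_2$ and $\theta^*$. Assumption~\ref{ass:theta2-rate}(ii) (an integrable envelope for $\partial\tau/\partial\theta$ over $\mathcal N_3$) together with the consistency just noted yields $\hat H = H + o_p(1)$ with $H\neq 0$, so $\hat H^{-1}=H^{-1}+o_p(1)$.

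The core step is to expand the numerator $\tilde E\{\tau(U;\theta^*,\hat\alpha_2,\hat\gamma_2)\}$ to second order in $(\alpha,\gamma)$ about $(\bar\alpha_2,\bar\gamma_2)$, keeping $\theta=\theta^*$ fixed. The constant term is $\tilde E\{\tau(U;\theta^*,\bar\alpha_2,\bar\gamma_2)\}$, the desired influence function. The linear term is $(\hat\alpha_2-\bar\alpha_2)^\T\tilde E\{\xi\,(\partial\tau/\partial\eta_g)(U;\theta^*,\bar\alpha_2,\bar\gamma_2)\} + (\hat\gamma_2-\bar\gamma_2)^\T\tilde E\{\xi\,(\partial\tau/\partial\eta_f)(U;\theta^*,\bar\alpha_2,\bar\gamma_2)\}$. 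The decisive observation is that the corresponding population vectors $E\{\xi\,(\partial\tau/\partial\eta_g)(U;\theta^*,\bar\alpha_2,\bar\gamma_2)\}=E\{(\partial\tau/\partial\alpha)(U;\theta^*,\bar\alpha_2,\bar\gamma_2)\}$ and $E\{\xi\,(\partial\tau/\partial\eta_f)(U;\theta^*,\bar\alpha_2,\bar\gamma_2)\}=E\{(\partial\tau/\partial\gamma)(U;\theta^*,\bar\alpha_2,\bar\gamma_2)\}$ both vanish, since by Proposition~\ref{pro:two-step} the target values $(\bar\alpha_2,\bar\gamma_2)$ satisfy the calibration equations~(\ref{eq:CAL1})--(\ref{eq:CAL2}) whenever model~(\ref{eq:g-model}) or~(\ref{eq:f-model}) is correctly specified. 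This is exactly where double robustness enters and collapses the otherwise dominant linear term.

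Having removed the population parts, each linear term reduces to $(\hat\alpha_2-\bar\alpha_2)^\T(\tilde E-E)\{\xi\,(\partial\tau/\partial\eta_g)(U;\theta^*,\bar\alpha_2,\bar\gamma_2)\}$ and its $\gamma$-analogue, which I bound by $\|\hat\alpha_2-\bar\alpha_2\|_1$ times the $L_\infty$ norm of the centered gradient. Under Assumption~\ref{ass:gamma2-basic}(i) and the sub-exponentiality in Assumption~\ref{ass:theta2-rate}(iii), a Bernstein-plus-union-bound argument as in Lemma~\ref{lem:prob-grad} shows that this $L_\infty$ norm is $O_p(r_0)$, so each linear term is $O_p(M_2 r_0)\cdot O_p(r_0)=O_p(M_2 r_0^2)$. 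For the quadratic remainder, a Hessian mean-value expansion produces quadratic forms such as $\tilde E\{(\partial^2\tau/\partial\eta_g^2)\,((\hat\alpha_2-\bar\alpha_2)^\T\xi)^2\}$ and the $\eta_g\eta_f$ cross form evaluated at an intermediate point; bounding the intermediate second derivatives by their neighborhood suprema $T^{(2)}_{\eta_g^2}$, $T^{(2)}_{\eta_f^2}$, $T^{(2)}_{\eta_g\eta_f}$ and invoking the empirical-Hessian control of Assumption~\ref{ass:theta2-rate}(iv) (the analogue of Lemma~\ref{lem:prob-hess2}) yields bounds of the form $C_4\,(\hat\alpha_2-\bar\alpha_2)^\T\tilde\Sigma_0(\hat\alpha_2-\bar\alpha_2) + O_p(r_0\|\hat\alpha_2-\bar\alpha_2\|_1^2) = O_p(M_2 r_0^2)+O_p(M_2^2 r_0^3)$. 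Since $M_2 r_0=o(1)$ the last term is $o(M_2 r_0^2)$, and the cross form is handled identically via Cauchy--Schwarz, so the whole quadratic remainder is $O_p(M_2 r_0^2)$. Hence $\tilde E\{\tau(U;\theta^*,\hat\alpha_2,\hat\gamma_2)\}=\tilde E\{\tau(U;\theta^*,\bar\alpha_2,\bar\gamma_2)\}+O_p(M_2 r_0^2)$.

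Finally I combine the pieces. Because $E\{\tau(U;\theta^*,\bar\alpha_2,\bar\gamma_2)\}=0$ by Assumption~\ref{ass:theta2-consistency}(i), the quantity $\tilde E\{\tau(U;\theta^*,\bar\alpha_2,\bar\gamma_2)\}$ is a centered average of order $O_p(n^{-1/2})$. Writing $-\hat H^{-1}$ times the numerator as $-H^{-1}\tilde E\{\tau(U;\theta^*,\bar\alpha_2,\bar\gamma_2)\} + O_p(M_2 r_0^2) + (H^{-1}-\hat H^{-1})\times(\text{numerator})$ and using $r_0\ge n^{-1/2}$ together with $M_2\ge 1$, the cross term is negligible relative to $O_p(M_2 r_0^2)$ once the rate of $\hat H-H$ is sharpened to $O_p(M_2 r_0)$ by the same empirical-process and Hessian bounds used for the numerator; this delivers~(\ref{eq:theta2-expan}). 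The main obstacle is the linear-term analysis: one must recognize that Proposition~\ref{pro:two-step} forces the population gradient to vanish, thereby turning a naive $O_p(M_2^{1/2}r_0)$ linear contribution into the required $O_p(M_2 r_0^2)$, and then carry the bookkeeping so that every surviving piece -- the empirical fluctuation of the gradient, the quadratic Hessian terms, and the denominator discrepancy -- is confirmed to be $O_p(M_2 r_0^2)$ under the single rate condition $M_2 r_0=o(1)$.
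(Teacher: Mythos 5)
Your proposal is correct and follows essentially the same route as the paper's proof: decompose the estimating equation into a $\theta$-part and an $(\alpha,\gamma)$-part, invoke Proposition~\ref{pro:two-step} so that the calibration equations (\ref{eq:CAL1})--(\ref{eq:CAL2}) annihilate the population linear term, bound the surviving empirical gradient fluctuation by a Bernstein/union-bound argument as in Lemma~\ref{lem:prob-grad}, control the quadratic Hessian terms via Assumption~\ref{ass:theta2-rate}(iv) as in Lemma~\ref{lem:prob-hess2}, and establish $\hat H = H + o_p(1)$ by the uniform law of large numbers. The only cosmetic difference is that you solve for $\hat\theta_2-\theta^*$ explicitly via $-\hat H^{-1}$ times the numerator, whereas the paper writes $\Delta_3=(\hat\theta_2-\theta^*)\{H+o_p(1)\}$ and rearranges at the end; the content is the same.
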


\begin{prf}
By Theorem~\ref{thm:alpha2}, Lemma~\ref{lem:theta2-consistency}, and $M_2 r_0 =o(1)$,  we have
$\|\hat\alpha_2-\bar\alpha_2\|_1 = o_p(1)$, $\|\hat\gamma_2 - \bar\gamma_2\|_1 = o_p(1)$, and $|\hat\theta_2 - \theta^*| = o_p(1)$.
Hence for any small $\epsilon>0$, $(\hat\theta_2, \hat\alpha_2,\hat\gamma_2)\in\mathcal N_3$ with probability at least $1-\epsilon$ for all sufficiently large $n$.
In the following, we restrict analysis within this event.
Consider the decomposition
\begin{align}
&  \tilde E \left\{ \tau(U; \hat\theta_2, \hat\alpha_2, \hat\gamma_2 ) \right\}  - \tilde E \left\{ \tau(U; \theta^*, \bar\alpha_2, \bar\gamma_2 ) \right\} \nonumber \\
& = - \tilde E \left\{ \tau(U; \theta^*, \bar\alpha_2, \bar\gamma_2 ) \right\}  = \Delta_3 + \Delta_4, \label{eq:expansion-prf}
\end{align}
where
\begin{align*}
& \Delta_3 = \tilde E \left\{ \tau(U; \hat\theta_2, \hat\alpha_2, \hat\gamma_2 ) \right\}  - \tilde E \left\{ \tau(U; \theta^*, \hat\alpha_2, \hat\gamma_2 ) \right\} ,\\
& \Delta_4 = \tilde E \left\{ \tau(U; \theta^*, \hat\alpha_2, \hat\gamma_2 ) \right\}  - \tilde E \left\{ \tau(U; \theta^*, \bar\alpha_2, \bar\gamma_2 ) \right\}.
\end{align*}
We deal with the two terms $\Delta_3$ and $\Delta_4$ respectively.

By a Taylor expansion, $\Delta_4 = \Delta_{41} + \Delta_{42}$ with
\begin{align*}
\Delta_{41} & = (\hat\alpha_2 - \bar\alpha_2)^\T \tilde E \left\{ \xi \frac{\partial\tau}{\partial\eta_g} (U; \theta^*, \bar\alpha_2, \bar\gamma_2 ) \right\}
+ (\hat\gamma_2 - \bar\gamma_2)^\T \tilde E \left\{ \xi \frac{\partial\tau}{\partial\eta_f} (U; \theta^*, \bar\alpha_2, \bar\gamma_2 ) \right\}  \\
\Delta_{42} &= \frac{1}{2} (\hat\alpha_2 - \bar\alpha_2)^\T \tilde E \left\{ \xi \frac{\partial^2 \tau}{\partial\eta_g^2} (U; \theta^*, \tilde\alpha, \tilde\gamma ) \xi^\T \right\} (\hat\alpha_2 - \bar\alpha_2) \\
& \quad + \frac{1}{2} (\hat\gamma_2 - \bar\gamma_2)^\T \tilde E \left\{ \xi \frac{\partial^2 \tau}{\partial\eta_f^2} (U; \theta^*, \tilde\alpha, \tilde\gamma ) \xi^\T \right\} (\hat\gamma_2 - \bar\gamma_2) \\
& \quad +  (\hat\alpha_2 - \bar\alpha_2)^\T \tilde E \left\{ \xi \frac{\partial^2 \tau}{\partial\eta_g \partial \eta_f} (U; \theta^*, \tilde\alpha, \tilde\gamma ) \xi^\T \right\} (\hat\gamma_2 - \bar\gamma_2)
\end{align*}
where $(\tilde\alpha,\tilde\gamma)$ lie between $(\hat\alpha_2,\hat\gamma_2)$ and $(\bar\alpha_2,\bar\gamma_2)$.
As model (\ref{eq:g-model}) or (\ref{eq:f-model}) is correctly specified,  Proposition~\ref{pro:two-step} implies that
calibration equations (\ref{eq:CAL1})--(\ref{eq:CAL2}) are satisfied by $(\alpha,\gamma) = (\bar\alpha_2,\bar\gamma_2)$,
that is, the variables $\xi_j \frac{\partial\tau}{\partial\eta_g} (U; \theta^*, \bar\alpha_2, \bar\gamma_2 )$
and $\xi_j \frac{\partial\tau}{\partial\eta_g} (U; \theta^*, \bar\alpha_2, \bar\gamma_2 )$ have mean 0 for $j=1,\ldots,p$.
By Assumption~\ref{ass:theta2-rate}(iii) and similar reasoning as in Lemma~\ref{lem:prob-grad}, we have
\begin{align*}
\sup_j | \tilde E  \{ \xi_j \frac{\partial\tau}{\partial\eta_g} (U; \theta^*, \bar\alpha_2, \bar\gamma_2 ) \} | = O_p (r_0), \quad
\sup_j | \tilde E  \{ \xi_j \frac{\partial\tau}{\partial\eta_f} (U; \theta^*, \bar\alpha_2, \bar\gamma_2 ) \} | = O_p (r_0) .
\end{align*}
By Theorem~\ref{thm:alpha2}, $ \| \hat\alpha_2 - \bar\alpha_2\|_1 = O_p( M_2 r_0)$ and $\| \hat\gamma_2 - \bar\gamma_2\|_1  = O_p( M_2 r_0)$. Hence
\begin{align*}
& \left| (\hat\alpha_2 - \bar\alpha_2)^\T \tilde E \left\{ \xi \frac{\partial\tau}{\partial\eta_g} (U; \theta^*, \bar\alpha_2, \bar\gamma_2 ) \right\}  \right|
= O_p(r_0) \| \hat\alpha_2 - \bar\alpha_2\|_1 = O_p( M_2 r_0^2), \\
& \left| (\hat\gamma_2 - \bar\gamma_2)^\T \tilde E \left\{ \xi \frac{\partial\tau}{\partial\eta_f} (U; \theta^*, \bar\alpha_2, \bar\gamma_2 ) \right\}  \right|
= O_p(r_0) \| \hat\gamma_2 - \bar\gamma_2\|_1  = O_p( M_2 r_0^2),
\end{align*}
and $|\Delta_{41}| =  O_p( M_2 r_0^2)$.
Moreover, by Assumption~\ref{ass:theta2-rate}(iv) and similar reasoning as in Lemma~\ref{lem:prob-hess2}, we have
\begin{align*}
& \left| (\hat\alpha_2 - \bar\alpha_2)^\T \tilde E \left\{ \xi \frac{\partial^2 \tau}{\partial\eta_g^2} (U; \theta^*, \tilde\alpha_2, \tilde\gamma_2 ) \xi^\T \right\} (\hat\alpha_2 - \bar\alpha_2) \right| \\
& \le (\hat\alpha_2 - \bar\alpha_2)^\T \tilde E \left\{ \xi T^{(2)}_{\eta_g^2} (U; \theta^*, \tilde\alpha_2, \tilde\gamma_2 ) \xi^\T \right\} (\hat\alpha_2 - \bar\alpha_2) \\
& \le C_4 \tilde E  \left\{ (\hat\alpha_2^\T \xi -\bar\alpha_2^\T \xi)^2  \right\} + (1+C_4) O_p(r_0) \| \hat\alpha_2 - \bar\alpha_2 \|_1^2 = O_p(M_2 r_0^2),
\end{align*}
where $\tilde E \{ (\hat\alpha_2^\T \xi -\bar\alpha_2^\T \xi)^2  \} = O_p( M_2 r_0^2)$ and $O_p(r_0) O_p( M_2^2 r_0^2 ) = o_p(M_2 r_0^2)$ because $M_2 r_0 = o(1)$.
Similarly, we have
\begin{align*}
& \left| (\hat\gamma_2 - \bar\gamma_2)^\T \tilde E \left\{ \xi \frac{\partial^2 \tau}{\partial\eta_f^2} (U; \theta^*, \tilde\alpha, \tilde\gamma ) \xi^\T \right\} (\hat\gamma_2 - \bar\gamma_2) \right|
= O_p(M_2 r_0^2).
\end{align*}
and by the Cauchy--Schwartz inequality,
\begin{align*}
& \left|  (\hat\alpha_2 - \bar\alpha_2)^\T \tilde E \left\{ \xi \frac{\partial^2 \tau}{\partial\eta_g \partial \eta_f} (U; \theta^*, \tilde\alpha, \tilde\gamma ) \xi^\T \right\} (\hat\gamma_2 - \bar\gamma_2) \right| \\
& \le \left[(\hat\alpha_2 - \bar\alpha_2)^\T \tilde E \left\{ \xi T^{(2)}_{\eta_g \eta_f} (U; \theta^*, \tilde\alpha_2, \tilde\gamma_2 ) \xi^\T \right\} (\hat\alpha_2 - \bar\alpha_2)\right]^{1/2} \\
& \quad \times \left[(\hat\gamma_2 - \bar\gamma_2)^\T \tilde E \left\{ \xi T^{(2)}_{\eta_g \eta_f} (U; \theta^*, \tilde\gamma_2, \tilde\gamma_2 ) \xi^\T \right\} (\hat\gamma_2 - \bar\gamma_2)\right]^{1/2}
= O_p(M_2 r_0^2).
\end{align*}
Hence $|\Delta_{42} | = O_p( M_2 r_0^2)$ and $|\Delta_4| = O_p(M_2 r_0^2)$.

Next, by the mean value theorem, we have
\begin{align*}
\Delta_3 =  (\hat\theta_2 - \theta^*) \tilde E \left\{ \frac{\partial \tau}{\partial\theta}(U; \tilde \theta, \hat\alpha_2, \hat\gamma_2 ) \right\},
\end{align*}
where $\tilde\theta$ lies between $\hat\theta_2$ and $\theta^*$. Consider the decomposition
\begin{align}
& \tilde E \left\{ \frac{\partial \tau}{\partial\theta}(U; \tilde \theta, \hat\alpha_2, \hat\gamma_2 ) \right\} =
E \left\{ \frac{\partial \tau}{\partial\theta}(U; \theta^*, \bar\alpha_2, \bar\gamma_2 ) \right\} + \Delta_{31} + \Delta_{32}, \label{eq:H-prf}
\end{align}
where
\begin{align*}
& \Delta_{31}  =  \tilde E \left\{ \frac{\partial \tau}{\partial\theta}(U; \tilde \theta, \hat\alpha_2, \hat\gamma_2 ) \right\} -
E \left\{ \frac{\partial \tau}{\partial\theta}(U; \tilde\theta, \hat\alpha_2, \hat \gamma_2 ) \right\}, \\
& \Delta_{32} =  E \left\{ \frac{\partial \tau}{\partial\theta}(U; \tilde\theta, \hat\alpha_2, \hat \gamma_2 ) \right\} -
E \left\{ \frac{\partial \tau}{\partial\theta}(U; \theta^*, \bar\alpha_2, \bar\gamma_2 ) \right\} .
\end{align*}
By Assumption~\ref{ass:theta2-rate}(ii) and the uniform law of large numbers (\citealt{ferguson1996course}, Theorem 16),
$|\Delta_{31}| \le \sup_{ (\theta,\alpha,\gamma)\in\mathcal N_3} |(\tilde E-E) \{ \frac{\partial \tau}{\partial\theta}(U;\theta,\alpha,\gamma )\} = o_p(1)$.
Moreover, by $|\tilde\theta - \theta^*| \le |\hat\theta_2 - \theta^*| =o_p(1)$ and the continuous mapping theorem, $|\Delta_{32}| =o_p(1)$.
Hence $\tilde E  \{ \frac{\partial \tau}{\partial\theta}(U; \tilde \theta, \hat\alpha_2, \hat\gamma_2 )  \}  = H  + o_p(1)$
and $\Delta_3 =(\hat\theta_2 - \theta^*)  \{ H + o_p(1)\}$.

Finally, from the preceding analysis, (\ref{eq:expansion-prf}) yields
\begin{align*}
 - \tilde E \left\{ \tau(U; \theta^*, \bar\alpha_2, \bar\gamma_2 ) \right\}  = (\hat\theta_2 - \theta^*)  \{ H + o_p(1)\} + O_p( M_2 r_0^2).
\end{align*}
The desired result then follows because $H \not=0$.
\end{prf}

The following lemma establishes the consistency of $\hat V$ for $V$.

\begin{lem} \label{lem:theta2-var-est}
In the setting of Theorem~\ref{thm:alpha2}, suppose that Assumption \ref{ass:theta2-consistency} and \ref{ass:theta2-rate} hold and  $M_2 r_0 =o(1)$.
Then a consistent estimator of
$V= \var \{\tau(U; \theta^*, \bar\alpha_2, \bar\gamma_2 ) \}/ H^2$ is
\begin{align*}
\hat V = \tilde E \{\tau^2(U;\hat\theta_2,\hat\alpha_2,\hat\gamma_2 ) \} / \hat H^2,
\end{align*}
where $\hat H = \tilde E \{ \frac{\partial \tau}{\partial\theta}(U; \hat\theta_2, \hat\alpha_2, \hat\gamma_2 ) \}$.
\end{lem}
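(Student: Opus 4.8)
The plan is to reduce the statement to two separate facts: that $\hat H \to_p H$ with $H\neq 0$, and that $\tilde E\{\tau^2(U;\hat\theta_2,\hat\alpha_2,\hat\gamma_2)\} \to_p \var\{\tau(U;\theta^*,\bar\alpha_2,\bar\gamma_2)\}$; the conclusion $\hat V\to_p V$ then follows from the continuous mapping theorem applied to the ratio. At the outset I would record that, by Assumption~\ref{ass:theta2-consistency}(i), $E\{\tau(U;\theta^*,\bar\alpha_2,\bar\gamma_2)\}=0$, so $\var\{\tau(U;\theta^*,\bar\alpha_2,\bar\gamma_2)\}=E\{\tau^2(U;\theta^*,\bar\alpha_2,\bar\gamma_2)\}$ and it suffices to handle the raw second moment. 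Throughout I would work on the event $\{(\hat\theta_2,\hat\alpha_2,\hat\gamma_2)\in\mathcal N_3\}$, which by Theorem~\ref{thm:alpha2}, Lemma~\ref{lem:theta2-consistency} and $M_2 r_0=o(1)$ has probability tending to one.

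For $\hat H$ the work is essentially already done inside the proof of Lemma~\ref{lem:theta2-expan}: I would reuse the decomposition $\hat H - H = (\tilde E - E)\{\frac{\partial\tau}{\partial\theta}(U;\hat\theta_2,\hat\alpha_2,\hat\gamma_2)\} + [E\{\frac{\partial\tau}{\partial\theta}(U;\hat\theta_2,\hat\alpha_2,\hat\gamma_2)\} - E\{\frac{\partial\tau}{\partial\theta}(U;\theta^*,\bar\alpha_2,\bar\gamma_2)\}]$, which is exactly the treatment of $\Delta_{31}$ and $\Delta_{32}$ with $\hat\theta_2$ in place of the intermediate $\tilde\theta$. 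The first bracket is $o_p(1)$ by the uniform law of large numbers (\citealt{ferguson1996course}, Theorem 16) under the integrable envelope in Assumption~\ref{ass:theta2-rate}(ii), and the second is $o_p(1)$ by continuity of the population mean together with the consistency of $(\hat\theta_2,\hat\alpha_2,\hat\gamma_2)$. Since $H\neq 0$ by Assumption~\ref{ass:theta2-rate}(ii), this yields $\hat H^2\to_p H^2>0$.

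For the numerator I would split $\tilde E\{\tau^2(U;\hat\theta_2,\hat\alpha_2,\hat\gamma_2)\} - E\{\tau^2(U;\theta^*,\bar\alpha_2,\bar\gamma_2)\}$ into a sampling term $(\tilde E - E)\{\tau^2(U;\hat\theta_2,\hat\alpha_2,\hat\gamma_2)\}$ and an approximation term $E\{\tau^2(U;\hat\theta_2,\hat\alpha_2,\hat\gamma_2)\} - E\{\tau^2(U;\theta^*,\bar\alpha_2,\bar\gamma_2)\}$. The sampling term is $o_p(1)$ by a uniform law of large numbers over $\mathcal N_3$, this time invoking the integrable envelope $E\{\sup_{\mathcal N_3}\tau^2\}<\infty$ from Assumption~\ref{ass:theta2-rate}(i) and the fact that $(\hat\theta_2,\hat\alpha_2,\hat\gamma_2)\in\mathcal N_3$ with probability tending to one. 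The approximation term I would attack by dominated convergence via a subsequence argument: along any subsequence I can extract a further subsequence on which $|\hat\theta_2-\theta^*|\to 0$, $\|\hat\alpha_2-\bar\alpha_2\|_1\to 0$ and $\|\hat\gamma_2-\bar\gamma_2\|_1\to 0$ almost surely.

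The main obstacle, and the step that deserves care, is precisely this approximation term in the high-dimensional regime, since the estimators converge only in the $L_1$ norm of the coefficient vectors rather than coordinatewise in a fixed dimension. The resolution is that $\tau$ depends on $(\alpha,\gamma)$ only through the scalar linear predictors $\eta_g=\alpha^\T\xi$ and $\eta_f=\gamma^\T\xi$, and by Assumption~\ref{ass:gamma2-basic}(i) one has $|\hat\alpha_2^\T\xi-\bar\alpha_2^\T\xi|\le C_0\|\hat\alpha_2-\bar\alpha_2\|_1$ and $|\hat\gamma_2^\T\xi-\bar\gamma_2^\T\xi|\le C_0\|\hat\gamma_2-\bar\gamma_2\|_1$ uniformly in $X$, so that $L_1$ convergence of the coefficients forces uniform convergence of the linear predictors. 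Combined with $|\hat\theta_2-\theta^*|\to 0$, this gives $\tau^2(U;\hat\theta_2,\hat\alpha_2,\hat\gamma_2)\to\tau^2(U;\theta^*,\bar\alpha_2,\bar\gamma_2)$ pointwise along the subsequence, dominated by the envelope of Assumption~\ref{ass:theta2-rate}(i), whence dominated convergence sends the approximation term to $0$. Putting the two terms together gives $\tilde E\{\tau^2(U;\hat\theta_2,\hat\alpha_2,\hat\gamma_2)\}\to_p E\{\tau^2(U;\theta^*,\bar\alpha_2,\bar\gamma_2)\}$, and finally $\hat V=\tilde E\{\tau^2\}/\hat H^2\to_p E\{\tau^2(U;\theta^*,\bar\alpha_2,\bar\gamma_2)\}/H^2 = V$ by the continuous mapping theorem.
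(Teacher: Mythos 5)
Your proposal is correct and follows essentially the same route as the paper: $\hat H\to_p H$ is handled by reusing the $\Delta_{31}$--$\Delta_{32}$ argument from Lemma~\ref{lem:theta2-expan}, and the numerator is split into the sampling term $(\tilde E-E)\{\tau^2(U;\hat\theta_2,\hat\alpha_2,\hat\gamma_2)\}$ (controlled by the envelope in Assumption~\ref{ass:theta2-rate}(i) and the uniform law of large numbers) plus a population approximation term, exactly matching the paper's $\Delta_{51}+\Delta_{52}$ decomposition. The only difference is that where the paper dispatches $\Delta_{52}$ with a one-line appeal to the continuous mapping theorem, you spell out the underlying justification (the $|b^\T\xi|\le C_0\|b\|_1$ bound turning $L_1$ convergence of coefficients into uniform convergence of linear predictors, followed by dominated convergence along subsequences), which is a more careful rendering of the same step rather than a different argument.
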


\begin{prf}
First, $\hat H = H + o_p(1)$ can be shown similarly as $\tilde E  \{ \frac{\partial \tau}{\partial\theta}(U; \tilde \theta, \hat\alpha_2, \hat\gamma_2 )  \}  = H  + o_p(1)$
in the proof of Lemma~\ref{lem:theta2-expan}. Next, we show that $\hat G = G + o_p(1)$, where
$G = E \{\tau^2 (U; \theta^*, \bar\alpha_2, \bar\gamma_2 ) \}$ and $\hat G = \tilde E \{\tau^2(U;\hat\theta_2,\hat\alpha_2,\hat\gamma_2 ) \}$.
Similarly as (\ref{eq:H-prf}), consider the decomposition
\begin{align*}
&   \tilde E \left\{ \tau^2 (U; \hat\theta_2, \hat\alpha_2, \hat\gamma_2 ) \right\} = E \left\{\tau^2 (U; \theta^*, \bar\alpha_2, \bar\gamma_2 ) \right\} + \Delta_{51} + \Delta_{52},
\end{align*}
where
\begin{align*}
& \Delta_{51}  =  \tilde E \left\{ \tau^2(U; \hat\theta_2, \hat\alpha_2, \hat\gamma_2 ) \right\} -
E \left\{ \tau^2 (U; \hat\theta_2, \hat\alpha_2, \hat \gamma_2 ) \right\}, \\
& \Delta_{52} =  E \left\{ \tau^2 (U; \hat\theta_2, \hat\alpha_2, \hat \gamma_2 ) \right\} -
E \left\{ \tau^2 (U; \theta^*, \bar\alpha_2, \bar\gamma_2 ) \right\} .
\end{align*}
By Assumption~\ref{ass:theta2-rate}(i) and the uniform law of large numbers (\citealt{ferguson1996course}, Theorem 16),
$|\Delta_{51}| \le \sup_{ (\theta,\alpha,\gamma)\in\mathcal N_3} |(\tilde E-E) \{ \tau^2 (U;\theta,\alpha,\gamma )\} = o_p(1)$.
Moreover, by $ |\hat\theta_2 - \theta^*| =o_p(1)$ and the continuous mapping theorem, $|\Delta_{52}| =o_p(1)$.
Hence $\hat G = G  + o_p(1)$.
\end{prf}

\subsection{Proof of Corollary~\ref{cor:debiased}}

Assume that $\psi_f \equiv 1$ in model (\ref{eq:f-PLM}) and $\hat\gamma_2=\hat\gamma_1$. First, we show that option (i) or (ii) in the discussion preceding Corollary~\ref{cor:debiased}
yields $(\hat\theta_1,\hat\alpha_2) = (\hat\theta_0,\hat\alpha_1)$
and hence $\hat\theta_2 = \hat\theta(\hat\alpha_1,\hat\gamma_1)$, provided that the same Lasso tuning parameter is used in computing $\hat\alpha_2$
as in computing $(\hat\theta_0,\hat\alpha_1)$.
Suppose that $(\hat\theta_0,\hat\alpha_1)$ are Lasso least square estimators as
\begin{align}
(\hat\theta_0,\hat\alpha_1) = \argmin_{(\theta,\alpha)} \left[ \tilde E \left\{ (Y - \theta Z - \alpha^\T \xi)^2 \right\} + \lambda |\theta| + \lambda \|\alpha \|_1 \right]. \label{eq:lasso-LS}
\end{align}
For option (ii), if $(\hat\theta_1,\hat\alpha_2)$ are redefined as Lasso least square estimators with the same tuning parameter $\lambda$,
then $(\hat\theta_1,\hat\alpha_2) = (\hat\theta_0,\hat\alpha_1)$ by definition.
For option (i), $\hat\theta_1$ is replaced by $\hat\theta_0$ in (\ref{eq:PLM-loss-alpha}). If $\hat\alpha_2$ is redefined as follows, with the same tuning parameter $\lambda$ as in (\ref{eq:lasso-LS}),
\begin{align}
\hat\alpha_2  = \argmin_{\alpha} \left[ \tilde E \left\{ (Y - \hat \theta_0 Z - \alpha^\T \xi)^2 \right\} + \lambda \|\alpha \|_1 \right], \label{eq:lasso-LS2}
\end{align}
then $\hat\alpha_2 = \hat\alpha_1$, because
for any $\alpha$,
\begin{align*}
& \tilde E \left\{ (Y - \hat \theta_0 Z - \alpha^\T \xi)^2 \right\} + \lambda |\hat\theta_0| +
\lambda \|\alpha \|_1 \ge  \tilde E \left\{ (Y - \hat \theta_0 Z -\hat \alpha_1^\T \xi)^2 \right\} + \lambda \|\hat\theta_0| + \lambda \|\hat\alpha_1 \|_1 \\
\Longleftrightarrow \; &
\tilde E \left\{ (Y - \hat \theta_0 Z - \alpha^\T \xi)^2 \right\} + \lambda \|\alpha \|_1 \ge  \tilde E \left\{ (Y - \hat \theta_0 Z -\hat \alpha_1^\T \xi)^2 \right\} + \lambda \|\hat\alpha_1 \|_1,
\end{align*}
and hence $\hat\alpha_1$ is also a minimizer to the objective in (\ref{eq:lasso-LS2}).

Now suppose that option (ii) is used, i.e., $\hat\theta_1$ is redefined as $\hat\theta_0$, in our two-step algorithm.
Then $\hat\alpha_2=\hat\alpha_1$ as shown above.
Proposition~\ref{pro:main} can be applied with $(\hat\theta_1,\hat\alpha_2,\hat\gamma_2)$ replaced by $(\hat\theta_0,\hat\alpha_1,\hat\gamma_1)$
and $\hat\theta_{\mbox{\tiny RCAL}}=\hat\theta_2$ by $\hat\theta(\hat\alpha_1,\hat\gamma_1)$,
because $\hat\theta_0$ can be shown to be pointwise doubly robust and hence Assumption~\ref{ass:gamma2-basic}(iv) is satisfied
under the stated regularity conditions. In fact, the target value (i.e., probability limit) $\bar\theta_1$ for $\hat\theta_0$, by definition, satisfies
$ E\{ (Y -\bar\theta_0 Z- \bar \alpha_1^\T \xi) (Z,\xi^\T)^\T \} =0$, which implies the population doubly robust estimating equation
$ E\{ (Y -\bar\theta_0 Z- \bar \alpha_1^\T \xi) (Z - \bar\gamma_1^\T \xi)\}=0$ or equivalently
\begin{align*}
\bar\theta_0 =  \frac{E\{ (Y - \bar \alpha_1^\T \xi) ( Z - \bar\gamma_1^\T \xi)\} }{ E \{ Z( Z - \bar\gamma_1^\T \xi)\} } .
\end{align*}
Hence $\bar\theta_0$ coincides with $\theta^*$ if model (\ref{eq:g-PLM}) or model (\ref{eq:f-PLM}) with $\psi_f \equiv 1$ is correctly specified.
This reasoning is a sample analogue of that in Example~\ref{eg:PLM-LS}.

\newpage

\section{Additional material for simulation studies}

We provide implementation details and additional simulation results.

\subsection{Partially linear modeling}

We describe the data-generating configurations used for $(Z,X)$, related to Fisher's discrimination analysis.
For setting (C1), we first generate $Z$ such that $P(Z=1)=q$. Next we generate $X |  Z=1 \sim \N(\mu_1, \Sigma)$ and $X |  Z=0 \sim \N (\mu_0, \Sigma)$. Then
\begin{align}
	P(Z =1 |  X) &= \frac{1}{1+\exp (-\beta_0 -\beta_1^{\T}X)} , \label{logit:linear}
\end{align}
where $\beta_0= - \frac{1}{2}\mu_1^{\T}\Sigma^{-1}\mu_1 + \frac{1}{2}\mu_0^{\T}\Sigma^{-1}\mu_0 + \log ( \frac{q}{1-q} )$ and $\beta_1 = \Sigma^{-1}(\mu_1 - \mu_0)$.
In our experiments, we choose $q=0.5$, $\Sigma= I\, (\text{identity matrix})$, $\mu_0=0$ and $\mu_1$ a sparse $p\times 1 $ vector with first 5 components being $(-0.25, 0.5, 0.75, 1, 1.25)$,
which leads to $\beta_0=-\frac{1}{2}\mu_1^{\T}\mu_1 = -0.4297$ and  $\beta_1= \mu_1 = (-0.25, 0.5, 0.75, 1, 1.25, 0,\ldots, 0)^\T$ in (\ref{logit:linear}).
This gives the stated expression of $P(Z=1 | X)$ in setting (C1).

For setting (C2), we  first generate $Z$ such that $P(Z=1)=q$. Next we generate $X |  Z=1 \sim \N (\mu_1, \Sigma_1)$ and $X |  Z=0 \sim \N(\mu_0, \Sigma_0)$. Then
\begin{align}
	P(Z  |  X) &= \frac{1}{1+\exp ( - \beta_0 - \beta_1^{\T}X - X^{\T}\Omega X)} \label{quadratic:logit}
\end{align}
where $\beta_0  = - \frac{1}{2} \mu_1^{\T}\Sigma_1^{-1}\mu_1 + \frac{1}{2} \mu_0^{\T}\Sigma_0^{-1}\mu_0 + \log \left( \frac{q}{1-q} \frac{|\Sigma_0|^{1/2}}{|\Sigma_1|^{1/2}} \right) $,
$\beta_1 = \Sigma_1^{-1}\mu_1 - \Sigma_0^{-1}\mu_0 $,
$\Omega  = \frac{1}{2} (\Sigma_0^{-1} - \Sigma_1^{-1})$.
In our experiments, we choose $q=0.5$, $\Sigma_0= I$, $\Sigma_1^{-1}=2 I$, $\mu_0=0$, and $\mu_1$ a sparse $p\times 1 $ vector with first 4 components being $(-0.25, 0.5, 0.75, 1)/2$,
which leads to $\beta_0= -\mu_1^{\T}\mu_1 + \log(2^{p/2}) = -0.4687 + \frac{p}{2} \log 2$ and  $\beta_1= 2\mu_1 = (-0.25, 0.5, 0.75, 1, 0, \ldots, 0)^\T$ in (\ref{quadratic:logit}).
This gives the stated expression of $P(Z=1 | X)$ in setting (C2).

Our two-step Algorithm~\ref{alg:two-step}, specialized to partially linear modeling, is presented in~Algorithm~\ref{alg:linear},
including associated commands from R package \texttt{glmnet} (\citealtappend{friedman2010regularization}).
Here \texttt{Y} and \texttt{Z} are $n\times 1$ vectors of observations $\{Y_i: i=1,\ldots,n\}$ and $\{Z_i: i=1,\ldots,n\}$,
and \texttt{X} and \texttt{ZX} are the design matrix of dimension $n\times p$ and $n\times (p+1)$ with $i$th row being $X^\T_i$ and $(Z_i, X^\T_i)$ respectively.
In Step 7 of Algorithm~\ref{alg:linear}, \texttt{offset} is a vector with components $\hat{\theta}_1^\T Z_i$.
and \texttt{weights} is a vector with components $\psi^\prime_f (\hat{\gamma}_1^\T X_i) = \expit (\hat{\gamma}_1^\T X_i)(1- \expit (\hat{\gamma}_1^\T X_i))$
for  $\psi_f= \expit (\cdot)$.  The argument \texttt{alpha=1} stands for the $\ell_1$ penalty.

\begin{algorithm}[t] 
	\caption{Two-step algorithm for partially linear modeling}\label{alg:linear}
	\begin{algorithmic}[1]
		\Procedure{Initial estimation}{}
		\State Compute $(\hat{\theta}_0, \hat{\alpha}_1) = \argmin_{\theta, \alpha} \, \left\{ \tilde{E} (Y - \theta Z - \alpha^\T X)^2 + \lambda_1 (\abs{\theta} + \norm{\alpha}_1) \right\} $
		
		using  \texttt{glmnet(ZX, y=Y, alpha=1, family="gaussian")}.
		\State Compute $\hat\gamma_1 = \argmin_{\gamma} \left[\tilde{E} \{ - Z\gamma^T X + \log (1+ e^{\gamma^\T X})\} + \lambda_2 \norm{\gamma}_1 \right] $
		
			using  \texttt{glmnet(X, y=Z, alpha=1, family="binomial")}.
		\State Compute $\hat{\theta}_1 = \frac{\tilde{E} \{ (Y - \hat{\alpha}_1^\T X)(Z- \psi_f (\hat{\gamma}_1^\T X)) \}}{\tilde{E} \{ Z(Z - \psi_f (\hat{\gamma}_1^\T X)) \} }$.
		\EndProcedure
		\Procedure{Calibrated estimation}{}
		\State Compute $\hat{\alpha}_2 = \argmin_{\alpha} \, \left\{\tilde{E} \psi'_f (\hat{\gamma}_1^\T X) (Y - \hat{\theta}_1 Z - \alpha^\T X)^2 + \lambda_3 \norm{\alpha}_1 \right\}$
		
		using  \texttt{glmnet(X, y=Y, alpha=1, offset=theta1*Z, weights)}.
		\State Compute $\hat{\theta}_2 = \frac{\tilde{E} \{(Y - \hat{\alpha}_2^\T X)(Z- \psi_f (\hat{\gamma}_1^\T X)) \}}{\tilde{E} \{ Z(Z - \psi_f (\hat{\gamma}_1^\T X)) \}}$,

and $\hat{\mathrm{V}}(\hat{\theta}_2) =\frac{\tilde{E} \{(Y - \hat{\alpha}_2^\T X)^2(Z- \psi_f (\hat{\gamma}_1^\T X))^2 \}}{\tilde{E}^2 \{Z(Z - \psi_f (\hat{\gamma}_1^\T X)) \}}$.
		\EndProcedure
	\end{algorithmic}
\end{algorithm}

\begin{algorithm}[H]
	\caption{Debiased Lasso for linear modeling}\label{alg:linear_deb}
	\begin{algorithmic}[1]
		\Procedure{Linear projection}{}
		\State Compute $(\hat{\theta}_0, \hat{\alpha}_1) = \argmin_{\theta, \alpha} \, \left\{ \tilde{E} (Y - \theta Z - \alpha^\T X)^2 + \lambda_1 (\abs{\theta} + \norm{\alpha}_1) \right\} $
		
		using  \texttt{glmnet(ZX, y=Y, alpha=1, family="gaussian")}.
		\State Compute $\hat\gamma_1 = \argmin_{\gamma} \left[ \tilde{E} \{ (Z - \gamma^\T X)^2\} + \lambda_2 \norm{\gamma}_1 \right] $
		
		using  \texttt{glmnet(X, y=Z, alpha=1, family="gaussian")}.
		\State Compute $\hat{\theta}_{\mbox{\tiny DB}} = {\theta}_0 + \frac{\tilde{E} \{ (Y - \hat{\theta}_0 Z -  \hat{\alpha}_1^\T X)(Z- \hat{\gamma}_1^\T X) \}}{\tilde{E} \{Z(Z - \hat{\gamma}_1^\T X) \}}$

and $\hat{\mathrm{V}}(\hat{\theta}_{\mbox{\tiny DB}} ) =\frac{\tilde{E} \{ (Y - \hat{\alpha}_1^\T X)^2(Z- \psi_f (\hat{\gamma}_1^\T X))^2 \}}{\tilde{E}^2 \{ Z(Z - \psi_f (\hat{\gamma}_1^\T X)) \}}$.
		\EndProcedure
	\end{algorithmic}
\end{algorithm}

The tuning parameters $\lambda_1, \lambda_2, \lambda_3$ are sequentially selected from $5$-fold cross validation, using \texttt{cv.glmnet}() in the R package \texttt{glmnet}.
For linear regression we set \texttt{type.measure="MSE"} and
for logistic or log-linear regression, we set \texttt{type.measure="deviance"}.
By default, there are $100$ values of $\lambda$ in the grid search over $\lambda$ (\citealtappend{friedman2010regularization}).

The debiased Lasso method used in our experiments is shown in Algorithm~\ref{alg:linear_deb}, where robust variance estimation is employed
(\citealt{zhang2014confidence, SVD,buhlmann2015high}).
Step 3 in Algorithm~\ref{alg:linear_deb} involves fitting a linear model of $Z$ given $X$, instead of a logistic model in Algorithm~\ref{alg:linear}.

Table~\ref{Table:linear}
presents  simulation results and Figures~\ref{fig:linear_p=100}--\ref{fig:linear_p=800} show QQ plots of estimates and $t$-statistics for $n=400$ and $p=100$, $200$ as well as $p=800$ (for completeness).
Comparison between the three methods is similar as discussed in the main paper.

\subsection{Partially log-linear modeling}

Our two-step Algorithm~\ref{alg:two-step}, specialized to partially log-linear modeling, is presented in~Algorithm~\ref{alg:log-linear},
including associated commands from R package \texttt{glmnet}.
Because  $Z_i$'s are binary, a closed-form solution can be obtained from the doubly robust estimating equation:
\begin{align}
\me^{-\theta} =\frac{ \displaystyle  \sum_{Z_i=1} e^{\alpha^\T X_i}(1- \expit(\gamma^\T X_i)) +  \displaystyle  \sum_{Z_i=0} (Y_i - e^{\alpha^\T X_i}) \expit(\gamma^\T X_i)}{ \displaystyle  \sum_{Z_i=1} Y_i(1-\expit(\gamma^\T X_i))} . \label{DRE:log-linear}
\end{align}
Steps 7 and 8 are implemented as regularized weighted maximum likelihood estimation, by specifying weights in \texttt{glmnet} as follows:
\texttt{weights1} is a vector  with components $\me^{\hat{\alpha}_1^\T X_i}$
 and \texttt{weights2} is a vector with components $\me^{-\hat{\theta}_1 Z_i} \expit(\hat{\gamma}_2^\T X_i)(1- \expit(\hat{\gamma}_2^\T X_i))$.

The debiased Lasso method used in our experiments is shown in Algorithm~\ref{alg:log-linear_deb}, where robust variance estimation is employed.
Step 3 is implemented as regularized least square estimation, where \texttt{weights3} is a vector with components $\me^{\theta_0 Z_i + \hat{\alpha}_1^\T X_i}$.

Table~\ref{Table:log-linear} presents  simulation results and Figure~\ref{fig:log-linear_p=100}--\ref{fig:log-linear_p=200} show QQ plots of estimates and
$t$-statistics for $n=400$ and $p=100$, $200$ as well as $p=800$ (for completeness).
Comparison between the three methods is similar as discussed in the main paper.

\clearpage
\begin{algorithm}[t] 
	\caption{Two-step algorithm for partially log-linear modeling}\label{alg:log-linear}
	\begin{algorithmic}[1]
		\Procedure{Initial estimation}{}
		\State Compute $(\hat{\theta}_0, \hat{\alpha}_1) = \argmin_{\theta, \alpha} \, \left[\tilde{E} \{ - Y(\theta Z + \alpha^\T X) + e^{\theta Z + \alpha^\T X}\} + \lambda_1 (\abs{\theta} + \norm{\alpha}_1)\right]$
		
			using  \texttt{glmnet(ZX, y=Y, alpha=1, family="poisson")}.
		\State Compute $\hat\gamma_1 = \argmin_{\gamma} \left[\tilde{E} \{ - Z\gamma^T X + \log (1+ e^{\gamma^\T X})\} + \lambda_2 \norm{\gamma}_1 \right]$
		
			using  \texttt{glmnet(X, y=Z, alpha=1, family="binomial")}.
		\State Compute $\hat{\theta}_1$ from  (\ref{DRE:log-linear}) with $\alpha = \hat{\alpha}_1$ and $\gamma = \hat{\gamma}_1$.
		\EndProcedure
		\Procedure{Calibrated estimation}{}
		\State Compute $\hat{\gamma}_2=  \argmin_{\gamma} \, \tilde{E} e^{\hat{\alpha}_1^\T X}\{ - Z\gamma^\T X + \log (1+ e^{\gamma^\T X})\} + \lambda_3 \norm{\gamma}_1 $
		
			using  \texttt{glmnet(X, y=Z, alpha=1, weights1, family="binomial")}.
		\State Compute $\hat{\alpha}_2 =   \argmin_{\alpha} \, \tilde{E} \psi'_f(\hat{\gamma}_2^\T X)e^{-\hat{\theta}_1Z} \{ - Y(\hat{\theta}_1Z + \alpha^\T X) + e^{\hat{\theta}_1Z + \alpha^\T X}\} + \lambda_4 \norm{\alpha}_1$
		
		using \texttt{glmnet(X,y=Y,alpha=1,offset=theta1*Z,weights2,family="poisson")}.
		\State Compute $\hat{\theta}_2$ from (\ref{DRE:log-linear}) with $\alpha = \hat{\alpha}_2$ and $\gamma = \hat{\gamma}_2$

and $\hat{V} (\hat{\theta}_2) = \frac{\tilde{E} \{(Y\me^{-\hat{\theta}_2 Z} - \me^{\hat{\alpha}_2^\T X})^2(Z - \expit (\hat{\gamma}_2^\T X))^2 \}}
{\tilde{E}^2 \{ \me^{-\hat{\theta}_2 Z} YZ (Z - \expit (\hat{\gamma}_2^\T X)) \}} $.
		\EndProcedure
	\end{algorithmic}
\end{algorithm}

\begin{algorithm}[H]
	\caption{Debiased Lasso for log-linear modeling}\label{alg:log-linear_deb}
	\begin{algorithmic}[1]
		\Procedure{Linear projection}{}
		\State Compute $(\hat{\theta}_0, \hat{\alpha}_1) = \argmin_{\theta, \alpha} \, \left[\tilde{E} \{ - Y(\theta Z + \alpha^\T X) + e^{\theta Z + \alpha^\T X}\} + \lambda_1 (\abs{\theta} + \norm{\alpha}_1)\right]$
		
		using  \texttt{glmnet(ZX, y=Y, alpha=1, family="poisson")}.
	\State Compute $\hat\gamma_1 = \argmin_{\gamma} \left[\tilde{E} \{ e^{\hat{\theta}_0 Z + \hat{\alpha}_1 X}(Z - \gamma^\T X)^2\} + \lambda_2 \norm{\gamma}_1 \right] $
	
	using  \texttt{glmnet(X, y=Z, alpha=1, weights3, family="gaussian")}.
	\State Compute $\hat{\theta}_{\mbox{\tiny DB}} = {\hat{\theta}}_0 + \frac{\tilde{E} \{ (Y - \me^{\hat{\theta}_0 Z + \hat{\alpha}_1 X})(Z- \hat{\gamma}_1^\T X) \}}
{\tilde{E} \{ \me^{\hat{\theta}_0 Z + \hat{\alpha}_1 X} Z(Z - \hat{\gamma}_1^\T X) \}}$

and $\hat{V} (\hat{\theta}_{\mbox{\tiny DB}}) = \frac{\tilde{E} \{ (Y - \me^{\hat{\theta}_0 Z + \hat{\alpha}_1 X})^2(Z- \hat{\gamma}_1^\T X)^2 \}}
{\tilde{E}^2 \{ \me^{\hat{\theta}_0 Z + \hat{\alpha}_1 X} Z(Z - \hat{\gamma}_1^\T X) \} }$.
	\EndProcedure
	\end{algorithmic}
\end{algorithm}

\clearpage
\subsection{Partially logistic modeling}

We describe the data-generating configurations used for $(Z,Y,X)$, related to the odds ratio model in \citeappend{chen2007semiparametric}.
We first generate $X \sim \N (0, \Sigma)$, where $\Sigma = \text{Toeplitz}(\rho=0.5)$.
Given $X$, we generate binary variables $(Z,Y)$ according to the probabilities proportional to the entries in the following $2\times 2$ table:
\begin{center}
	\begin{tabular}{l|l|c|c|c}
		\multicolumn{2}{c}{}&\multicolumn{2}{c}{}&\\
		\cline{3-4}
		\multicolumn{2}{c|}{} & $Z=0$ & $Z=1$ &\\
		\cline{2-4}
		& $Y=0$ & $1$ & $e^{\beta_1 + h_1(X)}$ \\
		\cline{2-4}
		\multirow{2}{*}{}& $Y=1$ & $e^{\beta_2 + h_2(X)}$ & $e^{\theta^* + \beta_1 + \beta_2 + h_3(X)}$\\
		\cline{2-4}
	\end{tabular}
\end{center}\vspace{.1in}
Here $\theta^*, \beta_1$ and $\beta_2$ are the true parameter values and $h_1(X), h_2(X)$ and $h_3(X)$ are functions in $X$ such that $h_3(X)= h_1(X) + h_2(X)$.
The implied conditional probabilities  are
	\begin{align}
	 P(Y=1  |  X, Z) &= \expit (\theta^*Z +h_2(X) ), \label{logit-cond-prob1}\\
	 P(Z=1  |  X, Y=0) &= \expit (\beta_1 + h_1(X) ) \label{logit-cond-prob2}.
	\end{align}
In our experiments, we set $\theta^*=2, \beta_1=0.25$, $ \beta_2=-0.25$, both $\alpha$ and $\gamma$ as a sparse vector with first four components being $(-0.25, 0.25, 0.5, 0.75)/2$.
The functions $h_1$ and $h_2$ are chosen differently, depending on settings (C7)--(C9).
\begin{itemize}
	\item [(i)] Taking $h_1(X)=\gamma^\T X$ and $h_2(X)=\alpha^\T X$ in (\ref{logit-cond-prob1})--(\ref{logit-cond-prob2})
leads to the stated expressions for $P(Y =1 | X,Z)$ and $P(Z=1 | X,Y=0)$ in settings (C7).

	\item [(ii)] Taking $h_1(X) = \gamma^\T X$ and $h_2(X) = 0.25 X_{1} + 5 X_{2} + \expit(X_{3})$ in (\ref{logit-cond-prob1})--(\ref{logit-cond-prob2})
leads to the stated expressions for $P(Y =1 | X,Z)$ and $P(Z=1 | X,Y=0)$ in settings (C8).

	\item [(iii)] Taking $h_2(X) = \alpha^\T X$ and $h_1(X) = 0.25 X_{1} + 0.8 X_{2} + \expit(X_{3})$ in (\ref{logit-cond-prob1})--(\ref{logit-cond-prob2})
leads to the stated expressions for $P(Y =1 | X,Z)$ and $P(Z=1 | X,Y=0)$ in settings (C9).
\end{itemize}

\begin{algorithm}[t] 
	\caption{Two-step algorithm for partially logistic modeling}\label{alg:logistic}
	\begin{algorithmic}[1]
		\Procedure{Initial estimation}{}
		\State Compute $(\hat{\theta}_0, \hat{\alpha}_1)=\argmin_{\theta, \alpha} \, \left[\tilde{E} \{-Y(\theta Z + \alpha^\T X) + \log (1+ e^{\theta Z + \alpha^\T X})\} + \lambda_1 (\abs{\theta} + \norm{\alpha}_1)\right]$
		
			using  \texttt{glmnet(XZ, y=Y, alpha=1, family="binomial")}.
		\State Compute $\hat\gamma_1= \argmin_{\gamma} \, \left[\tilde{E}_{Y=0} \{ - Z \gamma^T X + \log (1+ \me^{\gamma^\T X})\} + \lambda_2 \norm{\gamma}_1 \right]$
		
			using  \texttt{glmnet(X0, y=Z0, alpha=1, family="binomial")}.
		\State Compute $\hat{\theta}_1$ from  (\ref{DRE:logistic}) with $\alpha = \hat{\alpha}_1$ and $\gamma = \hat{\gamma}_1$..
		\EndProcedure
		\Procedure{Calibrated estimation}{}
		\State Compute $\hat{\gamma}_2=  \argmin_{\gamma} \, \left[\tilde{E} \me^{-\theta_1 ZY}\expit_2 (\hat{\alpha}_1^\T X) \{ - Z\gamma^\T X + \log (1+ \me^{\gamma^\T X})\} + \lambda_3 \norm{\gamma}_1 \right]$
		
			using  \texttt{glmnet(X, y=Z, alpha=1, weights1, family="binomial")}.
		\State Compute $\hat{\alpha}_2 =   \argmin_{\alpha} \, \left[\tilde{E} \me^{-\theta_1 ZY}\expit_2 (\hat{\gamma}_2^\T X) \{ - Y\alpha^\T X + \log (1 + \me^{\alpha^\T X})\} + \lambda_4 \norm{\alpha}_1 \right]$
		
		 using  \texttt{glmnet(X, y=Y, alpha=1, weights2, family="binomial")}
		\State Compute $\hat{\theta}_2$ from  (\ref{DRE:logistic}) with $\alpha = \hat{\alpha}_2$ and $\gamma = \hat{\gamma}_2$

and $\hat{V}(\hat{\theta}_2) =  \frac{\tilde{E} \{\me^{-2\hat{\theta}_2 ZY}(Y- \expit (\hat{\alpha}_2^\T X))^2(Z - \expit(\hat{\gamma}_2^\T X))^2 \}}
{\tilde{E}^2 \{ ZY \me^{-2\hat{\theta}_2 ZY} (Y- \expit (\hat{\alpha}_2^\T X))(Z - \expit(\hat{\gamma}_2^\T X)) \}}$.
		\EndProcedure
	\end{algorithmic}
\end{algorithm}

\begin{algorithm}[H]
	\caption{Debiased Lasso for logistic modeling}\label{alg:logistic_deb}
	\begin{algorithmic}[1]
		\Procedure{Linear projection}{}
		\State Compute $(\hat{\theta}_0, \hat{\alpha}_1)=\argmin_{\theta, \alpha} \, \left[\tilde{E} \{-Y(\theta Z + \alpha^\T X) + \log (1+ e^{\theta Z + \alpha^\T X})\} + \lambda_1 (\abs{\theta} + \norm{\alpha}_1)\right]$
		
		using  \texttt{glmnet(XZ, y=Y, alpha=1, family="binomial")}.
		\State Compute $\hat\gamma_1 = \argmin_{\gamma} \left[\tilde{E} \{ \expit_2 (\hat{\theta}_0 Z + \hat{\alpha}_1 X)(Z - \gamma^\T X)^2\} + \lambda_2 \norm{\gamma}_1\right] $
		
		using  \texttt{glmnet(X, y=Z, alpha=1, weights3, family="gaussian")}.
		\State Compute $\hat{\theta}_{\mbox{\tiny DB}} = {\theta}_0 + \frac{\tilde{E} \{ (Y - \expit(\hat{\theta}_0 Z + \hat{\alpha}_1 X))(Z- \hat{\gamma}_1^\T X) \}}
{\tilde{E} \{\expit_2 (\hat{\theta}_0 Z + \hat{\alpha}_1 X) Z(Z - \hat{\gamma}_1^\T X) \}}$

and $\hat{V}(\hat{\theta}_{\mbox{\tiny DB}}) = \frac{\tilde{E} \{(Y - \expit(\hat{\theta}_0 Z_i + \hat{\alpha}_1 X))^2(Z - \hat{\gamma}_1^\T X)^2 \}}
{\tilde{E}^2 \{\expit_2 (\hat{\theta}_0 Z_i + \hat{\alpha}_1 X)(Z - \hat{\gamma}_1^\T X) \}}$.
		\EndProcedure
	\end{algorithmic}
\end{algorithm}

Our two-step Algorithm~\ref{alg:two-step}, specialized to partially logistic modeling, is presented in~Algorithm~\ref{alg:logistic},
including associated commands from R package \texttt{glmnet}.
Because  $Z_i$'s are binary, a closed-form solution can be obtained from the doubly robust estimating equation:
\begin{align}
\me^{-\theta} =\frac{- \displaystyle  \sum_{Z_i=0 \text{ or } Y_i=0} (Y_i - \expit(\alpha^\T X_i))(Z_i -\expit(\gamma^\T X_i))}{ \displaystyle  \sum_{Z_i=1 \text{ and } Y_i=1}(1-\expit(\gamma^\T X_i))(1-\expit(\alpha^\T X_i))} . \label{DRE:logistic}
\end{align}
In Step 3,  the sample average $\tilde{E}_{Y=0}( )$ is computed on over the subsample with $Y_i=0$, i.e., $\{(Z_i, X_i): Y_i=0, i=1,\ldots,n\}$. Here \texttt{Z0} denotes $\{Z_i: Y_i=0\}$ and the \texttt{X0} is the design matrix with $i$th row being $\{X_i^\T: Y_i=0, i=1,\ldots,n\}$. Steps 7 and 8 are implemented as regularized weighted maximum likelihood estimation by specifying weights in \texttt{glmnet}
as follows: \texttt{weights1} is a vector with components $\me^{-\hat{\theta}_1 Z_i Y_i} \expit(\hat{\alpha}_1^\T X_i)(1- \expit(\hat{\alpha}_1^\T X_i))$ and
 \texttt{weights2} is a vector with components $\me^{-\hat{\theta}_1 Z_i Y_i} \expit(\hat{\gamma}_2^\T X_i)(1- \expit(\hat{\gamma}_2^\T X_i))$.

The debiased Lasso method used in our experiments is shown in Algorithm~\ref{alg:logistic_deb}, where robust variance estimation is employed.
Step 3 is implemented as regularized least square estimation, where \texttt{weights3} is a vector with components
$\expit_2 (\hat{\theta}_0 Z_i + \hat{\alpha}_1 X_i) =\expit(\hat{\theta}_0 Z_i + \hat{\alpha}_1 X_i)(1-\expit(\hat{\theta}_0 Z_i + \hat{\alpha}_1 X_i))$.

Table~\ref{Table:logistic} presents  simulation results and Figures~\ref{fig:logistic_p=100}--\ref{fig:logistic_p=800}
show QQ plots of estimates and $t$-statistics for $n=400$ and $p=100$, $200$ as well as $p=800$ (for completeness).
Comparison between the three methods is similar as discussed in the main paper.

\newpage
\begin{center}
	\captionsetup{width=1\linewidth}
	\captionof{table} {Summary of results for partially linear modeling} \label{Table:linear}
	\scalebox{1}{
		\begin{tabular}{cccc|ccc|ccc}
			\toprule
			\multirow{4}{*}{} &
			\multicolumn{3}{c|}{(C1) Cor Cor}&
			\multicolumn{3}{c|}{(C2) Cor Miss}&
			\multicolumn{3}{c}{(C3) Mis Cor}\\
			& $\hat{\theta}_\DB$ & $\hat{\theta}_1$ & $\hat{\theta}_2$ & $\hat{\theta}_\DB$ & $\hat{\theta}_1$ & $\hat{\theta}_2$ & $\hat{\theta}_\DB$& $\hat{\theta}_1$& $\hat{\theta}_2$\\
			\midrule
			& \multicolumn{9}{c}{$n=400, p=100$}\\
			\multirow{1} {*}{Bias}
			& 0.004 & 0.006 & 0.005 & -0.003 & -0.004 & -0.003 & 0.103 & 0.049 & 0.003 \\ 			
			
			\multirow{1}{*}{$\sqrt{\text{Var}}$}
			& 0.056 & 0.057 & 0.056 & 0.057 & 0.056 & 0.056 & 0.290 & 0.289 & 0.288 \\

			\multirow{1}{*}{$\sqrt{\text{Evar}}$}
			& 0.053 & 0.053 & 0.052 & 0.054 & 0.054 & 0.053 &0.324 & 0.320 & 0.321 \\

			\multirow{1}{*}{Cov95}
			& 0.945 & 0.943 & 0.944 & 0.946 & 0.942 & 0.940 & 0.921 & 0.946 & 0.948 \\
			\midrule
	    	& \multicolumn{9}{c}{$n=400, p=200$}\\
			\multirow{1} {*}{Bias}
			& 0.006 & 0.004 & 0.004 & 0.009 & 0.008 & 0.009 & 0.156 & 0.061 & 0.004 \\ 			
			
			\multirow{1}{*}{$\sqrt{\text{Var}}$}
			& 0.057 & 0.056 & 0.056 & 0.053 & 0.053 & 0.055 & 0.321 & 0.299 & 0.298 \\
			
			\multirow{1}{*}{$\sqrt{\text{Evar}}$}
			& 0.053 & 0.053 & 0.055 & 0.054 & 0.055 & 0.054 & 0.331 & 0.327 & 0.325\\ 	
			
			\multirow{1}{*}{Cov95}
			& 0.932 & 0.933 & 0.934 & 0.945 & 0.938 & 0.944 & 0.911 & 0.936 & 0.944\\
			\midrule
            & \multicolumn{9}{c}{$n=400, p=800$}\\
			\multirow{1} {*}{Bias}
			& 0.006 & 0.006 & 0.007 & 0.013 & 0.012 & 0.012 & 0.283 & 0.082 & 0.004 \\ 			
			
			\multirow{1}{*}{$\sqrt{\text{Var}}$}
			& 0.058 & 0.057 & 0.057 & 0.057 & 0.057 & 0.058 & 0.322 & 0.301 & 0.299 \\

			\multirow{1}{*}{$\sqrt{\text{Evar}}$}
			& 0.055 & 0.056 & 0.058 & 0.059 & 0.058 & 0.056 & 0.334 & 0.330 & 0.328 \\

			\multirow{1}{*}{Cov95}
			& 0.922 & 0.928 & 0.928 & 0.921 & 0.931 & 0.941 & 0.872 & 0.929 & 0.941 \\
			\bottomrule
		\end{tabular}\vspace{3pt}
	}
\end{center}

\newpage
\begin{center}
	\captionsetup{width=1\linewidth}
	\captionof{table} {Summary of results for partially log-linear modeling} \label{Table:log-linear}
	\scalebox{1}{
		\begin{tabular}{cccc|ccc|cccc}
			\toprule
			\multirow{4}{*}{} &
			\multicolumn{3}{c|}{(C4) Cor Cor}&
			\multicolumn{3}{c|}{(C5) Cor Miss}&
			\multicolumn{3}{c}{(C6) Miss Cor}\\
			& $\hat{\theta}_\DB$ & $\hat{\theta}_1$ & $\hat{\theta}_2$ & $\hat{\theta}_\DB$ & $\hat{\theta}_1$ & $\hat{\theta}_2$ & $\hat{\theta}_\DB$& $\hat{\theta}_1$& $\hat{\theta}_2$&\\
			\midrule
			& \multicolumn{9}{c}{$n=600, p=100$}\\
			\multirow{1} {*}{Bias}
			& 0.003 & 0.001 & 0.005 & -0.013 & 0.004 & 0.003 & -0.024 & -0.005 & 0.002 &\\ 			
			
			\multirow{1}{*}{$\sqrt{\text{Var}}$}
			& 0.041 & 0.046 & 0.047 & 0.039 & 0.042 & 0.045 & 0.072 & 0.079 & 0.082 &\\

			\multirow{1}{*}{$\sqrt{\text{Evar}}$}
			& 0.037 & 0.043 & 0.044 & 0.035 & 0.041 & 0.041 & 0.071 & 0.078 & 0.077 &\\

			\multirow{1}{*}{Cov95}
			& 0.946 & 0.948 & 0.942 & 0.914 & 0.943 & 0.942 & 0.912 & 0.944 & 0.946 &\\
			\midrule
			& \multicolumn{9}{c}{$n=600, p=200$}\\
			\multirow{1} {*}{Bias}
			& 0.007 & 0.003 & 0.005 & -0.018 & -0.005 & 0.005 & -0.064 & -0.011 & 0.008 &\\ 			
			
			\multirow{1}{*}{$\sqrt{\text{Var}}$}
			& 0.042 & 0.046 & 0.046 & 0.041 & 0.043 & 0.047 & 0.074 & 0.078 & 0.083 &\\

			\multirow{1}{*}{$\sqrt{\text{Evar}}$}
			& 0.039 & 0.043 & 0.044 & 0.037 & 0.041 & 0.043 & 0.073 & 0.080 & 0.081 &\\
			
			\multirow{1}{*}{Cov95}
			& 0.948 & 0.939 & 0.941 & 0.882 & 0.925 & 0.934 & 0.855 & 0.941 & 0.944 &\\
			\midrule
			& \multicolumn{9}{c}{$n=600, p=800$}\\
			\multirow{1} {*}{Bias}
			& 0.008 & 0.009 & 0.005 & -0.023 & -0.015 & 0.005 & -0.093 & -0.021 & 0.010 &\\ 			
			
			\multirow{1}{*}{$\sqrt{\text{Var}}$}
			& 0.043 & 0.046 & 0.048 & 0.045 & 0.045 & 0.048 & 0.075 & 0.078 & 0.081 &\\

			\multirow{1}{*}{$\sqrt{\text{Evar}}$}
			& 0.043 & 0.048 & 0.045 & 0.046 & 0.045 & 0.047 & 0.078 & 0.077 & 0.080 &\\

			\multirow{1}{*}{Cov95}
			& 0.938 & 0.934 & 0.941 & 0.857 & 0.913 & 0.923 & 0.701 & 0.933 & 0.941 &\\
			\bottomrule
		\end{tabular}\vspace{3pt}
	}
\end{center}

\newpage
\begin{center}
	\captionsetup{width=1\linewidth}
	\captionof{table} {Summary of results for partially logistic modeling} \label{Table:logistic}
	\scalebox{1}{
		\begin{tabular}{cccc|ccc|cccc}
			\toprule
			\multirow{4}{*}{} &
			\multicolumn{3}{c|}{(C7) Cor Cor}&
			\multicolumn{3}{c|}{(C8) Cor Miss}&
			\multicolumn{3}{c}{(C9) Miss Cor}\\
			& $\hat{\theta}_\DB$ & $\hat{\theta}_1$ & $\hat{\theta}_2$ & $\hat{\theta}_\DB$ & $\hat{\theta}_1$ & $\hat{\theta}_2$ & $\hat{\theta}_\DB$& $\hat{\theta}_1$& $\hat{\theta}_2$&\\
			\midrule
			& \multicolumn{9}{c}{$n=600, p=100$}\\
			\multirow{1} {*}{Bias}
			& 0.035 & 0.029 & 0.025 & 0.011 & 0.012 & 0.011 & 0.122 & 0.032 & 0.015 &\\ 			
			
			\multirow{1}{*}{$\sqrt{\text{Var}}$}
			& 0.232 & 0.244 & 0.238 & 0.264 & 0.295 & 0.288 & 0.375 & 0.339 & 0.330 &\\

			\multirow{1}{*}{$\sqrt{\text{Evar}}$}
			& 0.225 & 0.233 & 0.233 & 0.295 & 0.287 & 0.287 & 0.369 & 0.315 & 0.315 &\\

			\multirow{1}{*}{Cov95}
			& 0.944 & 0.945 & 0.946 & 0.935 & 0.937 & 0.941 & 0.931 & 0.939 & 0.940 &\\
			\midrule
			& \multicolumn{9}{c}{$n=600, p=200$}\\
			\multirow{1} {*}{Bias}
			& 0.047 & 0.053 & 0.037 & 0.042 & 0.063 & 0.043 & 0.267 & 0.091 & 0.035 &\\ 			
			
			\multirow{1}{*}{$\sqrt{\text{Var}}$}
			& 0.233 & 0.241 & 0.239 & 0.266 & 0.300 & 0.291 & 0.281 & 0.343 & 0.329 &\\

			\multirow{1}{*}{$\sqrt{\text{Evar}}$}
			& 0.227 & 0.238 & 0.235 & 0.299 & 0.278 & 0.288 & 0.366 & 0.316 & 0.317 &\\
			
			\multirow{1}{*}{Cov95}
			& 0.940 & 0.934 & 0.944 & 0.950 & 0.931 & 0.948 & 0.901 & 0.920 & 0.938 &\\
			\midrule
			& \multicolumn{9}{c}{$n=600, p=800$}\\
			\multirow{1} {*}{Bias}
			& 0.059 & 0.065 & 0.049 & 0.045 & 0.068 & 0.046 & 0.244 & 0.0524 & 0.045 &\\ 			
			
			\multirow{1}{*}{$\sqrt{\text{Var}}$}
			& 0.232 & 0.245 & 0.239 & 0.273 & 0.315 & 0.298 & 0.278 & 0.339 & 0.332 &\\

			\multirow{1}{*}{$\sqrt{\text{Evar}}$}
			& 0.226 & 0.241 & 0.238 & 0.296 & 0.287 & 0.289 & 0.371 & 0.322 & 0.326 &\\

			\multirow{1}{*}{Cov95}
			& 0.936 & 0.930 & 0.936 & 0.945 & 0.937 & 0.949 & 0.900 & 0.929 & 0.938 &\\
			\bottomrule
		\end{tabular}\vspace{3pt}
	}
\end{center}

\newpage

\begin{figure}
	\caption{\small QQ plots of the estimates (first column) and $t$-statistics (second column) against standard normal ($n=400$, $p=100$) with partially linear modeling}
	\label{fig:linear_p=100}
	\centering
	\subfloat{{\includegraphics[width=70mm]{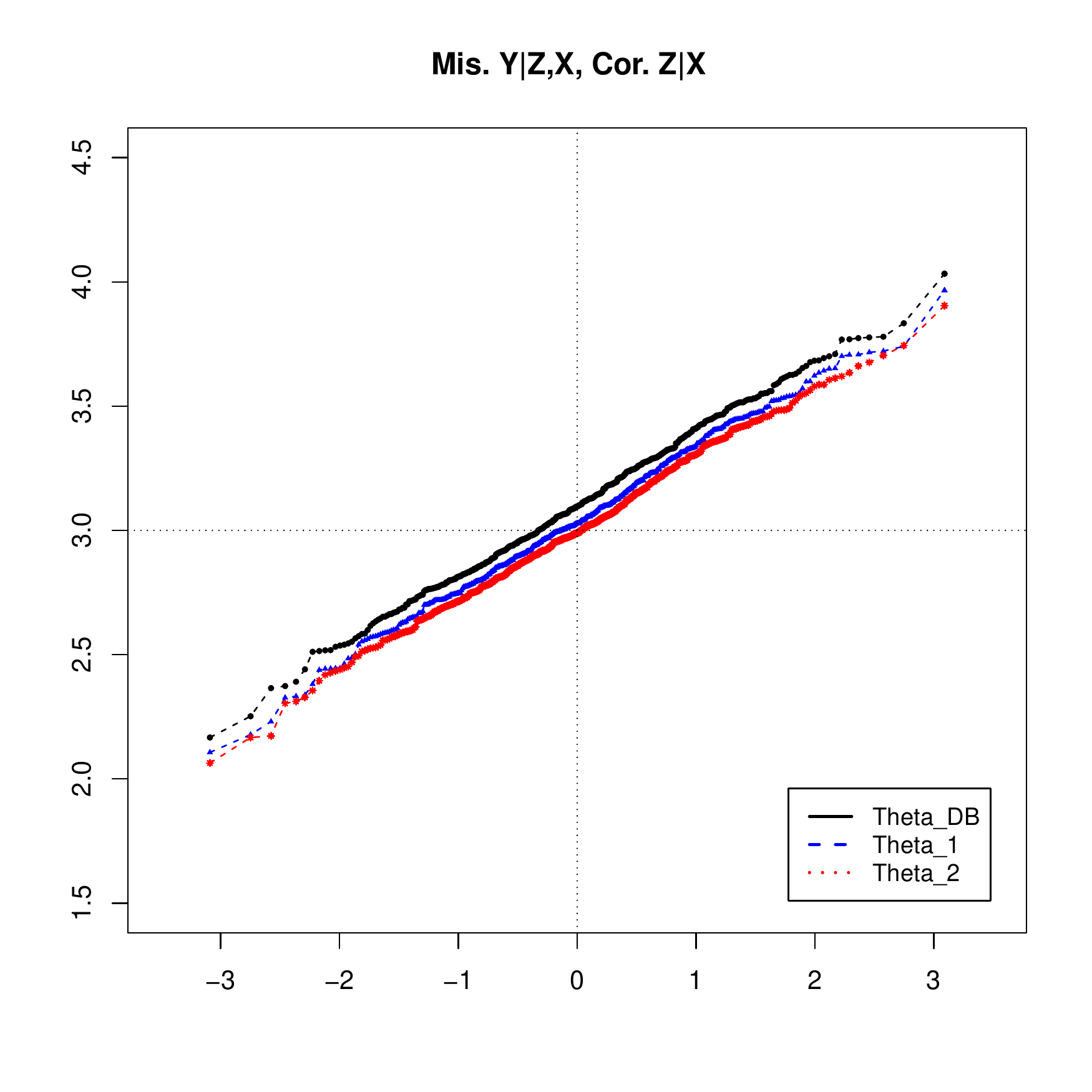} }}%
	\quad
	\subfloat{{\includegraphics[width=70mm]{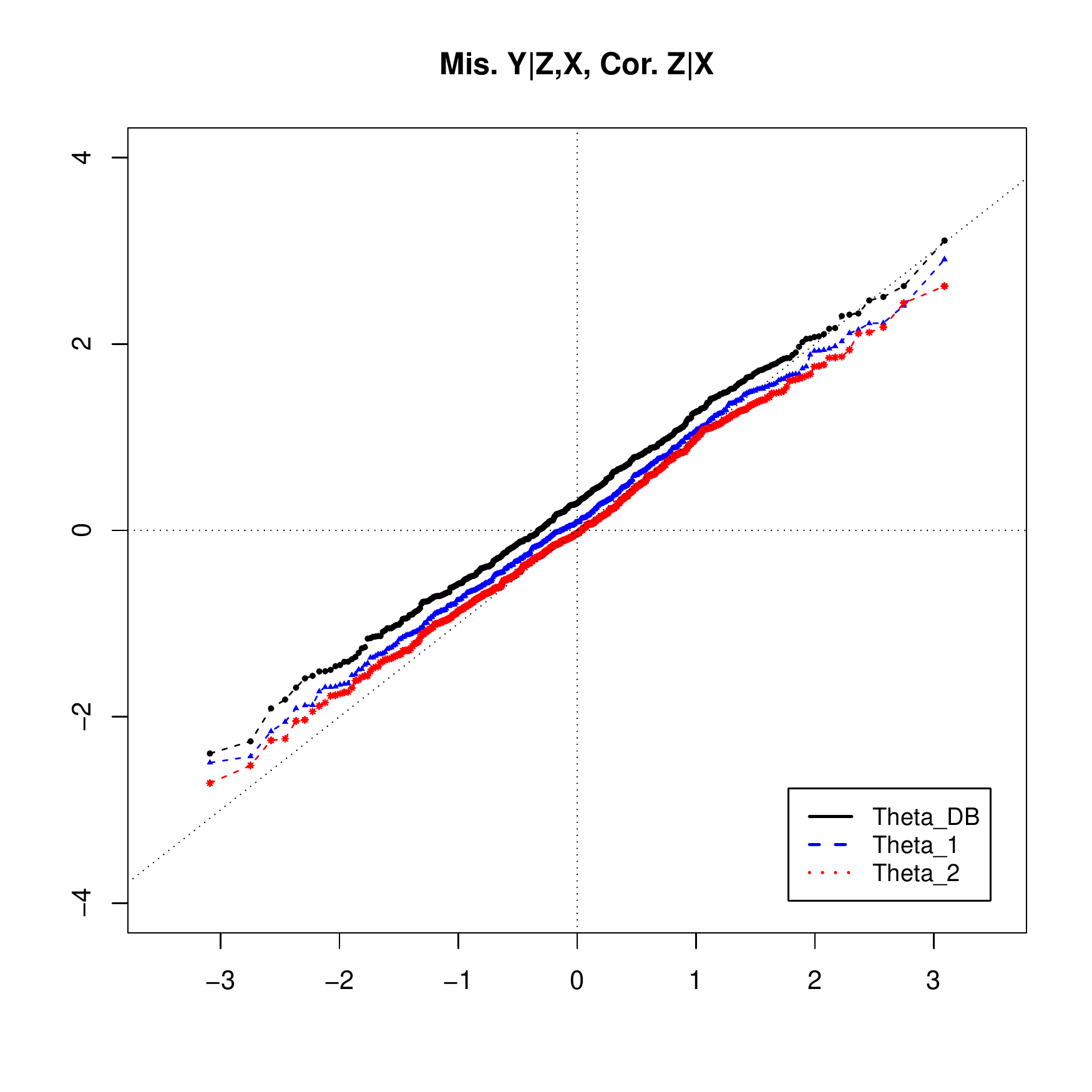} }}%
\vspace{-.25in}
	\subfloat{{\includegraphics[width=70mm]{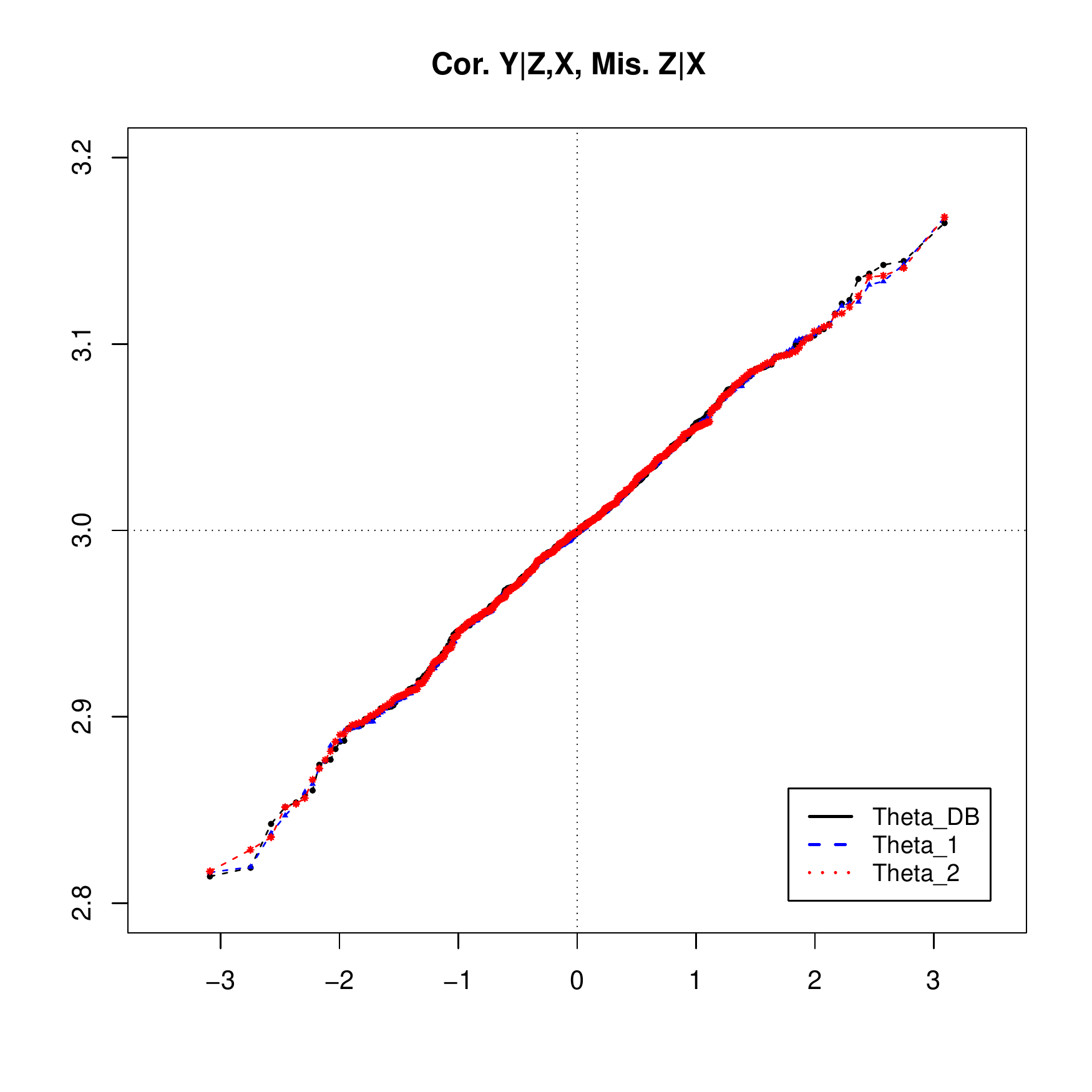} }}%
	\quad
	\subfloat{{\includegraphics[width=70mm]{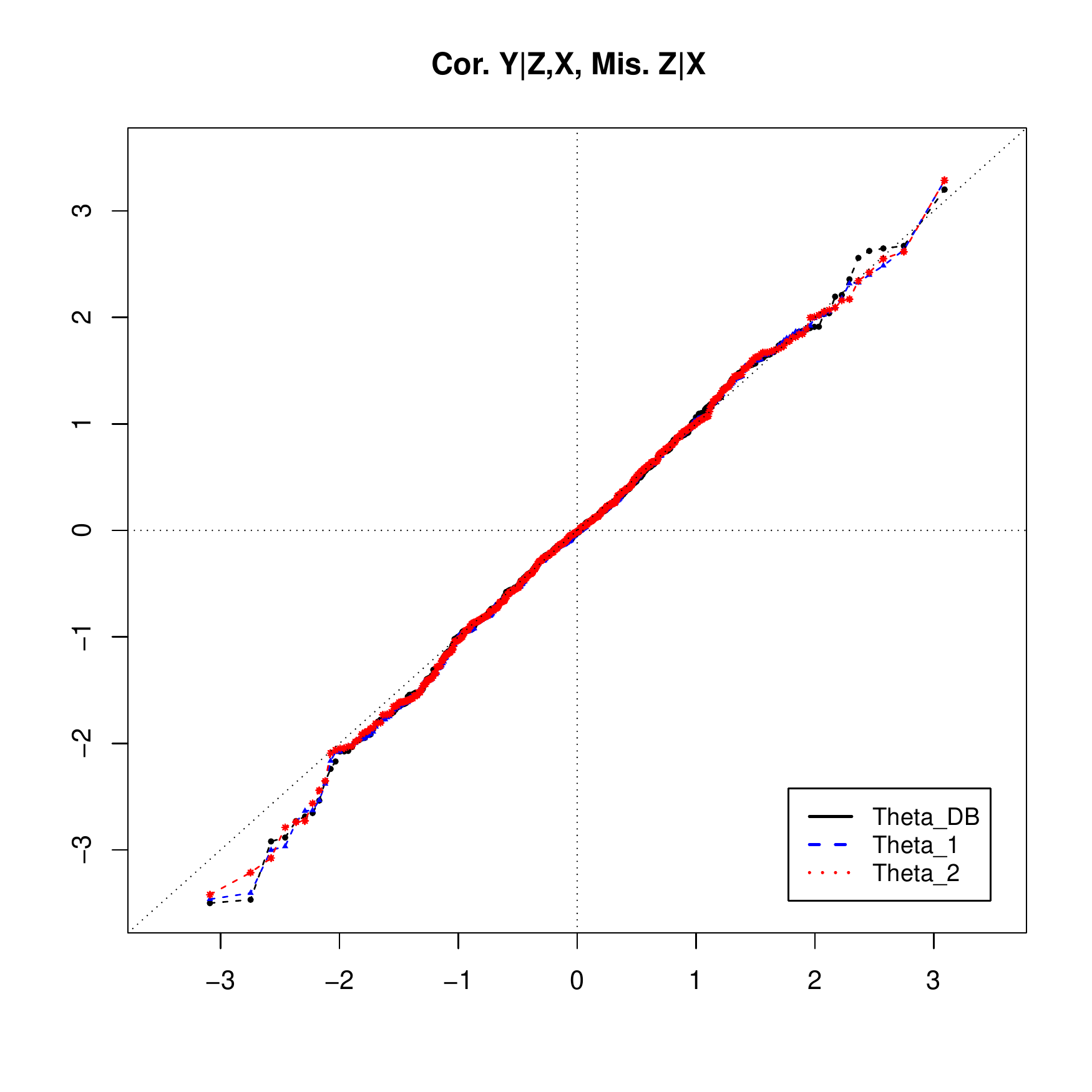} }}%
\vspace{-.25in}
	\subfloat{{\includegraphics[width=70mm]{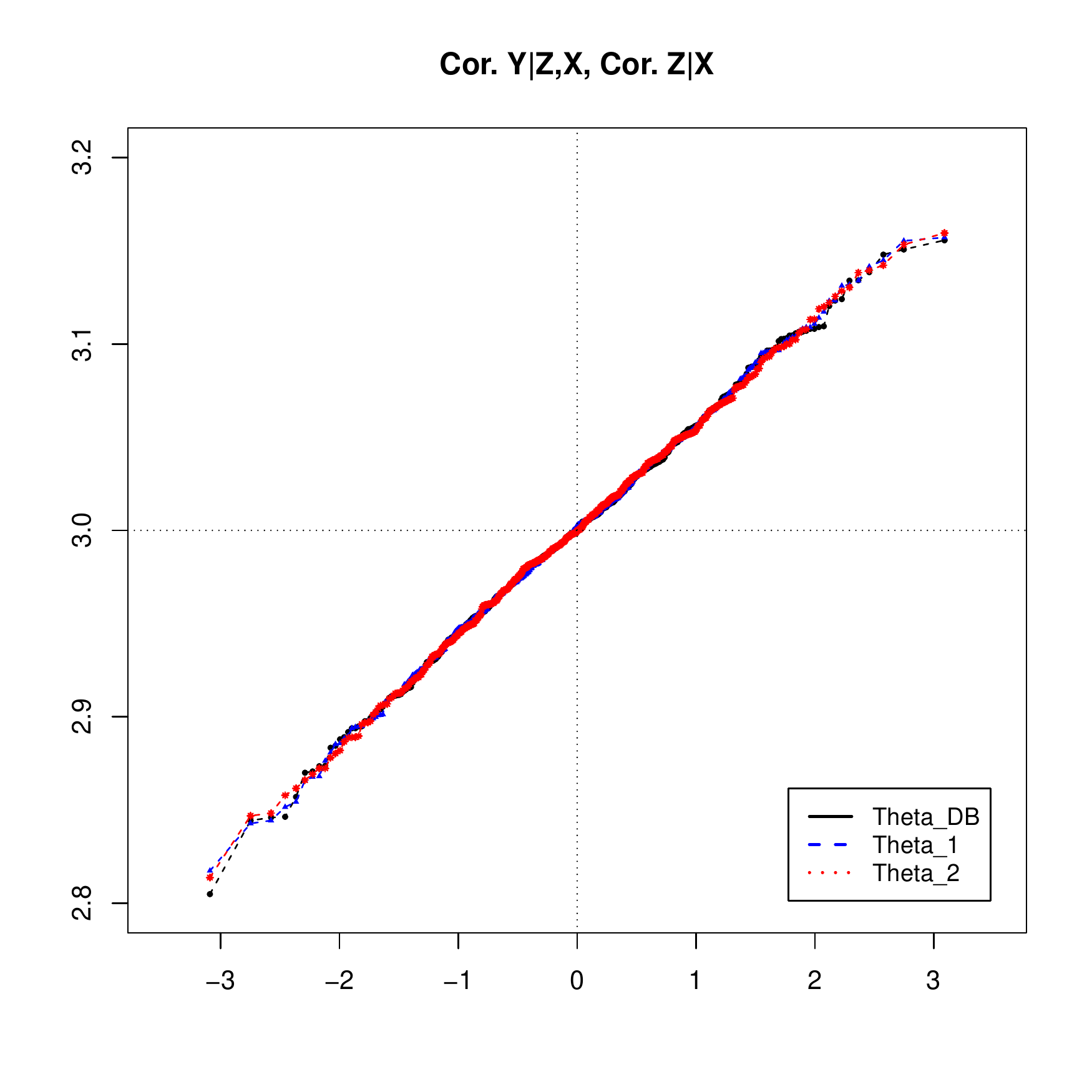} }}%
	\quad
	\subfloat{{\includegraphics[width=70mm]{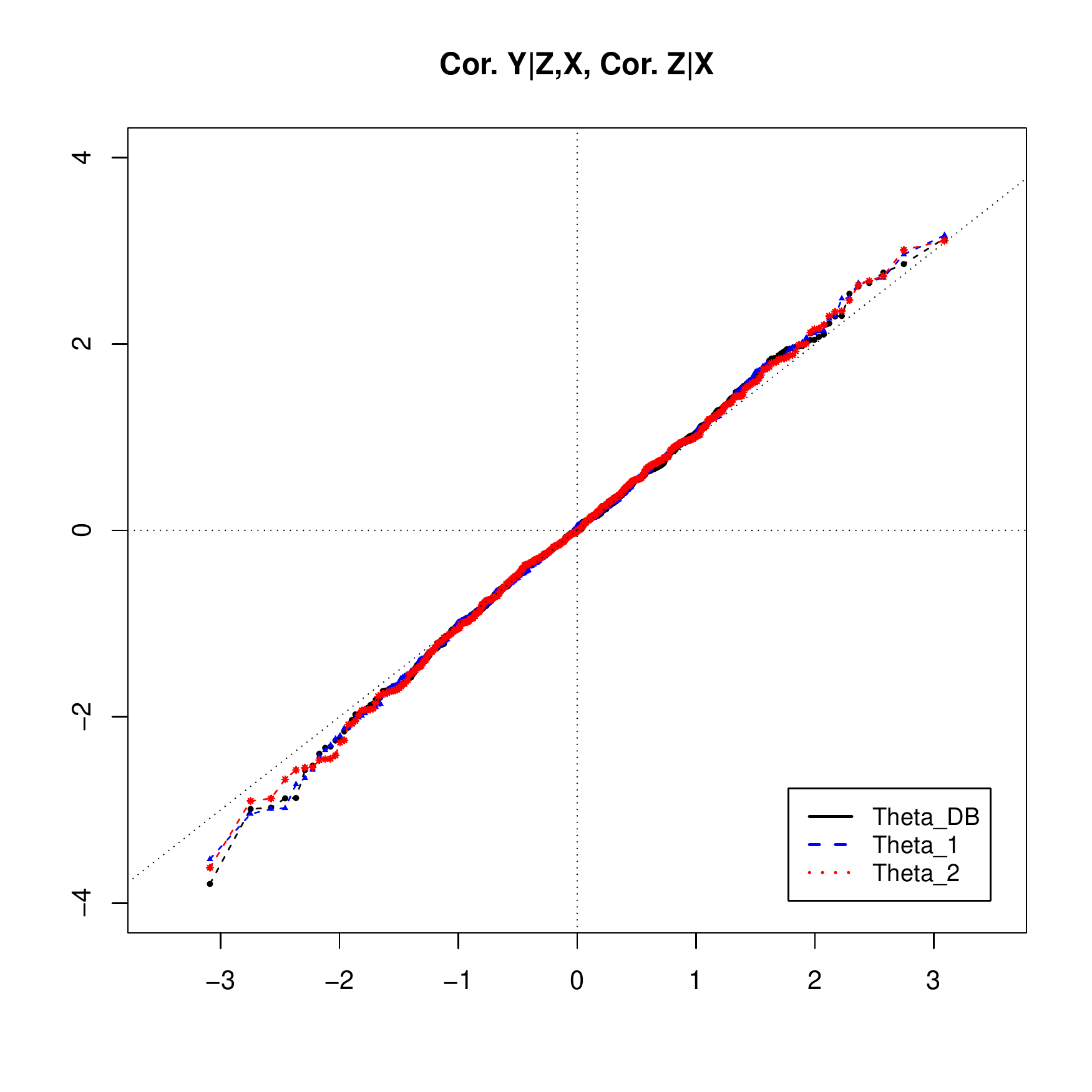} }}%
\end{figure}

\begin{figure}
	\caption{\small QQ plots of the estimates (first column) and $t$-statistics (second column) against standard normal ($n=400$, $p=200$) for partially linear modeling}
	\label{fig:linear_p=200}
	\centering
	\subfloat{{\includegraphics[width=70mm]{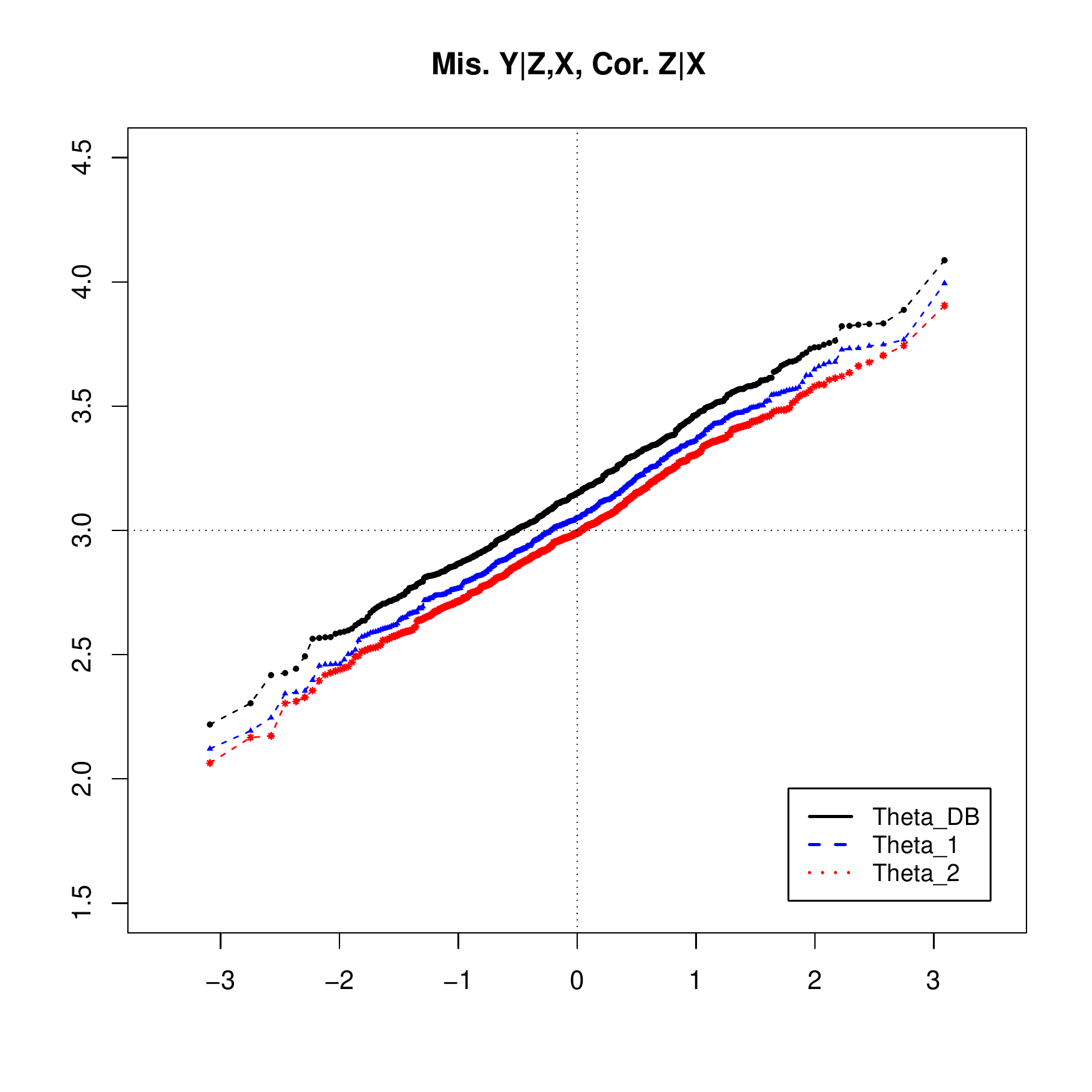} }}%
	\quad
	\subfloat{{\includegraphics[width=70mm]{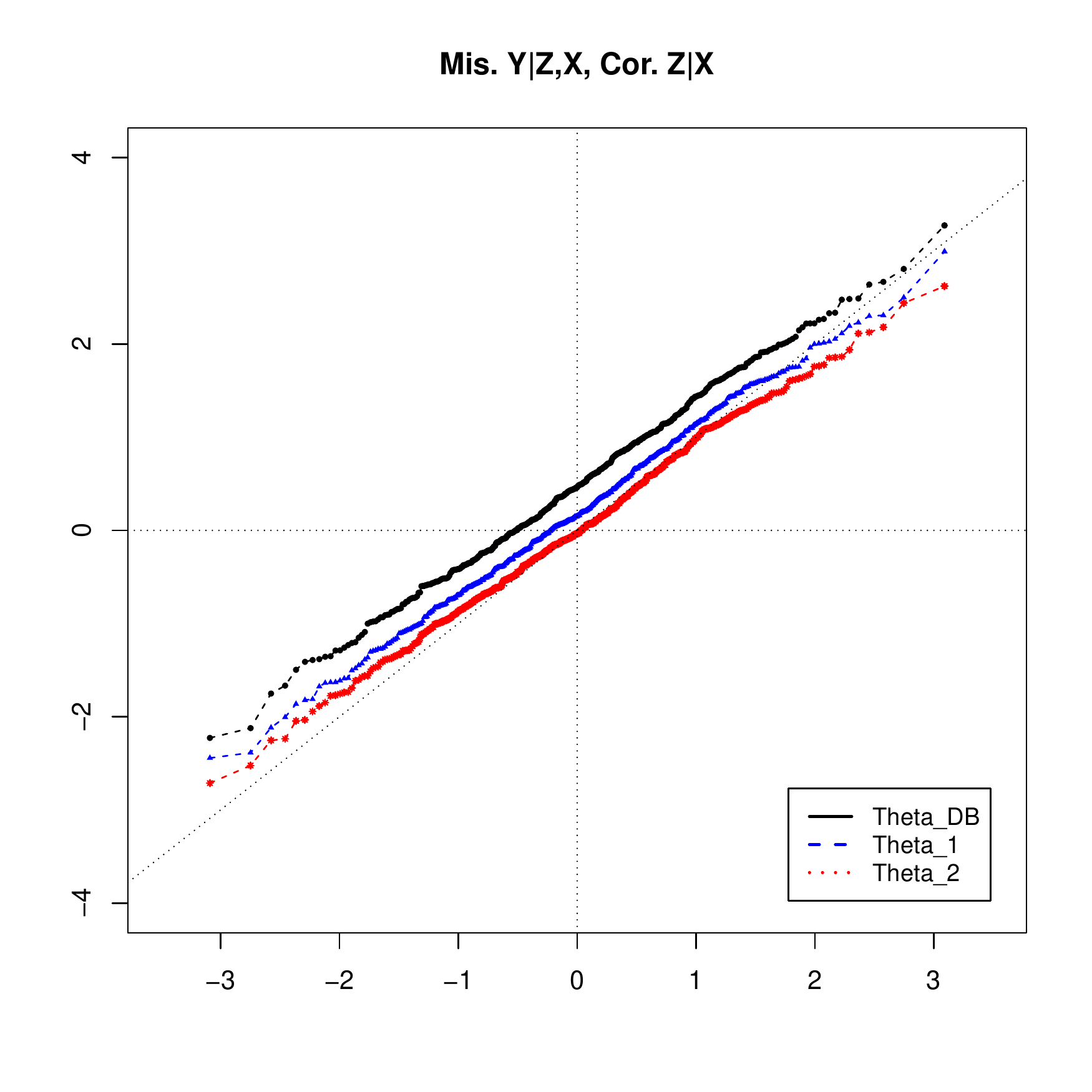} }}%
\vspace{-.25in}
	\subfloat{{\includegraphics[width=70mm]{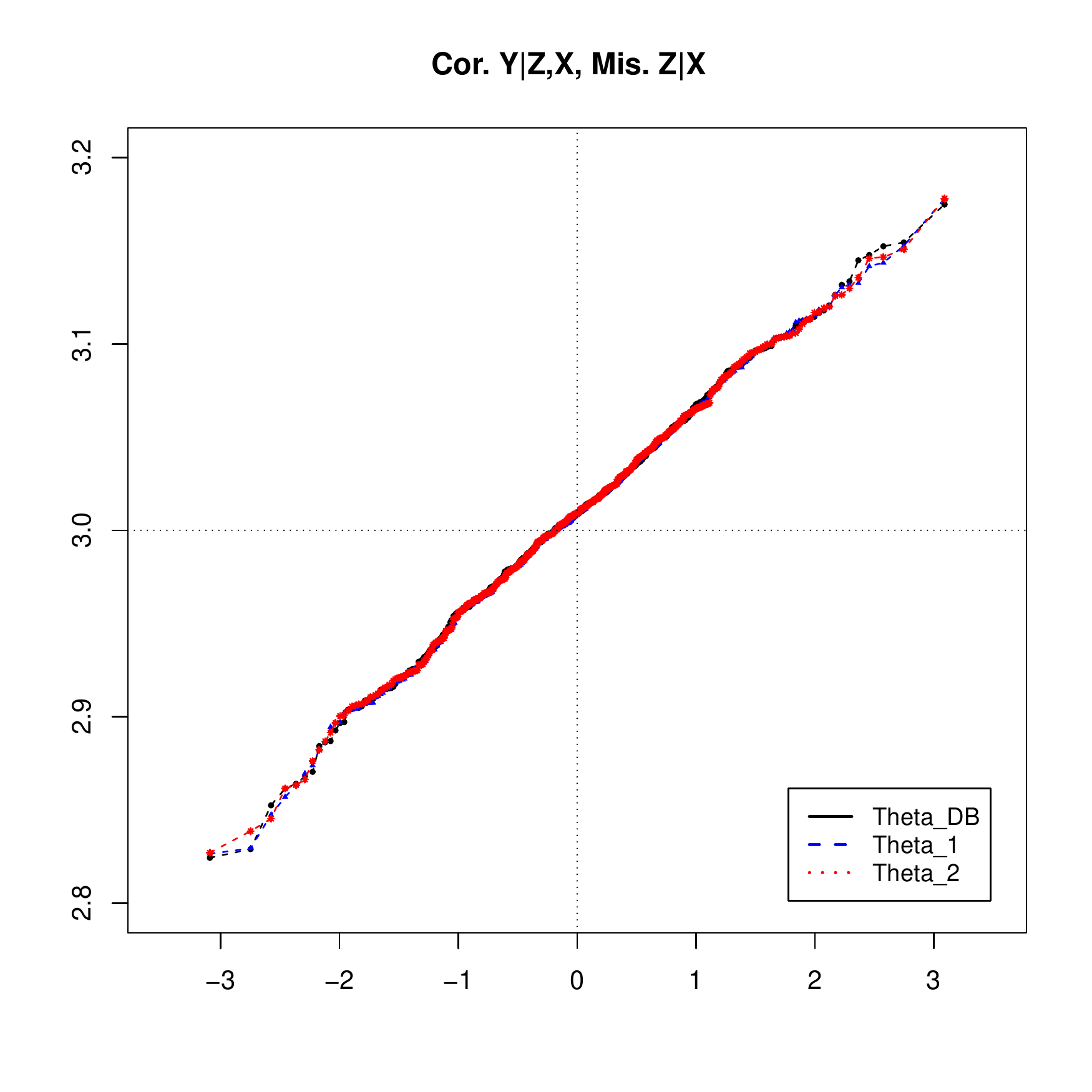} }}%
	\quad
	\subfloat{{\includegraphics[width=70mm]{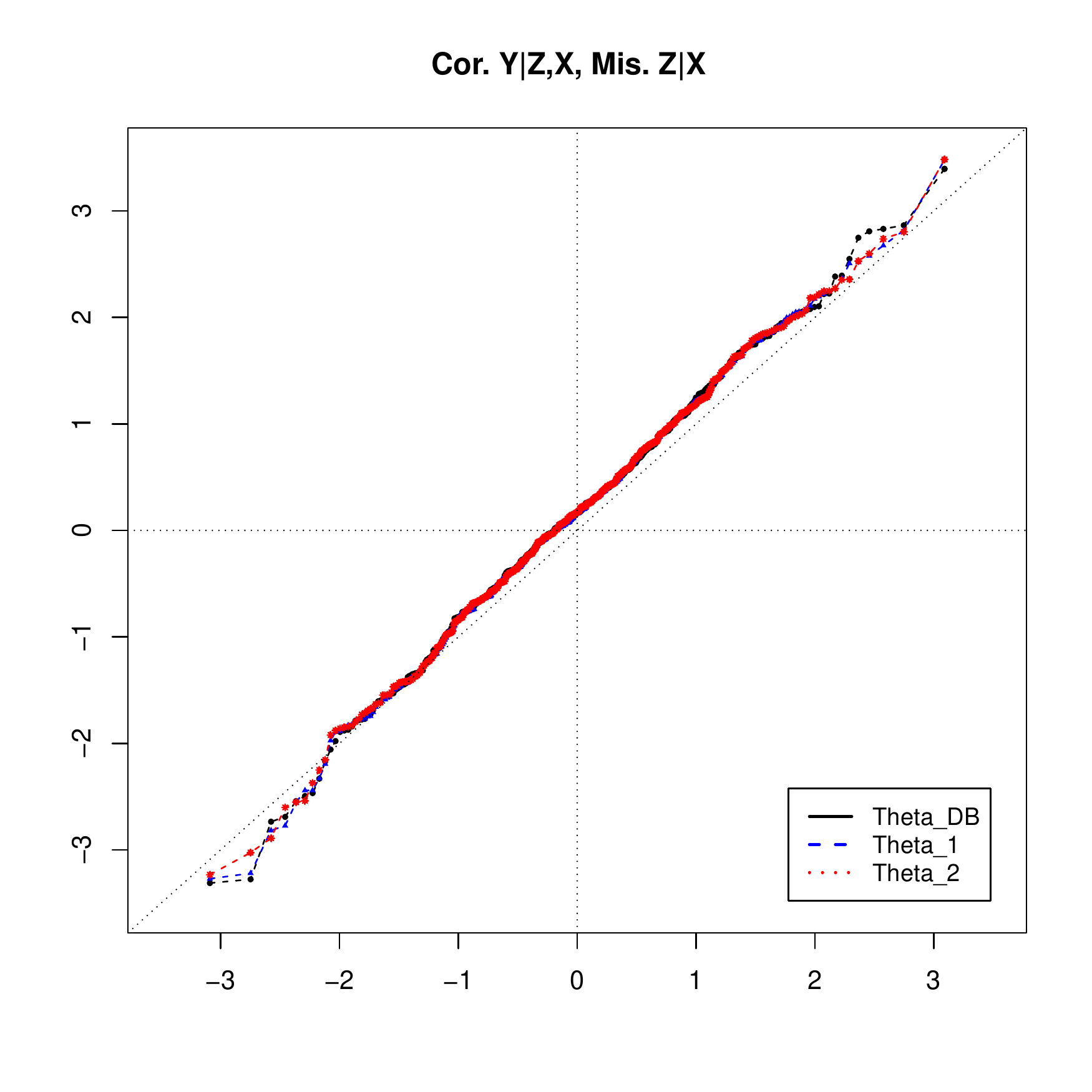} }}%
\vspace{-.25in}
	\subfloat{{\includegraphics[width=70mm]{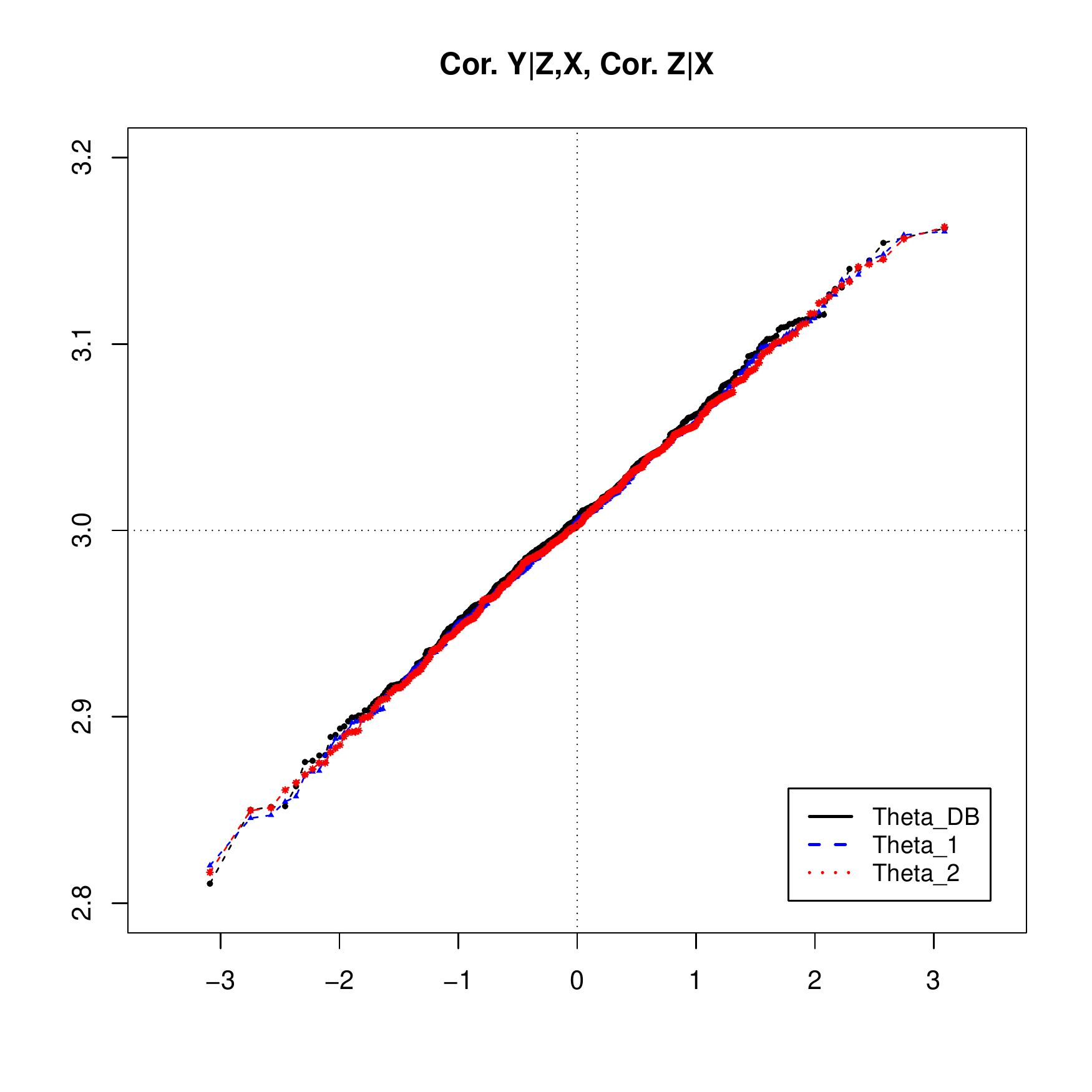} }}%
	\quad
	\subfloat{{\includegraphics[width=70mm]{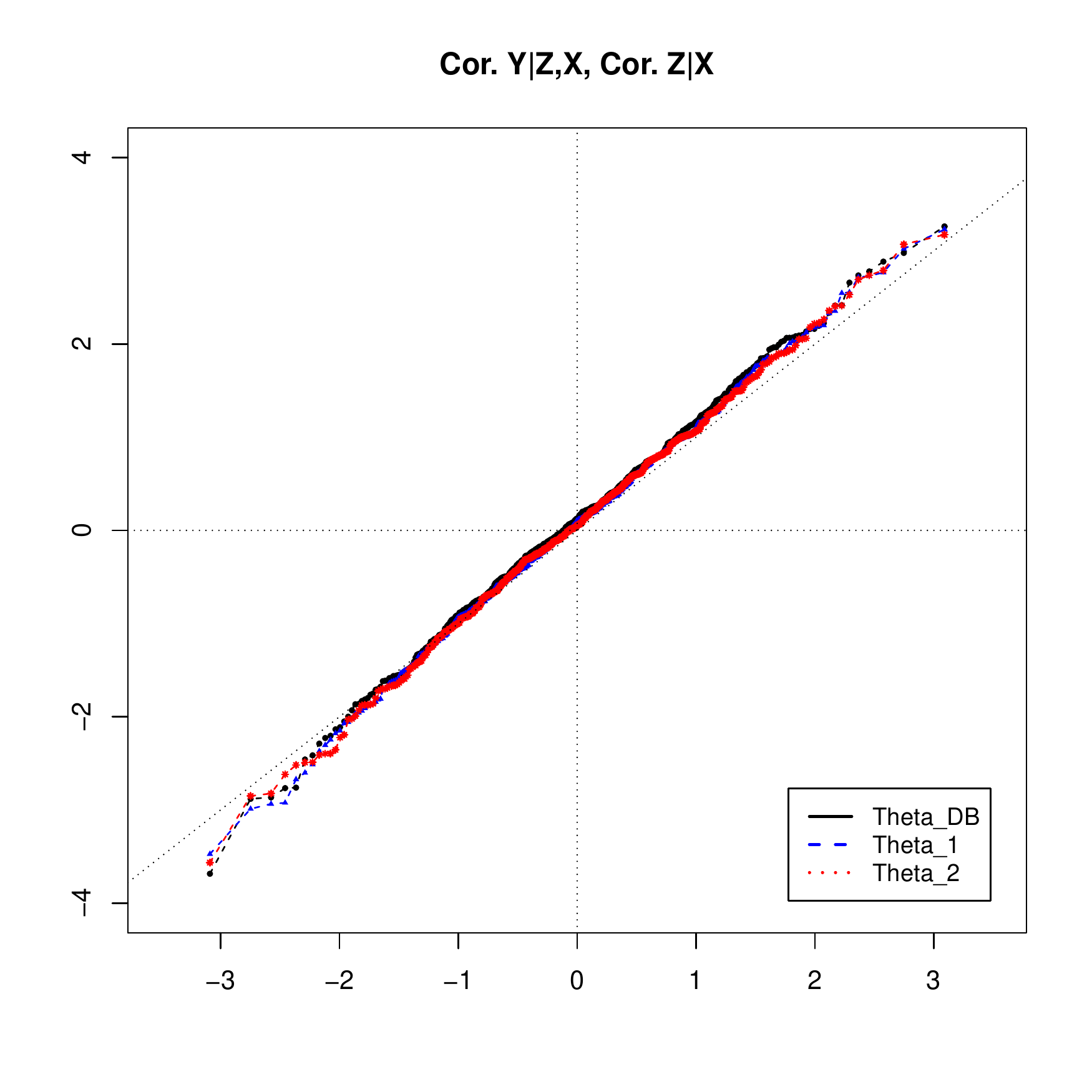} }}%
\end{figure}

\begin{figure}
	\caption{\small QQ plots of the estimates (first column) and $t$-statistics (second column) against standard normal ($n=400$, $p=800$) for partially linear modeling}
	\label{fig:linear_p=800}
\centering
\subfloat{{\includegraphics[width=70mm]{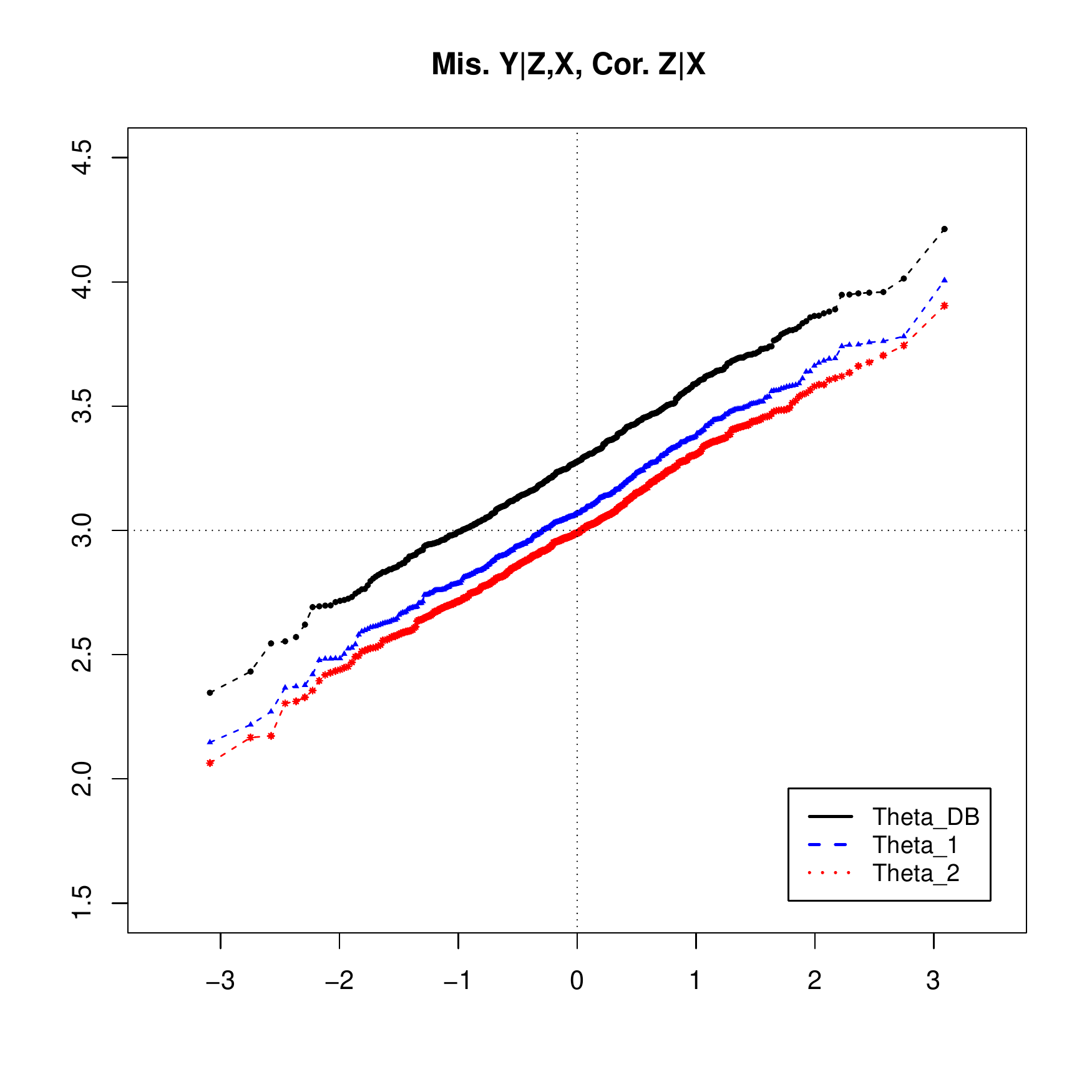} }}%
\quad
\subfloat{{\includegraphics[width=70mm]{QQ/linear_miss_cor_p=800_t.pdf} }}%
\vspace{-.25in}
\subfloat{{\includegraphics[width=70mm]{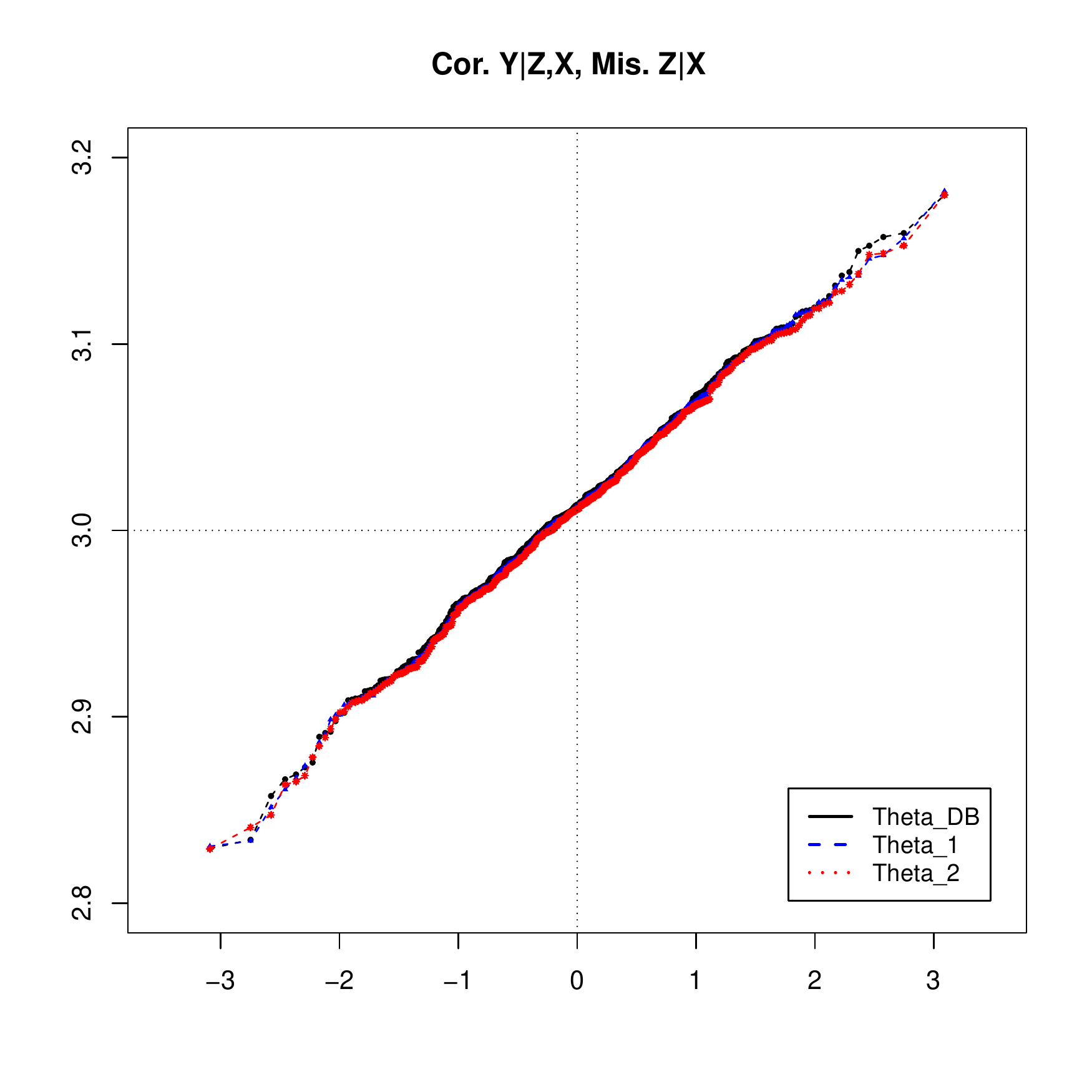} }}%
\quad
\subfloat{{\includegraphics[width=70mm]{QQ/linear_cor_miss_p=800_t.pdf} }}%
\vspace{-.25in}
\subfloat{{\includegraphics[width=70mm]{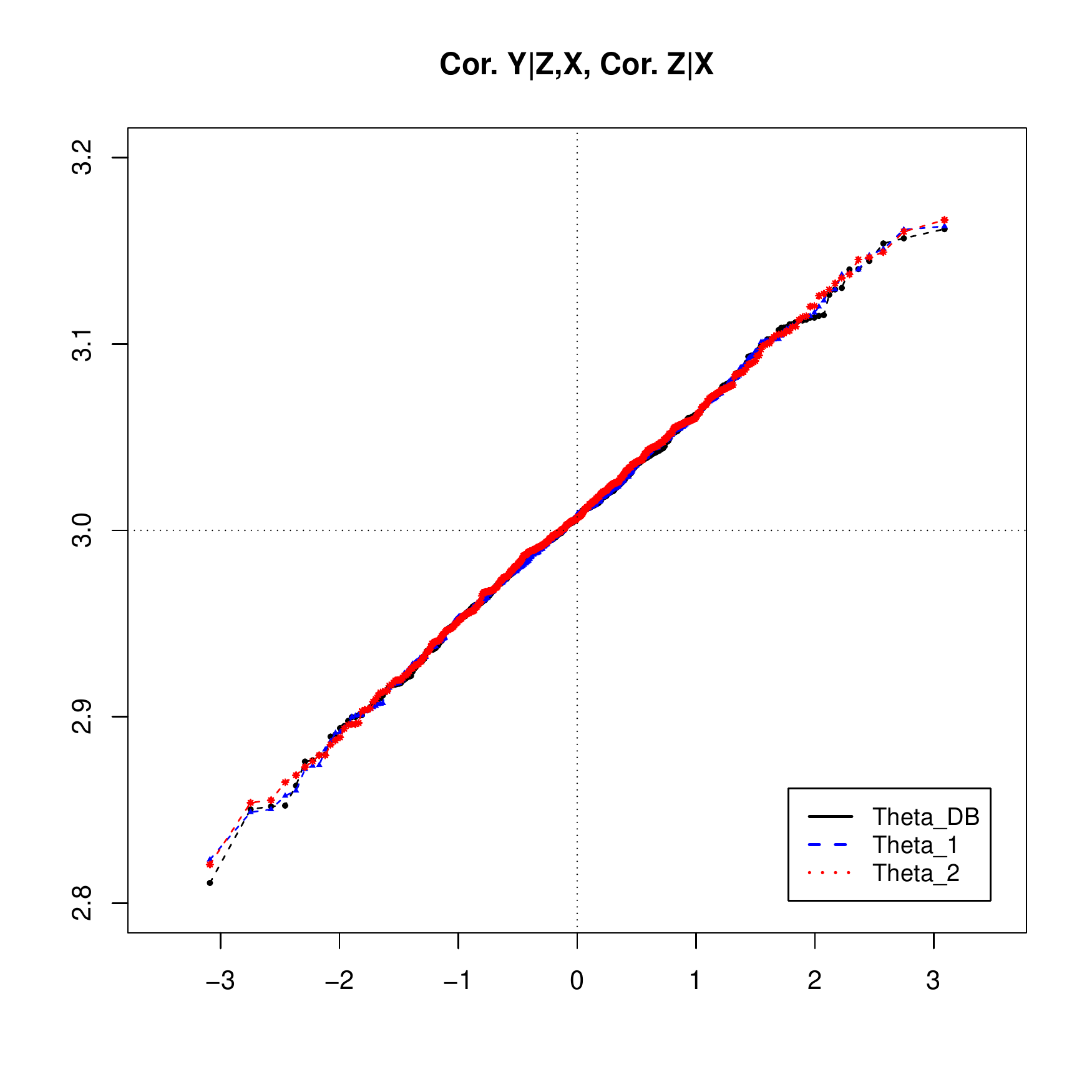} }}%
\quad
\subfloat{{\includegraphics[width=70mm]{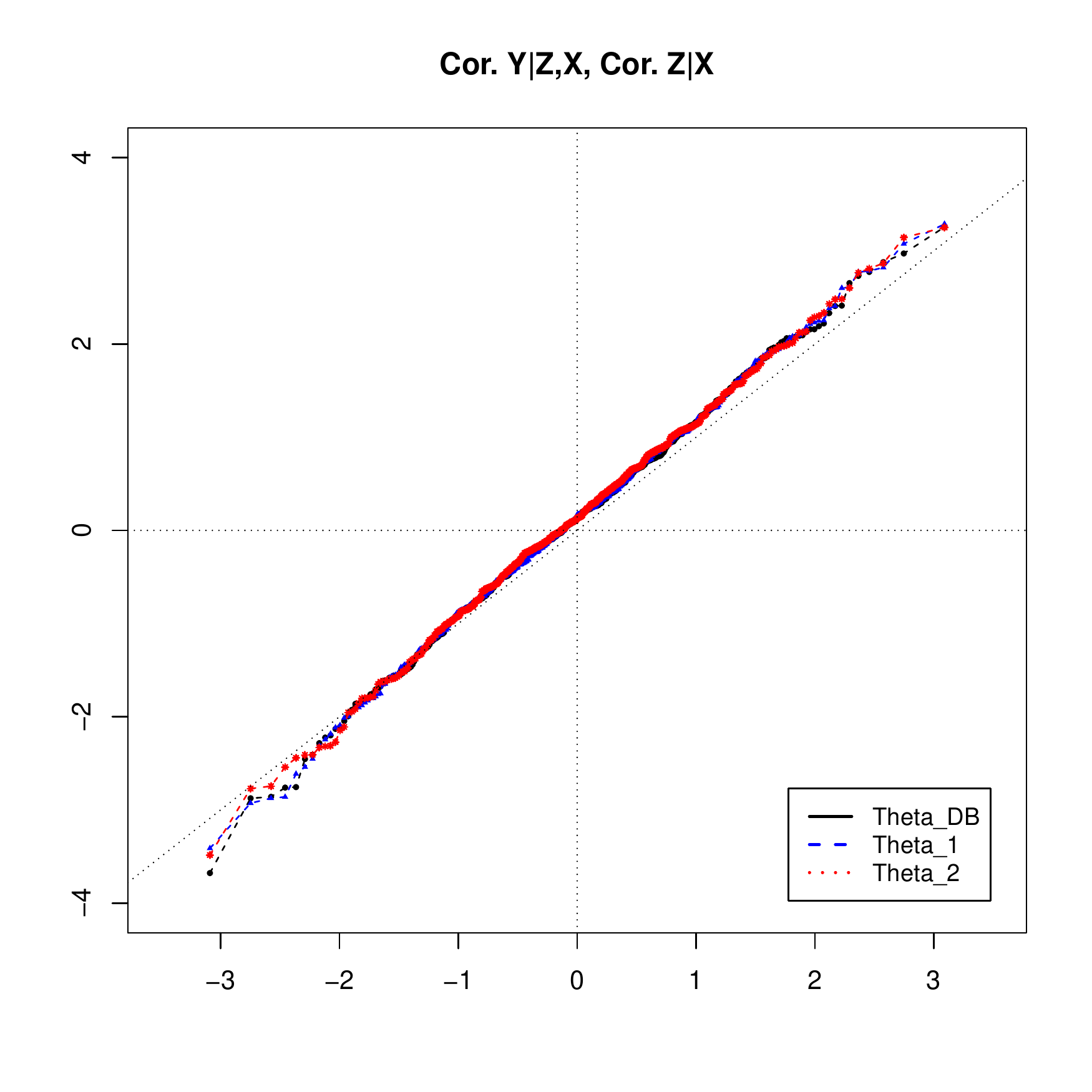} }}%
\end{figure}

\begin{figure}
	\caption{\small QQ plots of the estimates (first column) and $t$-statistics (second column) against standard normal ($n=600$, $p=100$) for partially log-linear modeling}
	\label{fig:log-linear_p=100}
	\centering
	\subfloat{{\includegraphics[width=70mm]{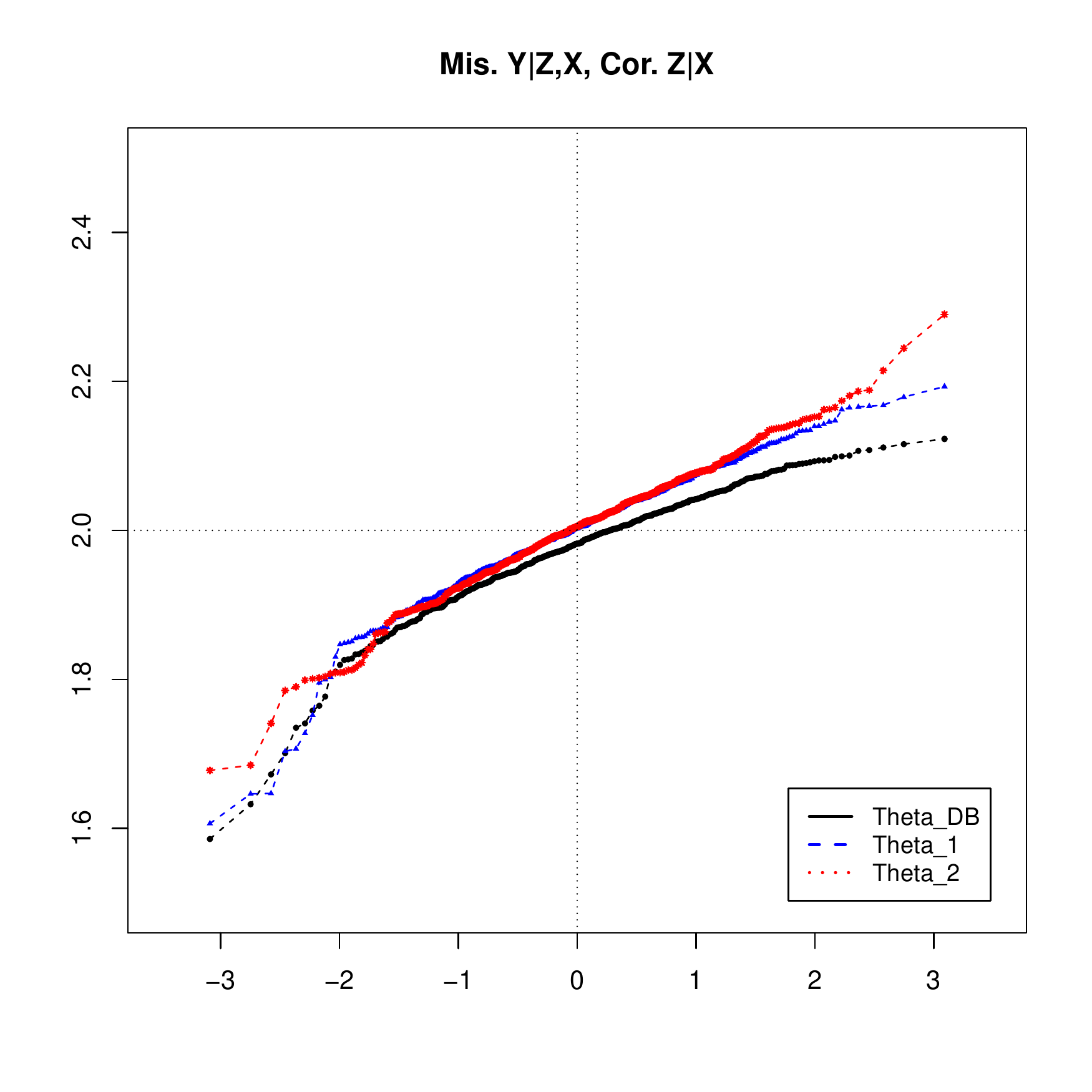} }}%
	\quad
	\subfloat{{\includegraphics[width=70mm]{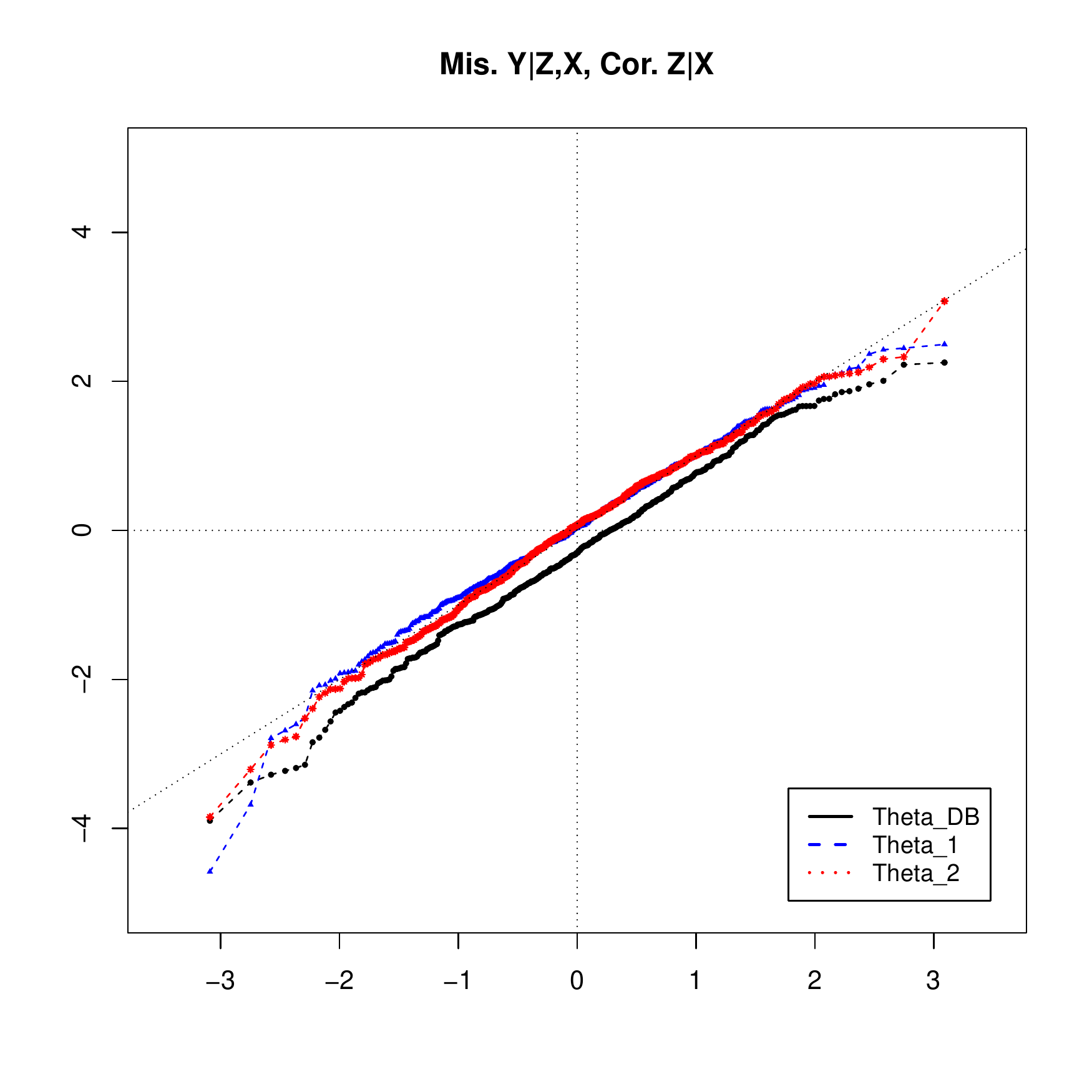} }}%
\vspace{-.25in}
	\subfloat{{\includegraphics[width=70mm]{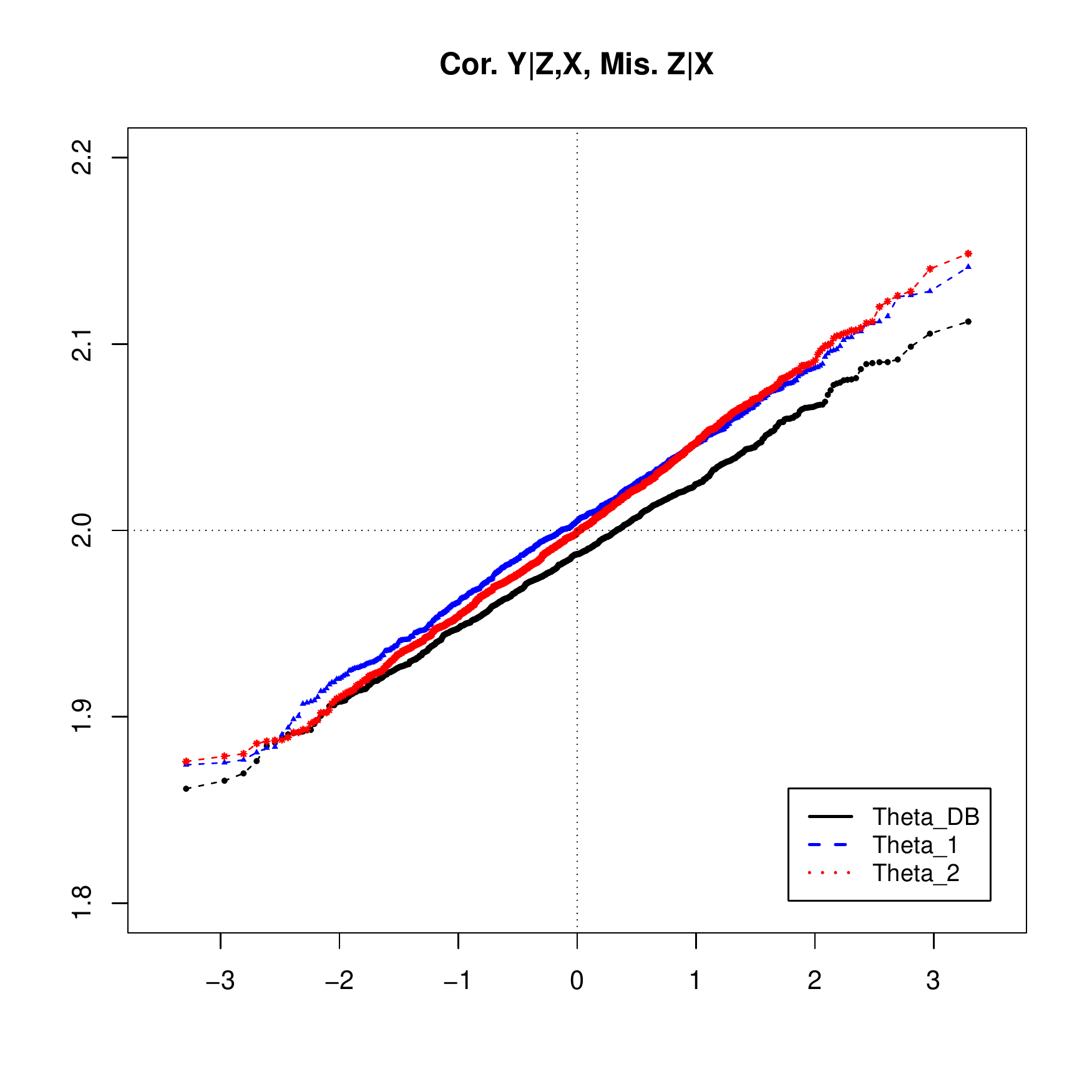} }}%
	\quad
	\subfloat{{\includegraphics[width=70mm]{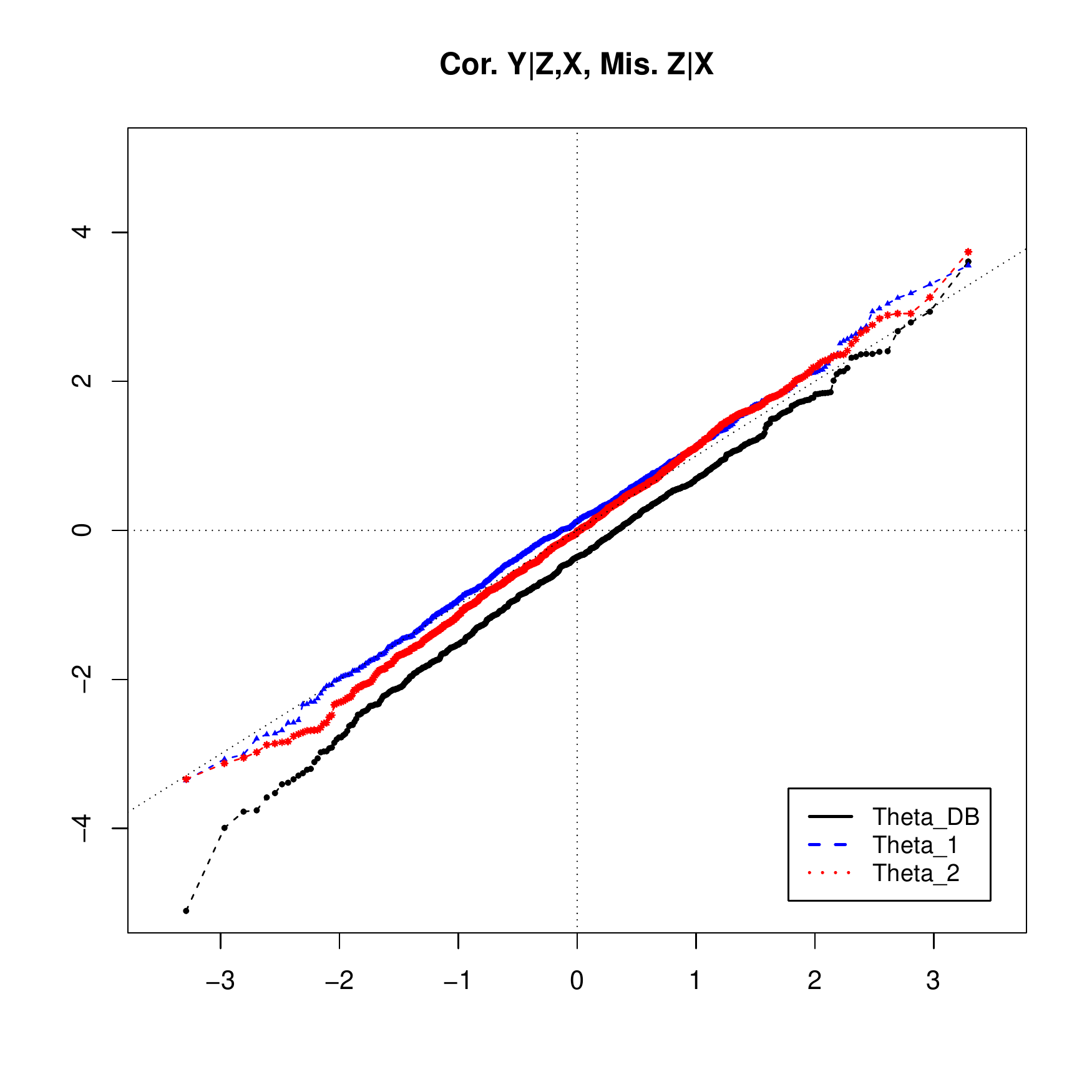} }}%
\vspace{-.25in}
	\subfloat{{\includegraphics[width=70mm]{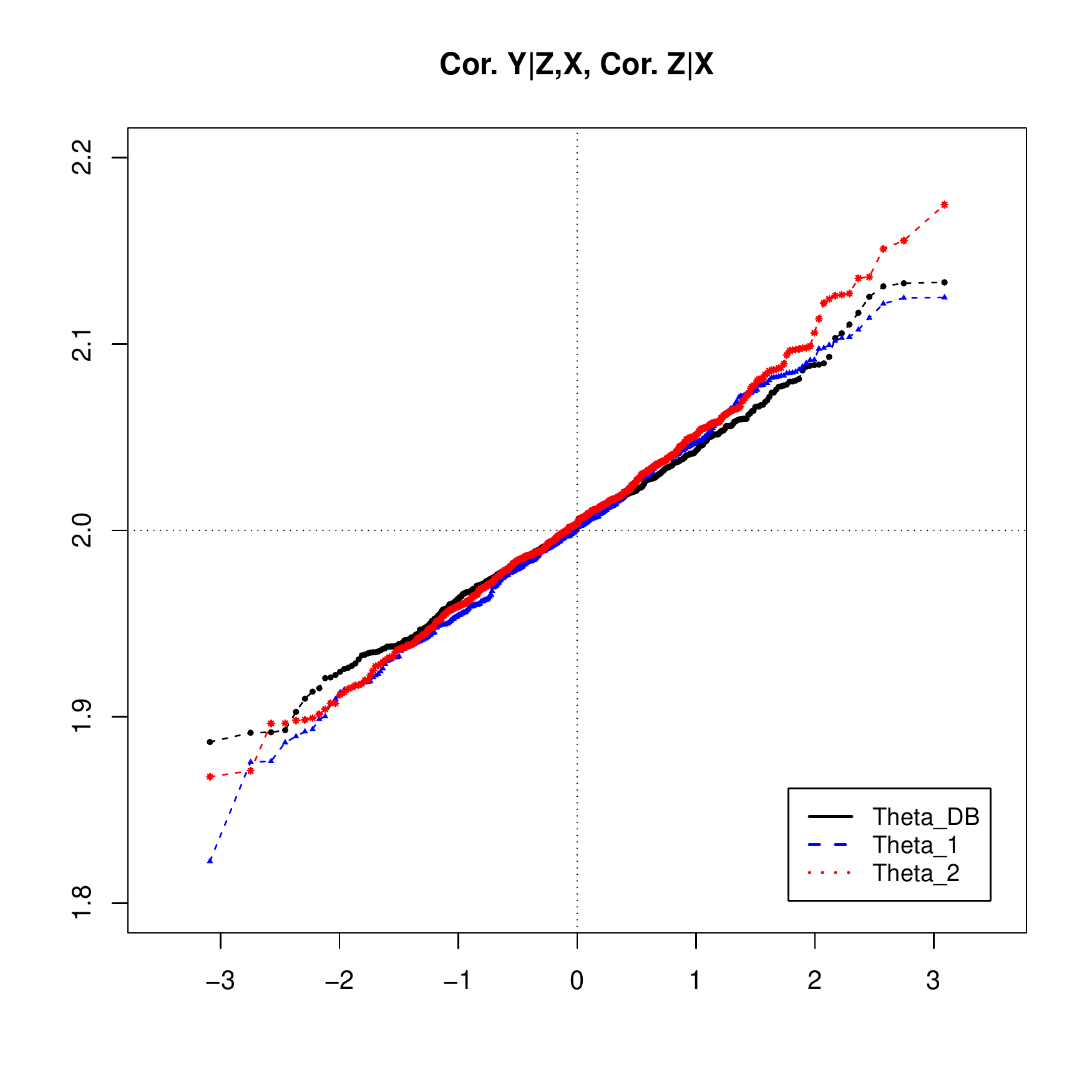} }}%
	\quad
	\subfloat{{\includegraphics[width=70mm]{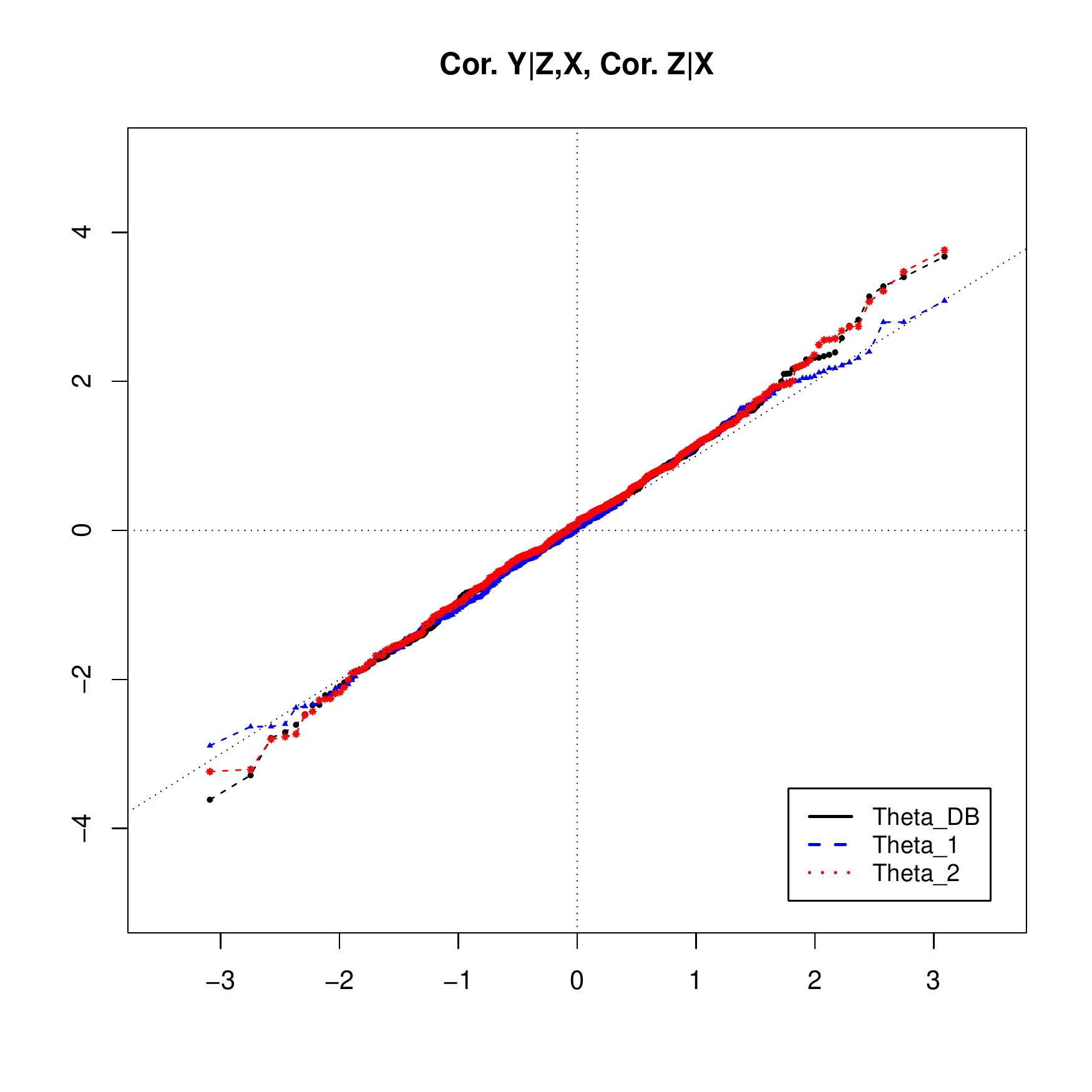} }}%
\end{figure}

\begin{figure}
	\caption{\small QQ plots of the estimates (first column) and $t$-statistics (second column) against standard normal ($n=600$, $p=200$) for partially log-linear modeling}
	\label{fig:log-linear_p=200}
	\centering
	\subfloat{{\includegraphics[width=70mm]{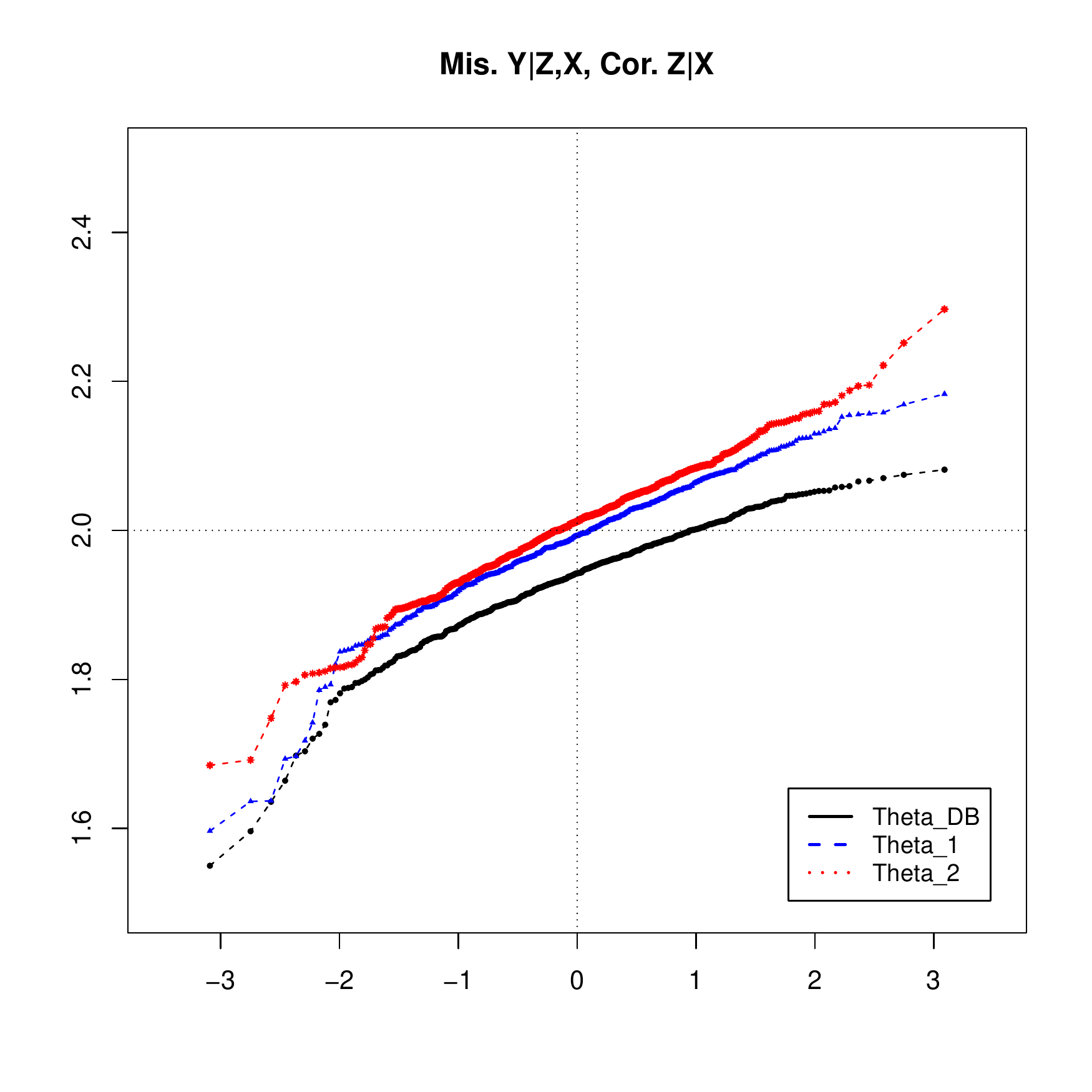} }}%
	\quad
	\subfloat{{\includegraphics[width=70mm]{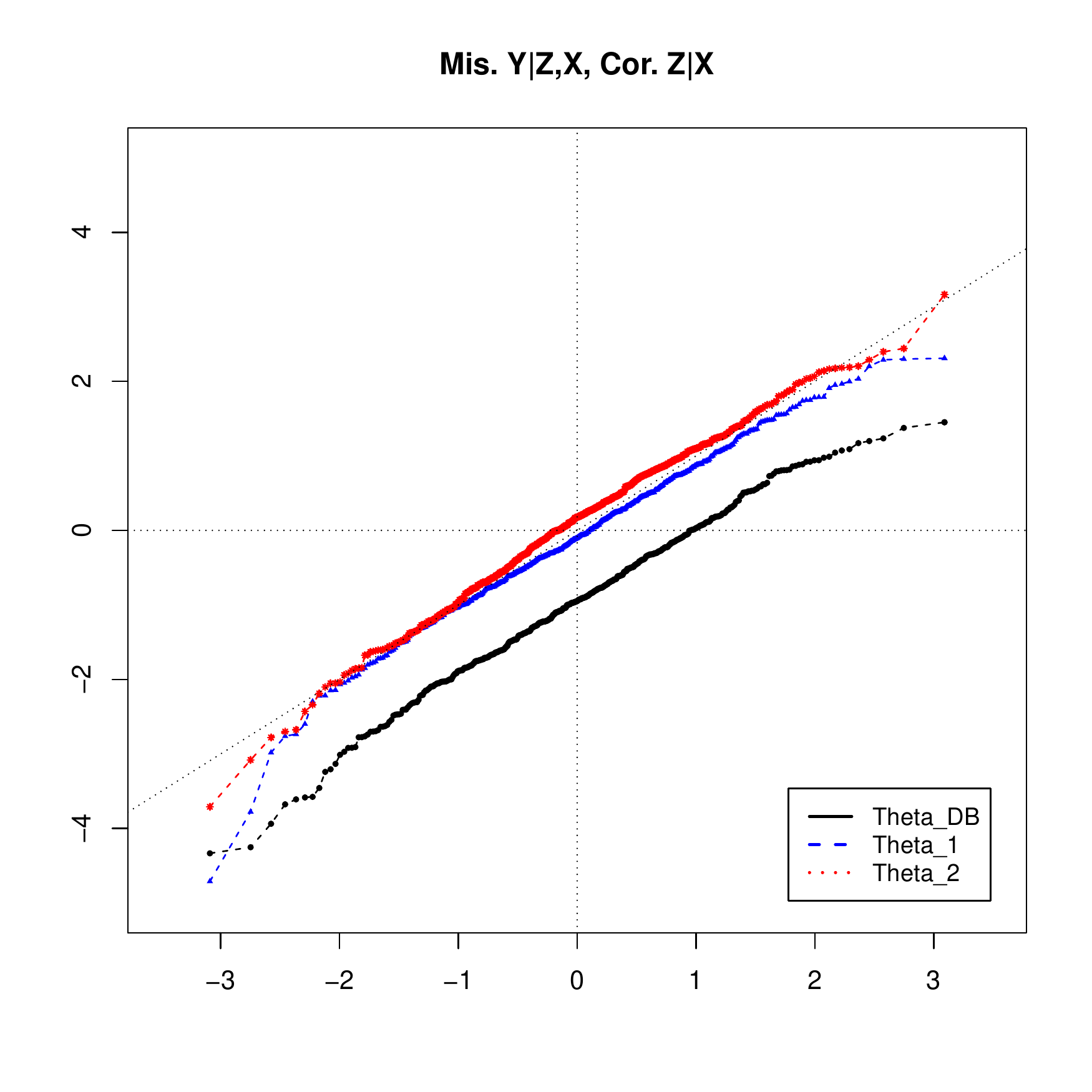} }}%
\vspace{-.25in}
	\subfloat{{\includegraphics[width=70mm]{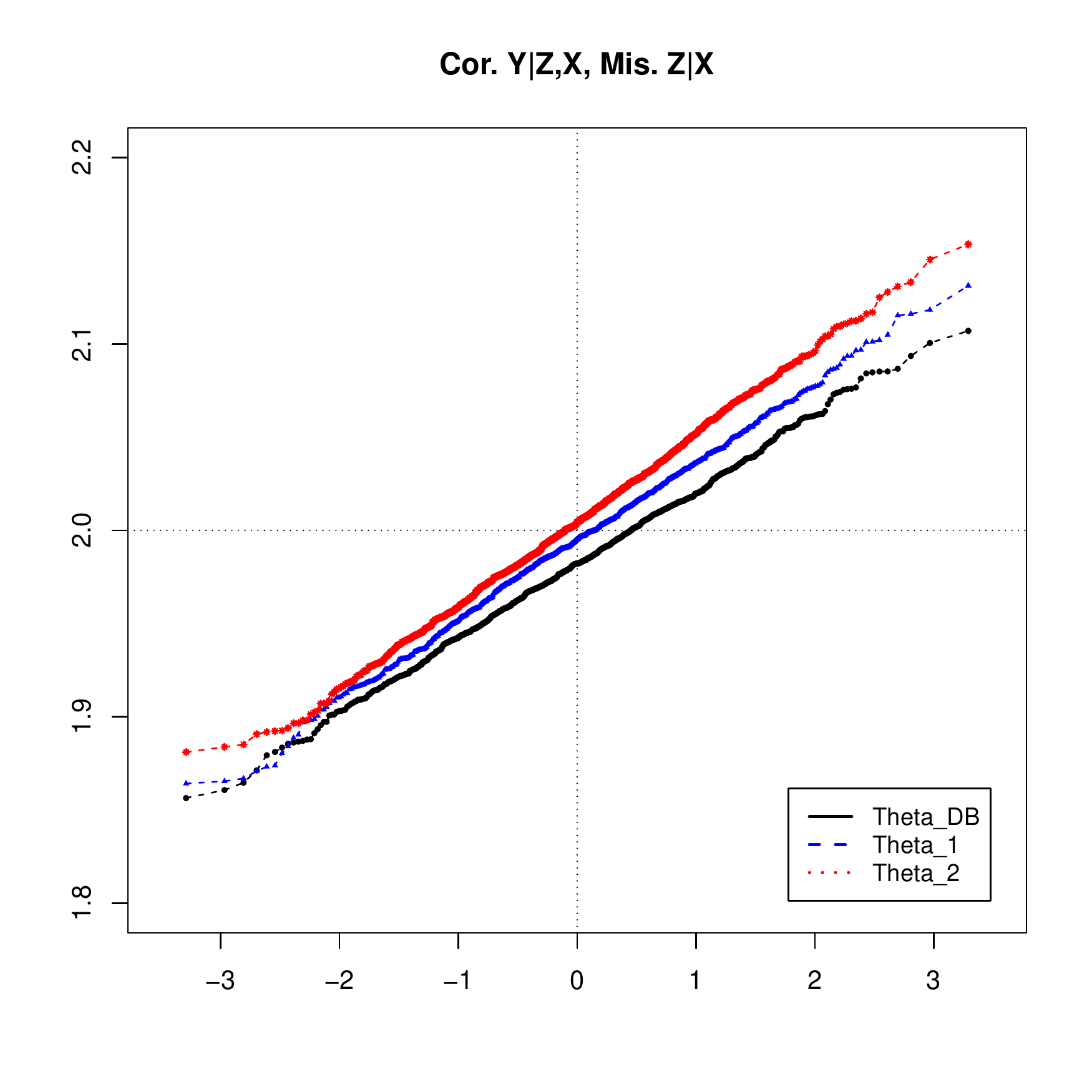} }}%
	\quad
	\subfloat{{\includegraphics[width=70mm]{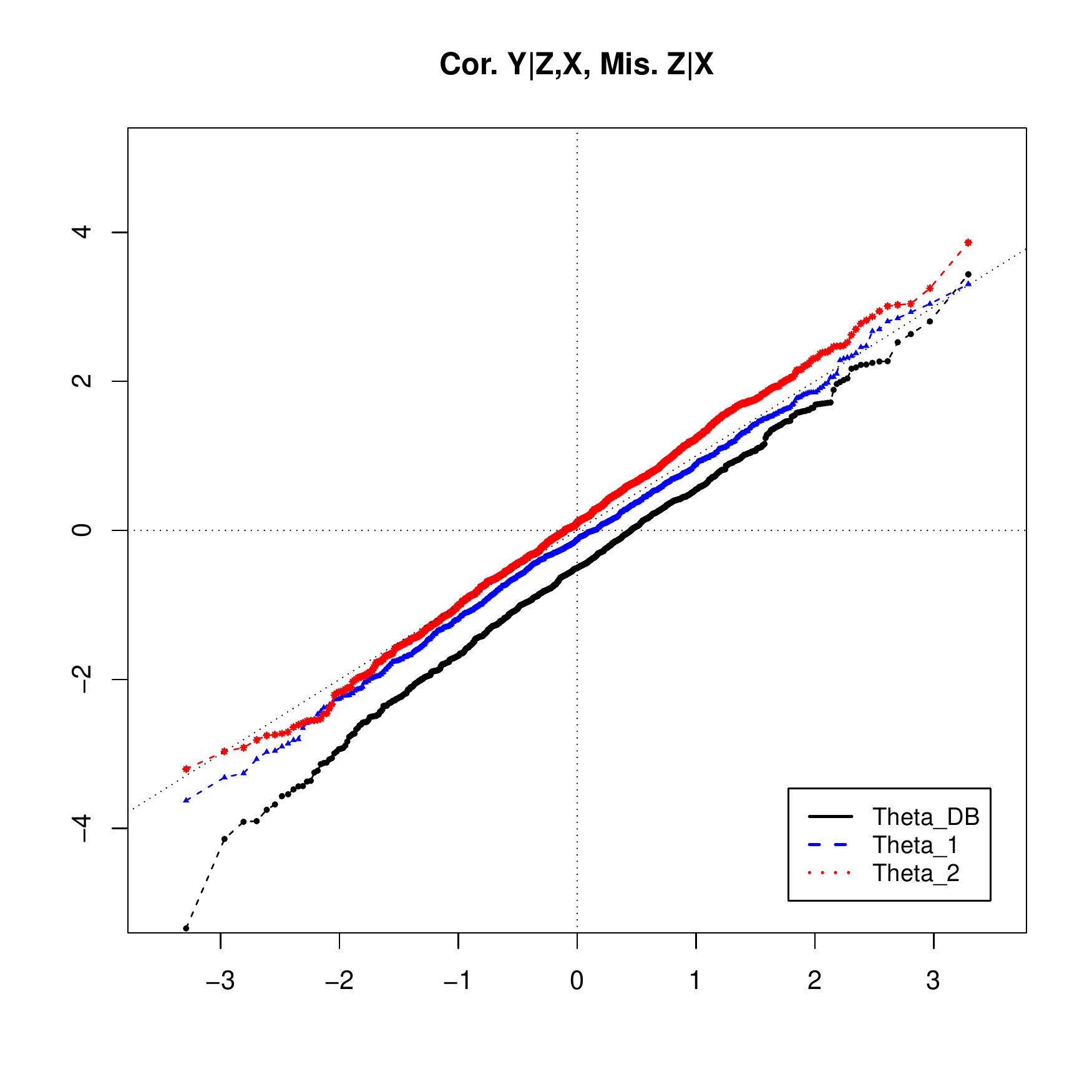} }}%
\vspace{-.25in}
	\subfloat{{\includegraphics[width=70mm]{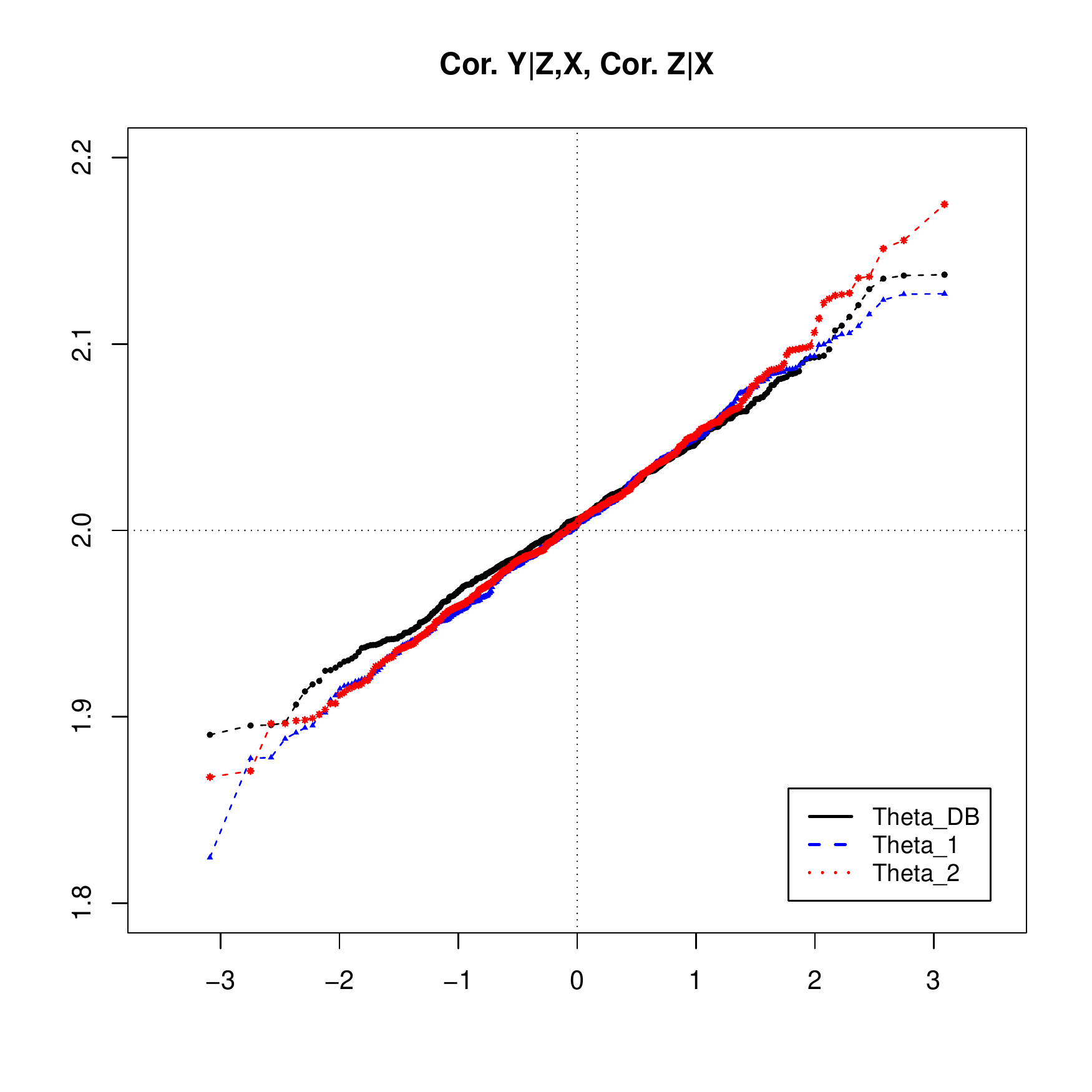} }}%
	\quad
	\subfloat{{\includegraphics[width=70mm]{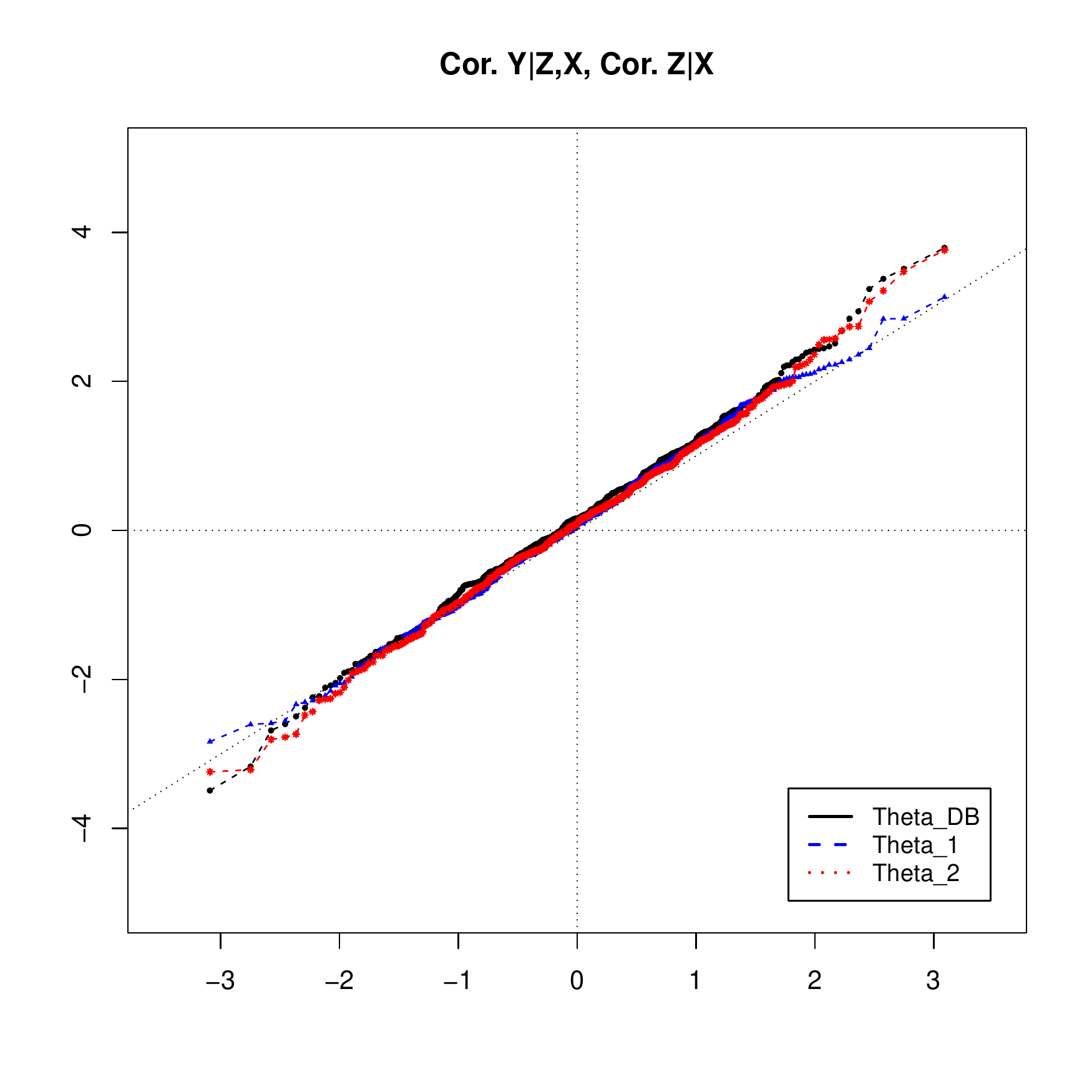} }}%
\end{figure}

\begin{figure}
	\caption{\small QQ plots of the estimates (first column) and $t$-statistics (second column) against standard normal ($n=600$, $p=800$) for partially log-linear modeling}
	\label{fig:log-linear_p=800}
	\centering
	\subfloat{{\includegraphics[width=70mm]{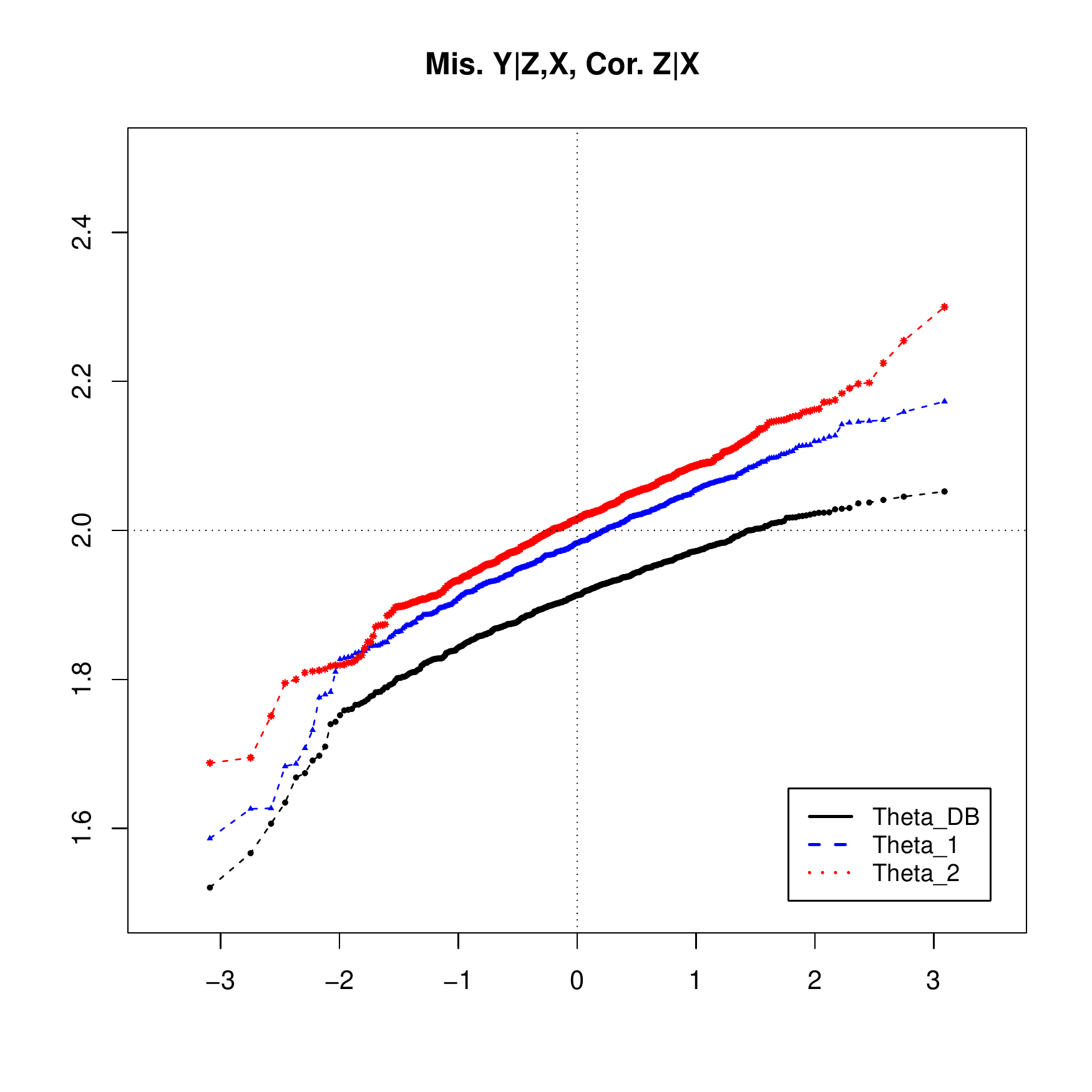} }}%
	\quad
	\subfloat{{\includegraphics[width=70mm]{QQ/log-linear_miss_cor_p=800_t.pdf} }}%
\vspace{-.25in}
	\subfloat{{\includegraphics[width=70mm]{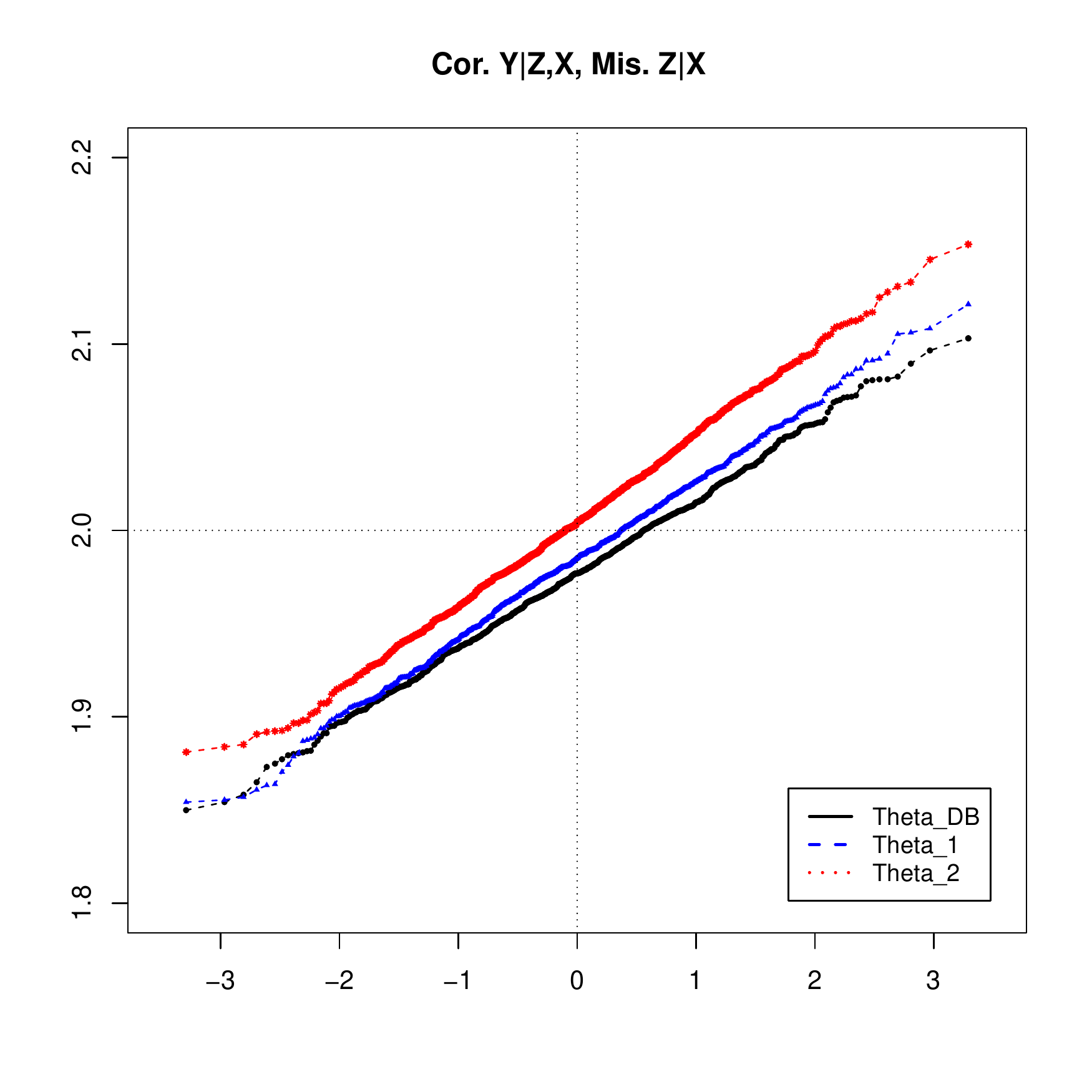} }}%
	\quad
	\subfloat{{\includegraphics[width=70mm]{QQ/log-linear_cor_miss_p=800_t.pdf} }}%
\vspace{-.25in}
	\subfloat{{\includegraphics[width=70mm]{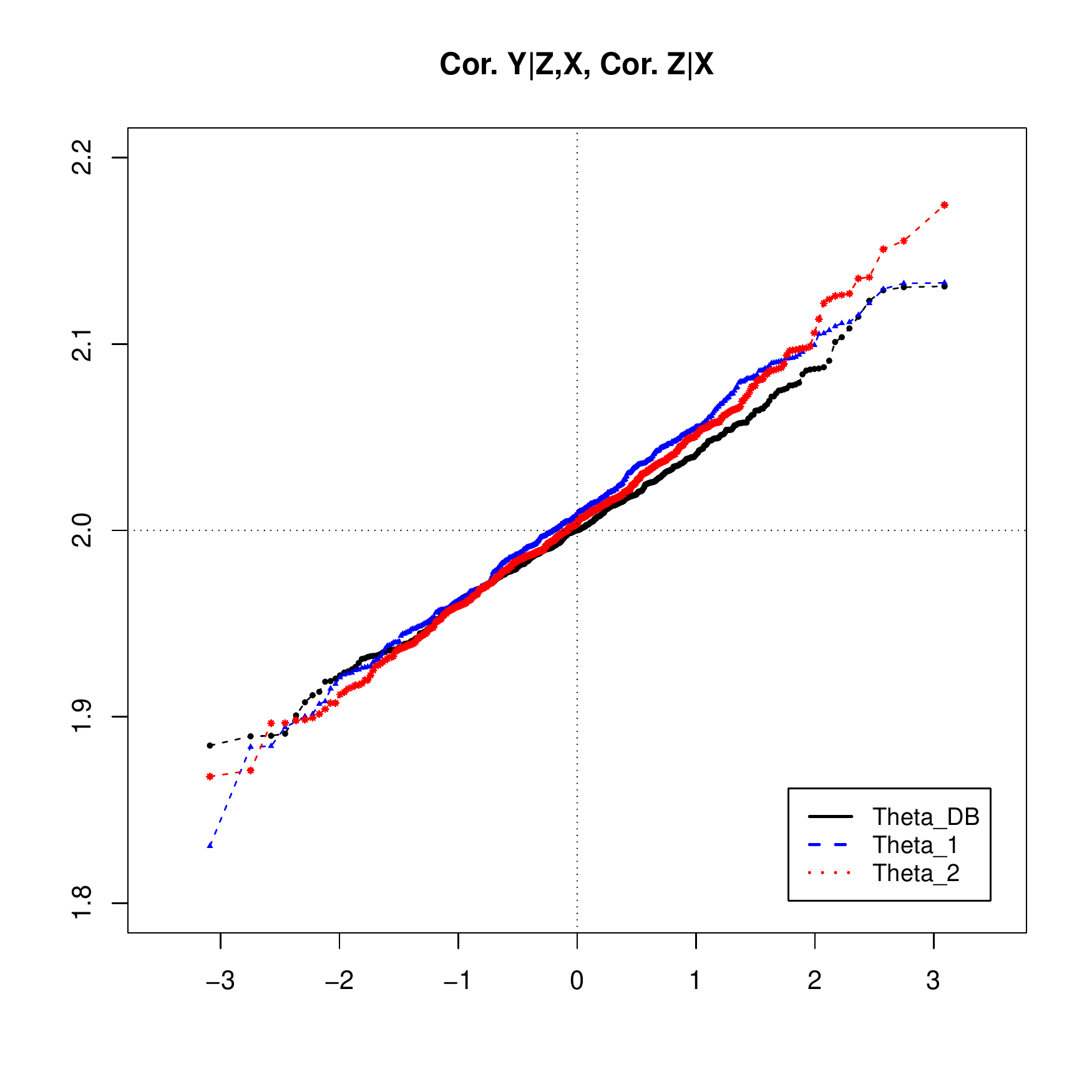} }}%
	\quad
	\subfloat{{\includegraphics[width=70mm]{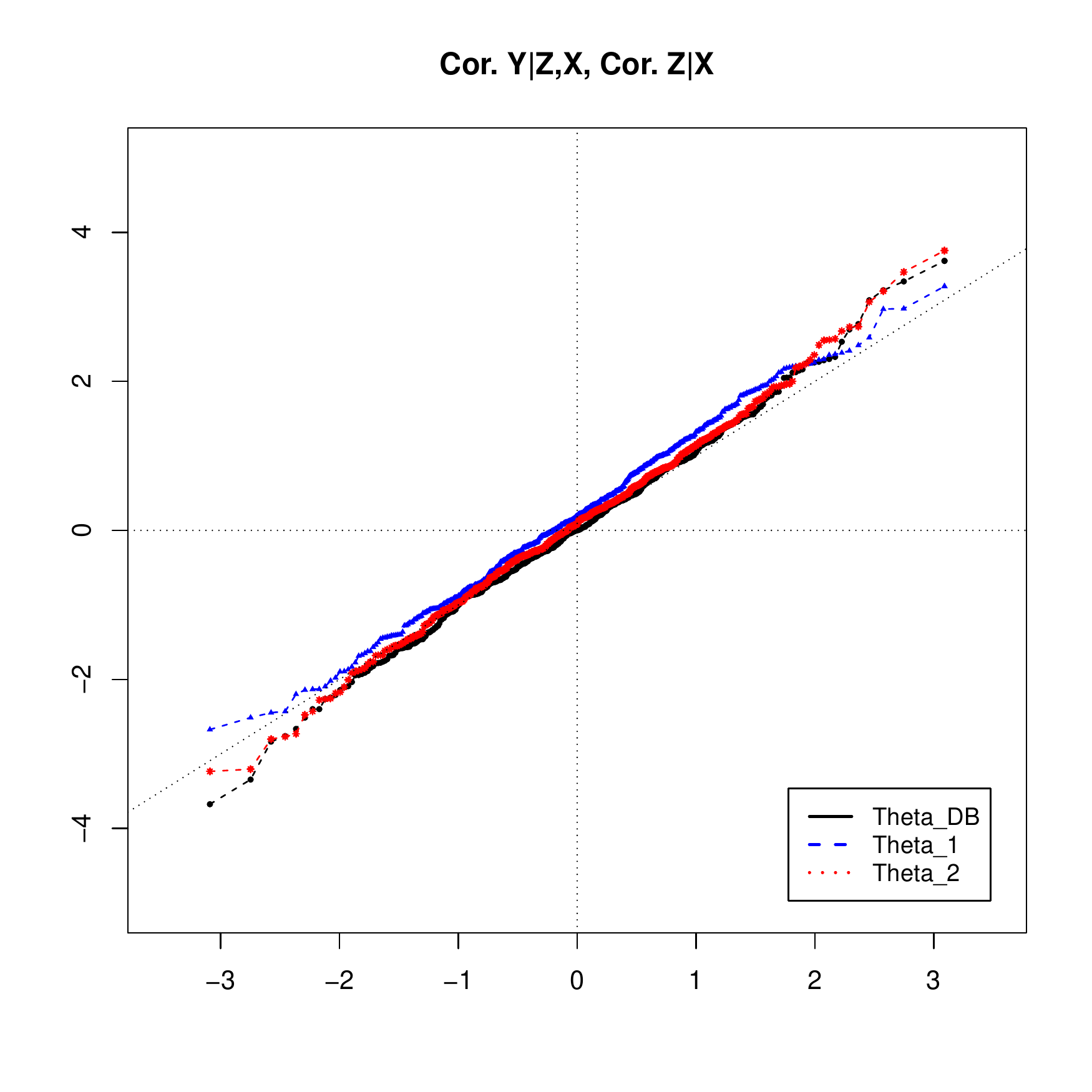} }}%
\end{figure}

\begin{figure}
	\caption{\small QQ plots of the estimates (first column) and $t$-statistics (second column) against standard normal ($n=600$, $p=100$) for partially logistic modeling}
	\label{fig:logistic_p=100}
	\centering
	\subfloat{{\includegraphics[width=70mm]{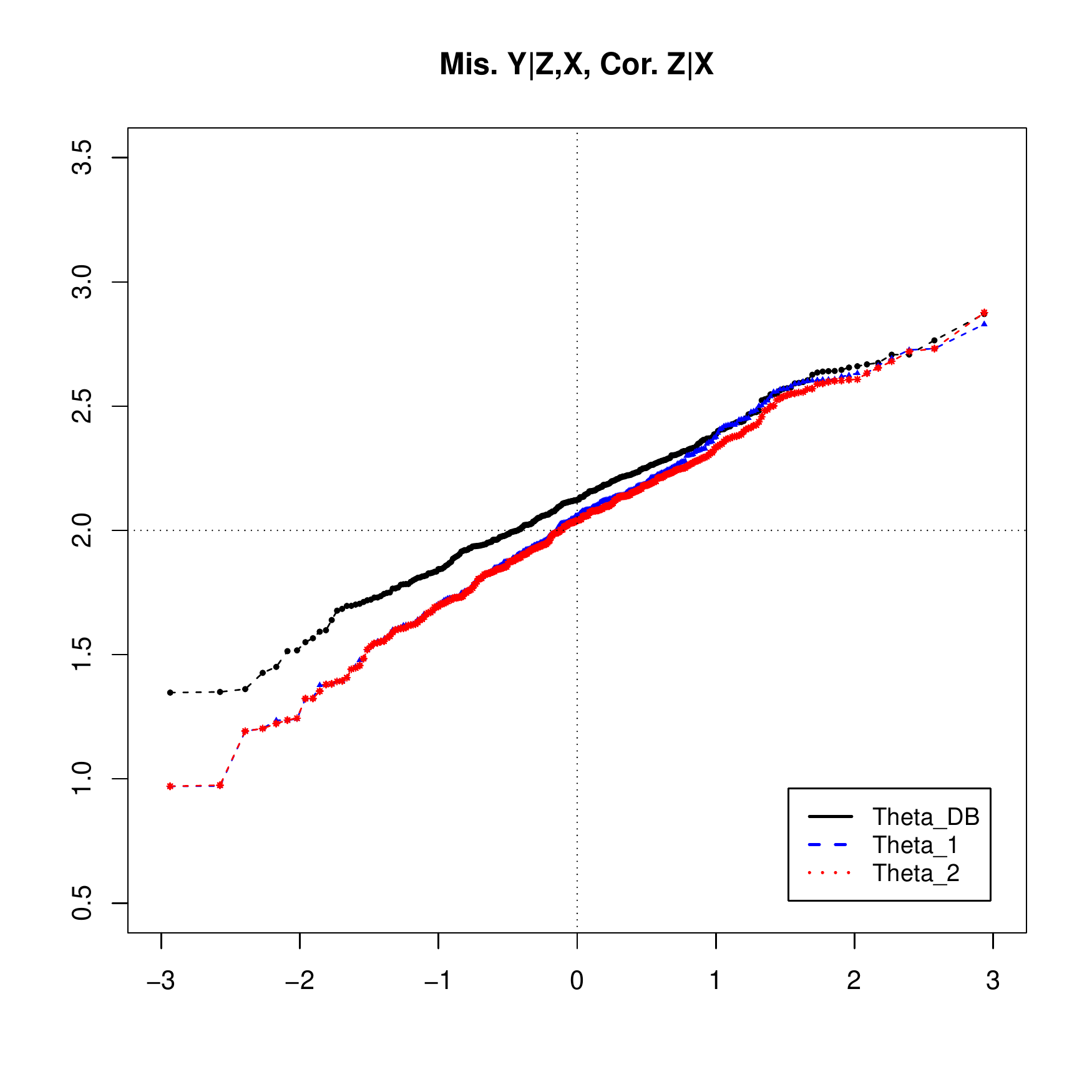} }}%
	\quad
	\subfloat{{\includegraphics[width=70mm]{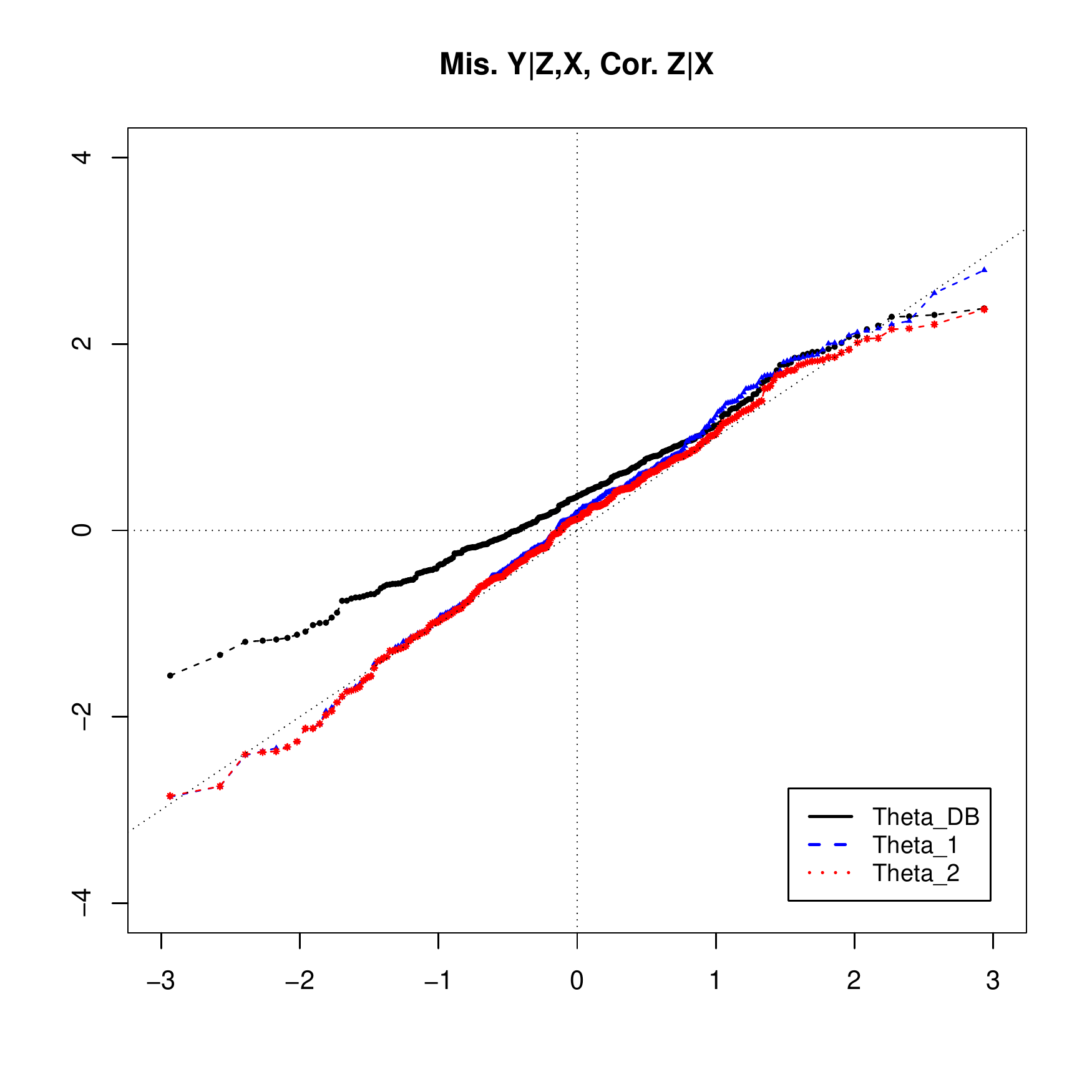} }}%
\vspace{-.25in}
	\subfloat{{\includegraphics[width=70mm]{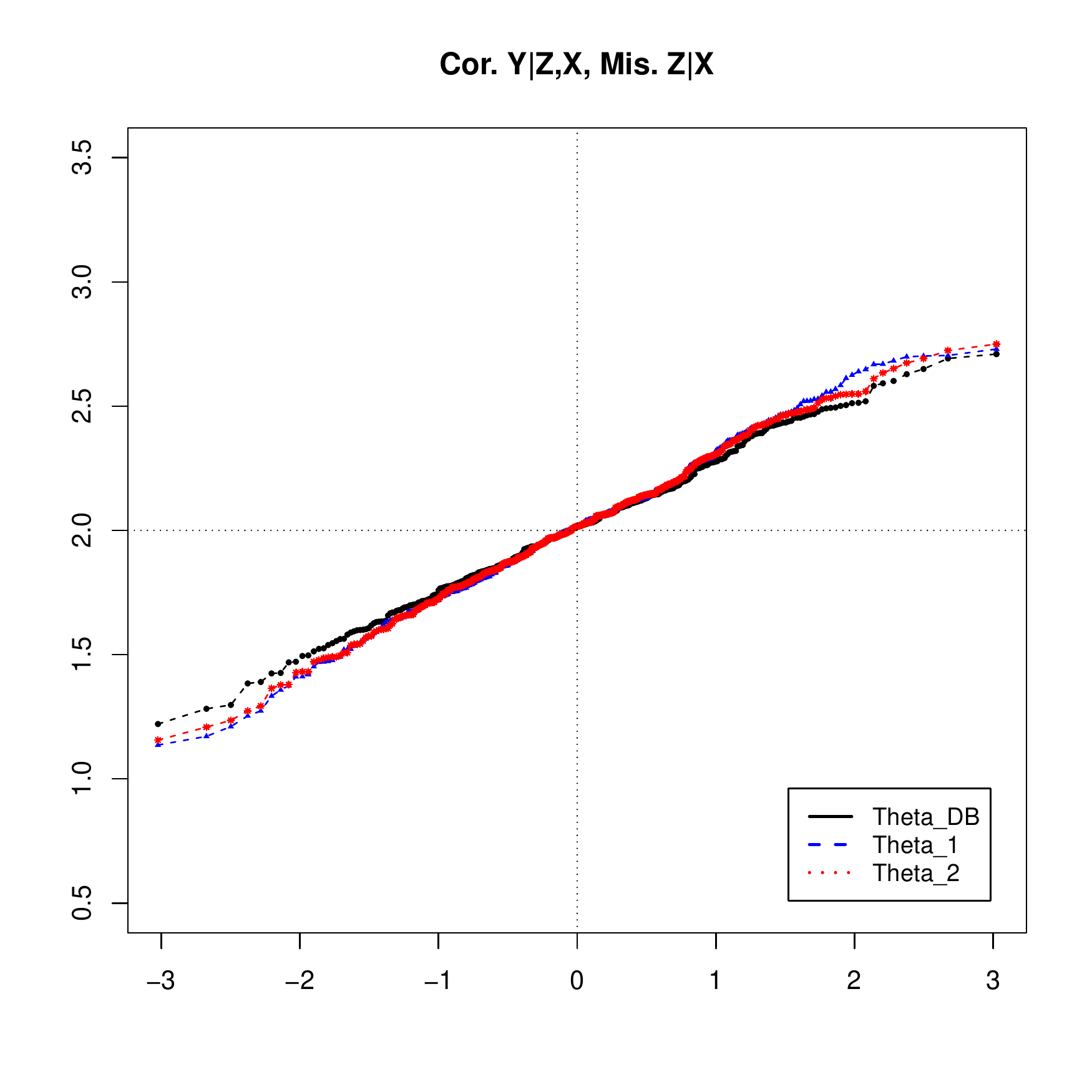} }}%
	\quad
	\subfloat{{\includegraphics[width=70mm]{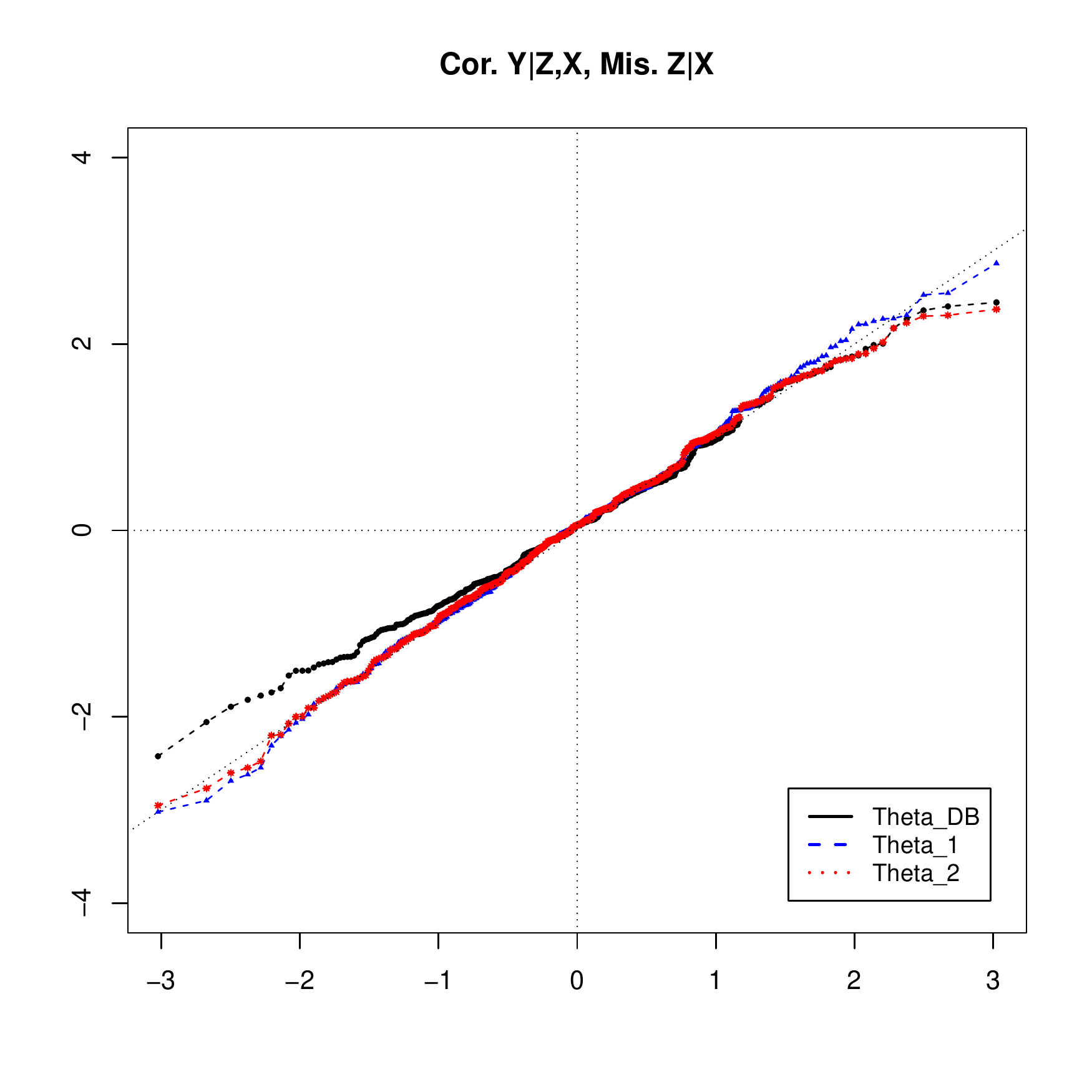} }}%
\vspace{-.25in}
	\subfloat{{\includegraphics[width=70mm]{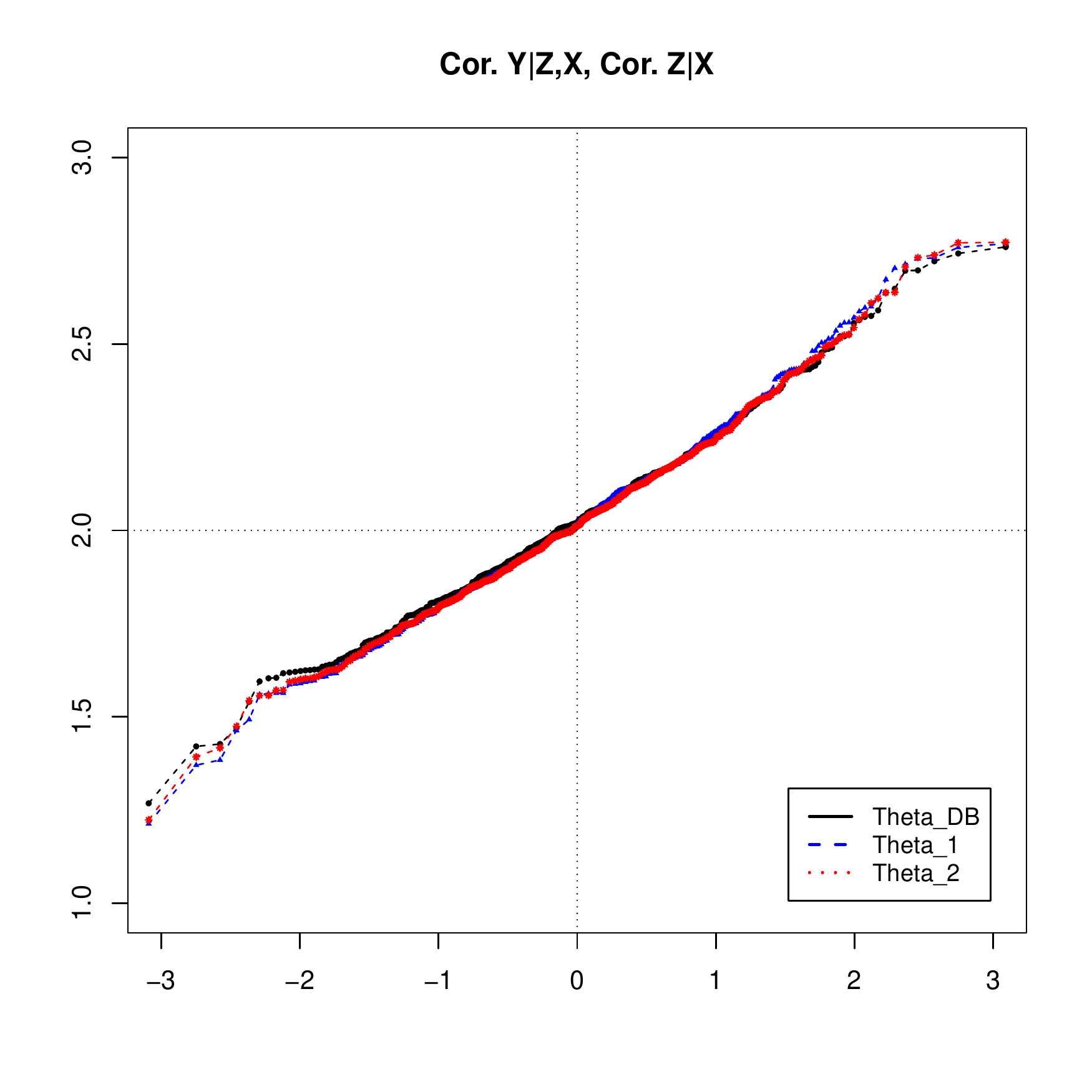} }}%
	\quad
	\subfloat{{\includegraphics[width=70mm]{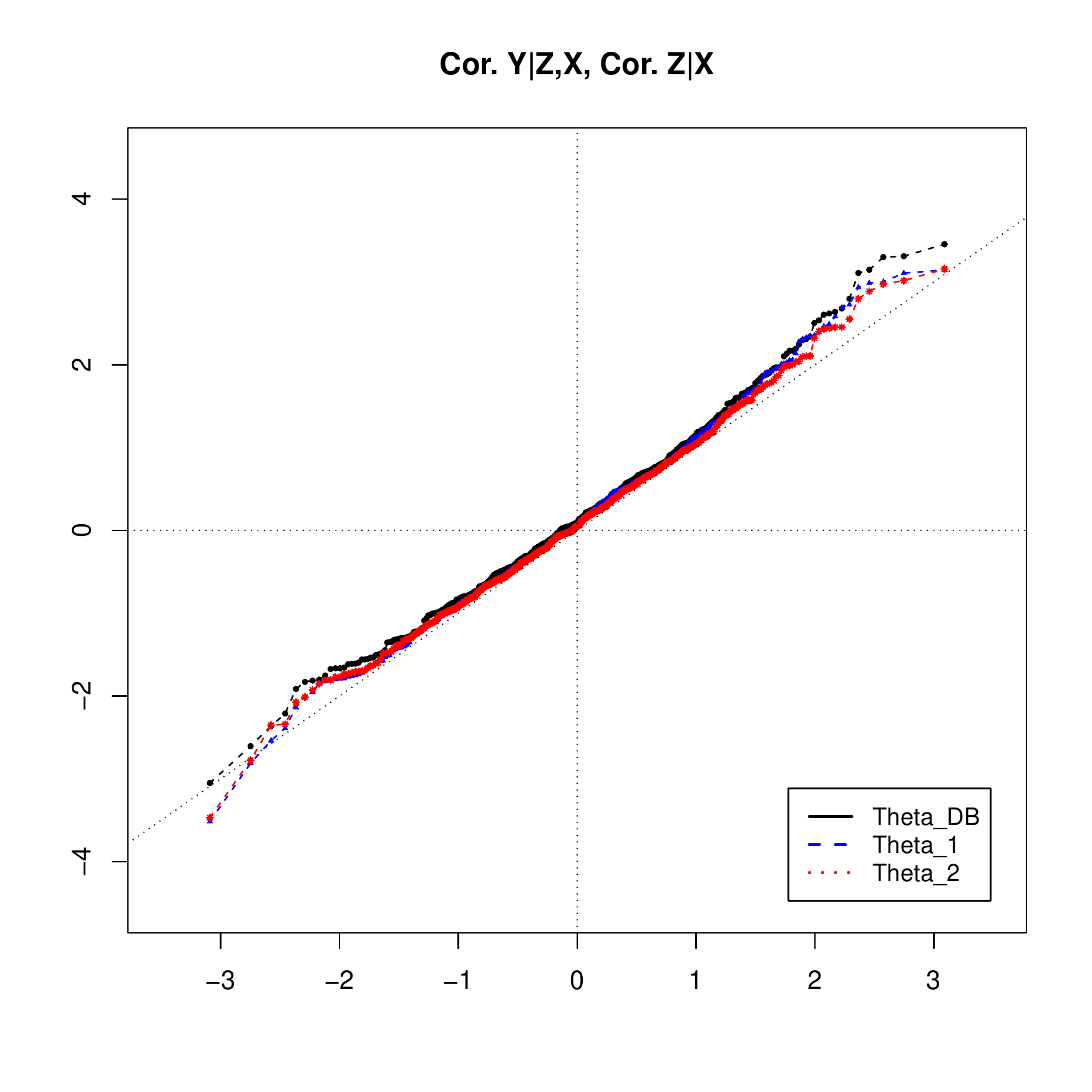} }}%
\end{figure}

\begin{figure}
	\caption{\small QQ plots of the estimates (first column) and $t$-statistics (second column) against standard normal ($n=600$, $p=200$) for partially logistic modeling}
	\label{fig:logistic_p=200}
\centering
\subfloat{{\includegraphics[width=70mm]{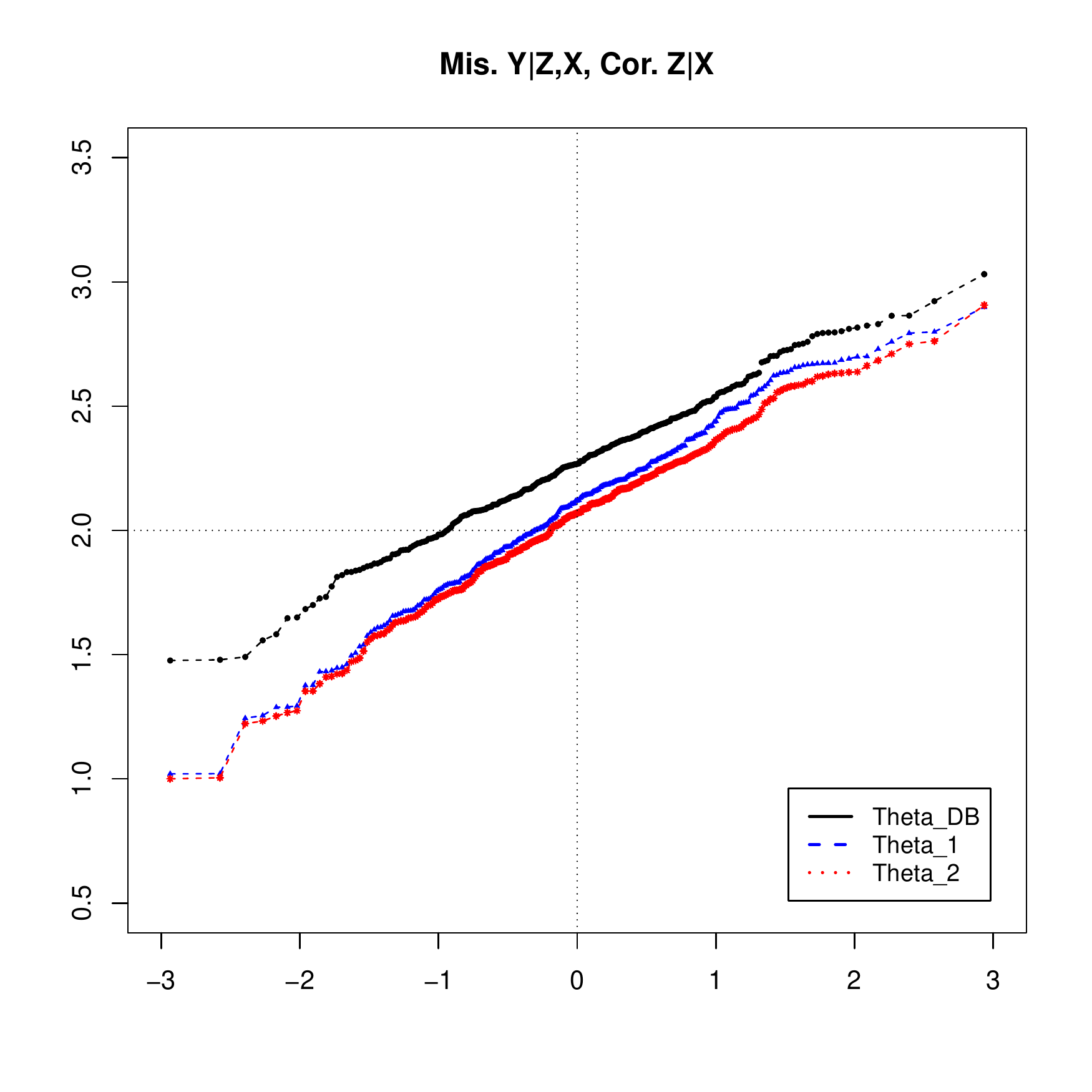} }}%
\quad
\subfloat{{\includegraphics[width=70mm]{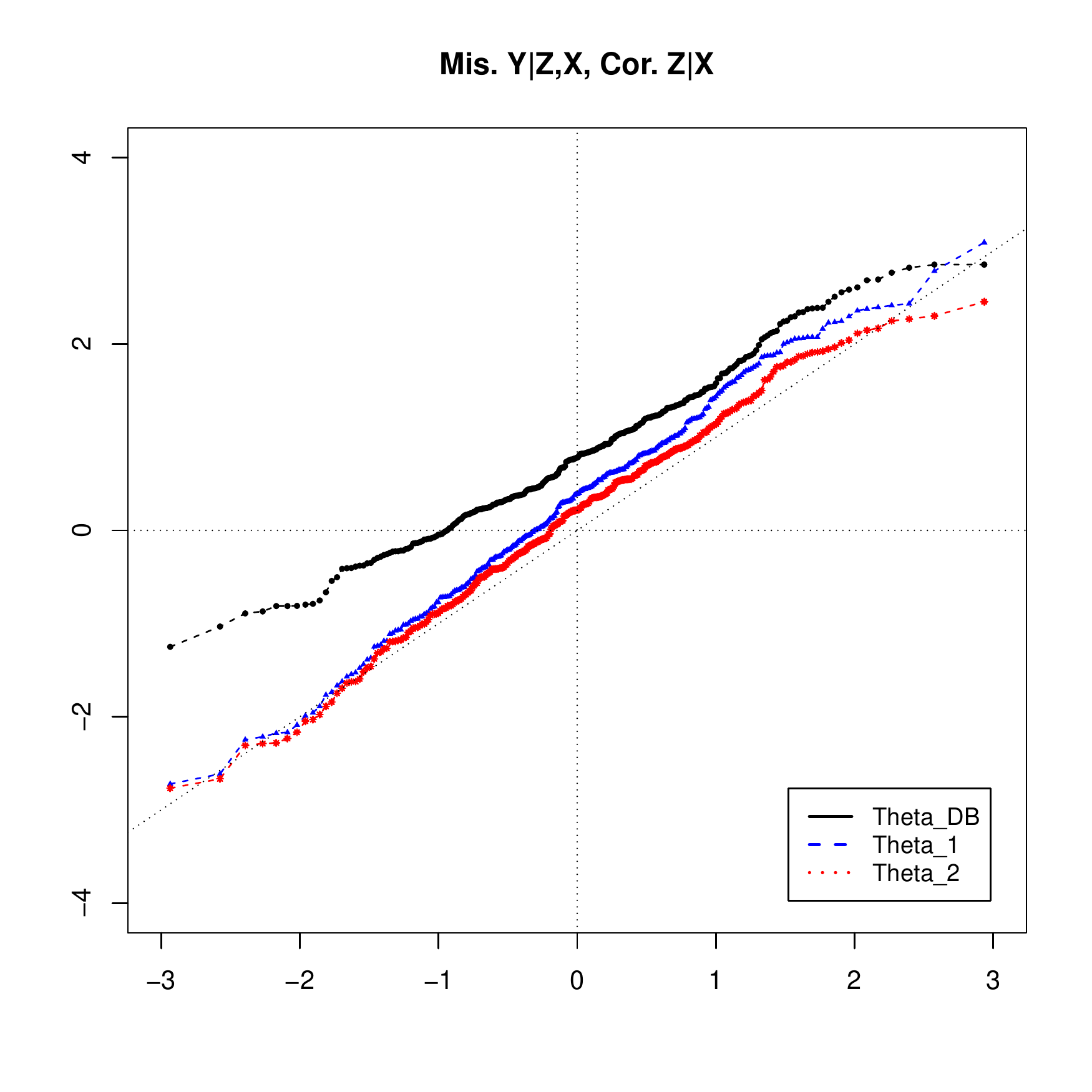} }}%
\vspace{-.25in}
\subfloat{{\includegraphics[width=70mm]{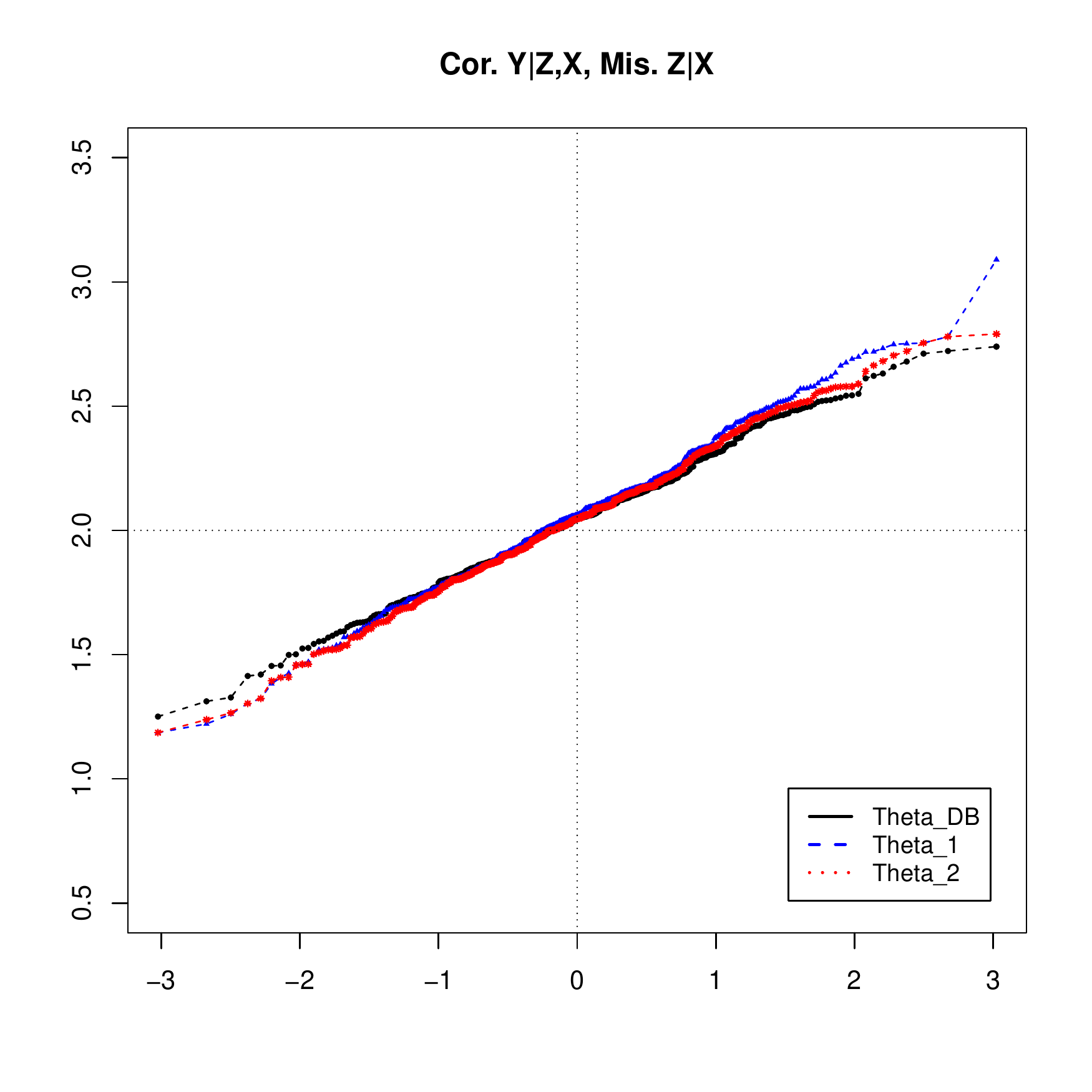} }}%
\quad
\subfloat{{\includegraphics[width=70mm]{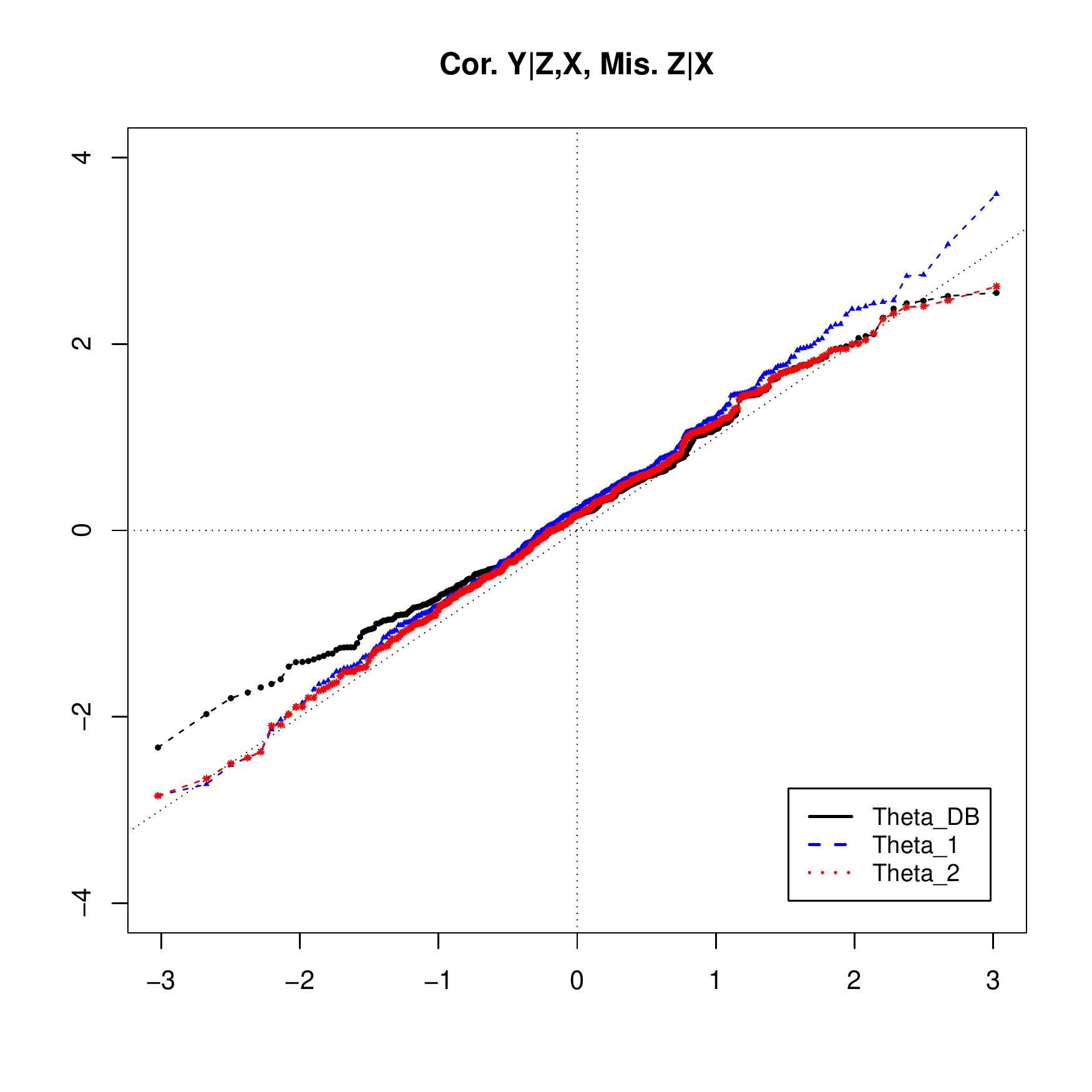} }}%
\vspace{-.25in}
\subfloat{{\includegraphics[width=70mm]{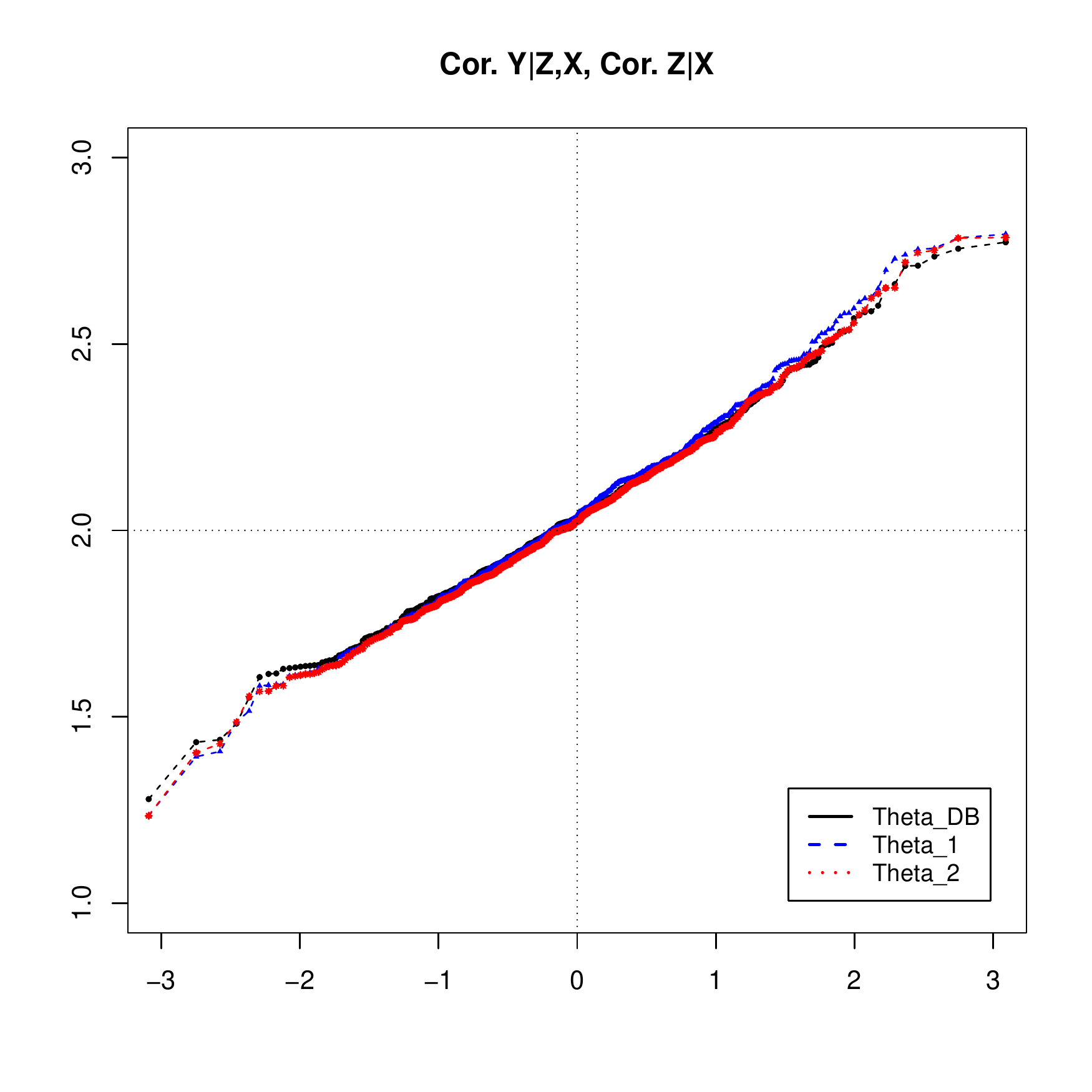} }}%
\quad
\subfloat{{\includegraphics[width=70mm]{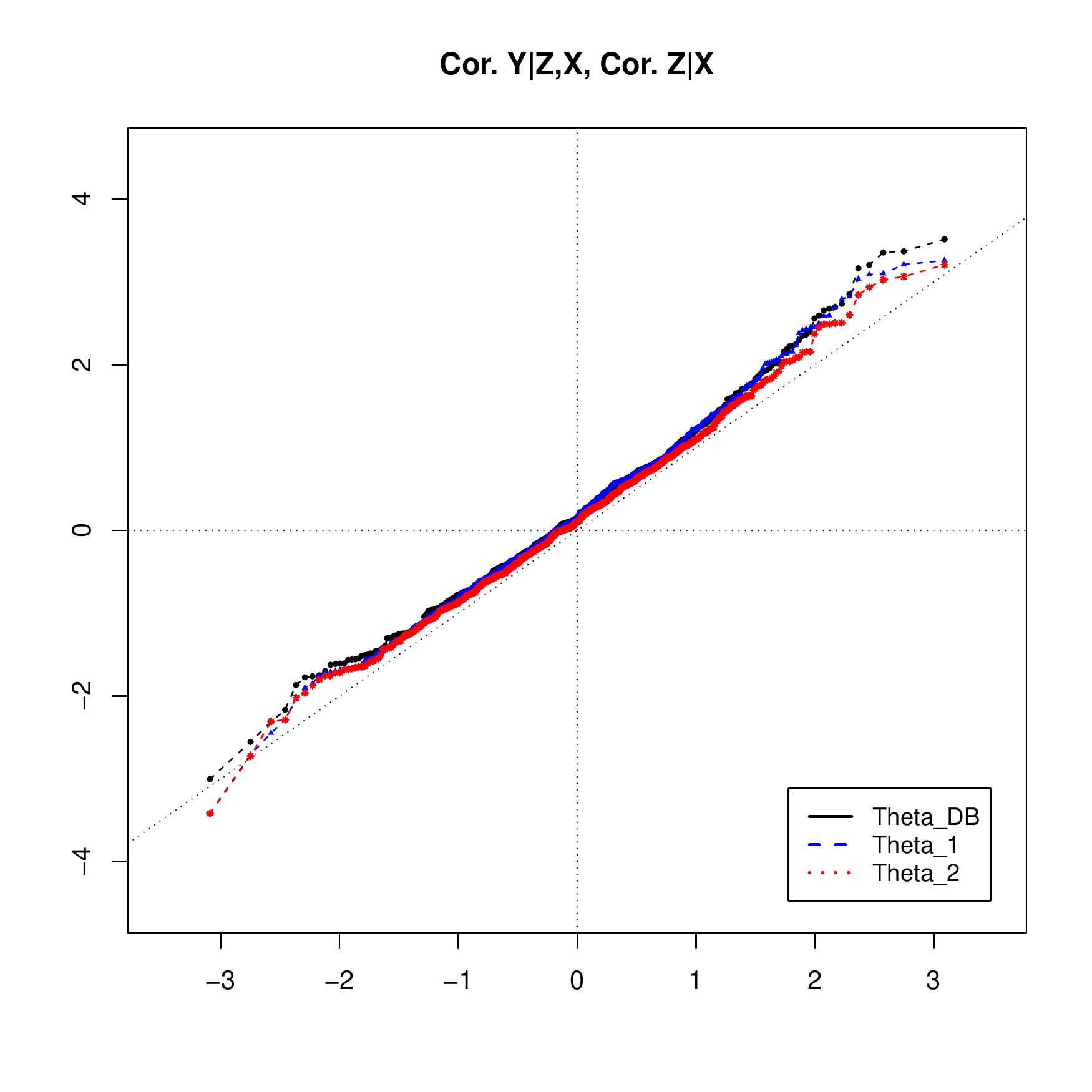} }}%
\end{figure}

\begin{figure}
	\caption{\small QQ plots of the estimates (first column) and $t$-statistics (second column) against standard normal ($n=600$, $p=800$) for partially logistic modeling}
	\label{fig:logistic_p=800}
	\centering
	\subfloat{{\includegraphics[width=70mm]{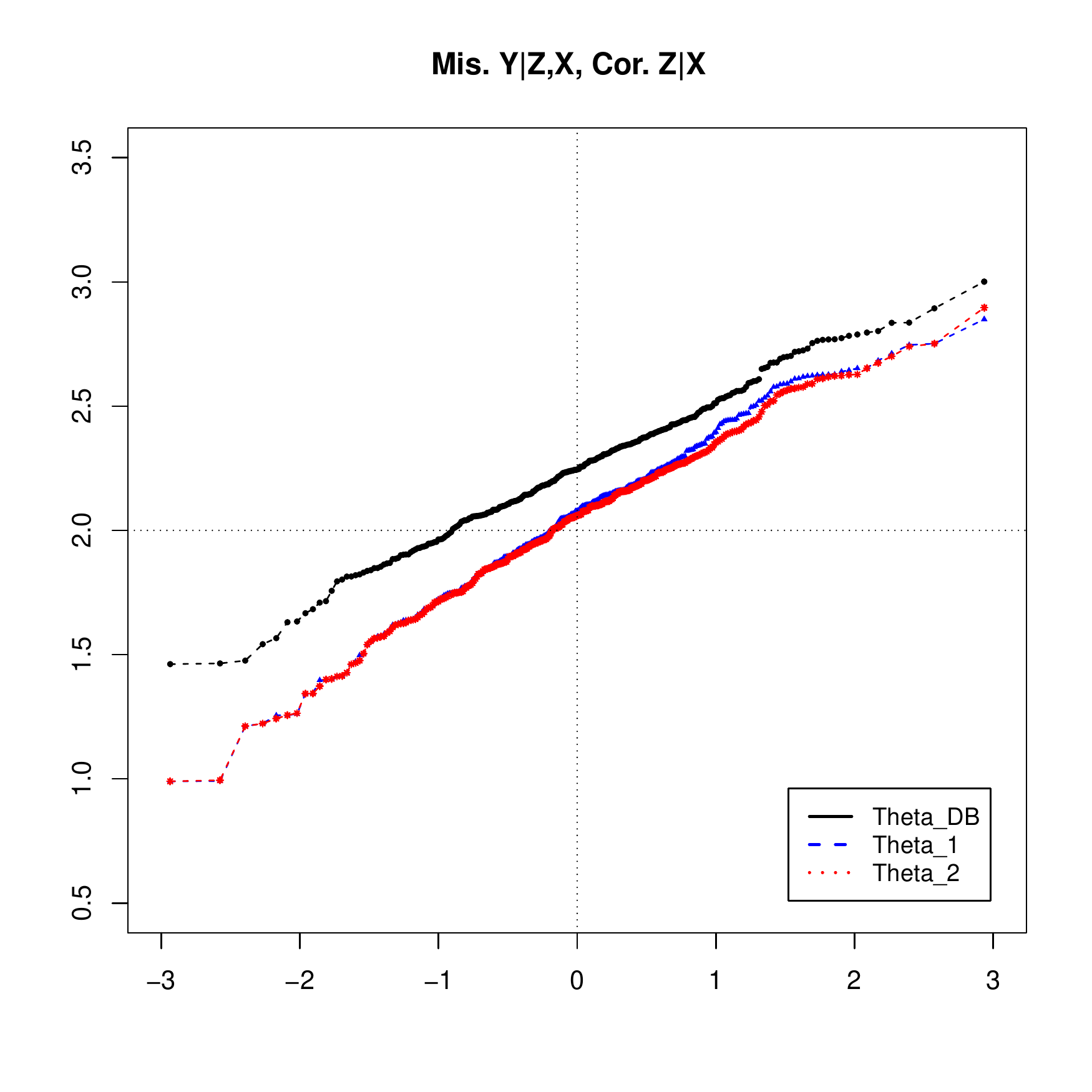} }}%
	\quad
	\subfloat{{\includegraphics[width=70mm]{QQ/logistic_miss_cor_p=800_t.pdf} }}%
    \vspace{-.25in}
	\subfloat{{\includegraphics[width=70mm]{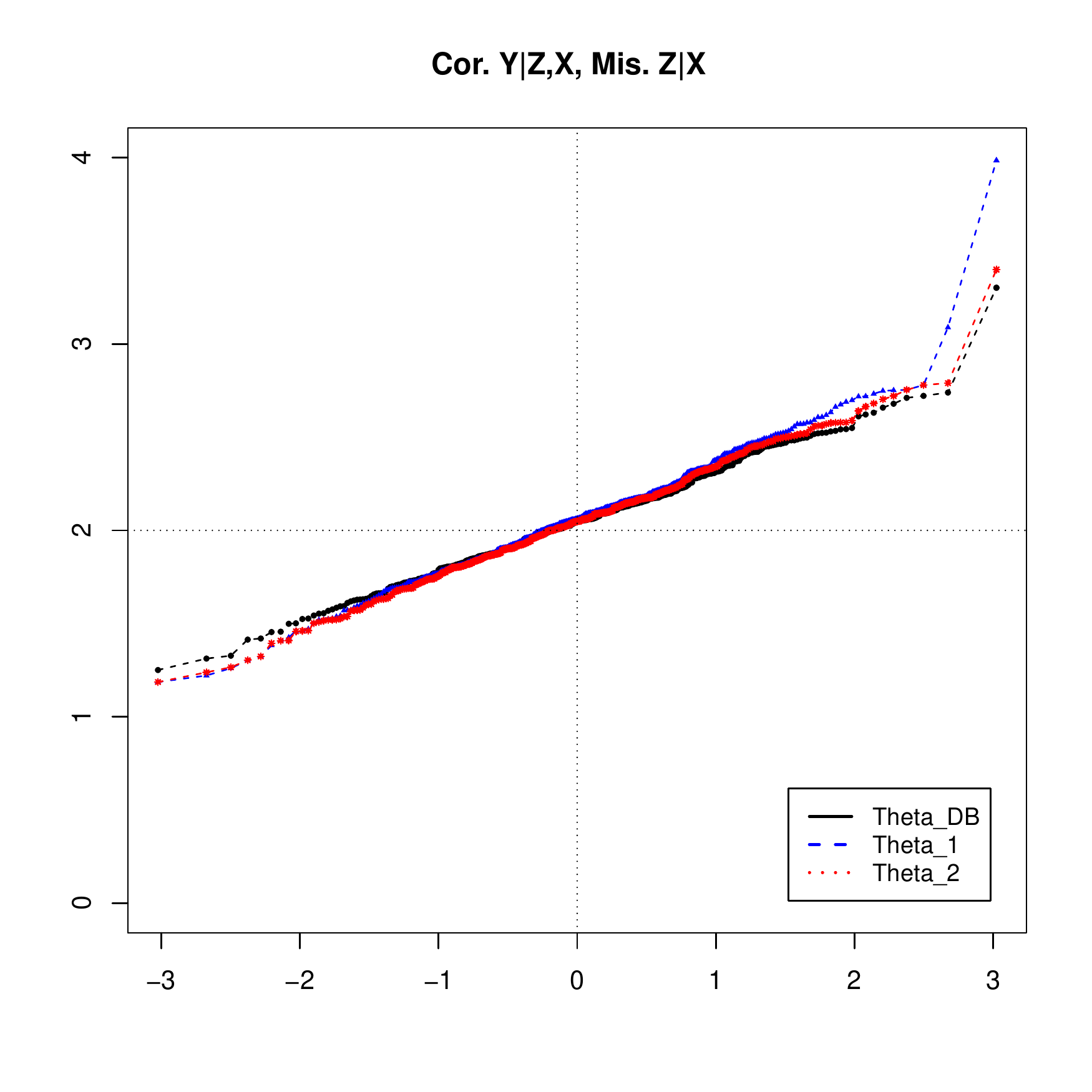} }}%
	\quad
	\subfloat{{\includegraphics[width=70mm]{QQ/logistic_cor_miss_p=800_t.pdf} }}%
	\vspace{-.25in}
	\subfloat{{\includegraphics[width=70mm]{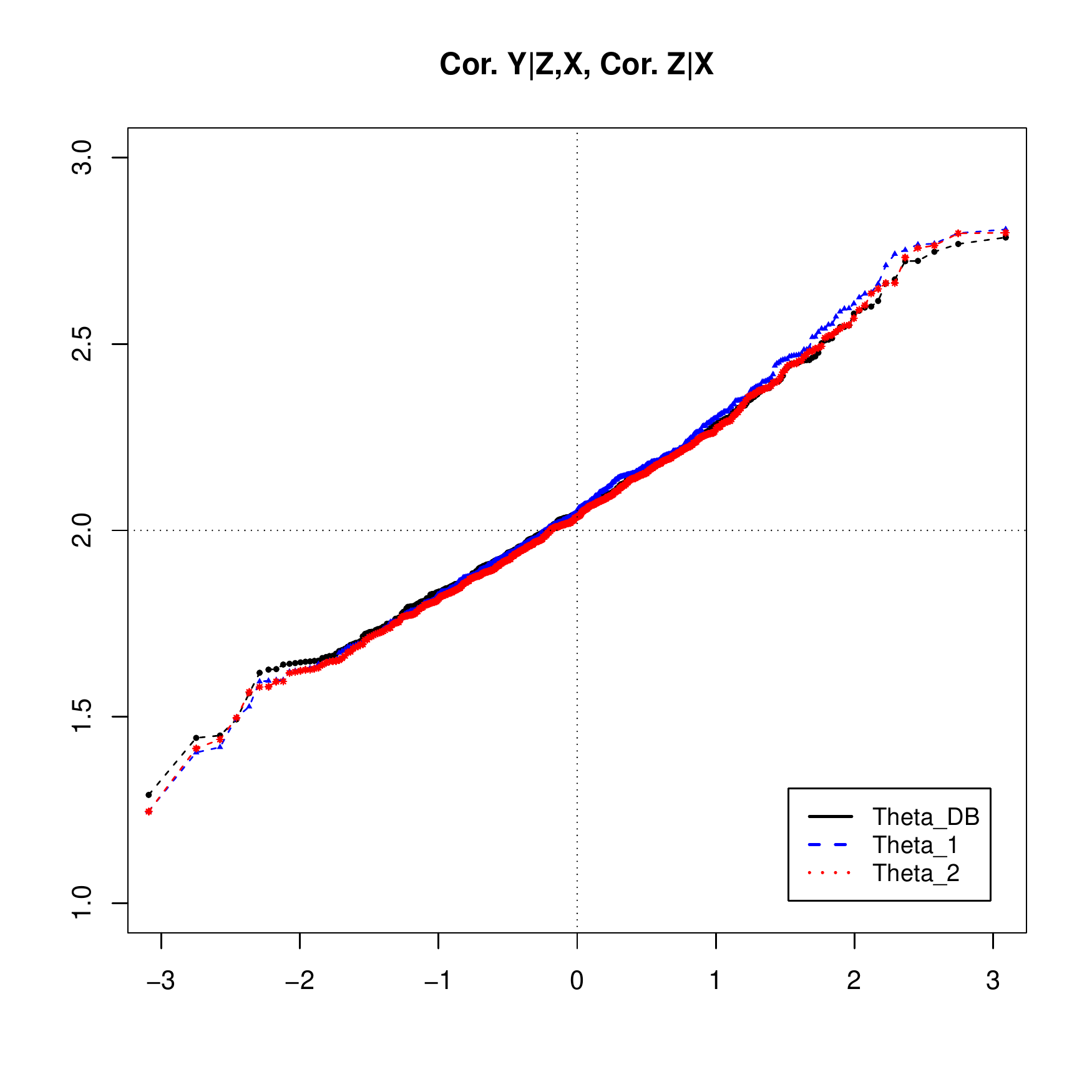} }}%
	\quad
	\subfloat{{\includegraphics[width=70mm]{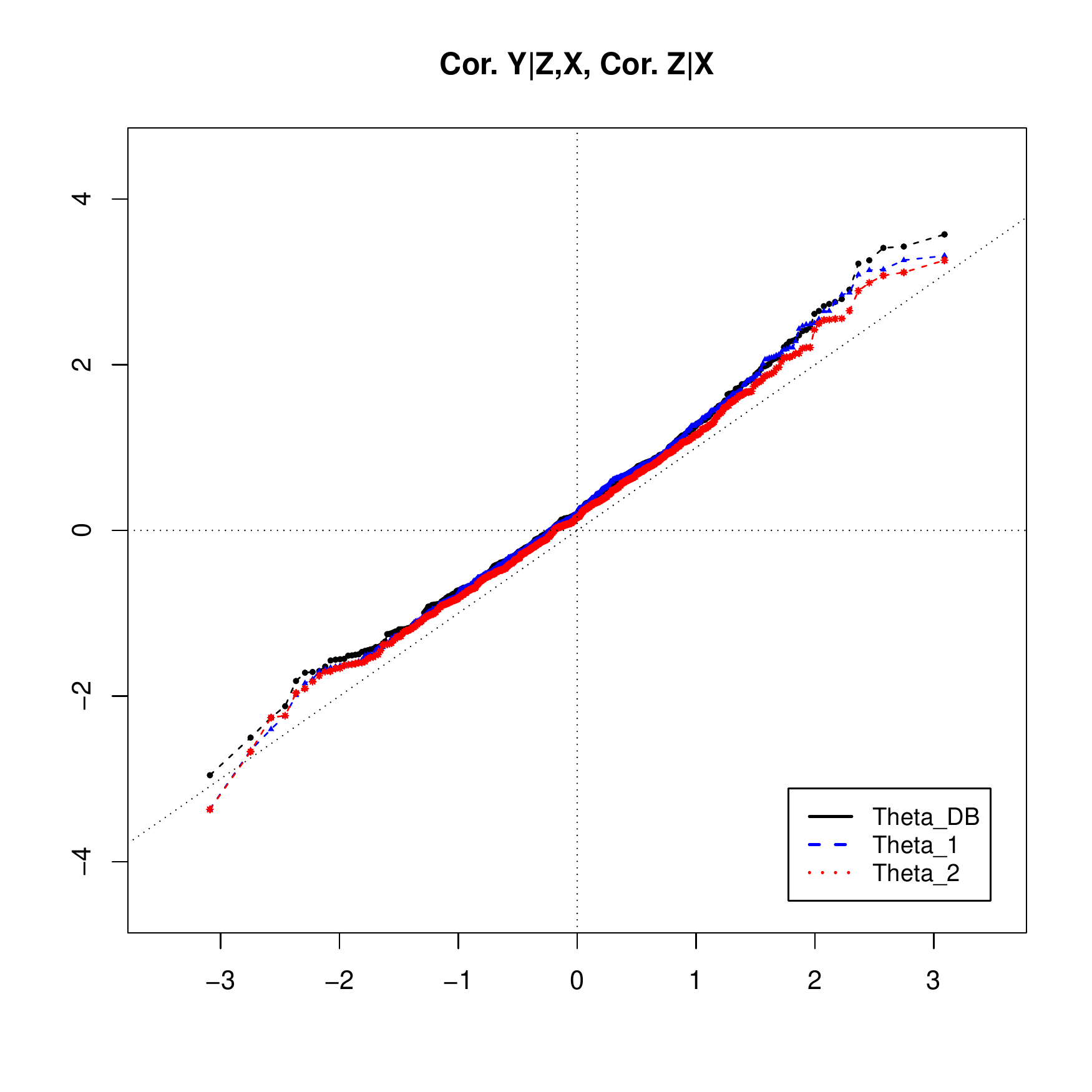} }}%
\end{figure}

\clearpage

\bibliographystyleappend{myapa}
\bibliographyappend{References1-0921}

\end{document}